\documentclass[11pt,draftclsnofoot,journal,letterpaper, onecolumn]{IEEEtran}

\ifCLASSOPTIONcompsoc
  \usepackage[nocompress]{cite}
\else
  \usepackage{cite}
\fi

\ifCLASSINFOpdf
\else
\fi

\usepackage{bm}
\usepackage{url}
\usepackage{cite}
\usepackage{xurl}
\usepackage{array}
\usepackage{xfrac}
\usepackage{amsthm}
\usepackage{amsmath,amssymb,amsfonts}

\usepackage{cases}
\usepackage{xcolor}
\usepackage{comment}
\usepackage{nicefrac}
\usepackage{graphicx}
\usepackage{textcomp}
\usepackage{booktabs}
\usepackage{multirow}
\usepackage{makecell}
\usepackage{subfigure}
\usepackage{autobreak}
\usepackage{algorithmic}
\usepackage{threeparttable}
\usepackage{nicefrac, xfrac}
\usepackage{color,xcolor,colortbl}
\usepackage{tcolorbox}

\usepackage{tikz}
\usetikzlibrary{decorations.pathreplacing}
\usetikzlibrary{tikzmark,calc,matrix,positioning,arrows.meta}

\newtheorem{lemma}{Lemma}
\newtheorem{remark}{Remark}

\newtheorem{corollary}{Corollary}
\newtheorem{definition}{Definition}
\newtheorem{proposition}{Proposition}
\newtheorem{keyquestion}{Question}

\theoremstyle{definition}

\usepackage{cite}
\usepackage{hyperref}
\hypersetup{
    colorlinks,
    linkcolor={red!70!black},
    citecolor={blue!70!black},
    urlcolor={blue!70!black},
    hypertexnames=false
}
\usepackage{bookmark}

\begin{document}
\title{Learning From Social Interactions: Personalized Pricing and Buyer Manipulation}

\author{Qinqi~Lin,
        Lingjie~Duan,~\IEEEmembership{Senior Member,~IEEE,}
        and~Jianwei~Huang,~\IEEEmembership{Fellow,~IEEE}
\IEEEcompsocitemizethanks{\IEEEcompsocthanksitem This paper was published in IEEE Transactions on Mobile Computing, December 2024 [\href{https://ieeexplore.ieee.org/document/10554984}{DOI: 10.1109/TMC.2024.3411111}]. Part of the results appeared in the WiOpt 2022 Conference~\cite{lin2022personalized}.

Qinqi Lin is with the School of Science and Engineering, Shenzhen Institute of Artificial Intelligence and Robotics for Society, The Chinese University of Hong Kong, Shenzhen, Shenzhen 518172, China (e-mail: qinqilin@link.cuhk.edu.cn).

Lingjie Duan was with the Engineering Systems and Design Pillar, Singapore University of Technology and Design, Singapore 487372 (e-mail: lingjie\_duan@sutd.edu.sg).  

Jianwei Huang is with the School of Science and Engineering, Shenzhen Institute of Artificial Intelligence and Robotics~for~Society, The Chinese University of Hong Kong, Shenzhen, Shenzhen 518172, China (e-mail: jianweihuang@cuhk.edu.cn). 
}
}


\IEEEtitleabstractindextext{%
\vspace{-45pt}
\begin{abstract}
As the sociological theory of homophily suggests, people tend to interact with those of~similar~preferences. Motivated by this well-established phenomenon, today's online sellers, such as Amazon,~seek~to learn a new buyer's private preference from his friends' purchase records. Although such learning~allows the seller to enable personalized pricing and boost revenue, buyers are also increasingly aware~of~these practices and may alter their social behaviors accordingly. This paper presents the first study regarding how buyers strategically manipulate their social interaction signals considering their preference correlations, and how a seller can take buyers' strategic social behaviors into consideration~when~designing~the pricing scheme. Starting with the fundamental two-buyer network, we propose and analyze a parsimonious model that uniquely captures the double-layered information asymmetry between the~seller~and~buyers, integrating both individual buyer information and inter-buyer correlation information. Our analysis reveals that only high-preference buyers tend to manipulate their social interactions to evade the seller's personalized pricing, but surprisingly, their payoffs may actually worsen as a result. Moreover, we demonstrate that the seller can considerably benefit from the learning practice, regardless of whether the buyers are aware of this fact or not. Indeed, our analysis reveals that buyers' learning-aware strategic manipulation has only a slight impact on the seller's revenue. In light of the tightening regulatory policies concerning data access, it is advisable for sellers to maintain transparency with buyers regarding their access~to~buyers' social interaction data for learning purposes. This finding aligns well with current~informed-consent~industry practices for data sharing. Finally, we explore the seller's dynamic learning process across~multiple~interconnected buyers, and show that learning previous buyers' preferences may not necessarily help infer other buyers' preferences in the seller's subsequent learning phase.
\end{abstract}

\begin{IEEEkeywords}\noindent
Online Social Networks; Learning in Networks; Inter-Buyer Preference Correlation; Privacy; Personalized Pricing; Dynamic Bayesian Games.
\end{IEEEkeywords}}

\maketitle

\IEEEdisplaynontitleabstractindextext
\IEEEpeerreviewmaketitle

\section{Introduction}\label{sec:introduction}

With the ever-increasing penetration of online social media (e.g., WeChat, Facebook, and Twitter),~people today can freely interact with one another online, exchanging product views or sharing user experiences \cite{jain2018effect}. For example, there were surprisingly 1.2 billion engagements on Facebook during~the~\mbox{four-day}~Electronic Entertainment Experience 2021, where people shared likes, comments, or posts towards the newly released gaming products \cite{batchelor_2021}. These online social interactions generate a wealth of data that~unveils~valuable information about people's preferences and even the underlying correlations behind these preferences.

The old saying, ``Birds of a feather flock together,'' encapsulates the tendency of individuals to socially interact with others who share similar preferences. This established phenomenon of homophily and its underlying sociological theory \cite{mcpherson2001birds} empower today's online sellers to effectively link buyers' mutual interactions in the online social network to their preference correlations. Indeed, over the recent years, researchers have been exploring various methodologies to infer customers' similarities or differences from their social interaction data to inform the sellers' decision-making \cite{braun2011scalable}. For example, Amazon is known to acquire and utilize buyers' Facebook social network data to train its personalization features \cite{bustos_2011_facebook,schulze_2018_facebook}. In this context, Amazon leverages the preferences of a buyer's friends, with whom the buyer interacts on Facebook, to deliver targeted purchase recommendations \cite{parfeni_2010_amazon}. Then, Amazon can learn a new buyer's private preferences from correlated purchase records of his friends in the past. Such learning paves the way for sellers' personalized pricing that tailors the price to the target buyer's product preference in future sales.\footnote{It is worth noting that price discrimination is commonly practiced in various consumer markets through personalized coupons, discounts, and fees \cite{dube2017scalable}. In Appendix \ref{Appendix:Evidence}, we provide in-depth discussions on personalized pricing and its integration with social network data in various consumer markets (such as product steering) and credit markets.} Therefore, the correlation information distilled from buyers' social interaction data can be of great help in boosting the revenue of today's online sellers.

However, as sellers increasingly utilize social interaction data for learning, there is a rising concern among buyers regarding the public exposure or usage of their social network data without their explicit consent or awareness. For example, Amazon and Netflix were reported to have undisclosed data-sharing deals with Facebook, which grants the former parties privileged access to users' social interaction information~\cite{x_2018}. Given such concerns, regulators have recently required online platforms to inform people of their data sharing with third parties and the corresponding purpose of data usage (e.g., in the informed consent policies in \cite{GDPR,a2023_california}). Following this, Amazon is now expected to warn buyers of its access to their social interaction data when buyers log in and connect to their Facebook accounts~\cite{-_by_-_boulton_2021}. 

Once aware of the seller's data access and the possibility of being charged personalized prices, buyers also have the incentive to strategically manipulate their product-related online behaviors or data to mislead the seller's learning and obtain better offerings \cite{li2023beating}. For example, people are found to provide untruthful data or misrepresent themselves on Facebook and Twitter to hide their actual information \cite{sannon2018privacy}. Buyers sharing similar high preferences for a new product may intentionally avoid discussing it or the related products on social media before the selling season, aiming to confuse the seller on their preference correlation. By doing~so, they mimic buyer pairs with different preferences and may enjoy a lower price than their friends. We are thus well motivated to understand the following question:

\begin{keyquestion}
Anticipating the seller's learning practice, how should buyers optimally manipulate their social interaction data, and can they benefit from such social manipulations?
\end{keyquestion}

Buyers use their social interactions as a means of \textit{jointly signaling} their preference correlation with each other to the seller. In signaling parlance, buyers are signal senders, and the seller is the receiver. Our key difference from the traditional signaling paradigm \cite{spence1978job} is that what senders signal is the correlation between their private information instead of an individual's private information alone (e.g., \cite{drakopoulos2021persuading,conitzer2012hide}). Another difference lies in the coupling between the senders in the interdependent social network, where one buyer's attempt to manipulate his social interaction data to confuse the seller also affects his interacting friends' manipulation incentives and social decision-making.

On the other hand, buyers' manipulations could interfere with the seller's learning of the true homophily information of their private preferences. This raises the question of how the seller should respond to such buyers' social manipulations that may impede her preference learning and personalized pricing. Actually, if the seller disregards current privacy regulations (at the risk of potential penalties) and does not inform buyers of her access to social interaction data, buyers may not know the seller's learning practice and thus never manipulate. This motivates us to ask the following question:

\begin{keyquestion}
How should the seller strategically learn from manipulated buyer data and redesign the optimal pricing scheme, and does concealing this practice from buyers result in sufficient revenue gain?
\end{keyquestion}

Previous research (e.g., \cite{bimpikis2021data,ali2020voluntary,montes2019value,belleflamme2016monopoly,conitzer2012hide,acquisti2005conditioning}) has focused on how a seller can infer a single buyer's private preference from his own behaviors or data, such as previous transactions \cite{bimpikis2021data}, purchase records \cite{conitzer2012hide,acquisti2005conditioning}, or direct disclosure \cite{ali2020voluntary}. These studies overlook the potential of leveraging today's fast-growing online social networks to understand the correlation between buyers' private preferences. Unlike these works, we investigate the seller's opportunity to learn this correlation from buyers' social interaction data. However, a challenge arises as the seller cannot directly infer a buyer's private preference until she knows both the buyer's preference correlation with his friends and his friends' purchase records. Furthermore, buyers may manipulate their social interactions, posing another challenge for the seller to discern the true correlation from manipulated data.

We model the above interactions as a dynamic Bayesian game between the seller and buyers. Starting with the basic two-buyer network, we propose and analyze a parsimonious model to capture the unique \textit{double-layered information asymmetry} existing between the seller and buyers, integrating both individual buyer and inter-buyer correlation information. In addition, we introduce two benchmark cases: the \textit{no-learning benchmark}, where the seller cannot access buyers' social interaction data, and the \textit{undisclosed-learning benchmark}, where the seller accesses such data without buyers' awareness and manipulation. Through comparisons, we analyze the effects of buyer manipulation and the seller's learning practice to provide valuable insights. Finally, we investigate the seller's dynamic learning from multiple interconnected buyers, which can be particularly challenging due to buyer manipulation. To address this challenge, we incorporate the seller's knowledge of prior buyers' preferences into her subsequent learning process, and examine the effectiveness of the dynamic learning in this context.

We summarize our main contributions below.

\begin{itemize}
    \item \emph{New personalized pricing via learning from buyers' social interactions:} To the best of our knowledge, this is the first analytical study on how buyers proactively manipulate their social interaction signals, and how a seller optimally takes buyers' strategic social behaviors into consideration when designing the pricing scheme. Our study helps understand the growing use of social network data for user profiling and the implications of informed-consent industry practice.
    
    \item \emph{Closed-form perfect Bayesian equilibrium (PBE) analysis under double-layered information asymmetry:} Our model uniquely incorporates a double-layered information asymmetry between the seller and buyers, integrating both individual buyer and inter-buyer correlation information. This poses great challenges to our game analysis, on top of which we also need to ensure information consistency from correlated buyers to the seller. To resolve these challenges, we start with an efficient reduction of the large equilibrium space to facilitate a forward game analysis. Then, we alternate it with backward induction to achieve a complete closed-form characterization of a unique PBE.
    
    \item \emph{Impact of buyer manipulation:} Our PBE analysis shows that the seller does not always trust the buyers' manipulated social interaction data to exercise pure personalized pricing. Instead, the seller may randomize personalized pricing with uniform pricing, which in turn mitigates buyers' incentives to manipulate and thus facilitates the seller's learning. Surprisingly, our analysis reveals that buyers could be worse off after strategically manipulating their social interactions.
    
    \item \emph{Effect of the seller's transparency about learning process:} Our analysis shows that the seller substantially benefits from the learning process, irrespective of buyer awareness. Considering the tightened privacy regulations and potential penalties, it is advisable for the seller to adopt an informed-consent practice, disclosing her use of buyers' social interaction data for learning.
    
    \item \emph{Effectiveness of the seller's dynamic learning:} When the seller learns across multiple interconnected buyers over time, we show that learning previous buyers' preferences may not necessarily help infer the other buyers' preferences in the seller's subsequent learning. Furthermore, we identify the sufficient conditions under which this conclusion holds in general multi-buyer networks.
\end{itemize}

The rest of the paper is organized as follows. Section~\ref{relatedwork} reviews the related work, and we present the system model in Section \ref{systemmodel}. We introduce the no-learning and the undisclosed-learning benchmarks in Section \ref{benchmark}. To derive the PBE, we begin by analyzing the seller's strategic learning to reduce the equilibrium space in Section \ref{PBE1}. We then conclude the PBE analysis in Section \ref{PBE2}. Section \ref{analysis} further investigates the welfare impact of buyer manipulation. Section \ref{sec:multi} extends our analysis to explore the seller's dynamic learning process over multiple interconnected buyers. In Section \ref{Numerical}, we apply our proposed pricing mechanism using real-world data in a numerical study. The robustness of our major insights is demonstrated in Section \ref{extensions} by relaxing the assumptions of a uniform prior preference distribution and binary interaction frequencies. Finally, Section \ref{conclusion} concludes this paper.

\section{Related Work}\label{relatedwork}

Our work is related to two main streams of literature:~(i)~information sharing and privacy in social networks, and (ii) price discrimination with buyer recognition.

\subsection{Information Sharing and Privacy in Social Networks}
Recent theoretical studies have shed light on individuals' information-sharing behaviors in social networks, highlighting the impact of potential privacy leakage on their decision-making (e.g., \cite{ding2020multi,olteanu2019co,rajtmajer2017ultimatum,gradwohl2017information}). For instance, Ding et al. in \cite{ding2020multi} analyzed the multi-party privacy conflict (MPC) in online social networks, where the private information of one user is disclosed by others who \mbox{co-own} the data. Olteanu et al. in \cite{olteanu2019co} also investigated such interdependent privacy and focused on the co-location information shared through social interactions. In addition to informational interdependencies, Gradwohl in \cite{gradwohl2017information} further considered network effects in social interactions and studied the impact of privacy enhancements on information sharing. These prior works focused on the social interactions among users without considering any seller's engagement or countermeasure. By contrast, our work explicitly considers the seller's learning from users' social interactions to infer private information and enable personalized pricing, which in turn affects the users' incentives and social interactions in the first place.

\subsection{Price Discrimination with Buyer Recognition}

On the seller's side, there is growing literature on price~discrimination with buyer recognition, where the seller learns buyers' preferences from their online behaviors or data (e.g., \cite{chen2020competitive,montes2019value,belleflamme2016monopoly,conitzer2012hide,acquisti2005conditioning}). For example, Conitzer et al. in \cite{conitzer2012hide} and Acquisti et al. in \cite{acquisti2005conditioning} investigated the scenario where the seller conditions pricing on a single buyer's purchase records in his past purchases. Particularly, Conitzer et al. in \cite{conitzer2012hide} allowed a buyer to hide his own past purchase records to thwart the seller's personalized pricing. Bellflamme et al. in \cite{belleflamme2016monopoly} further allowed a buyer to exploit hiding technologies to conceal his private valuation with a cost when the seller attempts to track his private information through profiling technologies. Montes et al. in \cite{montes2019value} and Chen et al. in~\cite{chen2020competitive} explored the duopoly setting, where sellers compete to target strategic buyers and personalize price offerings. However, the aforementioned studies did not incorporate the correlation among buyers' preferences into the learning process, which could significantly enhance the seller's ability to discriminate prices.

\section{System Model}\label{systemmodel}

\begin{figure*}[h]
\centering
\includegraphics[width=\linewidth]{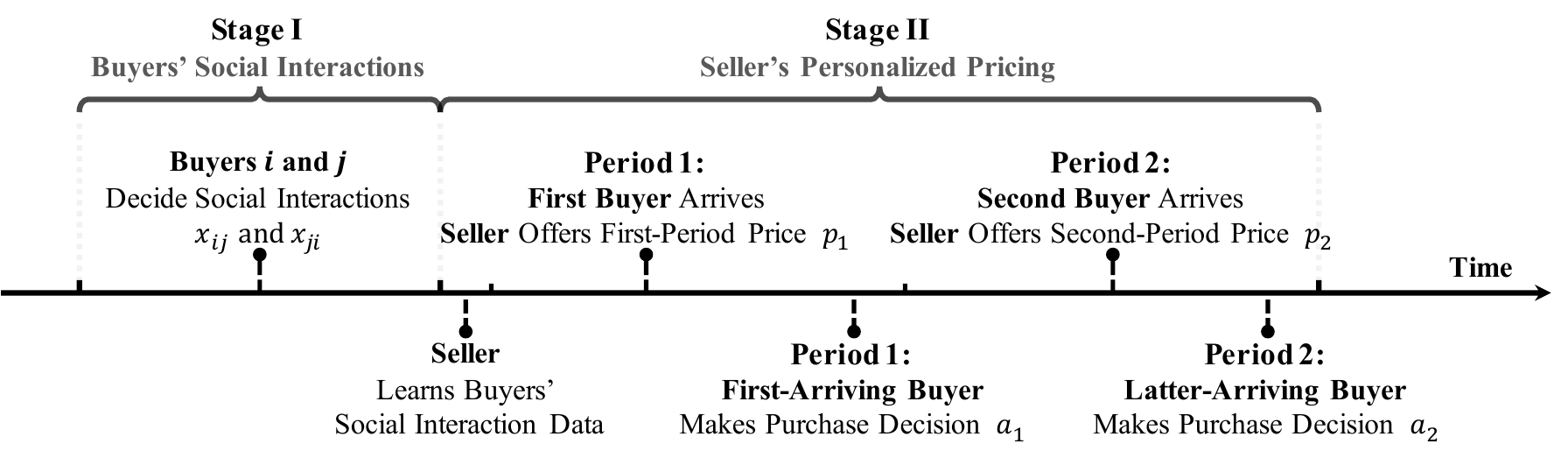}
\caption{Timeline of the dynamic Bayesian game: In Stage I, two buyers $i$ and $j$ socially interact to discuss a new product in the online social network. In Stage II, these buyers sequentially arrive and decide whether to purchase the product or not. The product seller learns the social interaction data from Stage I and then sequentially prices to each arriving buyer in the two consecutive selling periods of Stage II.}
\label{Timeline}
\end{figure*}

We begin with a stylized yet fundamental model between~a new product seller and any two socially connected buyers.\footnote{For convenience, we hereafter use female pronouns to refer to the seller and male (or plural) pronouns to refer to each buyer (or buyers).} This model captures buyers' pairwise relationship for learning their preference correlation. In Section \ref{sec:multi}, we will further explore more than two buyers' mutual relationships for the seller's dynamic learning.

To illustrate our system model, let us consider a scenario in the online marketplace where Amazon serves as the seller and offers the newly released gaming product, Battlefield 2042 \cite{a2021_battlefield}. In this context, the seller (Amazon) has access to the product-related discussions between two Facebook friends, denoted as buyers $i$ and $j$, regarding the upcoming release of Battlefield 2042 on the social media platform. The seller's objective is to learn the preference correlation between these two buyers and personalize price offerings for the upcoming selling season.

As illustrated in Fig. \ref{Timeline}, we consider two stages: \textit{Stage~I (Buyers' Social Interactions)}, where the two buyers engage with each other in the online social network, and \textit{Stage~II (Seller's Personalized Pricing)}, comprised of two purchase~periods during which the buyers and the seller interact in the product market. Next, we model these two stages in Section \ref{model_stageI} and Section \ref{model_stageII}, respectively. Finally, in Section \ref{game}, we formally formulate our dynamic Bayesian game.

\subsection{Model of Buyers' Social Interactions in Stage I}\label{model_stageI}

In Stage I, the two buyers socially interact to discuss a new product or related products on social media. As illustrated in Fig. \ref{Timeline}, each buyer decides his \textit{social interaction frequency} to the other buyer, jointly considering his \textit{social interaction utility} in Stage I and his \textit{purchase surplus} in Stage II later. Buyers' social decision-making closely relates to their purchase preferences, as elaborated in the following.

\subsubsection{Buyer Preference} A buyer $i$ ($j$, respectively) has a \emph{purchase preference} $v_i$ ($v_j$, respectively) for the new product, either high $v_{\text {\tiny H}}$ or low~$v_{\text {\tiny L}}$. This indicates his maximum willingness to pay for the~product, with $v_{\text {\tiny L}}<v_{\text {\tiny H}}$. Without accessing the two buyers' social data, the seller believes that each buyer's preference is independently and identically distributed, being high $(v_{\text {\tiny H}})$ or low $(v_{\text {\tiny L}})$ with an equal prior probability $1/2$. Here, we focus on this binary preference model for ease of exposition, and we extend it similarly to a more involved case of \textit{continuous preference distributions} in Appendix \ref{Appendix:continuous} to demonstrate~the~robustness of our key insights.

\subsubsection{Social Interaction Frequency} The two buyers \mbox{$i$ and $j$} simultaneously decide their \textit{social~interaction frequency} $x_{ij}\in\{0,1\}$ and $x_{ji}\in\{0,1\}$ with each other. Here, $x_{ij}$ indicates how often buyer $i$ interacts with buyer $j$ (e.g., through messages, comments, and sharing). Specifically, $x_{ij}=1$ implies buyer $i$ interacts with buyer~$j$ frequently to discuss the product on social media, whereas $x_{ij}=0$ means buyer $i$ seldom reaches out to buyer $j$. 

\subsubsection{Social Interaction Utility} As the sociological theory of homophily suggests, people tend to interact with those of similar preferences \cite{mcpherson2001birds}. Thus, we model the \emph{social interaction utility} $u_{i}(x_{ij},x_{ji})$ in Table~\ref{table} to quantify how much each buyer $i$ gains (or suffers) when interacting with the other. Table \ref{table} includes two sub-tables, Tables \ref{table_same} and \ref{table_differ}, depending on whether their preferences for the new product are the same or different. Within each of the pairs in the same round bracket, the first number represents the buyer $i$'s social interaction utility, and the second represents the buyer $j$'s, depending on their decisions ($x_{ij}$ and $x_{ji}$) in social interaction frequency.

\begin{table}[h]
\centering
\caption{Social Interaction Utility of Buyers $i$ and $j$}
\subtable[Same Preference $v_i=v_j$]{
\begin{tabular}{cp{2cm}p{2cm}}
\specialrule{\heavyrulewidth}{1.2pt}{0pt}
\multicolumn{1}{c}{\makecell[c]{}} &\makecell[c]{$x_{ji}=1$} &\makecell[c]{$x_{ji}=0$} \\
\specialrule{\lightrulewidth}{0pt}{1.2pt}
$x_{ij}=1$ &\cellcolor[gray]{0.8}{\makecell[c]{$(1,1)$}} &\cellcolor[gray]{0.8}{\makecell[c]{$(1-l,l)$}}\\
$x_{ij}=0$ &\cellcolor[gray]{0.8}{\makecell[c]{$(l,1-l)$}} &\cellcolor[gray]{0.8}{\makecell[c]{$(0,0)$}}\\
\specialrule{\heavyrulewidth}{1.2pt}{0pt}
\end{tabular}
\label{table_same}
}
\subtable[Different Preferences $v_i\neq v_j$]{
\begin{tabular}{cp{2cm}p{2cm}p{2cm}p{2cm}}
\specialrule{\heavyrulewidth}{1.2pt}{0pt}
\multicolumn{1}{c}{\makecell[c]{}} &\makecell[c]{$x_{ji}=1$} &\makecell[c]{$x_{ji}=0$}\\
\specialrule{\lightrulewidth}{0pt}{1.2pt}
$x_{ij}=1$ &\cellcolor[gray]{0.92}{\makecell[c]{$(-c,-c)$}} &\cellcolor[gray]{0.92}{\makecell[c]{$(-c+r,-r)$}}\\
$x_{ij}=0$ &\cellcolor[gray]{0.92}{\makecell[c]{$(-r,-c+r)$}} &\cellcolor[gray]{0.92}{\makecell[c]{$(0,0)$}}\\
\specialrule{\heavyrulewidth}{1.2pt}{0pt}
\end{tabular}
\label{table_differ}
}
\label{table}
\end{table}

Table \ref{table_same} tells that two buyers have the same preference \mbox{$v_i=v_j$}. First, we consider the combination case of \mbox{$x_{ij}=1$} and $x_{ji}=1$, leading to the social interaction~utilities $(1,1)$ for both buyers. Notice that we normalize such social interaction happiness as a unit $1$. In this case, when frequently interacting with the other buyer $j$ of the same preference, a buyer $i$ experiences social happiness through~gaining~empathy and reinforcing connections \cite{seiter2015secret}. Then we move on to the case of $x_{ij}=1$ and $x_{ji}=0$, where the social interaction utilities are $(1-l,l)$. Here, we let parameter \mbox{$l\in(0,1)$} denote the loss one buyer incurs due to the other buyer's low social response when both buyers have the same preference. In this case, if buyer $j$ seldom talks back to buyer $i$'s frequent social interaction $x_{ij}=1$, buyer $i$ would feel a lack of responses and gain less happiness $1-l$ \cite{krasnova2010online}. Alternatively, the other buyer $j$, when facing buyer $i$'s frequent social interactions of $x_{ij}=1$, also gains less happiness $l$ as he remains silent with $x_{ji}=0$. Finally, for the case of $x_{ij}=0$ and $x_{ji}=0$, the social interaction utilities are $(0,0)$. Here, we normalize each buyer's social interaction utility from rare interactions as zero, no matter whether they share the same preference in Table \ref{table_same} or not in Table \ref{table_differ}.

Table \ref{table_differ} tells that two buyers have different preferences $v_i\neq v_j$. First we consider the case of $x_{ij}=1$ and $x_{ji}=1$, the social interaction utilities are $(-c,-c)$. When attempting to interact with the other buyer $j$ who differs in preference, a buyer $i$ experiences embarrassment with even a disutility $c$ to maintain such social interaction \cite{mcpherson2001birds}. Then we move on to the case of \mbox{$x_{ij}=1$ and $x_{ji}=0$}, where the social interaction utilities are \mbox{$(-c+r,-r)$}. In this case, the other buyer $j$'s low social responses instead relax buyer $i$'s awkwardness $c$ given different preferences to a certain extent \mbox{$r\in(0,c)$}. Yet, in turn, when receiving frequent messages with different opinions from buyer $i$, buyer $j$ gets upset with a disutility~$r$.

\subsection{Model of Personalized Pricing in Stage II} \label{model_stageII}

At the beginning of Stage II, the seller can observe the \emph{common interaction frequency} between buyer $i$ and $j$, denoted~as,
\begin{equation}\label{hat}
    \hat{x}\triangleq\min\{x_{ij},x_{ji}\},
\end{equation}
which is also binary in set $\{0,1\}$. The \emph{minimum} operation captures the mutual essence of social interactions. This can be a reasonable approximation of reality, as when processing interaction data in a large social network, the seller needs to locate the two buyers to identify the related data and would lose certain data if any buyer chooses not to interact with the other buyer. Consider an example involving Alice $i$ and Bob $j$. Alice posts a co-owned photo with Bob ($x_{ij}=1$), and the seller can identify their correlation after Bob comments on the post ($x_{ji}=1$). Some privacy protection mechanism even asks Bob to grant or deny publicizing such a post \cite{such2016resolving}. That is, Bob's choice of denial $x_{ji}=0$ disables the availability of the co-owned photo from Alice, i.e., $\min\{x_{ij}, x_{ji}\}=0$. Overall, the seller needs to learn from such observable mutual common data $\hat{x}$ in (\ref{hat}) to infer the preference correlation between the two buyers.

Next, the seller announces prices $p_1$ and $p_2$~\mbox{sequentially}, for the two sequentially arriving buyers for the new product, respectively (with one buyer in each period of Stage~II, see Fig. \ref{Timeline}). Notice that the arrival sequence is \textit{random} (i.e., buyer $i$ may arrive either after or before buyer $j$)~due~to information diffusion in marketing or randomness in individual behaviors. This sequential purchase pattern is commonly observed in practice and widely adopted in the literature on dynamic pricing (e.g., \cite{acquisti2005conditioning,conitzer2012hide}). We denote the first-arriving buyer's binary purchase decision as $a_1$: $a_1=1$ if he decides to purchase (i.e., buyer's preference is no less than the offered price $v_1\ge p_1$), and $a_1=0$ otherwise. Similarly, we define $a_2$ for the latter-arriving buyer's purchase decision.  

Given incomplete information of buyers' preferences, the seller prices in the consecutive selling periods in Stage II to maximize the \textit{expected sale revenue} $\tilde{\Pi}$, by taking the expectation over all possible arriving buyers in the market,~i.e.,
\begin{equation}\label{expect_revenue}
    \tilde{\Pi}(p_1,p_2)\triangleq\mathbb{E}_{v_1,v_2}\{p_1a_1+p_2a_2\}.
\end{equation}
On the other hand, the \textit{final payoff} of each buyer consists of the purchase surplus in Stage II and the social interaction utility in Stage I (see Table \ref{table}). To illustrate, if buyer $i$ arrives in the selling period $t\in\{1,2\}$ of Stage II (hence with $v_t=v_i$), then the final payoff $\pi_i$ of buyer $i$ is: 
\begin{equation}\label{payoff1}
    \pi_i(x_{ij},x_{ji};v_t=v_i)\triangleq \max\{v_t-p_t,0\}+u_i(x_{ij},x_{ji}).
\end{equation}
Yet when deciding the social interaction frequency in Stage I, buyer $i$ is not sure whether he arrives earlier than buyer $j$ or not in the following Stage II. Thus, anticipating the seller's pricing in Stage II, buyer $i$ makes the social decision $x_{ij}$ to maximize his \textit{expected total payoff} $\tilde{\pi}_i$ over all possible arrival sequences, i.e.,
\begin{equation}\label{expect_payoff}
\begin{aligned}
\tilde{\pi}_i(x_{ij},x_{ji})&\triangleq\mathbb{E}_{t\in\{1,2\}}\left[\pi_i(x_{ij},x_{ji};v_t=v_i)\right].
\end{aligned}
\end{equation}
Here, buyer $i$ may deviate from the interaction frequency determined purely according to Table~\ref{table}. We broadly define such deviation as buyer $i$'s \textit{manipulation} of his social interaction with buyer $j$, which is expected to disguise their (true) homophily information. We further use the term ``unilateral manipulation'' to refer to the situation in which one buyer independently alters the interaction frequency from the one set by Table \ref{table}, with no corresponding change from the other buyer.

\subsection{Dynamic Bayesian Game Formulation}\label{game}

We formally model the interactions among the seller and~the buyers as a two-stage dynamic Bayesian game as follows, with the decision-making timing illustrated in Fig. \ref{Timeline}. The solution concept we adopt is \textit{perfect Bayesian equilibrium~(PBE)}. To differentiate from the two benchmark models to be introduced later, we terminate the current one \textit{strategic-learning model}. 

\begin{itemize}
    \item \textbf{Stage I:} Two buyers $i$ and $j$ simultaneously decide their social interaction frequencies $x_{ij}$ and $x_{ji}$, with the goals of maximizing their individual expected total payoffs $\tilde{\pi}_i$ or $\tilde{\pi}_j$ in (\ref{expect_payoff}), respectively.\footnote{We model buyers as strategizers in response to the growing awareness of individuals' privacy concerns regarding their online social activities \cite{consumers}. In particular, we highlight that privacy concerns here stem from the potential economic consequences of information disclosure, specifically due to price discrimination.}
    
    \item \textbf{Stage II:} After accessing buyers' social interaction data, the seller decides and announces prices $p_1$ and $p_2$ sequentially for the arriving buyers to maximize her expected sale revenue in (\ref{expect_revenue}) with strategic learning. Notice that here, buyers in any selling period $t$ decide to purchase or not:
    \begin{equation}\label{purchasedecision}
        a_t^*(v_t,p_t)=\boldsymbol{1}(v_t\ge p_t),\quad\forall t\in\{1,2\},
    \end{equation}
    where $\boldsymbol{1}(\cdot)$ is an indicator function. 
\end{itemize}

Next, we explain the \emph{information structure} of this game. At the beginning of the game, both buyers' product preferences are private and only known to themselves;\footnote{Before deciding the social interaction frequencies on social media, the two buyers are aware of each other's preferences, and this knowledge arises from the buyers' prior acquaintance through historical or physical interactions.} the~seller only knows the prior distribution (e.g., through market surveys or previous transaction data on related products). After learning from buyers' mutual interactions in the social network in Stage I, the seller infers the correlation between buyers' preferences at the beginning of Stage II. Yet, she still does not know each individual's private preference. Only after offering the first-period price $p_1$ in Stage II, the seller may use the price to sample the first-arriving buyer's preference from his purchase decision. Together with the correlation learned from Stage~I, the seller can infer the latter-arriving buyer's preference. Overall, the seller's incomplete information about each buyer's preference gradually decreases along the timeline in Fig. \ref{Timeline}. Notably, our game incorporates a unique double-layered information asymmetry existing between the seller and buyers, integrating both the individual buyer information and inter-buyer correlation information.

Notice that the seller cannot charge a personalized price to the first-arriving buyer. This is because the seller only knows the \textit{correlation} of both buyers' preferences instead of this buyer's \textit{individual preference}. But for the latter-arriving buyer, the seller can combine the purchase record in the prior selling period to tell this buyer's preference. In this sense, only the second-period price $p_2$ can be personalized.

\section{Two Benchmarks}\label{benchmark}

Before we analyze the PBE of the strategic-learning model defined in Section \ref{game}, we first introduce two benchmarks~for later comparisons.

\subsection{No-learning Benchmark}\label{bench_nolearning}

In the first benchmark, we consider the case where the~seller cannot access buyers' social interaction data. In this case, Stages I and II of the dynamic game outlined in Section~\ref{game} are separated. Therefore, no buyer has the incentive to~manipulate, and the seller has to set a uniform price in both selling periods. Here, the solution concept of PBE degenerates to the subgame perfect equilibrium (SPE), as detailed in the following lemma.

\begin{lemma}\label{lemma:nolearning}
In the no-learning benchmark, the SPE of the~dynamic game is given as follows.
\begin{itemize}
    \item \mbox{\emph{In Stage I,}} if buyers $i$ and $j$ have the same preference \mbox{($v_i=v_j$)}, both choose the maximum social interaction frequency, i.e., \mbox{$x^*_{ij}=x^*_{ji}=1$}. If they have different preferences \mbox{($v_i\neq v_j$)}, they choose the minimum interaction frequency, i.e., \mbox{$x^*_{ij}=x^*_{ji}=0$}.
    
    \item \mbox{\emph{In Stage II,}} the seller charges the same uniform price in both selling periods:
    \begin{equation}p_1^*=p_2^*=\left\{
    \begin{aligned}
    v_{\text{\tiny H}}, &\quad\textrm{if $2v_{\text{\tiny L}} \le v_{\text{\tiny H}}$,}\\
    v_{\text{\tiny L}}, &\quad\textrm{otherwise.}
    \end{aligned}
    \right.
    \end{equation}
\end{itemize}
\end{lemma}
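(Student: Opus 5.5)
We prove the lemma by backward induction, using the fact that in the no-learning benchmark the two stages are decoupled. The seller cannot observe $\hat{x}$, and the second-period price cannot be conditioned on anything correlated with $v_2$. Hence buyers' Stage-I choices do not affect the Stage-II prices, and the purchase surplus term in (\ref{payoff1}) is a constant from each buyer's Stage-I viewpoint. We therefore solve Stage II first, treating the buyers' preferences as i.i.d. with prior $1/2$ each. We then solve Stage I as a pure game in the social interaction utilities of Table~\ref{table}.

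\textbf{Stage II.} Under the seller's information, each arriving buyer independently has $v_t=v_{\text{\tiny H}}$ or $v_{\text{\tiny L}}$ with probability $1/2$, and the purchase rule is (\ref{purchasedecision}). The seller observes the first buyer's purchase decision $a_1$, but by the i.i.d. prior this carries no information about $v_2$. So (\ref{expect_revenue}) separates into two identical one-period problems. In each period, any price $p\in(v_{\text{\tiny L}},v_{\text{\tiny H}}]$ is weakly dominated by $p=v_{\text{\tiny H}}$, and any $p<v_{\text{\tiny L}}$ is dominated by $p=v_{\text{\tiny L}}$. It therefore suffices to compare revenue $v_{\text{\tiny H}}/2$ from pricing high with revenue $v_{\text{\tiny L}}$ from pricing low. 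This gives $p_1^*=p_2^*=v_{\text{\tiny H}}$ when $2v_{\text{\tiny L}}\le v_{\text{\tiny H}}$ and $v_{\text{\tiny L}}$ otherwise, with ties broken toward $v_{\text{\tiny H}}$ as in the statement.

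\textbf{Stage I.} Since the prices are fixed, each buyer's expected total payoff (\ref{expect_payoff}) is his social utility plus a constant. Each buyer knows both preferences, so we analyze the two complete-information subgames given by Tables~\ref{table_same} and \ref{table_differ}.
\begin{itemize}
\item \emph{Same preference.} Against $x_{ji}=1$, buyer $i$ gets $1>l$ from choosing $1$; against $x_{ji}=0$, he gets $1-l>0$. So $x_{ij}=1$ is strictly dominant because $l\in(0,1)$, and by symmetry the equilibrium is $(1,1)$.
\item \emph{Different preferences.} Against $x_{ji}=1$, buyer $i$ gets $-c<-r$ from choosing $1$; against $x_{ji}=0$, he gets $-c+r<0$. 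So $x_{ij}=0$ is strictly dominant because $0<r<c$, and the equilibrium is $(0,0)$.
\end{itemize}
Strict dominance also gives uniqueness of the SPE.

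The only step needing care is justifying the decoupling in Stage II. We must argue that $a_1$ is uninformative about $v_2$ under the i.i.d. prior, so no dynamic price adjustment is profitable, and that without data access the prices cannot respond to $x$. Everything else is a direct comparison of payoffs.
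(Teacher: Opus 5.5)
Your proposal is correct and follows essentially the same route as the paper. Both arguments decouple the two stages because the seller cannot observe $\hat{x}$. Stage~II then becomes two identical one-period problems that compare $v_{\text{\tiny H}}/2$ with $v_{\text{\tiny L}}$; the paper cites Lemma~\ref{pricebinary} for the binary price restriction, which you reprove inline, and you need only add that $p>v_{\text{\tiny H}}$ yields zero revenue. Stage~I is then solved by normal-form analysis of Tables~\ref{table_same} and~\ref{table_differ}, where your strict-dominance checks ($1>l$, $1-l>0$, $-c<-r$, $-c+r<0$) spell out details the paper leaves implicit.
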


\subsection{Undisclosed-learning Benchmark}\label{bench_unaware}

Next, we introduce the second benchmark case, where the seller can ideally access buyers' social interaction data in Stage I in an undisclosed way. In other words, buyers are not aware of such access or learning by the seller. Therefore, buyers' social behaviors $(x_{ij}^*,x_{ji}^*)$ in Stage I remain the same as those in the no-learning benchmark. However, in Stage II, the seller can personalize the price offering for the arriving buyer in the second selling period.

\begin{lemma}\label{unaware}
In the undisclosed-learning benchmark, the SPE of~the dynamic game is given as follows.
\begin{itemize}
    \item \mbox{\emph{In Stage I,}} buyers' equilibrium social interaction frequencies are the same as those in the no-learning benchmark.
    
    \item \mbox{\emph{In Stage II,}} the seller's equilibrium pricing depends on the relation between $v_{\text{\tiny L}}$ and $v_{\text{\tiny H}}$ as follows.
    \begin{itemize}  
        \item [(i)] \mbox{\emph{\textbf{Large $\bm{v_{\text{\tiny L}}}$ regime} with $v_{\text{\tiny L}}\in[2v_{\text{\tiny H}}/3,v_{\text{\tiny H}})$:}}~The~seller charges low prices in both selling periods, i.e., $p_1^*=p_2^*=v_{\text{\tiny L}}$.
        \item [(ii)] \emph{\textbf{Small $\bm{v_{\text{\tiny L}}}$ regime} with $v_{\text{\tiny L}}\in(0,2v_{\text{\tiny H}}/3)$:} The seller charges a high price $p_1^*=v_{\text{\tiny H}}$ in the first selling period. In the second selling period, the seller offers a personalized price that depends on the buyers' common interaction frequency $\hat{x}^*=\min\{x_{ij}^*,x_{ji}^*\}$ in Stage I and the purchase record $a_1^*\in\{0,1\}$ of the first selling period in Stage II:
    \begin{numcases}{p_2^*=}
    v_{\text{\tiny H}}\boldsymbol{1}(a_1^*=1)+v_{\text{\tiny L}}\boldsymbol{1}(a_1^*=0),&\textrm{if $\ \hat{x}^\ast=1$,}\label{Unaware_p2_1}\\
    v_{\text{\tiny L}}\boldsymbol{1}(a_1^*=1)+v_{\text{\tiny H}}\boldsymbol{1}(a_1^*=0),&\textrm{if $\ \hat{x}^\ast=0$.}\label{Unaware_p2_2}
    \end{numcases}
    \end{itemize}
\end{itemize}
\end{lemma}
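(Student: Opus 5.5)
The plan is to solve the game by backward induction. Because buyers are unaware of the seller's learning, Stage~I decouples from Stage~II in their objectives. The seller, however, conditions on $\hat{x}^*$, which turns out to reveal the preference correlation perfectly.

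\textbf{Stage I.} Since buyers do not anticipate any link between their social data and prices, each buyer $i$ maximizes $u_i(x_{ij},x_{ji})$ from Table~\ref{table} alone. I will show that the strategies in Lemma~\ref{lemma:nolearning} are strictly dominant in each sub-table.
\begin{itemize}
\item If $v_i=v_j$, then $x_{ij}=1$ strictly dominates $x_{ij}=0$: against $x_{ji}=1$ it gives $1>l$, and against $x_{ji}=0$ it gives $1-l>0$, using $l\in(0,1)$.
\item If $v_i\neq v_j$, then $x_{ij}=0$ strictly dominates $x_{ij}=1$: against $x_{ji}=1$ it gives $-r>-c$, and against $x_{ji}=0$ it gives $0>-c+r$, using $0<r<c$.
\end{itemize}
This reproduces the Stage~I part of Lemma~\ref{lemma:nolearning}. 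Hence $\hat{x}^*=1$ exactly when $v_i=v_j$, and $\hat{x}^*=0$ exactly when $v_i\neq v_j$. The seller, knowing these strategies, therefore learns the correlation perfectly from $\hat{x}^*$. Her posterior on each individual preference is still uniform, since the arrival order is random and each preference is i.i.d.\ with prior $1/2$.

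\textbf{Stage II.} Only $p_t\in\{v_{\text{\tiny L}},v_{\text{\tiny H}}\}$ need be considered, since any other price is weakly dominated by the nearest lower support point. I compare the two choices of $p_1$.
\begin{itemize}
\item If $p_1=v_{\text{\tiny H}}$, then $a_1^*$ reveals $v_1$, and combined with $\hat{x}^*$ it reveals $v_2$. The seller sets $p_2=v_2$, which gives exactly the rule in (\ref{Unaware_p2_1})--(\ref{Unaware_p2_2}). Expected revenue is $v_{\text{\tiny H}}/2+(v_{\text{\tiny H}}+v_{\text{\tiny L}})/2=v_{\text{\tiny H}}+v_{\text{\tiny L}}/2$.
\item If $p_1=v_{\text{\tiny L}}$, the first buyer always buys, so $a_1^*$ carries no information and $v_2$ remains uniform. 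The best $p_2$ is then the uniform-pricing optimum, yielding revenue $v_{\text{\tiny L}}+\max\{v_{\text{\tiny H}}/2,v_{\text{\tiny L}}\}$.
\end{itemize}
Comparing the two revenues:
\begin{itemize}
\item When $v_{\text{\tiny L}}<v_{\text{\tiny H}}/2$, we have $v_{\text{\tiny L}}+v_{\text{\tiny H}}/2<v_{\text{\tiny H}}+v_{\text{\tiny L}}/2$, so the high first-period price wins.
\item When $v_{\text{\tiny L}}\ge v_{\text{\tiny H}}/2$, the comparison is $2v_{\text{\tiny L}}$ versus $v_{\text{\tiny H}}+v_{\text{\tiny L}}/2$. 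The low-price strategy wins iff $v_{\text{\tiny L}}\ge 2v_{\text{\tiny H}}/3$, with ties broken toward low prices.
\end{itemize}
This gives the two regimes (i) and (ii).

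The main obstacle is this last comparison: correctly valuing the information revealed by a high first-period price against the certain sale at $v_{\text{\tiny L}}$. In particular, one must verify that under $p_1=v_{\text{\tiny L}}$ no information is gained, so that the second-period price reverts to the uniform benchmark.
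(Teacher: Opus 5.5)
Your proposal is correct and follows the paper's proof essentially step for step. Stage~I is handled by the normal-form analysis of Table~\ref{table}, where your explicit strict-dominance check is a nice sharpening. Stage~II is backward induction comparing $p_1=v_{\text{\tiny H}}$ (revenue $v_{\text{\tiny H}}+v_{\text{\tiny L}}/2$) with $p_1=v_{\text{\tiny L}}$ (revenue $v_{\text{\tiny L}}+\max\{v_{\text{\tiny H}}/2,v_{\text{\tiny L}}\}$); the only difference is that you merge the $\hat{x}^*=1$ and $\hat{x}^*=0$ cases, which the paper treats separately.

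One slip to fix concerns the restriction to binary prices. A price off the support is dominated by \emph{raising} it to the nearest support point above it, as in Lemma~\ref{pricebinary}; lowering applies only when $p_t>v_{\text{\tiny H}}$. It is not dominated by the nearest lower support point: for example, $p_2\in(v_{\text{\tiny L}},v_{\text{\tiny H}})$ strictly beats $v_{\text{\tiny L}}$ once $v_2=v_{\text{\tiny H}}$ is known.
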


Since buyers do not manipulate their social interaction data in Stage I, the seller learns that the product preferences of the two buyers are positively correlated when observing \mbox{$\hat{x}^*=\min\{x_{ij}^*,x_{ji}^*\}=1$}. Further, if the first-arriving buyer purchases the product at $p_1^*=v_{\text{\tiny H}}$, the seller knows that his preference is $v_1=v_{\text{\tiny H}}$. Hence, the seller would charge \mbox{$p_2^*=v_{\text{\tiny H}}$} in \eqref{Unaware_p2_1} to the latter-arriving buyer, who has the same high preference $v_2=v_{\text{\tiny H}}$. However, if the first-arriving buyer does not purchase, the seller learns that his preference is $v_1=v_{\text{\tiny L}}$ and charges $p_2^*=v_{\text{\tiny L}}$ in \eqref{Unaware_p2_1} to the latter-arriving buyer who has the same low preference. Similar arguments apply when the seller observes $\hat{x}^*=0$.

By comparing the above two benchmarks, we have the following conclusion about the effect of the seller's undisclosed learning practice.

\begin{corollary}\label{salerevenue_un}
In the undisclosed-learning benchmark, the seller's expected revenue equals or surpasses the no-learning benchmark. This revenue increase occurs as long as $v_{\text{\tiny L}}<2v_{\text{\tiny H}}/3$, with the potential to attain a maximum gain of $25\%$.
\end{corollary}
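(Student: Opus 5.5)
The plan is to compute the seller's expected revenue in closed form under each benchmark, using the equilibria in Lemma~\ref{lemma:nolearning} and Lemma~\ref{unaware}. We then compare the two expressions regime by regime and maximize their ratio over the ratio $v_{\text{\tiny L}}/v_{\text{\tiny H}}$. Since both lemmas already give the equilibrium prices, no further optimization over strategies is needed. Throughout, we use the prior that each buyer's preference is $v_{\text{\tiny H}}$ or $v_{\text{\tiny L}}$ with probability $1/2$, independently.

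\emph{No-learning revenue.} A uniform price $v_{\text{\tiny H}}$ sells with probability $1/2$ in each period, giving total revenue $v_{\text{\tiny H}}$. A uniform price $v_{\text{\tiny L}}$ always sells, giving $2v_{\text{\tiny L}}$. Hence Lemma~\ref{lemma:nolearning} yields
\begin{equation*}
\Pi^{\text{NL}}=\max\{v_{\text{\tiny H}},2v_{\text{\tiny L}}\}.
\end{equation*}

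\emph{Undisclosed-learning revenue, large $v_{\text{\tiny L}}$ regime.} For $v_{\text{\tiny L}}\ge 2v_{\text{\tiny H}}/3$, the seller charges $v_{\text{\tiny L}}$ twice, so the revenue is $2v_{\text{\tiny L}}$. Since $v_{\text{\tiny L}}\ge 2v_{\text{\tiny H}}/3>v_{\text{\tiny H}}/2$, the no-learning revenue is also $2v_{\text{\tiny L}}$, and the two benchmarks coincide.

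\emph{Undisclosed-learning revenue, small $v_{\text{\tiny L}}$ regime.} For $v_{\text{\tiny L}}<2v_{\text{\tiny H}}/3$, the first period at $p_1^*=v_{\text{\tiny H}}$ yields $v_{\text{\tiny H}}/2$. In the second period, buyers do not manipulate, so $\hat{x}^*$ reveals the true correlation and $a_1^*$ reveals $v_1$. Together these pin down $v_2$ exactly. The prices in \eqref{Unaware_p2_1}--\eqref{Unaware_p2_2} therefore extract $v_2$ fully, with expected value $(v_{\text{\tiny H}}+v_{\text{\tiny L}})/2$. The total is
\begin{equation*}
\Pi^{\text{UL}}=v_{\text{\tiny H}}+\tfrac{v_{\text{\tiny L}}}{2}.
\end{equation*}

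\emph{Strict gain.} We have $v_{\text{\tiny H}}+v_{\text{\tiny L}}/2>v_{\text{\tiny H}}$ trivially, and $v_{\text{\tiny H}}+v_{\text{\tiny L}}/2>2v_{\text{\tiny L}}$ if and only if $v_{\text{\tiny L}}<2v_{\text{\tiny H}}/3$. Combined with the large-regime case, this gives $\Pi^{\text{UL}}\ge\Pi^{\text{NL}}$ everywhere, with strict inequality exactly when $v_{\text{\tiny L}}<2v_{\text{\tiny H}}/3$.

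\emph{Maximal gain.} The only mildly delicate step is maximizing the ratio $\Pi^{\text{UL}}/\Pi^{\text{NL}}$, because the no-learning revenue has a kink at $v_{\text{\tiny L}}=v_{\text{\tiny H}}/2$. We split the range accordingly.
\begin{itemize}
\item For $v_{\text{\tiny L}}\le v_{\text{\tiny H}}/2$, the ratio is $1+v_{\text{\tiny L}}/(2v_{\text{\tiny H}})$. It is increasing in $v_{\text{\tiny L}}$ and reaches $5/4$ at $v_{\text{\tiny L}}=v_{\text{\tiny H}}/2$.
\item For $v_{\text{\tiny L}}\in[v_{\text{\tiny H}}/2,2v_{\text{\tiny H}}/3)$, the ratio is $(v_{\text{\tiny H}}+v_{\text{\tiny L}}/2)/(2v_{\text{\tiny L}})=v_{\text{\tiny H}}/(2v_{\text{\tiny L}})+1/4$. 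It is decreasing in $v_{\text{\tiny L}}$, with value $5/4$ at $v_{\text{\tiny L}}=v_{\text{\tiny H}}/2$ and $1$ at $v_{\text{\tiny L}}=2v_{\text{\tiny H}}/3$.
\end{itemize}
Hence the maximum relative gain is $25\%$, attained at $v_{\text{\tiny L}}=v_{\text{\tiny H}}/2$.
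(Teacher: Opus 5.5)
Your proposal is correct and follows essentially the same route as the paper's proof. The paper likewise computes the two expected revenues from Lemma~\ref{lemma:nolearning} and Lemma~\ref{unaware}, namely $\max\{v_{\text{\tiny H}},2v_{\text{\tiny L}}\}$ and $v_{\text{\tiny H}}+v_{\text{\tiny L}}/2$ (for $v_{\text{\tiny L}}<2v_{\text{\tiny H}}/3$), and then compares them piecewise around the kink at $v_{\text{\tiny L}}=v_{\text{\tiny H}}/2$, where the maximal $25\%$ gain is attained.
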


\section{Efficient Reduction of Equilibrium Space}\label{PBE1}

We now proceed to our strategic-learning model, presented in Section \ref{game}, where buyers are informed about the seller's learning practice. We start with a streamlined equilibrium analysis of the seller's strategic learning from Stage I (Section \ref{learning1}) and pricing in Stage II (Section \ref{learning2}), significantly trimming the equilibrium space's feasible set. This process enables us to formulate a belief about buyers' manipulation structure in Stage I through forward analysis. This, in turn, facilitates a tractable analysis of PBE using backward induction in the following section.

Given the \textit{double-layered information asymmetry} between the seller and the two buyers, which incorporates both~individual buyer information and inter-buyer correlation information, there are a large number of possibilities of buyers'~private information behind the seller's observable social interaction data $\hat{x}$ in (\ref{hat}). This motivates us to first reduce the searching space of all possible strategy combinations to facilitate our forward analysis in this section for the seller's learning.

\subsection{Learning from Buyers' Social Interaction Data}\label{learning1}

We first guide the seller's strategic learning from buyers'~social interaction data in Stage I with the following lemma.

\begin{lemma}\label{lemma1_buyer}
In the PBE of the strategic-learning model, buyers $i$ and $j$ with the same low preference choose the maximum social interaction frequency in Stage I:
\begin{equation}
x_{ij}^*(v_i=v_j=v_{\text{\tiny L}})=x_{ji}^*(v_j=v_i=v_{\text{\tiny L}})=1.    
\end{equation}
Given different preferences, both buyers choose the minimum~social interaction frequency in Stage I: 
\begin{equation}
x_{ij}^*(v_j\neq v_i)=x_{ji}^*(v_i\neq v_j)=0.
\end{equation}
\end{lemma}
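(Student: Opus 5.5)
The argument compares, for each buyer type profile, the most a buyer can gain in Stage II purchase surplus by manipulating against the certain change in Stage I social utility from Table~\ref{table}. Because prices must be optimal for the seller, any price a buyer faces is at least $v_{\text{\tiny L}}$. So a low-preference buyer always gets zero purchase surplus, whatever $\hat{x}$ is and whatever he or his partner does. A high-preference buyer's surplus lies in $[0, v_{\text{\tiny H}}-v_{\text{\tiny L}}]$ and depends on his choice only through $\hat{x}=\min\{x_{ij},x_{ji}\}$. The proof proceeds in three steps.

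\textbf{Step 1: two low-preference buyers.} Take $v_i=v_j=v_{\text{\tiny L}}$. Stage II surplus is identically zero for both buyers, so each maximizes $u_i$ alone. From Table~\ref{table_same}, if $x_{ji}=1$ buyer $i$ compares $1$ with $l$, and if $x_{ji}=0$ he compares $1-l$ with $0$. Since $l\in(0,1)$, $x_{ij}=1$ is strictly dominant, and by symmetry $x_{ji}=1$ too. The beliefs the seller holds after observing $\hat{x}$ play no role here.

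\textbf{Step 2: different preferences, low-preference buyer.} Without loss of generality let $v_i=v_{\text{\tiny L}}$ and $v_j=v_{\text{\tiny H}}$. Buyer $i$ again has zero surplus, so he maximizes $u_i$ from Table~\ref{table_differ}. If $x_{ji}=1$ he compares $-c$ with $-r$, and since $r<c$ he picks $0$. If $x_{ji}=0$ he compares $-c+r<0$ with $0$, and again picks $0$. So $x_{ij}=0$ is strictly dominant.

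\textbf{Step 3: different preferences, high-preference buyer.} Given $x_{ij}=0$, we have $\hat{x}=0$ regardless of $x_{ji}$. Buyer $j$'s choice therefore cannot affect the seller's information or prices, and he maximizes $u_j$ given $x_{ij}=0$. This means comparing $-c+r$ with $0$, so $x_{ji}=0$ strictly. Both buyers thus choose the minimum frequency, as claimed.

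The main obstacle is rigor about the PBE structure. First, the domination arguments must hold for every possible seller belief system and pricing continuation, including off-path beliefs and randomized pricing. This requires stating carefully that every optimal seller price lies in $[v_{\text{\tiny L}},v_{\text{\tiny H}}]$; any price outside it is strictly dominated for the seller. Second, Step 3 relies on the min structure in (\ref{hat}) to neutralize the high-preference buyer's signal; this is the key point to check. Finally, the lemma deliberately makes no claim about the $v_{\text{\tiny H}}$--$v_{\text{\tiny H}}$ pair. That case is where manipulation incentives arise and is left to later analysis.
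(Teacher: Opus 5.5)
Your proposal is correct and is essentially the paper's argument. The paper likewise invokes Lemma~\ref{pricebinary} to conclude that a low-preference buyer's Stage~II surplus is always zero. It then uses strict dominance in Tables~\ref{table_same} and~\ref{table_differ} for the low-preference buyer, and the $\min$ structure of $\hat{x}$ in \eqref{hat} to argue that the high-preference buyer in a mixed pair cannot manipulate unilaterally and so chooses $0$. Your remark that $p_t\ge v_{\text{\tiny L}}$ holds at every information set, so that off-path beliefs and seller randomization do not matter, makes explicit a point the paper leaves implicit.
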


Intuitively, a low-preference buyer just affords a personalized price $p_2=v_{\text{\tiny L}}$ in Stage II. Likewise, when facing higher prices, he also receives a zero purchase surplus by deciding not to purchase the product. When interacting with another low-preference buyer, manipulating social interaction data from high to low interaction frequency does not improve this low-preference buyer's purchase surplus, but reduces his social interaction utility as in Table \ref{table_same}. However, when meeting a high-preference buyer $j$, a low-preference buyer $i$ loses from frequent interactions given their different preferences, as stated in Table~\ref{table_differ}. In this case, this low-preference buyer $i$ honestly chooses \mbox{$x_{ij}^*=0$}, as any manipulation to \mbox{$x_{ij}^*=1$} does not help improve his purchase surplus in Stage II. As a result, the common interaction frequency signal in \eqref{hat} to the seller is always \mbox{$\hat{x}=0$}. The high-preference buyer $j$ cannot manipulate their preference correlation unilaterally, and thus also chooses \mbox{$x_{ji}^*=0$} honestly given their different preferences.

After applying Lemma \ref{lemma1_buyer}, the remaining case to consider is \mbox{$v_j=v_i=v_{\text{\tiny H}}$}. In this case, we generally allow these high-preference buyers to employ a mixed strategy, i.e., choosing the low social interaction frequency $0$ with a manipulation probability while choosing the high interaction frequency $1$ with complementary probability. Formally, we define such \textit{manipulation strategies} for high-preference buyers below. Notice that as both high-preference buyers are symmetric, they would choose the same manipulation probability, and this is the unique equilibrium structure for them.

\begin{definition}[Manipulation strategy for a high-preference buyer]
When interacting with another high-preference buyer, a high-preference buyer chooses the low social interaction frequency $0$ with a manipulation probability $\rho$, i.e.,
\begin{equation}\label{rho}
\begin{aligned}
     \rho&\triangleq {\rm{Pr}}\left(x_{ij}(v_i=v_j=v_{\text {\tiny H}})=0 \right)\\
     &={\rm{Pr}}\left(x_{ji}(v_j=v_i=v_{\text {\tiny H}})=0 \right),
\end{aligned}
\end{equation}
whereas he chooses high interaction frequency $1$ with complementary probability $1-\rho$.
\end{definition}

To enable the seller's personalized pricing in Stage II,~we now analyze the seller's posterior belief of buyers' product preferences by learning from their common interaction frequency $\hat{x}$. According to the Bayes' theorem, we have for the seller:
\begin{equation}\label{belief1}
{\rm{Pr}}(v_i=v_j=v_{\text {\tiny H}}|\hat{x}=1)=\frac{{\rm{Pr}}(\hat{x}=1|v_i=v_j=v_{\text {\tiny H}})}{{\rm{Pr}}(\hat{x}=1|v_i=v_j=v_{\text {\tiny H}})+1},
\end{equation}
and
\begin{equation}\label{belief0}
{\rm{Pr}}(v_i=v_j=v_{\text {\tiny H}}|\hat{x}=0)=\frac{1-{\rm{Pr}}(\hat{x}=1|v_i=v_j=v_{\text {\tiny H}})}{3-{\rm{Pr}}(\hat{x}=1|v_i=v_j=v_{\text {\tiny H}})},
\end{equation}
which cannot be directly determined using backward induction from Stage II. Indeed, to ensure belief consistency over time, we need to examine the consistency between buyers' social decisions and the beliefs above, based on which the seller optimizes pricing schemes in Stage II (see Section \ref{PBE2}).

\subsection{Learning from First-arriving Buyer's Purchase}\label{learning2}

Next, we present another lemma that helps us narrow down the seller's pricing-decision space in Stage II. 

\begin{lemma}\label{pricebinary}
In Stage II of the strategic-learning model, the seller's equilibrium price $p_t^*$ in any selling period $t \in\{1,2\}$ is either $v_{\text{\tiny H}}$ or $v_{\text{\tiny L}}$.
\end{lemma}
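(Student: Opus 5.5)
The plan is a standard dominance argument on the seller's revenue as a function of each posted price, working backward from the second selling period to the first. The one structural fact driving everything is that every buyer's preference lies in $\{v_{\text{\tiny L}},v_{\text{\tiny H}}\}$ and that purchase decisions follow the threshold rule \eqref{purchasedecision}, $a_t^*=\boldsymbol{1}(v_t\ge p_t)$. Hence for any price $p_t$ the set of buyer types who buy is one of three sets: $\{v_{\text{\tiny L}},v_{\text{\tiny H}}\}$ if $p_t\le v_{\text{\tiny L}}$, $\{v_{\text{\tiny H}}\}$ if $v_{\text{\tiny L}}<p_t\le v_{\text{\tiny H}}$, and $\emptyset$ if $p_t>v_{\text{\tiny H}}$. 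So I will partition the price line into these three intervals. On each interval the purchase behaviour is constant, while revenue is weakly increasing in the price.

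\textbf{Second period.} At the start of period 2 the seller holds some posterior belief $q\in[0,1]$ that $v_2=v_{\text{\tiny H}}$. This belief is built from $\hat{x}$ via \eqref{belief1}--\eqref{belief0} and from the observed $a_1$. Expected revenue from $p_2$ is then $p_2$ if $p_2\le v_{\text{\tiny L}}$, $q\,p_2$ if $p_2\in(v_{\text{\tiny L}},v_{\text{\tiny H}}]$, and $0$ otherwise. On each interval revenue is nondecreasing in $p_2$, so the supremum on each interval is attained at its right endpoint. Any price strictly inside an interval is weakly (and strictly, whenever sales there have positive probability) dominated by $v_{\text{\tiny L}}$ or $v_{\text{\tiny H}}$, and prices above $v_{\text{\tiny H}}$ are dominated by $v_{\text{\tiny H}}$. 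Thus $p_2^*\in\{v_{\text{\tiny L}},v_{\text{\tiny H}}\}$. If the seller mixes, she mixes only over these two values.

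\textbf{First period.} This step needs more care, because $p_1$ affects both current revenue and the information the seller acquires. The key observation is that the information revealed by $p_1$ depends only on which of the three intervals $p_1$ lies in, since $a_1$ is determined by that interval. Given the equilibrium Stage I strategies fixed by Lemma~\ref{lemma1_buyer} and $\rho$, the continuation value $\mathbb{E}[p_2^*a_2]$ is therefore a function of the interval alone. Also, $p_1$ has no effect on buyers' Stage I behaviour in the subgame, as the Stage I decisions are already sunk. Within each interval, total expected revenue equals a constant continuation term plus $p_1\Pr(a_1=1)$, which is nondecreasing in $p_1$. Replacing $p_1$ by the right endpoint of its interval ($v_{\text{\tiny L}}$ or $v_{\text{\tiny H}}$) therefore keeps the same learning and weakly raises revenue. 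A price above $v_{\text{\tiny H}}$ yields no sale and reveals nothing, so it is weakly dominated by $v_{\text{\tiny H}}$: that price reveals at least as much and earns weakly more.

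\textbf{Main obstacle.} I expect the hardest part to be handling the weak-dominance ties, together with the requirement that the seller's equilibrium price must take one of the two values rather than merely be chosen without loss of generality. Ties arise, for example, when $q=0$ on $(v_{\text{\tiny L}},v_{\text{\tiny H}}]$, or when $p_1>v_{\text{\tiny H}}$, which earns nothing but reveals no information. I would resolve these by noting that a price strictly inside an interval earns strictly less than the endpoint whenever a sale occurs with positive probability. In the residual zero-sale cases, the only tie is between a price in $(v_{\text{\tiny L}},v_{\text{\tiny H}}]$ that never sells and the price $v_{\text{\tiny L}}$. There $v_{\text{\tiny L}}>0$ gives strictly higher revenue, so strictly inside or above the support is never optimal, except for outcomes equivalent to posting $v_{\text{\tiny H}}$. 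I would adopt the convention that outcome-equivalent prices are identified with $v_{\text{\tiny H}}$.
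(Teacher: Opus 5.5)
Your proposal is correct and is essentially the paper's argument. The paper also splits into $p_t<v_{\text{\tiny L}}$, $v_{\text{\tiny L}}<p_t<v_{\text{\tiny H}}$ and $p_t>v_{\text{\tiny H}}$ and argues by contradiction: raising the price to the top of its purchase interval leaves purchase decisions (hence the seller's learning) unchanged while raising revenue, and a no-sale price above $v_{\text{\tiny H}}$ is beaten by $v_{\text{\tiny H}}$. You additionally make the first-period information effect and the mixed-strategy support explicit.

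Your closing tie-breaking convention is unnecessary. Any no-sale price outside $\{v_{\text{\tiny L}},v_{\text{\tiny H}}\}$ (above $v_{\text{\tiny H}}$, or in $(v_{\text{\tiny L}},v_{\text{\tiny H}})$ when the high type has zero posterior probability) reveals nothing, exactly as $p_t=v_{\text{\tiny L}}$ does. The price $v_{\text{\tiny L}}$ additionally earns $v_{\text{\tiny L}}>0$, so every price outside $\{v_{\text{\tiny L}},v_{\text{\tiny H}}\}$ is strictly suboptimal.
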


Lemma \ref{pricebinary} helps the seller restrict attention to binary~pricing choices in each selling period of Stage II. Strategically, the seller may consider learning the first-arriving buyer's preference from his purchase decision by offering a high price \mbox{$p_1=v_{\text{\tiny H}}$}. In this way, the seller can facilitate personalized pricing for the latter-arriving buyer with the preference correlation learned from Stage I. On the other hand, when offering a low price $p_1=v_{\text{\tiny L}}$, the seller cannot infer the first-arriving buyer's preference. Despite this, the seller can still guarantee a revenue of $v_{\text{\tiny L}}$ from each arriving buyer. In this sense, the seller faces the tradeoff of whether to~enable personalized pricing in Stage II. In addition, the seller's pricing also needs to account for buyers' potential manipulations in Stage I.

\section{PBE Analysis with Buyer Awareness}\label{PBE2}

After screening the buyers' social behaviors in Stage I using Lemma \ref{lemma1_buyer} and the seller's strategic pricing in Stage II using Lemma \ref{pricebinary}, we now analyze the PBE of the strategic-learning model in this section. The PBE analysis is still involved as we need to ensure information consistency from correlated buyers to the seller over time. To achieve this, we alternate \textit{forward analysis} (see Section \ref{PBE1}) with \textit{backward induction} while ensuring consistency as follows (see more technical details in Appendix \ref{Appendix:PBE}).

\begin{itemize}
    \item \mbox{\emph{Step 1:}} We first apply backward induction to analyze the seller's optimal pricing schemes in Stage II, given the posterior belief on buyers' social decisions in \eqref{belief1} and \eqref{belief0}. The seller infers the latter-arriving buyer's preference according to the purchase behavior of the first-arriving buyer, combined with their preference correlation learned from Stage I.
    
    \item \mbox{\emph{Step 2:}} Based on the seller's optimal pricing derived in Step 1, we next analyze the buyers' related social behaviors in Stage I. The key idea is to ensure \textit{belief consistency} across stages: the buyers determine their social interaction frequencies while anticipating the seller's learning and pricing in Stage II; these equilibrium strategies should be consistent with the seller's belief about their preference correlation.
\end{itemize}

We fully characterize all PBE in closed form and classify the structural results with four preference regions in Fig. \ref{regions}.

\begin{figure}[h]
\begin{minipage}[t]{0.45\linewidth}
\vspace{0pt}
\includegraphics[width=\linewidth]{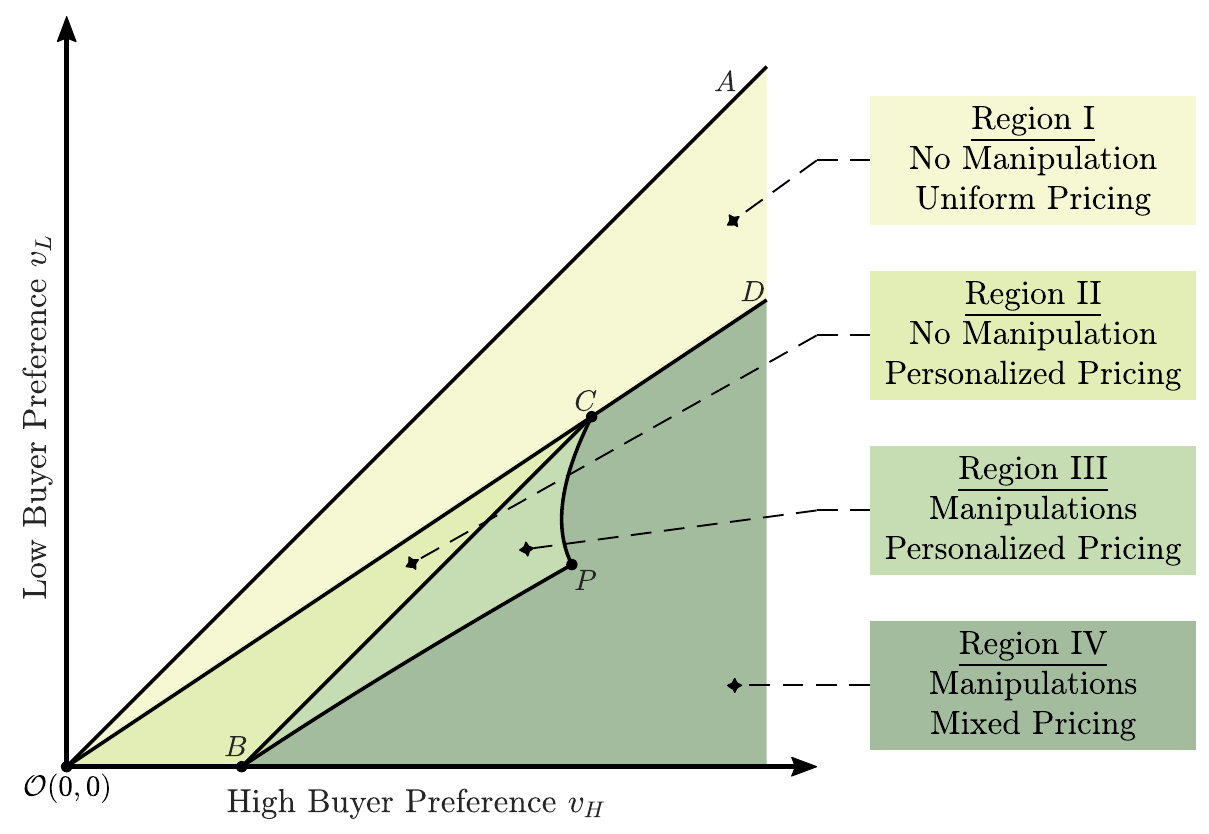}
\end{minipage}
\hspace{15pt}
\begin{minipage}[t]{0.45\linewidth}
\vspace{28pt}
\scriptsize
\begin{tabular}{cp{6cm}}
\specialrule{\heavyrulewidth}{1.2pt}{0pt}
\cellcolor[gray]{0.92}{\makecell[c]{Curve}} &{\makecell[c]{Corresponding Equation in $(v_\text{\tiny H},v_\text{\tiny L})$-Plane}}\\
\specialrule{\lightrulewidth}{0pt}{1.2pt}
OA &\cellcolor[gray]{0.92}{\makecell[c]{$v_\text{\tiny L}=v_\text{\tiny H}$}}\\
\cellcolor[gray]{0.92}{OCD} & {\makecell[c]{$v_\text{\tiny L}=2v_\text{\tiny H}/3$}}\\
BC &\cellcolor[gray]{0.92}{\makecell[c]{$v_\text{\tiny H}-v_\text{\tiny L}=2(1-l)$}}\\
\cellcolor[gray]{0.92}{CP} & {\makecell[c]{$(v_\text{\tiny H}-v_\text{\tiny L})v_\text{\tiny L}=8(1-l)^2$}}\\
BP & \cellcolor[gray]{0.92}{\makecell[c]{$(v_\text{\tiny H}-v_\text{\tiny L})(v_\text{\tiny H}-2v_\text{\tiny L})=4(1-l)^2$}}\\
\specialrule{\heavyrulewidth}{1.2pt}{0pt}
\end{tabular}
\end{minipage}
\caption{PBE in four regions to tell whether high-preference buyers manipulate their social interaction data with the other high-preference buyer, and how the seller prices in the PBE.}
\label{regions}
\end{figure}

\begin{proposition}\label{region1}
In the strategic-learning model, if low and high preferences are similar (see Region I of Fig. \ref{regions}), there exists a unique PBE as follows. 
\begin{itemize}
    \item \emph{In Stage I}, buyers $i$ and $j$ with the same high preference choose the maximum social interaction frequency:
\begin{equation}\label{region1_buyer}
x_{ij}^*(v_i=v_j=v_\text{\tiny H})=x_{ji}^*(v_j=v_i=v_\text{\tiny H})=1.
\end{equation}
    \item \emph{In Stage II}, the seller charges low prices $p_1^*=p_2^*=v_\text{\tiny L}$ in both selling periods.
\end{itemize}
\end{proposition}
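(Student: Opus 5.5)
The plan is to use the equilibrium-space reduction of Section~\ref{PBE1} and then alternate backward and forward reasoning. By Lemma~\ref{lemma1_buyer}, only the high-preference pair $v_i=v_j=v_\text{\tiny H}$ can manipulate, with probability $\rho$. By Lemma~\ref{pricebinary}, the seller only compares $v_\text{\tiny H}$ and $v_\text{\tiny L}$ in each period. I take Region~I to lie inside the large-$v_\text{\tiny L}$ regime $v_\text{\tiny L}\ge 2v_\text{\tiny H}/3$, and will check this against Fig.~\ref{regions}. The argument has three steps.

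\textbf{Step 1 (seller after $\hat{x}=1$, for every $\rho$).} By Lemma~\ref{lemma1_buyer}, $\hat{x}=1$ arises only from the pairs HH or LL. Write $q_1={\rm Pr}(v_i=v_j=v_\text{\tiny H}\mid\hat{x}=1)$, given by \eqref{belief1}; it satisfies $q_1\le 1/2$ for every $\rho$, with equality at $\rho=0$. The seller's two options are:
\begin{itemize}
\item Uniform low pricing, which yields $2v_\text{\tiny L}$.
\item Probing with $p_1=v_\text{\tiny H}$. If the first buyer buys, the seller charges $p_2=v_\text{\tiny H}$; if not, she knows the pair is LL and charges $p_2=v_\text{\tiny L}$. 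This yields $2q_1v_\text{\tiny H}+(1-q_1)v_\text{\tiny L}$.
\end{itemize}
The mixed options ($p_1=v_\text{\tiny L}$ followed by $p_2=v_\text{\tiny H}$, etc.) are easily dominated. Low pricing is optimal iff $q_1(2v_\text{\tiny H}-v_\text{\tiny L})\le v_\text{\tiny L}$. Since $q_1\le 1/2$, this holds for all $\rho$ whenever $v_\text{\tiny L}\ge 2v_\text{\tiny H}/3$. Hence after $\hat{x}=1$ the seller charges $p_1^*=p_2^*=v_\text{\tiny L}$ regardless of her belief.

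\textbf{Step 2 (buyers: $x=1$ is strictly dominant for a high-preference buyer facing another high-preference buyer).} Fix the other buyer's behaviour.
\begin{itemize}
\item If the other buyer plays $1$, then choosing $1$ gives $\hat{x}=1$. By Step~1 this yields surplus $v_\text{\tiny H}-v_\text{\tiny L}$ in whichever period he arrives, which is the maximum surplus attainable under binary prices, plus social utility $1$. Choosing $0$ gives social utility $l<1$ and a weakly smaller surplus.
\item If the other buyer plays $0$, then $\hat{x}=0$ regardless of his own choice, so the Stage~II outcome is unchanged. Choosing $1$ then gives social utility $1-l>0$.
\end{itemize}
Thus $x=1$ is a strict best response to any mixing probability of the other buyer, whatever the seller does after $\hat{x}=0$. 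Therefore $\rho^*=0$ in every PBE, which establishes \eqref{region1_buyer} and rules out all equilibria with $\rho>0$. This yields uniqueness of the buyer strategy.

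\textbf{Step 3 (consistency after $\hat{x}=0$).} With $\rho=0$, \eqref{belief0} gives $q_0=0$, so the pair is HL or LH with equal probability. Under probing ($p_1=v_\text{\tiny H}$), the first buyer buys with probability $1/2$; the seller then knows the second buyer is L and charges $v_\text{\tiny L}$, and otherwise charges $v_\text{\tiny H}$. Revenue is
\[
\tfrac12 v_\text{\tiny H}+\tfrac12 v_\text{\tiny L}+\tfrac12 v_\text{\tiny H}=v_\text{\tiny H}+\tfrac12 v_\text{\tiny L}.
\]
This is at most $2v_\text{\tiny L}$ iff $v_\text{\tiny L}\ge 2v_\text{\tiny H}/3$. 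Both signals are on path, so no off-path beliefs need to be specified. Hence the seller's strategy is the uniform $v_\text{\tiny L}$ price after both signals, and is unique when the inequality is strict. On a boundary of equality, uniqueness would need the tie-breaking convention used in the paper.

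The main obstacle I anticipate is matching Region~I of Fig.~\ref{regions} exactly to the large-$v_\text{\tiny L}$ condition. If Region~I additionally extends below the line OCD, for instance via curve BC with $v_\text{\tiny H}-v_\text{\tiny L}\le 2(1-l)$, then Step~1 fails for $\rho$ near $0$. In that case I would instead argue directly that manipulation cannot pay: the surplus gain from evading personalized pricing is at most of order $(v_\text{\tiny H}-v_\text{\tiny L})/2$, while the social-utility loss from deviating to $0$ is $1-l$. Comparing the two would give $\rho^*=0$, and I would then redo the seller's comparison with $q_1=1/2$, $q_0=0$.
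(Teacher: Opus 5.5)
Your proof is correct. Region~I is exactly $2v_{\text{\tiny H}}/3\le v_{\text{\tiny L}}<v_{\text{\tiny H}}$ (bounded below by OCD; this is the paper's Case~1 with $s=1$), so your contingency paragraph is not needed.

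\textbf{Existence.} You argue exactly as the paper does. With $\rho=0$, i.e.\ $s=(1-\rho)^2=1$, the seller's thresholds after both signals reduce to $v_{\text{\tiny L}}/v_{\text{\tiny H}}\ge 2/3$, so she prices uniformly at $v_{\text{\tiny L}}$. The high-preference buyer's surplus is then $v_{\text{\tiny H}}-v_{\text{\tiny L}}$ whatever he chooses, and the payoff matrix makes $x=1$ dominant, consistent with $s=1$.

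\textbf{Uniqueness.} Here your route differs. The paper first computes the seller's best response for every belief $s\in[0,1]$. The thresholds are $2s/(1+2s)$ after $\hat{x}=1$, and $(3-s)/(4-s)$ and $(1-s)/(2-s)$ after $\hat{x}=0$. It then rules out $s=0$ and $0<s<1$ case by case, dismissing several cases ``by similar arguments.'' You instead note that $q_1\le 1/2$ under every belief, equivalently $2s/(1+2s)\le 2/3$ on all of $[0,1]$. So in Region~I the seller never personalizes after $\hat{x}=1$. This makes $x=1$ strictly dominant for a high-preference buyer, whatever his partner does and whatever the seller does after $\hat{x}=0$. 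It excludes every manipulation profile at once, asymmetric ones included, so you do not need the paper's symmetric-$\rho$ restriction.

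\textbf{Comparison.} Your argument is shorter and tighter for Region~I. The paper's belief-parametrized lemmas are what it needs to map Regions~II--IV together. There the seller does personalize after $\hat{x}=1$ for some beliefs, so no such dominance argument is available.

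\textbf{Two small points.} First, in Step~3 you should also dismiss the remaining binary plans after $\hat{x}=0$. Uniform $v_{\text{\tiny H}}$ yields only $v_{\text{\tiny H}}<2v_{\text{\tiny L}}$. The plan $p_1=v_{\text{\tiny L}},\,p_2=v_{\text{\tiny H}}$ yields $v_{\text{\tiny L}}+v_{\text{\tiny H}}/2<2v_{\text{\tiny L}}$. Second, your boundary caveat is apt, and it affects Step~2 as well. At $v_{\text{\tiny L}}=2v_{\text{\tiny H}}/3$ with $\rho=0$ the seller is indifferent after both signals, so strict dominance can fail. The paper's own proof of Proposition~\ref{Region2} also covers $v_{\text{\tiny L}}/v_{\text{\tiny H}}=2/3$, so uniqueness on that line depends on tie-breaking in the paper too.
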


Note that (\ref{region1_buyer}) for \mbox{$v_i=v_j=v_\text{\tiny H}$} together with Lemma~\ref{lemma1_buyer} (for the other preference distributions) completes all the~PBE results in Stage I. Proposition \ref{region1} shows that when low~product preference $v_\text{\tiny L}$ is close to high preference $v_\text{\tiny H}$, it is optimal for the seller not to practice the learning-enabled personalized pricing. Instead, the seller charges \mbox{a low uniform price} $v_\text{\tiny L}$ in both selling periods without losing any buyers' demands. Given $v_\text{\tiny L}$ close to $v_\text{\tiny H}$, charging a high price $v_\text{\tiny H}$ only benefits the seller slightly when facing a high-preference buyer. However, this could result in a significant demand loss when facing a low-preference buyer. As the seller will not personalize the price offerings, buyers in Stage~I~behave~honestly without any manipulation of their social data.

\begin{proposition}\label{Region2}
In the strategic-learning model, if the difference between low and high preferences is small (see Region II of Fig. \ref{regions}), there exists a unique PBE as follows. 
\begin{itemize}
    \item \emph{In Stage I}, buyers $i$ and $j$ with the same high preference choose the maximum social interaction frequency in (\ref{region1_buyer}).
    
    \item \emph{In Stage II}, the seller charges a high price \mbox{$p_1^*=v_\text{\tiny H}$}~in the first selling period. In the second selling period, the seller offers a personalized price that depends on the buyers' common interaction frequency \mbox{$\hat{x}^*=\min\{x_{ij}^*,x_{ji}^*\}$}~in Stage I and the purchase record $a_1^*\in\{0,1\}$ of the first selling period in Stage II:
    \begin{numcases}{p_2^*=}
    v_{\text{\tiny H}}\boldsymbol{1}(a_1^*=1)+v_{\text{\tiny L}}\boldsymbol{1}(a_1^*=0),&\textrm{if $\ \hat{x}^\ast=1$,}\label{condition_1}\\
    v_{\text{\tiny L}}\boldsymbol{1}(a_1^*=1)+v_{\text{\tiny H}}\boldsymbol{1}(a_1^*=0),&\textrm{if $\ \hat{x}^\ast=0$.}\label{condition_2}
    \end{numcases}
\end{itemize}
\end{proposition}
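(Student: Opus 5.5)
The plan is to verify the candidate profile directly, using Lemma~\ref{lemma1_buyer} and Lemma~\ref{pricebinary} to reduce the problem. Throughout, I take Region~II to be the set where $v_{\text{\tiny L}}<2v_{\text{\tiny H}}/3$ (below OCD) and $v_{\text{\tiny H}}-v_{\text{\tiny L}}\le 2(1-l)$ (below BC). This identification should be checked against Fig.~\ref{regions} at the end. By Lemma~\ref{lemma1_buyer}, only the high--high pair is strategic. Their behaviour is summarized by the manipulation probability $\rho$ in \eqref{rho}. I will show two things: (a) for $\rho=0$, the seller's unique optimal Stage~II response is the pricing in \eqref{condition_1}--\eqref{condition_2}; and (b) in Region~II, no $\rho>0$ can be sustained, while $\rho=0$ is a best response to that pricing.

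\emph{Step 1 (seller, $\rho=0$).} With $\rho=0$, \eqref{belief1} gives posterior $1/2$ on $v_i=v_j=v_{\text{\tiny H}}$ after $\hat{x}=1$; the remaining mass is on $v_{\text{\tiny L}}v_{\text{\tiny L}}$. By \eqref{belief0}, $\hat{x}=0$ reveals different preferences. By Lemma~\ref{pricebinary} the seller compares only $p_1\in\{v_{\text{\tiny H}},v_{\text{\tiny L}}\}$.
\begin{itemize}
\item If $p_1=v_{\text{\tiny L}}$, nothing is learned and the revenue is $2v_{\text{\tiny L}}$.
\item If $p_1=v_{\text{\tiny H}}$, the first purchase reveals $v_1$, and the correlation then pins down $v_2$. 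The optimal second price is exactly \eqref{condition_1} after $\hat{x}=1$ and \eqref{condition_2} after $\hat{x}=0$.
\item In both cases the expected revenue is $v_{\text{\tiny H}}+v_{\text{\tiny L}}/2$.
\end{itemize}
Since $v_{\text{\tiny L}}<2v_{\text{\tiny H}}/3$, the high first price is strictly better. This mirrors Lemma~\ref{unaware}.

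\emph{Step 2 (buyers).} Fix the seller's pricing above and consider a high-preference buyer $i$ whose partner plays $x_{ji}=1$ with probability $1-\rho$. The key computation is that the social-utility gain of $x_{ij}=1$ over $x_{ij}=0$ is independent of $\rho$, using Table~\ref{table_same}:
\[
(1-\rho)\cdot 1+\rho(1-l)-\big[(1-\rho)l+0\big]=1-l .
\]
For the purchase-surplus side, switching to $x_{ij}=0$ forces $\hat{x}=0$. Under \eqref{condition_2}, buyer $i$ then obtains surplus $v_{\text{\tiny H}}-v_{\text{\tiny L}}$ only when he arrives second (after the partner buys at $v_{\text{\tiny H}}$). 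Under $\hat{x}=1$, his surplus is zero. The maximal expected surplus gain is therefore $(v_{\text{\tiny H}}-v_{\text{\tiny L}})/2$, reduced by the factor $(1-\rho)$ because the partner's own choice matters.

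Hence playing $1$ is optimal whenever $(v_{\text{\tiny H}}-v_{\text{\tiny L}})/2\le 1-l$. This yields the boundary BC and confirms the equilibrium.

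\emph{Step 3 (uniqueness).} I expect this to be the main obstacle: it requires ruling out PBE with $\rho>0$, for which the seller's beliefs \eqref{belief1}--\eqref{belief0} and hence the pricing change. My approach is to bound, uniformly over all seller pricing rules allowed by Lemma~\ref{pricebinary}, a high buyer's surplus gain from manipulating. Since any price paid is at least $v_{\text{\tiny L}}$, the surplus difference is at most $(v_{\text{\tiny H}}-v_{\text{\tiny L}})/2$ times the probability that his deviation changes $\hat{x}$.

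Mixing requires indifference, which means surplus gain $=1-l$. Inside Region~II this is impossible, so $\rho=0$. On the BC boundary itself, I will still need to check that mixed profiles cannot satisfy the seller's belief consistency, or adopt the paper's tie-breaking convention toward honesty. Finally, I will argue that the seller's Stage~II best response is unique given $\rho=0$, since the revenue comparison in Step~1 is strict.
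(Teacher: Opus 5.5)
Your Steps~1--2 are the paper's own existence argument (the $s=1$, $v_{\text{\tiny L}}/v_{\text{\tiny H}}\le 2/3$ case in the appendix). Fix one small slip: after $p_1=v_{\text{\tiny L}}$ the seller's revenue is $v_{\text{\tiny L}}+\max\{v_{\text{\tiny H}}/2,v_{\text{\tiny L}}\}$, not $2v_{\text{\tiny L}}$, because she can still charge $p_2=v_{\text{\tiny H}}$. The comparison survives, since $v_{\text{\tiny H}}+v_{\text{\tiny L}}/2>v_{\text{\tiny L}}+v_{\text{\tiny H}}/2$.

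The gap is in Step~3. You claim that, uniformly over all binary pricing rules, the manipulation gain is at most $(v_{\text{\tiny H}}-v_{\text{\tiny L}})/2$ times the probability that the deviation changes $\hat{x}$. That is false. The fact that every price is at least $v_{\text{\tiny L}}$ only caps per-period surplus at $v_{\text{\tiny H}}-v_{\text{\tiny L}}$. Take the rule ``personalize after $\hat{x}=1$, charge $v_{\text{\tiny L}}$ in both periods after $\hat{x}=0$'': the gain is then the full $v_{\text{\tiny H}}-v_{\text{\tiny L}}$ times that probability. With this correct uniform bound, your indifference argument only excludes $\rho>0$ when $v_{\text{\tiny H}}-v_{\text{\tiny L}}\le 1-l$, which is only part of Region~II.

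The factor $1/2$ has to come from the seller's sequential rationality under every belief a profile with $\rho>0$ could induce. After $\hat{x}=0$ the posterior on a high--high pair is $(1-s)/(3-s)$ with $s=(1-\rho)^2$. The revenue from $p_1=v_{\text{\tiny H}}$ is at least $v_{\text{\tiny H}}+\tfrac{2-s}{3-s}v_{\text{\tiny L}}$. The revenue from $p_1=v_{\text{\tiny L}}$ is $v_{\text{\tiny L}}+\max\{\tfrac{2-s}{3-s}v_{\text{\tiny H}},v_{\text{\tiny L}}\}$. So $p_1=v_{\text{\tiny L}}$ can be optimal only if $v_{\text{\tiny L}}/v_{\text{\tiny H}}\ge (3-s)/(4-s)\ge 2/3$; this is Case~I of Lemma~\ref{rho01_lemma2}.

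Hence, in Region~II, the first arrival after $\hat{x}=0$ always pays $v_{\text{\tiny H}}$. A high buyer's expected surplus after $\hat{x}=0$ is then at most $(v_{\text{\tiny H}}-v_{\text{\tiny L}})/2$, and after $\hat{x}=1$ it is nonnegative whatever the seller does. This gives a net manipulation benefit of at most $(1-\rho)(v_{\text{\tiny H}}-v_{\text{\tiny L}})/2-(1-l)\le -\rho(1-l)$.

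With this step added, your route is a genuinely shorter uniqueness proof than the paper's. The paper derives the seller's best responses for all $s$, enumerates $s=1$, $s=0$ and $0<s<1$, and checks that the sustaining conditions of Regions~III--IV are incompatible with Region~II, with the ``remaining cases'' only asserted. Your repaired inequality replaces that enumeration with one line. It also removes your worry about the boundary BC: for symmetric $\rho>0$ the gain $(1-\rho)G\le(1-\rho)(1-l)$ is strictly below $1-l$, so indifference is impossible even on BC.
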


As the low preference $v_{\text{\tiny L}}$ becomes no longer close to high preference $v_{\text{\tiny H}}$, the seller has the motivation to enable personalized pricing in Stage II. In this case, she charges a high price $p_1^*=v_{\text{\tiny H}}$ in the first selling period to learn the buyer's preference, and further tailors the second-period price $p_2$ accordingly. Although facing a possible personalized price in Stage II, Proposition \ref{Region2} shows that the two high-preference buyers still do not manipulate their social interactions in Stage I. This is because buyers' social interaction loss from manipulation outweighs the potential purchase gain from pretending as a low-preference type (i.e., $v_{\text{\tiny H}}-v_{\text{\tiny L}}<2(1-l)$).

\begin{proposition}\label{Region3}
In the strategic-learning model, if the difference between low and high preferences is medium (see Region III of Fig. \ref{regions}), there exists a unique PBE as follows.
\begin{itemize}
    \item \emph{In Stage I,} buyers $i$ and $j$ with the same high preference do not always choose the high social interaction frequency. Instead, they randomize low/high frequencies with a manipulation probability of $\rho^*=1-2(1-l)/(v_{\text{\tiny H}}-v_{\text{\tiny L}})$.
    \item \emph{In Stage II,} the seller charges a high price $p_1^*=v_{\text{\tiny H}}$ in the first selling period. The second-period price is the same as (\ref{condition_1}) and (\ref{condition_2}).
\end{itemize}
\end{proposition}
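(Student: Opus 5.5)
The plan is to verify the candidate profile directly as a PBE, using Lemma~\ref{lemma1_buyer} and Lemma~\ref{pricebinary} to fix everything except the high-preference buyers' manipulation probability $\rho$ and the seller's binary prices, and then to show uniqueness within Region III. Write $D\triangleq v_{\text{\tiny H}}-v_{\text{\tiny L}}$ and $s\triangleq 1-(1-\rho)^2$, so that $\Pr(\hat{x}=1\mid v_i=v_j=v_{\text{\tiny H}})=1-s$ in \eqref{belief1}--\eqref{belief0}.

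\emph{Step 1 (seller's Stage II, backward induction given $\rho$).} Fix $p_1=v_{\text{\tiny H}}$. By Lemma~\ref{lemma1_buyer}, $\hat{x}=1$ arises only from the pairs LL and HH. Hence after $\hat{x}=1$ the purchase $a_1$ reveals the common type, and \eqref{condition_1} is clearly optimal. After $\hat{x}=0$ the possible pairs are HH (prior weight $s/4$), HL and LH (weight $1/4$ each); LL is excluded.
\begin{itemize}
\item If $a_1=0$, the second buyer is surely H, so $p_2=v_{\text{\tiny H}}$ is optimal.
\item If $a_1=1$, the second buyer is H with probability $q=s/(s+1)$. The seller prefers $p_2=v_{\text{\tiny L}}$ as in \eqref{condition_2} iff $q\,v_{\text{\tiny H}}\le v_{\text{\tiny L}}$.
\end{itemize}
Next I compare the total expected revenue of this scheme against the alternatives $p_1=v_{\text{\tiny L}}$ (uniform $v_{\text{\tiny L}}$) and $p_1=v_{\text{\tiny H}}$ with the other $p_2$ choices. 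These comparisons, with $\rho$ later substituted, should reproduce the curves CP and BP of Fig.~\ref{regions}.

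\emph{Step 2 (buyers' Stage I indifference).} Take a high-preference buyer facing another H buyer who plays $x=1$ with probability $1-\rho$, and assume the Step 1 pricing.
\begin{itemize}
\item Playing $x=1$ yields $1$ when the other plays $1$: then $\hat{x}=1$, both prices are $v_{\text{\tiny H}}$, and the surplus is $0$. It yields $1-l+D/2$ when the other plays $0$: then $\hat{x}=0$, and the buyer gets surplus $D$ only if he arrives second.
\item Playing $x=0$ yields $l+D/2$ or $D/2$, respectively.
\end{itemize}
The indifference condition reduces to $1-l=(1-\rho)D/2$, giving $\rho^*=1-2(1-l)/D$. This lies in $(0,1)$ exactly when $D>2(1-l)$, i.e. above curve BC.

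Substituting gives $s=1-4(1-l)^2/D^2$. The seller condition $q v_{\text{\tiny H}}\le v_{\text{\tiny L}}$ then becomes $s(v_{\text{\tiny H}}-v_{\text{\tiny L}})\le v_{\text{\tiny L}}$, which simplifies to $D(v_{\text{\tiny H}}-2v_{\text{\tiny L}})\le 4(1-l)^2$; this is the BP boundary. The first-period comparison against uniform $v_{\text{\tiny L}}$ should yield the CP boundary, $Dv_{\text{\tiny L}}\ge 8(1-l)^2$. Beliefs follow from Bayes' rule, so the profile is consistent.

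\emph{Step 3 (uniqueness).} Within Region III I rule out every other candidate.
\begin{itemize}
\item $\rho=0$: against seller pricing \eqref{condition_1}--\eqref{condition_2}, an H buyer gains $D/2-(1-l)>0$ by deviating to $0$.
\item $\rho=1$: then $x=1$ against $0$ gives $1-l+D/2>D/2$, a profitable deviation.
\item Any other interior $\rho$ violates the indifference condition unless the seller's pricing differs.
\item Seller profiles with $p_1=v_{\text{\tiny L}}$, or with $p_2=v_{\text{\tiny H}}$ (or mixing) after $(\hat{x}=0,a_1=1)$, are excluded. If the seller did not personalize, buyers would not manipulate; with $\rho=0$ personalized pricing would then be profitable in this region. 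Conversely, $p_2=v_{\text{\tiny H}}$ after $(\hat{x}=0,a_1=1)$ removes all manipulation gains, forcing $\rho=0$ and hence $q=0$, a contradiction.
\end{itemize}

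I expect the main obstacle to be this fixed-point bookkeeping. The seller's best response in the $(\hat{x}=0,a_1=1)$ node and in period $1$ depends on $\rho$, while $\rho$ depends on that pricing. Matching the resulting inequalities exactly to the Region III boundaries (BC, CP, BP and $v_{\text{\tiny L}}=2v_{\text{\tiny H}}/3$) requires care, and so does separating Region III from Region IV, where the seller presumably randomizes.
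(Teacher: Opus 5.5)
Your route is the paper's: seller best responses as functions of the belief parameter (the paper's Lemmas~\ref{rho01_lemma1}--\ref{rho01_lemma2}, which use $s=(1-\rho)^2$, i.e.\ your $1-s$), then the high type's payoff matrix, whose indifference gives $\rho^*=1-2(1-l)/D$, then consistency conditions that cut out BC, CP and BP. Your Steps~1--2 are correct, including the BP computation at the node $(\hat{x}=0,a_1=1)$.

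The gap is the CP condition, which you predict rather than compute, and which has the wrong sign. After $\hat{x}=1$ the seller's posterior on HH is $(1-\rho)^2/(1+(1-\rho)^2)$. So $p_1=v_{\text{\tiny H}}$ followed by \eqref{condition_1} earns $\bigl(2(1-\rho)^2v_{\text{\tiny H}}+v_{\text{\tiny L}}\bigr)/\bigl(1+(1-\rho)^2\bigr)$, against $2v_{\text{\tiny L}}$ from uniform $v_{\text{\tiny L}}$; the option $p_1=v_{\text{\tiny L}}$, $p_2=v_{\text{\tiny H}}$ is dominated. Personalizing is optimal iff $2(1-\rho)^2D\ge v_{\text{\tiny L}}$. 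At $(1-\rho^*)^2=4(1-l)^2/D^2$ this reads $Dv_{\text{\tiny L}}\le 8(1-l)^2$, not $\ge$. Intuitively, manipulation dilutes the share of HH pairs among $\hat{x}=1$ observations, so a large $Dv_{\text{\tiny L}}$ pushes the seller toward uniform pricing. With your sign, the candidate would be ``verified'' where it fails and rejected throughout Region III. For example, $l=1/2$, $v_{\text{\tiny L}}=2$, $v_{\text{\tiny H}}=3.5$ satisfies your three inequalities. Yet $\rho^*=1/3$ gives $2(1-\rho^*)^2D=4/3<v_{\text{\tiny L}}$, and this point lies in Case~1 of Region IV.

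The correctly oriented inequalities are also what your uniqueness step needs but does not supply. Near Region III the real competitors to the candidate are seller \emph{mixtures}, and your bullets only rule out pure deviations.

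\emph{Mixing after $\hat{x}=1$.} Suppose the seller puts weight $\beta$ on uniform $v_{\text{\tiny L}}$. The high type's indifference becomes $(1-\rho)(1-2\beta)D/2=1-l$. If $\beta\ge 1/2$ this forces $\rho=0$, after which personalizing is strictly better because $v_{\text{\tiny L}}<2v_{\text{\tiny H}}/3$. The seller's indifference needs $(1-\rho)^2=v_{\text{\tiny L}}/(2D)$. Together these give $(1-2\beta)^2=8(1-l)^2/(Dv_{\text{\tiny L}})$, which is admissible only if $Dv_{\text{\tiny L}}>8(1-l)^2$.

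\emph{Mixing at $(\hat{x}=0,a_1=1)$.} Suppose the seller puts weight $\gamma\in(0,1)$ on $p_2=v_{\text{\tiny L}}$. Your ``removes all manipulation gains'' argument covers only $\gamma=0$. The indifferences $(1-\rho)\gamma D/2=1-l$ and $(1-\rho)^2=(v_{\text{\tiny H}}-2v_{\text{\tiny L}})/D$ give $\gamma^2=4(1-l)^2/\bigl(D(v_{\text{\tiny H}}-2v_{\text{\tiny L}})\bigr)$. This is admissible only if $D(v_{\text{\tiny H}}-2v_{\text{\tiny L}})>4(1-l)^2$.

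These are the two mixed equilibria of Proposition~\ref{Region4}, whose existence conditions are exactly the reverses of CP and BP. Separately, $p_1=v_{\text{\tiny L}}$ after $\hat{x}=0$ is excluded because, in your notation, it requires $v_{\text{\tiny L}}/v_{\text{\tiny H}}\ge(2+s)/(3+s)\ge 2/3$.

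With CP oriented correctly and these mixed cases checked, your outline closes and coincides with the paper's argument.
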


With a larger potential purchase gain when viewed as a low-preference type (i.e., $v_{\text{\tiny H}}-v_{\text{\tiny L}}>2(1-l)$), high-preference buyers are motivated to manipulate their social interactions with a positive probability of $\rho^*>0$. By doing so, they aim to avoid being charged high personalized prices but receive lower prices $v_{\text{\tiny L}}$ instead. As $v_{\text{\tiny H}}$ increases or $v_{\text{\tiny L}}$ decreases, the purchase gain $v_{\text{\tiny H}}-v_{\text{\tiny L}}$ from manipulation becomes more significant. Hence, the high-preference buyers would raise the manipulation probability $\rho^*$ to achieve a greater purchase surplus. 

Although the seller may fail to identify high-preference buyers given such social interaction manipulations, Proposition \ref{Region3} suggests that the seller still takes the same personalized pricing strategy in (\ref{condition_1}) and (\ref{condition_2}) as in Region II of Fig. \ref{regions}. This is because the probability of these buyers' manipulation is still minor given the moderate potential purchase gain, and the seller can still capture buyers' surplus in most cases by enabling personalized pricing.

\begin{proposition}\label{Region4}
In the strategic-learning model, if the difference between low and high preferences is large (see Region IV of Fig.~\ref{regions}), there exists a unique PBE as follows.
\begin{itemize}
    \item \emph{In Stage I,} buyers $i$ and $j$ with the same high preference do not always choose the high social interaction frequency. Instead, they randomize low/high frequencies with the~following manipulation probability:
    \begin{equation}\label{region4_rho}
    \rho^*=
    \left\{\begin{array}{ll}
    1-\sqrt{v_{\text{\tiny L}}/2(v_{\text{\tiny H}}-v_{\text{\tiny L}})},&\textrm{if $\ v_{\text{\tiny L}}\ge 2v_{\text{\tiny H}}/5$,}\\
    1-\sqrt{(v_{\text{\tiny H}}-2v_{\text{\tiny L}})/(v_{\text{\tiny H}}-v_{\text{\tiny L}})},&\textrm{otherwise.}
    \end{array}\right.
    \end{equation}

    \item \emph{In Stage II,} the seller does not always charge a high price $p_1^*=v_{\text{\tiny H}}$ in the first selling period and enable personalized pricing in (\ref{condition_1}) and (\ref{condition_2}) in the second selling period. More specifically, we have the following two cases.
    \begin{itemize}    
    \item [(i)]\mbox{\emph{\textbf{Case 1} with $v_{\text{\tiny L}} \ge 2v_{\text{\tiny H}}/5$:}} When observing buyers' high common interaction frequency \mbox{$\hat{x}^*=1$}, the seller randomizes with offering a low uniform price $p_1^*=p_2^*=v_{\text{\tiny L}}$ in both selling periods. i.e.,
    \begin{equation}\label{region4_mix1}
    \left\{ \begin{array}{ll}
    p_1^*=p_2^*=v_{\text{\tiny L}},&\textrm{w.p. $\frac{1}{2}-\frac{\sqrt{2}(1-l)}{\sqrt{v_{\text{\tiny L}}(v_{\text{\tiny H}}-v_{\text{\tiny L}})}}$,}\\
    p_1^*=v_{\text{\tiny H}}, \eqref{condition_1},&\textrm{otherwise.}
    \end{array}\right.
    \end{equation}
    Whereas observing \mbox{$\hat{x}^*=0$}, the seller purely charges a high price \mbox{$p_1^*=v_{\text{\tiny H}}$} in the first selling period and enables the personalized pricing in (\ref{condition_2}) in the second selling period.
        
    \item [(ii)]\mbox{\emph{\textbf{Case 2} with $v_{\text{\tiny L}}<2v_{\text{\tiny H}}/5$:}} When observing buyers' low common interaction frequency \mbox{$\hat{x}^*=0$}, the seller randomizes with offering a high uniform price $p_1^*=p_2^*=v_{\text{\tiny H}}$ in both selling periods. i.e.,
    \begin{equation}\label{region4_mix2}
    \left\{ \begin{array}{ll}
    p_1^*=p_2^*=v_{\text{\tiny H}},&\textrm{w.p. $1-\frac{2(1-l)}{\sqrt{(v_{\text{\tiny H}}-v_{\text{\tiny L}})(v_{\text{\tiny H}}-2v_{\text{\tiny L}})}}$,}\\
        p_1^*=v_{\text{\tiny H}}, (\ref{condition_2}),&\textrm{otherwise.}\\
    \end{array}\right.
    \end{equation}
    Whereas observing \mbox{$\hat{x}^*=1$}, the seller purely charges a high price $p_1^*=v_{\text{\tiny H}}$ in the first selling period and enables the personalized pricing in (\ref{condition_1}) in the second selling period.
    \end{itemize}
\end{itemize}
\end{proposition}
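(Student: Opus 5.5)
The plan is to follow the two-step scheme of Section~\ref{PBE2}. By Lemma~\ref{lemma1_buyer}, the only undetermined Stage-I behavior is the manipulation probability $\rho$ of two high-preference buyers. By Lemma~\ref{pricebinary}, after each signal $\hat{x}$ the seller chooses among three schemes: uniform low ($p_1=p_2=v_{\text{\tiny L}}$), learning ($p_1=v_{\text{\tiny H}}$ followed by \eqref{condition_1} or \eqref{condition_2}), and uniform high ($p_1=p_2=v_{\text{\tiny H}}$). The scheme with $p_1=v_{\text{\tiny L}}$ and $p_2=v_{\text{\tiny H}}$ is dominated because it learns nothing. Write $q\triangleq(1-\rho)^2={\rm Pr}(\hat{x}=1|v_i=v_j=v_{\text{\tiny H}})$. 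By \eqref{belief1} and \eqref{belief0}, the posteriors are $\mu_1=q/(1+q)$ and $\mu_0=(1-q)/(3-q)$.

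\textbf{Step 1 (seller's best responses).} After $\hat{x}=1$, the support is $\{HH,LL\}$. The three schemes earn $2v_{\text{\tiny L}}$, $2\mu_1v_{\text{\tiny H}}+(1-\mu_1)v_{\text{\tiny L}}$ and $2\mu_1v_{\text{\tiny H}}$, so uniform high is dominated. The seller is indifferent between uniform low and learning exactly when $\mu_1=v_{\text{\tiny L}}/(2v_{\text{\tiny H}}-v_{\text{\tiny L}})$, i.e., $q=v_{\text{\tiny L}}/(2(v_{\text{\tiny H}}-v_{\text{\tiny L}}))$. After $\hat{x}=0$, the support is $\{HH,HL,LH\}$. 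Learning earns $\mu_0(v_{\text{\tiny H}}+v_{\text{\tiny L}})+(1-\mu_0)(v_{\text{\tiny H}}+v_{\text{\tiny L}}/2)$ and uniform high earns $2\mu_0v_{\text{\tiny H}}+(1-\mu_0)v_{\text{\tiny H}}$. Uniform low is dominated in Region IV. The seller is indifferent between learning and uniform high exactly when $\mu_0=v_{\text{\tiny L}}/(2v_{\text{\tiny H}}-v_{\text{\tiny L}})$, i.e., $q=(v_{\text{\tiny H}}-2v_{\text{\tiny L}})/(v_{\text{\tiny H}}-v_{\text{\tiny L}})$. Taking square roots of these two thresholds gives the two values of $\rho^*$ in \eqref{region4_rho}. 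As $q$ falls, i.e., as manipulation rises, the seller's learning scheme loses appeal at $\hat{x}=1$ and gains appeal against uniform high at $\hat{x}=0$.

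\textbf{Step 2 (buyers' indifference and consistency).} Let $\sigma_1$ be the probability of uniform low after $\hat{x}=1$, and $\sigma_0$ the probability of uniform high after $\hat{x}=0$. A high-preference buyer's purchase surplus is then:
\begin{itemize}
\item after $\hat{x}=1$: $\sigma_1(v_{\text{\tiny H}}-v_{\text{\tiny L}})$;
\item after $\hat{x}=0$: $(1-\sigma_0)(v_{\text{\tiny H}}-v_{\text{\tiny L}})/2$, since he gets $v_{\text{\tiny L}}$ only when arriving second.
\end{itemize}
Using Table~\ref{table_same}, the gain from choosing $x=1$ rather than $x=0$, when the partner plays $1$ with probability $1-\rho$, is
\[
(1-\rho)\Big[(1-l)+\sigma_1(v_{\text{\tiny H}}-v_{\text{\tiny L}})-(1-\sigma_0)\tfrac{v_{\text{\tiny H}}-v_{\text{\tiny L}}}{2}\Big]+\rho(1-l).
\]
An interior $\rho^*\in(0,1)$ requires this gain to be zero. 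Since the seller can be indifferent at only one signal at a time, one of $\sigma_1,\sigma_0$ is zero.

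\emph{Case 1.} Set $\sigma_0=0$ and $\rho^*$ from the first threshold. Solving the zero-gain condition gives $\sigma_1=\tfrac12-\sqrt{2}(1-l)/\sqrt{v_{\text{\tiny L}}(v_{\text{\tiny H}}-v_{\text{\tiny L}})}$, matching \eqref{region4_mix1}. \emph{Case 2.} Set $\sigma_1=0$ and $\rho^*$ from the second threshold. This gives $\sigma_0=1-2(1-l)/\sqrt{(v_{\text{\tiny H}}-v_{\text{\tiny L}})(v_{\text{\tiny H}}-2v_{\text{\tiny L}})}$, matching \eqref{region4_mix2}. Both signals occur with positive probability, so no off-path beliefs need to be specified.

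\textbf{Step 3 (case selection and uniqueness) — the main obstacle.} It remains to show which case is consistent and that no other profile is a PBE. With $\sigma_0=0$, the seller must weakly prefer learning at $\hat{x}=0$, i.e., $q\ge(v_{\text{\tiny H}}-2v_{\text{\tiny L}})/(v_{\text{\tiny H}}-v_{\text{\tiny L}})$. Symmetrically, with $\sigma_1=0$, the seller must weakly prefer learning at $\hat{x}=1$, i.e., $q\le v_{\text{\tiny L}}/(2(v_{\text{\tiny H}}-v_{\text{\tiny L}}))$. Comparing the two thresholds, the first is at least the second iff $2v_{\text{\tiny L}}\ge 2v_{\text{\tiny H}}-4v_{\text{\tiny L}}$... more carefully, iff $v_{\text{\tiny L}}\ge 2v_{\text{\tiny H}}/5$. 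This selects Case 1 when $v_{\text{\tiny L}}\ge 2v_{\text{\tiny H}}/5$ and Case 2 otherwise.

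The constraints $\sigma_1\in(0,1)$ and $\sigma_0\in(0,1)$ reduce to $(v_{\text{\tiny H}}-v_{\text{\tiny L}})v_{\text{\tiny L}}>8(1-l)^2$ and $(v_{\text{\tiny H}}-v_{\text{\tiny L}})(v_{\text{\tiny H}}-2v_{\text{\tiny L}})>4(1-l)^2$. These are exactly the curves CP and BP bounding Region IV. Uniqueness then requires excluding the remaining candidates:
\begin{itemize}
\item $\rho^*=0$: the seller learns purely, and the buyers' gain is $1-l-(v_{\text{\tiny H}}-v_{\text{\tiny L}})/2<0$, so buyers would deviate to $x=0$.
\item $\rho^*=1$: $\hat{x}=1$ is never observed, and the gain $1-l>0$ makes $x=1$ profitable.
\item Interior $\rho$ with the seller pure at both signals: this is the Region III profile, which requires $q$ to lie between the two thresholds and fails in Region IV.
\end{itemize}
I expect the bookkeeping of these inequalities, and matching them to the region boundaries of Fig.~\ref{regions}, to be the most tedious part.
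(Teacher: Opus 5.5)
Your route is the paper's own. Your seller thresholds are exactly those of Lemmas~\ref{rho01_lemma1}--\ref{rho01_lemma2}: indifference at $\hat{x}=1$ when $q=T_1\triangleq v_{\text{\tiny L}}/(2(v_{\text{\tiny H}}-v_{\text{\tiny L}}))$, and at $\hat{x}=0$ when $q=T_2\triangleq(v_{\text{\tiny H}}-2v_{\text{\tiny L}})/(v_{\text{\tiny H}}-v_{\text{\tiny L}})$. Your two mixed cases are the paper's gray-curve and yellow-curve cases, and your $\rho^*$, $\sigma_1$, $\sigma_0$ and the CP/BP conditions all check out.

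\textbf{The error in Step~3.} After $\hat{x}=1$, learning is weakly optimal iff $\mu_1\ge v_{\text{\tiny L}}/(2v_{\text{\tiny H}}-v_{\text{\tiny L}})$, i.e.\ iff $q\ge T_1$, not $q\le T_1$. This matches your own Step~1 remark that learning loses appeal at $\hat{x}=1$ as $q$ falls. With the inequality as you wrote it, Case~2 (where $q=T_2$) would require $T_2\le T_1$, i.e.\ $v_{\text{\tiny L}}\ge 2v_{\text{\tiny H}}/5$. That is the same condition as Case~1, so for $v_{\text{\tiny L}}<2v_{\text{\tiny H}}/5$ neither case survives and your case selection does not follow. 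With the correct direction, Case~1 needs $T_1\ge T_2$ and Case~2 needs $T_2\ge T_1$, which gives the split at $2v_{\text{\tiny H}}/5$.

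\textbf{Knock-on fixes.} The same slip affects your exclusion of the Region~III profile. Pure learning at both signals needs $4(1-l)^2/(v_{\text{\tiny H}}-v_{\text{\tiny L}})^2\ge\max\{T_1,T_2\}$, not that $q$ lie between the thresholds. Its failure is exactly the CP inequality when $T_1\ge T_2$ and the BP inequality when $T_2\ge T_1$. Also, ``one of $\sigma_1,\sigma_0$ is zero'' needs one more line. Indifference at one signal only tells you the other $\sigma$ lies in $\{0,1\}$. You must rule out $\sigma_1=1$ and $\sigma_0=1$ by noting that either makes the buyers' gain strictly positive.

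\textbf{Scope of uniqueness.} Your uniqueness argument, like the paper's, covers only buyer-symmetric profiles; the paper simply asserts a common $\rho$ in its Definition. Without that restriction uniqueness fails. Consider this profile:
\begin{itemize}
\item buyer $j$ always chooses $x_{ji}=1$;
\item buyer $i$ chooses $x_{ij}=1$ with probability $T_1$;
\item the seller plays $\sigma_0=0$ and $\sigma_1=\tfrac12-(1-l)/(v_{\text{\tiny H}}-v_{\text{\tiny L}})$.
\end{itemize}
Then $s=T_1$ keeps the seller indifferent at $\hat{x}=1$, and learning stays optimal at $\hat{x}=0$ because $T_1\ge T_2$. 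Buyer $i$ is indifferent because $1+\sigma_1(v_{\text{\tiny H}}-v_{\text{\tiny L}})=l+(v_{\text{\tiny H}}-v_{\text{\tiny L}})/2$. Buyer $j$'s gain from $x_{ji}=1$ is $(1-T_1)(1-l)>0$. This is a PBE throughout Case~1 of Region~IV (e.g.\ $l=1/2$, $v_{\text{\tiny H}}=10$, $v_{\text{\tiny L}}=5$). The analogous profile with $s=T_2$, $\sigma_1=0$ and $1-\sigma_0=2(1-l)/(v_{\text{\tiny H}}-v_{\text{\tiny L}})$ works in Case~2.

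\textbf{The boundary.} At $v_{\text{\tiny L}}=2v_{\text{\tiny H}}/5$ one has $T_1=T_2=1/3$, so the seller can mix at both signals. The single buyer-indifference condition then leaves a one-parameter family of $(\sigma_1,\sigma_0)$; the paper's appendix indeed treats only the strict inequalities. So your argument establishes uniqueness among symmetric profiles with $v_{\text{\tiny L}}\neq 2v_{\text{\tiny H}}/5$, and that is what should be claimed.
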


With a sufficiently large potential purchase gain $v_{\text{\tiny H}}-v_{\text{\tiny L}}$ in Region IV of Fig. \ref{regions}, high-preference buyers are now more willing to manipulate their social interactions. These high-preference buyers would raise the manipulation probability $\rho^*$ as $v_{\text{\tiny H}}$ increases or $v_{\text{\tiny L}}$ decreases in Case 1, because the purchase gain $v_{\text{\tiny H}}-v_{\text{\tiny L}}$ becomes more significant under these changes. On the other hand, the seller expects such social manipulations, and introduces a mixture (mixed strategy) of uniform pricing $v_{\text{\tiny L}}$ with personalized pricing in (\ref{region4_mix1}) when observing the high common interaction frequency $\hat{x}^*=1$. Given a higher $v_{\text{\tiny L}}$ in Case 1, such a mixture helps the seller gain more revenue from those low-preference buyers who never manipulate and frequently interact with each other. As a result, the seller mitigates the potential revenue loss from those manipulating high-preference buyers.

In contrast, when $v_{\text{\tiny L}}$ becomes lower, as in Case 2, the seller tends to mix personalized pricing with another pricing scheme in \eqref{region4_mix2}, offering a high uniform price $v_{\text{\tiny H}}$ when observing the low common interaction frequency $\hat{x}^*=0$. Specifically, the seller strategically infers those manipulating high-preference buyers upon $\hat{x}^*=0$. With small $v_{\text{\tiny L}}$, the seller worries little about the demand loss from low-preference buyers, who never manipulate and seldom interact with high-preference buyers. As a result, the seller mitigates the loss by extracting revenue from those manipulating high-preference buyers. Meanwhile, such a mixture of uniform pricing $p_1^*=p_2^*=v_{\text{\tiny H}}$ upon $\hat{x}^*=0$ also reduces high-preference buyers' incentives to manipulate. That is, when choosing the low interaction frequency of $\hat{x}^*=0$, the high-preference buyers are less likely to avoid personalized pricing but still face social loss. Therefore, even though the potential purchase gain $v_{\text{\tiny H}}-v_{\text{\tiny L}}$ enlarges as $v_{\text{\tiny H}}$ increases or $v_{\text{\tiny L}}$ decreases, high-preference buyers' manipulation level $\rho^*$ decreases in Case 2.

\section{Impact of Buyer Manipulation}\label{analysis}

After analyzing the PBE in Propositions \ref{region1}-\ref{Region4}, we are ready~to understand how buyers' social data manipulations affect the payoffs of both the buyers and the seller. We investigate this through welfare comparisons with the two benchmark cases (\textit{undisclosed-learning} and \textit{no-learning}) in Section \ref{benchmark}.

\subsection{Manipulation Effect on Buyer Payoffs}\label{impact_buyerpayoff}

First, we are interested in whether the buyers really achieve higher payoffs by manipulating their social interactions~to hurdle the seller's personalized pricing. Proposition \ref{dilemma_buyer} and Fig. \ref{userpayoff} below provide the answer by comparing the strategic-learning model to the undisclosed-learning benchmark, in which buyers are unaware of the seller's learning practice and thus never manipulate (see Section \ref{bench_unaware}).

Formally, we examine the \textit{average buyer payoff} $\tilde{\pi}$ given by
\begin{equation}\label{average_buyer}
\tilde{\pi}\triangleq\mathbb{E}_{v_i,v_j}\{\tilde{\pi}_i(x_{ij},x_{ji})\},
\end{equation}
which takes expectation over various buyer preference distributions while accounting for the preference correlation between buyers. Indeed, the awareness of the seller's learning would not affect low-preference buyers' payoffs, as they never manipulate and always receive zero purchase surplus (see Lemma \ref{lemma1_buyer} and Lemma \ref{pricebinary}). Hence, the investigation on (\ref{average_buyer}) to compare with the undisclosed-learning benchmark actually sheds light on the high-preference buyers, as~discussed in the following.

\begin{figure}[h]
\centering
\subfigure[Large {$l\in(1/2,1)$}]{
\begin{minipage}[t]{0.45\linewidth}
\centering
\includegraphics[width=0.8\linewidth]{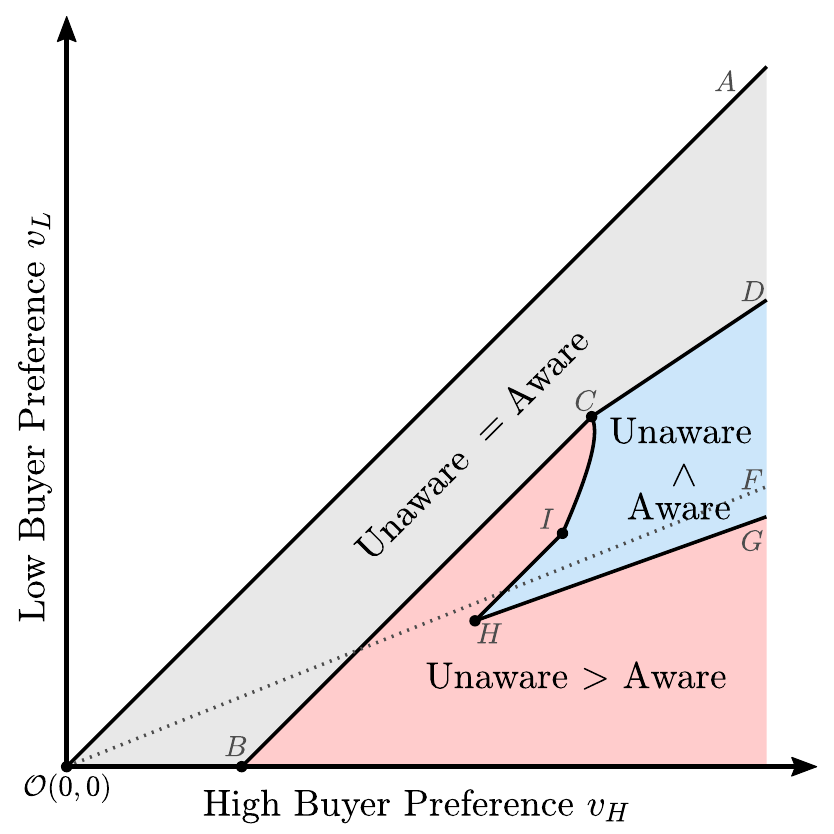}
\label{useparadox1_2}
\end{minipage}
}
\subfigure[Small {$l\in\left(0,1/2\right]$}]{
\begin{minipage}[t]{0.45\linewidth}
\centering
\includegraphics[width=0.8\linewidth]{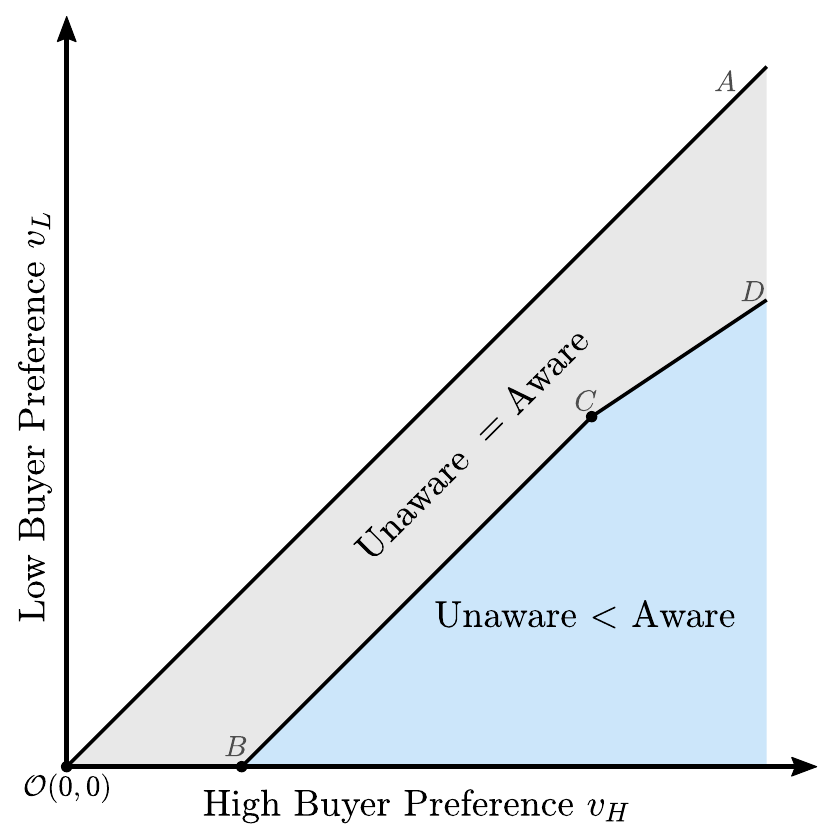}
\label{useparadox3}
\end{minipage}
}
\begin{minipage}[t]{\linewidth}
\vspace{5pt}
\centering
\scriptsize
\begin{tabular}{cp{6cm}}
\specialrule{\heavyrulewidth}{1.2pt}{0pt}
\cellcolor[gray]{0.92}{\makecell[c]{Curve}} &{\makecell[c]{Corresponding Equation in $(v_\text{\tiny H},v_\text{\tiny L})$-Plane}}\\
\specialrule{\lightrulewidth}{0pt}{1.2pt}
OA &\cellcolor[gray]{0.92}{\makecell[c]{$v_\text{\tiny L}=v_\text{\tiny H}$}}\\
\cellcolor[gray]{0.92}{CD} & {\makecell[c]{$v_\text{\tiny L}=2v_\text{\tiny H}/3$}}\\
OF &\cellcolor[gray]{0.92}{\makecell[c]{$v_\text{\tiny L}=2v_\text{\tiny H}/5$}}\\
\cellcolor[gray]{0.92}{HI} & {\makecell[c]{$v_\text{\tiny H}-v_\text{\tiny L}=2l$}}\\
BC &\cellcolor[gray]{0.92}{\makecell[c]{$v_\text{\tiny H}-v_\text{\tiny L}=2(1-l)$}}\\
\cellcolor[gray]{0.92}{CI} & {\makecell[c]{$v_\text{\tiny H}-v_\text{\tiny L}+2l\sqrt{v_\text{\tiny L}/2(v_\text{\tiny H}-v_\text{\tiny L})}=2$}}\\
HG & \cellcolor[gray]{0.92}{\makecell[c]{$v_\text{\tiny L}=\min\left\{(2l-1)v_\text{\tiny H}/(2l-1+l^2),2v_\text{\tiny H}/5\right\}$}}\\
\specialrule{\heavyrulewidth}{1.2pt}{0pt}
\end{tabular}
\vspace{5pt}
\end{minipage}
\caption{Comparison of an average buyer's average payoff, when aware versus unaware of the seller's learning (i.e., in the strategic-learning model versus the undisclosed-learning benchmark). As noticed in Table \ref{table_same}, the parameter $l$ represents the loss one buyer incurs due to the other buyer's low social response when both buyers have the same high preference.}
\label{userpayoff}
\end{figure}

\begin{proposition}\label{dilemma_buyer}
Compared with the undisclosed-learning benchmark, an average buyer's payoff can decrease after being aware of the seller's learning in the strategic-learning model, as illustrated in the red region of Fig. \ref{useparadox1_2}.
\end{proposition}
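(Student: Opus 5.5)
The plan is to compute the average buyer payoff $\tilde{\pi}$ in \eqref{average_buyer} in closed form in both models. I will then exhibit an explicit nonempty set of parameters, which lies in Region III of Fig.~\ref{regions} with $l>1/2$, on which the strategic-learning payoff is strictly smaller.

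\emph{Step 1: the undisclosed-learning benchmark.} In the small $v_{\text{\tiny L}}$ regime of Lemma~\ref{unaware}, every buyer receives zero purchase surplus.
\begin{itemize}
\item An HH pair shows $\hat{x}^*=1$, and both buyers pay $v_{\text{\tiny H}}$.
\item An LL pair shows $\hat{x}^*=1$. The first buyer does not buy at $v_{\text{\tiny H}}$, and the second pays $v_{\text{\tiny L}}$.
\item In an HL pair, $\hat{x}^*=0$, and \eqref{Unaware_p2_2} charges each buyer exactly his valuation.
\end{itemize}
By Lemma~\ref{lemma:nolearning}, same-type pairs get social utility $1$ each and mixed pairs get $0$. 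Each type pair has probability $1/4$, so $\tilde{\pi}=1/2$. The HH buyer's payoff in particular equals $1$.

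\emph{Step 2: the strategic-learning model in Region III.} Low-preference buyers and HL pairs behave exactly as in the benchmark, by Lemma~\ref{lemma1_buyer}, and the Stage II pricing of Proposition~\ref{Region3} coincides with the benchmark. Hence their payoffs are unchanged, and only the HH buyer's payoff matters. Because he mixes in equilibrium (Proposition~\ref{Region3}), his expected payoff equals his payoff from playing $x_{ij}=0$ against an opponent who plays $1$ with probability $1-\rho^*$.

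This payoff has two parts.
\begin{itemize}
\item Social utility: he gets $l(1-\rho^*)$ from Table~\ref{table_same}.
\item Purchase surplus: the signal is $\hat{x}=0$, so the seller prices $p_1=v_{\text{\tiny H}}$, the first buyer buys, and the second buyer is charged $v_{\text{\tiny L}}$ by \eqref{condition_2}. Averaging over his arrival order, his expected surplus is $(v_{\text{\tiny H}}-v_{\text{\tiny L}})/2$.
\end{itemize}
Substitute $\rho^*=1-2(1-l)/(v_{\text{\tiny H}}-v_{\text{\tiny L}})$ and write $d=v_{\text{\tiny H}}-v_{\text{\tiny L}}$. The HH payoff becomes
\begin{equation*}
\frac{d}{2}+\frac{2l(1-l)}{d}.
\end{equation*}
It is below the benchmark value $1$ if and only if $d^2-2d+4l(1-l)<0$. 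The roots of this quadratic are $1\pm|1-2l|$, that is, $2l$ and $2(1-l)$. Region III requires $d>2(1-l)$, so the HH buyer is strictly worse off exactly when $l>1/2$ and $2(1-l)<d<2l$. This strip is bounded by the curves BC and HI of Fig.~\ref{userpayoff}, and on it $\tilde{\pi}<1/2$.

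\emph{Step 3: nonemptiness.} This is the only step that needs care: I must check that the strip intersects Region III, i.e.\ that the other boundaries CP and OCD of Fig.~\ref{regions} do not exclude it. I will take $d$ slightly above $2(1-l)$ and $v_{\text{\tiny L}}\in(0,1-l]$.
\begin{itemize}
\item Since $v_{\text{\tiny L}}\le 1-l<2d$, we get $v_{\text{\tiny L}}<2v_{\text{\tiny H}}/3$, so the point lies below OCD.
\item Since $d\,v_{\text{\tiny L}}\approx 2(1-l)^2<8(1-l)^2$, the point lies on the Region III side of CP.
\end{itemize}
I would still verify, by reading off the region descriptions in Appendix~\ref{Appendix:PBE}, which side of BP belongs to Region III in this corner. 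If this corner turned out to fall into Region IV instead, I would repeat Step 2 with Case 1 of Proposition~\ref{Region4}. There the seller mixes upon $\hat{x}^*=1$, which lowers the HH buyer's surplus from $x=1$ as well. The same indifference computation then yields the curve CI, and the loss region enlarges.
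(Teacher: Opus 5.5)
Your proof is correct and follows the paper's Region III argument. Both compare $\tilde{\pi}^{\textrm{ST}}=\frac{l(1-l)}{2d}+\frac{d}{8}+\frac{1}{4}$ with $\tilde{\pi}^{\textrm{UN}}=\frac{1}{2}$. Your factorization of $d^2-2d+4l(1-l)$ gives the loss strip $2(1-l)<d<2l$ more directly than the paper's monotonicity-plus-boundary-value check. The paper's extra work in Regions IV and V only traces the full red region, which the existence claim does not need.

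The BP check you left open is immediate. As $d\downarrow 2(1-l)$ with $v_{\text{\tiny L}}>0$, we have $d(d-v_{\text{\tiny L}})\to 4(1-l)^2-2(1-l)v_{\text{\tiny L}}<4(1-l)^2$, so the corner lies in Region III and no fallback is needed.

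That is just as well, because your fallback reasoning is backwards. In Region IV the seller's mixing upon $\hat{x}^*=1$ raises the honest high-preference payoff to $1+\beta(v_{\text{\tiny H}}-v_{\text{\tiny L}})$, and it curbs manipulation, since $1-\rho^*=\sqrt{v_{\text{\tiny L}}/(2d)}>2(1-l)/d$ there. So it confines the loss part of Region IV to a subset of your strip rather than enlarging it.
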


Proposition \ref{dilemma_buyer} reveals an interesting dilemma that \emph{high-preference buyers, after strategic manipulation to thwart 
personalized pricing, may end up with lower payoffs than no manipulation} (see the Red region of Fig. \ref{useparadox1_2}). In particular, this happens when one high-preference buyer's social loss $l$, incurred~by the other high-preference buyer's low interaction response in Table \ref{table_same}, is significant \mbox{(i.e., $l>1/2$)}. One may imagine that each buyer is able to confuse the seller through \textit{unilateral manipulation}, as the seller only observes the common mutual interactions \mbox{$\hat{x}=\min\{x_{ij},x_{ji}\}$}. However, both high-preference buyers \mbox{$i$ and $j$} tend to manipulate unilaterally by themselves in this case. Consequently, both buyers suffer the social interaction loss associated with the other's manipulation, imposing a \textit{negative externality} on each other.

Such a payoff loss will not happen when $v_{\text{\tiny H}}-v_{\text{\tiny L}}$ is sufficiently large (see the Blue region of Fig. \ref{useparadox1_2}). In this case, the benefit of a buyer's manipulation to avoid personalized pricing in Stage II is now significant enough to outweigh the social loss incurred from the other buyer's manipulation in Stage I. Furthermore, the seller's randomization of personalized pricing with uniform pricing in Proposition \ref{Region4} reduces buyers' incentives to manipulate their social interactions, which in turn alleviates the negative externality they impose on each other.

When $l$ becomes small in Fig. \ref{useparadox3}, each high-preference buyer prefers the other high-preference buyer to manipulate his social interaction data to avoid the seller's personalized pricing, rather than doing it himself. As a result, both buyers can fully account for the external social loss associated with the other's manipulation when making social decisions and thus successfully circumvent the worse-off dilemma.

\subsection{Guidance for the Seller's Learning Practice}\label{guidance_seller}

Next, we investigate whether it is beneficial for the seller~to learn from buyers' social interaction data, and whether~she should actively inform buyers of such data access. For example, Amazon today routinely seeks explicit consent from buyers regarding Amazon's access to its buyers' Facebook data~\cite{-_by_-_boulton_2021}. However, some other companies are found not to inform buyers regarding their data access \cite{hamilton_2020}. To investigate these issues, we compare the seller's expected~revenue obtained in the strategic-learning model with the no-learning and the undisclosed-learning benchmarks.

\begin{proposition}\label{salerevenue1}
In the strategic-learning model, the seller's~expected revenue is no less than the no-learning benchmark. Moreover, the strategic learning's revenue gain is positive as long as $v_{\text{\tiny L}}<2v_{\text{\tiny H}}/3$, and it can reach a maximum of $25\%$.
\end{proposition}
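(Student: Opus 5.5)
The plan is to compare, region by region, the seller's expected revenue $\tilde{\Pi}$ in the PBE of Propositions \ref{region1}--\ref{Region4} with the no-learning revenue. By Lemma \ref{lemma:nolearning}, the no-learning revenue is $\Pi^{\rm NL}=\max\{v_{\text{\tiny H}},2v_{\text{\tiny L}}\}$: a uniform price $v_{\text{\tiny H}}$ sells to each buyer with probability $1/2$, giving $2\cdot\frac12 v_{\text{\tiny H}}$. The three claims (weak dominance, strict gain iff $v_{\text{\tiny L}}<2v_{\text{\tiny H}}/3$, maximum gain $25\%$) will be handled in that order.

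\emph{Weak dominance.} I would use a revealed-preference argument instead of computing every region. In any PBE the seller best-responds at every information set $\hat{x}\in\{0,1\}$, taking the buyers' Stage-I strategies (Lemma \ref{lemma1_buyer} and $\rho^*$) as given. One feasible Stage-II deviation is to ignore $\hat{x}$ and charge the uniform price $p_1=p_2=\arg\max\{v_{\text{\tiny H}},2v_{\text{\tiny L}}\}$. Since buyers' purchase decisions \eqref{purchasedecision} depend only on prices, and the marginal prior of each buyer's type is $1/2$ regardless of how Stage-I behaviour correlates with types, this deviation yields exactly $\Pi^{\rm NL}$. Hence $\tilde{\Pi}\ge\Pi^{\rm NL}$ in every region.

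\emph{Strict gain and its size.} In Region I ($v_{\text{\tiny L}}\ge 2v_{\text{\tiny H}}/3$), Proposition \ref{region1} gives $\tilde{\Pi}=2v_{\text{\tiny L}}=\Pi^{\rm NL}$, so there is no gain. In Regions II--III, with $p_1^*=v_{\text{\tiny H}}$ and \eqref{condition_1}--\eqref{condition_2}, I would average over the four type pairs, each with probability $1/4$, and over arrival order:
\begin{itemize}
\item an HH pair yields $2v_{\text{\tiny H}}$ with probability $(1-\rho)^2$ and $v_{\text{\tiny H}}+v_{\text{\tiny L}}$ otherwise;
\item an LL pair yields $v_{\text{\tiny L}}$;
\item a mixed pair yields $v_{\text{\tiny H}}+v_{\text{\tiny L}}/2$ on average over arrival order.
\end{itemize}
This gives
\[
\tilde{\Pi}=v_{\text{\tiny H}}+\frac{v_{\text{\tiny L}}}{2}-\frac{1-(1-\rho^*)^2}{4}\,(v_{\text{\tiny H}}-v_{\text{\tiny L}}).
\]
At $\rho^*=0$ this equals the undisclosed-learning revenue of Lemma \ref{unaware}.

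In Region IV I would not compute directly. Instead I would use the seller's indifference from her mixing:
\begin{itemize}
\item In Case 1, conditional on $\hat{x}=1$, her revenue equals that of $p_1=p_2=v_{\text{\tiny L}}$.
\item In Case 2, conditional on $\hat{x}=0$, it equals that of the uniform price $v_{\text{\tiny H}}$.
\end{itemize}
In either case the revenue reduces to an expression in $(v_{\text{\tiny H}},v_{\text{\tiny L}},\rho^*)$ with $\rho^*$ from \eqref{region4_rho}. Strictness for $v_{\text{\tiny L}}<2v_{\text{\tiny H}}/3$ then follows by showing that the personalized option strictly beats the uniform deviation at the information set where the seller plays purely. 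For example, in Case 2 at $\hat{x}=1$ the posterior puts weight on LL pairs, and personalized pricing extracts $v_{\text{\tiny L}}$ from them while still extracting $2v_{\text{\tiny H}}$ from non-manipulating HH pairs.

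\emph{Maximum $25\%$.} Since $\rho^*\ge 0$, every region's revenue is at most $v_{\text{\tiny H}}+v_{\text{\tiny L}}/2$. The ratio $(v_{\text{\tiny H}}+v_{\text{\tiny L}}/2)/\max\{v_{\text{\tiny H}},2v_{\text{\tiny L}}\}$ is maximized at $v_{\text{\tiny L}}=v_{\text{\tiny H}}/2$, where it equals $5/4$. In Region IV the mixing adds no revenue beyond the pure personalized scheme evaluated at the same $\rho^*$, so the same bound applies there. The bound is attained when $v_{\text{\tiny L}}=v_{\text{\tiny H}}/2$ lies in Region II ($\rho^*=0$), which requires $v_{\text{\tiny H}}<4(1-l)$; the Region II boundary BC in Fig. \ref{regions} contains such points, so the supremum $25\%$ is achieved, consistent with Corollary \ref{salerevenue_un}.

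The main obstacle is the Region IV bookkeeping. I need to confirm that the conditional revenue at the mixing information set is indeed bounded by the pure-personalized value and that the resulting ratio never exceeds $5/4$. I would first try to verify this via the indifference reduction above; failing that, I would substitute the closed forms of $\rho^*$ and check the ratio on each case's domain directly.
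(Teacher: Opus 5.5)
Your proposal is correct, and it differs from the paper's route in two places.

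\textbf{The paper's route.} It computes the seller's closed-form revenue in every region, carrying the mixing probability $\beta$ explicitly through Proposition~\ref{Region4}. This gives
$2v_{\text{\tiny L}}$, $v_{\text{\tiny H}}+\tfrac12 v_{\text{\tiny L}}$, $\tfrac34 v_{\text{\tiny H}}+\tfrac34 v_{\text{\tiny L}}+\tfrac{(1-l)^2}{v_{\text{\tiny H}}-v_{\text{\tiny L}}}$, $\tfrac34 v_{\text{\tiny H}}+\tfrac78 v_{\text{\tiny L}}$ and $v_{\text{\tiny H}}+\tfrac14 v_{\text{\tiny L}}$. Each is then compared with $\max\{v_{\text{\tiny H}},2v_{\text{\tiny L}}\}$ piece by piece.

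\textbf{Your first difference: weak dominance.} You get it from sequential rationality alone. The no-learning uniform price is an available Stage-II strategy at each on-path $\hat{x}$, and its ex-ante revenue is $\max\{v_{\text{\tiny H}},2v_{\text{\tiny L}}\}$ whatever buyers do in Stage~I. This part needs no equilibrium characterization, so it holds for every PBE and carries over directly to the paper's extensions.

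\textbf{Your second difference: the indifference reduction.} You use the seller's indifference at the mixing information set, so $\beta$ drops out. This collapses Regions II--IV into the single formula
$v_{\text{\tiny H}}+\tfrac12 v_{\text{\tiny L}}-\tfrac14\bigl(1-(1-\rho^*)^2\bigr)(v_{\text{\tiny H}}-v_{\text{\tiny L}})$.
Substituting the equilibrium $\rho^*$ reproduces the paper's last four expressions. It also explains why the Region IV revenues do not depend on $l$: there $\rho^*$ is pinned down by the seller's indifference, not the buyers'. The formula further gives at once that the strategic-learning revenue never exceeds the undisclosed-learning revenue (the inequality of Proposition~\ref{salerevenue2}). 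The $25\%$ cap is therefore inherited from Corollary~\ref{salerevenue_un}.

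Consequently, the ``main obstacle'' you flag is already resolved by your own argument. The paper's route, in exchange, produces explicit region-wise revenues that it reuses for Proposition~\ref{salerevenue2} and Lemma~\ref{Lemma:price_prior_knowledge_VL_three}.

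\textbf{Two small slips.}
\begin{itemize}
\item Your claim that every region's revenue is at most $v_{\text{\tiny H}}+\tfrac12 v_{\text{\tiny L}}$ fails in Region I, where $2v_{\text{\tiny L}}\ge v_{\text{\tiny H}}+\tfrac12 v_{\text{\tiny L}}$. This is harmless because the gain there is zero. The correct uniform statement is the bound by the undisclosed-learning revenue.
\item The maximum is attained for $v_{\text{\tiny H}}=2v_{\text{\tiny L}}\le 4(1-l)$, with the boundary included. So the strict inequality $v_{\text{\tiny H}}<4(1-l)$ in your text is inconsistent with your remark about BC.
\end{itemize}
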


With a significant revenue gain up to $25\%$ beyond the no-learning benchmark, the seller should always learn from buyers' social interaction data and enable personalized pricing accordingly, especially when buyers' preferences for~the product are diverse \mbox{(i.e., $v_{\text{\tiny L}}<2v_{\text{\tiny H}}/3$)}. Importantly, this applies no matter whether buyers are aware (or not) of such learning and follow-up personalized pricing (see Corollary \ref{salerevenue_un}). Despite buyers' social manipulation to hurdle the seller's personalized pricing, the seller can strategically learn and introduce a mixture with uniform pricing (see Proposition~\ref{Region4}) to mitigate the manipulation effect and avoid revenue loss.

\begin{proposition}\label{salerevenue2}
In the strategic-learning model, the seller's expected revenue is no more than the undisclosed-learning benchmark, with a revenue loss of $8.3\%$ at most.
\end{proposition}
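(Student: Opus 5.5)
The plan is to compute the seller's expected revenue in closed form, region by region, in both models and compare. The undisclosed-learning revenue comes from Lemma \ref{unaware}; the strategic-learning revenue comes from Propositions \ref{region1}--\ref{Region4}.

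\textbf{Benchmark revenue.} Each preference pair $(v_1,v_2)$ occurs with probability $1/4$.
\begin{itemize}
\item In the large-$v_{\text{\tiny L}}$ regime the benchmark revenue is $2v_{\text{\tiny L}}$.
\item In the small-$v_{\text{\tiny L}}$ regime, with $p_1^*=v_{\text{\tiny H}}$ and personalized $p_2^*$ as in \eqref{Unaware_p2_1}--\eqref{Unaware_p2_2}, the four pairs yield $2v_{\text{\tiny H}}$ (HH), $v_{\text{\tiny L}}$ (LL), $v_{\text{\tiny H}}+v_{\text{\tiny L}}$ (HL) and $v_{\text{\tiny H}}$ (LH). The expected revenue is therefore $v_{\text{\tiny H}}+v_{\text{\tiny L}}/2$.
\end{itemize}

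\textbf{Regions I and II.} By Propositions \ref{region1} and \ref{Region2}, nobody manipulates and the seller's pricing coincides with the benchmark. Hence the two revenues are equal there.

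\textbf{Regions III and IV.} Manipulation only matters for HH pairs, by Lemma \ref{lemma1_buyer}. Let $q\triangleq 1-\rho^*$, so that an HH pair signals $\hat{x}=0$ with probability $1-q^2$.

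In Region III the seller uses pure personalized pricing. An HH pair observed as $\hat{x}=0$ pays $v_{\text{\tiny H}}+v_{\text{\tiny L}}$ instead of $2v_{\text{\tiny H}}$. The revenue loss is therefore $\frac14(1-q^2)(v_{\text{\tiny H}}-v_{\text{\tiny L}})$, with $q=2(1-l)/(v_{\text{\tiny H}}-v_{\text{\tiny L}})$.

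In Region IV the seller mixes, so she is indifferent on the randomized branch. Revenue can therefore be evaluated as if she always used personalized pricing, which gives the same loss formula with the $q$ from \eqref{region4_rho}.
\begin{itemize}
\item In Case 1, $q^2=v_{\text{\tiny L}}/(2(v_{\text{\tiny H}}-v_{\text{\tiny L}}))$. The loss is $(2v_{\text{\tiny H}}-3v_{\text{\tiny L}})/8$, and the relative loss is $(2v_{\text{\tiny H}}-3v_{\text{\tiny L}})/(8v_{\text{\tiny H}}+4v_{\text{\tiny L}})$.
\item In Case 2, $q^2=(v_{\text{\tiny H}}-2v_{\text{\tiny L}})/(v_{\text{\tiny H}}-v_{\text{\tiny L}})$. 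The loss is $v_{\text{\tiny L}}/4$, and the relative loss is $v_{\text{\tiny L}}/(4v_{\text{\tiny H}}+2v_{\text{\tiny L}})$.
\end{itemize}
All losses are nonnegative because $q\le 1$, which proves the first claim.

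\textbf{The $8.3\%$ bound.} Next I would maximize the relative loss over each region.
\begin{itemize}
\item The Case 2 ratio is increasing in $v_{\text{\tiny L}}/v_{\text{\tiny H}}$.
\item The Case 1 ratio is decreasing in $v_{\text{\tiny L}}/v_{\text{\tiny H}}$.
\end{itemize}
Both therefore peak on the line $v_{\text{\tiny L}}=2v_{\text{\tiny H}}/5$, where each equals $\frac{2/5}{24/5}=\frac1{12}\approx 8.3\%$.

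\textbf{Main obstacle.} The delicate part is Region III, where the ratio depends on $l$ as well. I would show it is bounded by the Region IV value along their common boundary, i.e., the curves CP and BP. For fixed $(v_{\text{\tiny H}},v_{\text{\tiny L}})$, the loss $\frac14(v_{\text{\tiny H}}-v_{\text{\tiny L}})(1-q^2)$ is decreasing in $q$. Region III is characterized by $q$ exceeding the Region IV value of $q$, which is exactly what the boundary inequalities in Fig.~\ref{regions} encode. Hence the Region III loss is dominated pointwise by the Case 1 or Case 2 loss formula evaluated at the same $(v_{\text{\tiny H}},v_{\text{\tiny L}})$, and the supremum $1/12$ follows.
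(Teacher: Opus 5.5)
Your proposal is correct, and it reaches the bound by a somewhat different route than the paper.

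\emph{The paper's route.} It enumerates the strategic-learning revenue region by region, carrying $\rho^*$ and the mixing probability $\beta$ explicitly, and tabulates the relative losses (Table~\ref{Table_A6_J}). Region III is the only case depending on two parameters. Lemma~\ref{lemmaA3_J} handles it by substituting $x=v_{\text{\tiny H}}-v_{\text{\tiny L}}$, $y=v_{\text{\tiny L}}$. The ratio decreases in $y$, so it is maximized on BP. Along BP it increases in $x$ up to the corner P, where it equals $1/12$.

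\emph{Your route.} You do two things differently. First, you eliminate $\beta$ through the seller's indifference. This is legitimate: the $\beta$-coefficient of ex-ante revenue vanishes exactly at the equilibrium values of $q^2$ in both cases of Proposition~\ref{Region4}. That gives the unified loss $\frac14(1-q^2)(v_{\text{\tiny H}}-v_{\text{\tiny L}})$. Second, you replace the calculus in Region III by noting that the CP and BP inequalities are literally $q_{\mathrm{III}}^2>q_1^2$ and $q_{\mathrm{III}}^2>q_2^2$. Hence the Region III loss is pointwise dominated by both the Case 1 and the Case 2 loss expressions.

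\emph{Comparison.} Your argument needs no derivatives. It also explains why the supremum sits at P: the two case ratios cross at $v_{\text{\tiny L}}/v_{\text{\tiny H}}=2/5$, and $\rho^*$ is continuous across the region boundaries. The paper's enumeration instead yields explicit revenue formulas, which it reuses for Proposition~\ref{salerevenue1} and Table~\ref{tab:A7}.

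\emph{One step to make explicit.} Both dominations hold throughout Region III, so bound its ratio by their minimum. With $r=v_{\text{\tiny L}}/v_{\text{\tiny H}}$, use $\frac{2-3r}{8+4r}$ for $r\ge 2/5$ and $\frac{r}{4+2r}$ for $r<2/5$. Either bound alone exceeds $1/12$ on part of Region III. The first tends to $1/4$ as $r\to 0$ near B, and the second approaches $1/8$ near C.
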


While buyers' awareness would reduce the seller's revenue gain from learning, the corresponding loss is relatively small, with only 8.3\% in the worst case. Given the increasing trend of better personal data protection from regulators, it is advisable for the seller to inform buyers of her access to their social data. Indeed, this matches well with Amazon's current practice regarding \textit{informed consent} for data sharing \cite{-_by_-_boulton_2021}. From a regulatory standpoint, imposing a fine on sellers that exceeds the potential gain from not informing buyers (an 8.3\% limit in our model) should be adequate to encourage the seller's adoption of informed-consent practices.

\section{Multi-buyer Dynamic Learning}\label{sec:multi}

Previously, we focused on the basic instance involving two buyers with pairwise correlation, which serves as guidance for the seller's learning to exercise personalized pricing. In this section, we extend to investigate more general dynamic learning from multiple interconnected buyers in the social network across different time stages. In particular, we are interested in understanding whether the seller can benefit from learning previous buyers' preferences to infer the other buyers in the subsequent time stages.

Formally, we introduce the network model of the multi-buyer dynamic learning process in Section \ref{subsection:network_model}. To examine the impact of involving more interconnected buyers, we first analyze multi-buyer variants of benchmarks in Section \ref{subsec:multi_buyer_benchmark}. However, due to buyer manipulation, the dynamic Bayesian game for general multi-buyer networks is highly challenging to analyze. Later in Section \ref{sec:multiple_awareness}, we will first draw~an~important conclusion through a three-buyer network instance: \textit{learning prior buyers' preferences may not contribute to inferring the other buyers in the seller's subsequent learning}. Remarkably, we further identify the conditions under which this conclusion persists in more general multi-buyer networks. These findings offer additional guidance for the learning practice of today's online sellers beyond those obtained in Section \ref{guidance_seller}.

\subsection{Multi-buyer Network Model}\label{subsection:network_model}

In this subsection, we formally introduce a general network model, where the seller implements dynamic learning across multiple interconnected buyers.

Consider a social network of $N$ interconnected buyers, where each buyer is connected to at least one other buyer within the network. Additionally, we assume that when a buyer $i$ interacts with two or more of his neighbor friends in the social network, the social interaction utility $u_i$ from these interactions will be additive for this buyer. We adopt a multi-buyer variation of the two-stage formulation in Fig.~\ref{Timeline}: In Stage I, the $N$ interconnected buyers interact with each other in the online social network; Stage II consists of $N$ purchase periods, with one buyer per period, during which interaction between the buyer and the seller in the product market occur. As modeled in Section \ref{model_stageII}, buyers sequentially arrive at the market with a random sequence.

\subsection{Multi-buyer Variants of Benchmarks}\label{subsec:multi_buyer_benchmark}

We start with multi-buyer variants of our no-learning and undisclosed-learning benchmarks in Section \ref{benchmark}. Through the analysis in this subsection, we aim to understand how more interconnected buyers or additional correlation information can affect the seller's learning and pricing strategies.

First, we revisit the \textit{no-learning benchmark} established in Section \ref{bench_nolearning} and observe that the conclusion drawn there~(see Lemma \ref{lemma:nolearning}) remains applicable to the multi-buyer learning process here. Then we turn to the \textit{undisclosed-learning benchmark}, and the following proposition will outline the unique equilibrium under its multi-buyer setting as in Section \ref{bench_unaware}.

\begin{proposition}\label{unaware_general}
In the undisclosed-learning benchmark, given $N$ interconnected buyers in the social network with their preferences unknown to the seller, the unique SPE of the dynamic game is given as follows.

\begin{itemize}
    \item \mbox{\emph{In Stage I,}} if two connected buyers $i$ and $j$ have the same preference \mbox{($v_i=v_j$)}, both choose the maximum social interaction frequency, i.e., \mbox{$x^*_{ij}=x^*_{ji}=1$}. If they have different preferences \mbox{($v_i\neq v_j$)}, they choose the minimum interaction frequency, i.e., \mbox{$x^*_{ij}=x^*_{ji}=0$}. 
    \item \mbox{\emph{In Stage II,}} the seller's equilibrium pricing depends on the relation between $v_{\text{\tiny L}}$ and $v_{\text{\tiny H}}$ as follows.
    \begin{itemize}
    \item [(i)] \emph{\textbf{Large $\bm{v_{\text{\tiny L}}}$ regime} with $v_{\text{\tiny L}}\in\left[ \nicefrac{Nv_{\text{\tiny H}}}{N+1},v_{\text{\tiny H}}\right)$}: The seller charges low prices in each selling period, i.e., $p_t^*=v_{\text{\tiny L}}$, for each $t\in\left\{1,2,\dots,N\right\}$.
        
    \item [(ii)]\emph{\textbf{Small $\bm{v_{\text{\tiny L}}}$ regime} with $v_{\text{\tiny L}}\in\left(0,\nicefrac{Nv_{\text{\tiny H}}}{N+1}\right)$}: The seller charges a high price $p_1^*=v_{\text{\tiny H}}$ in the first selling period. Then, the seller perfectly infers all buyers' preferences. In each subsequent selling period, the seller offers a personalized price equal to the arriving buyer's preference, i.e., $p_t^*=v_t$, for each $t\in\left\{2,3,\dots,N\right\}$.
    \end{itemize} 
\end{itemize}
In the undisclosed-learning benchmark, the seller's expected revenue is no less than the no-learning benchmark (with a revenue gain up to ${\textstyle\tfrac{N-1}{2N}}\times 100\%$).
\end{proposition}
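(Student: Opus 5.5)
The plan is to treat Stage I by dominance, then solve the seller's Stage II problem by backward induction over when she first charges $v_{\text{\tiny H}}$, and finally compute the revenue ratio.

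\textbf{Stage I.} Because buyers are unaware of the seller's learning, each buyer's Stage~II surplus does not depend on his Stage~I choices. Since social utility is additive across his neighbors, buyer $i$ maximizes each edge utility $u_i(x_{ij},x_{ji})$ separately. From Table~\ref{table_same}, $x_{ij}=1$ strictly dominates $x_{ij}=0$ when $v_i=v_j$, because $1>l$ and $1-l>0$. From Table~\ref{table_differ}, $x_{ij}=0$ strictly dominates when $v_i\neq v_j$, because $-r>-c$ and $0>-c+r$. This gives the unique equilibrium behavior, exactly as in Lemma~\ref{lemma:nolearning}. Hence the seller observes $\hat{x}_{ij}=\boldsymbol{1}(v_i=v_j)$ on every edge.

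\textbf{Stage II.} Since the network is connected, a spanning tree together with these edge signals pins down the whole type profile up to a global flip $v\leftrightarrow\bar v$, where $\bar v$ swaps every $v_{\text{\tiny H}}$ and $v_{\text{\tiny L}}$. Under the i.i.d. uniform prior, the two candidate profiles are equally likely a posteriori. If the profile has $k$ high buyers, the complement has $N-k$. So a uniformly random arriving buyer is high with probability $\tfrac12\big(\tfrac{k}{N}+\tfrac{N-k}{N}\big)=\tfrac12$, and the expected value of each remaining buyer is $(v_{\text{\tiny H}}+v_{\text{\tiny L}})/2$.

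Only prices in $\{v_{\text{\tiny L}},v_{\text{\tiny H}}\}$ matter, since any other price is dominated as in Lemma~\ref{pricebinary}. A price of $v_{\text{\tiny L}}$ reveals nothing, while the first price of $v_{\text{\tiny H}}$ reveals that buyer's type and hence the entire profile. Once the profile is known, pricing each later buyer at $v_t$ is optimal and extracts full surplus.

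The seller's problem therefore reduces to choosing the number $s\in\{0,\dots,N\}$ of initial $v_{\text{\tiny L}}$ periods before the first $v_{\text{\tiny H}}$. The expected revenue is
\begin{equation*}
R(s)=s v_{\text{\tiny L}}+\tfrac12 v_{\text{\tiny H}}+(N-s-1)\tfrac{v_{\text{\tiny H}}+v_{\text{\tiny L}}}{2}\quad (s<N),\qquad R(N)=Nv_{\text{\tiny L}}.
\end{equation*}
For $s<N$, $R(s)$ is affine in $s$ with slope $(v_{\text{\tiny L}}-v_{\text{\tiny H}})/2<0$. So the optimum is either $s=0$ or $s=N$. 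Comparing the two, $R(0)\ge R(N)$ iff $Nv_{\text{\tiny H}}\ge (N+1)v_{\text{\tiny L}}$, which yields regimes (i) and (ii). The step I expect to need the most care is justifying rigorously that after the first $v_{\text{\tiny H}}$ price no further randomness remains and that no mixed or history-dependent strategy beats this reduction. I would handle it by noting that the seller's information state is fully determined by whether a $v_{\text{\tiny H}}$ test has occurred.

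\textbf{Revenue comparison.} No-learning revenue is $N\max\{v_{\text{\tiny L}},v_{\text{\tiny H}}/2\}$ by Lemma~\ref{lemma:nolearning}, which is never above $\max\{R(0),R(N)\}$. For the maximum gain I would normalize $v_{\text{\tiny H}}=1$ and study the ratio $\max\{R(0),R(N)\}/(N\max\{v_{\text{\tiny L}},1/2\})$ piecewise in $v_{\text{\tiny L}}$. On $(0,1/2]$ it is increasing, and on $[1/2,1)$ it is decreasing. So the maximum is attained at $v_{\text{\tiny L}}=1/2$, where
\begin{equation*}
\frac{R(0)}{N/2}=\frac{1/2+3(N-1)/4}{N/2}=\frac{3N-1}{2N},
\end{equation*}
which is a gain of $\tfrac{N-1}{2N}$.
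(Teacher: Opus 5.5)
Your proposal is correct and is essentially the paper's proof. Stage I reduces to the no-learning benchmark. Your observation that $R(s)$ is strictly decreasing in the test time $s<N$ is exactly the paper's lemma that delaying the $v_{\text{\tiny H}}$-test by one period is dominated by testing immediately, so only ``test at $t=1$'' versus ``always $v_{\text{\tiny L}}$'' survive, with indifference at $v_{\text{\tiny L}}/v_{\text{\tiny H}}=N/(N+1)$. The revenue-gain step is the same piecewise comparison peaking at $v_{\text{\tiny L}}/v_{\text{\tiny H}}=1/2$.

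Your flip-symmetry argument, showing each unknown buyer is high with posterior probability $1/2$, makes precise a step the paper uses only implicitly. Note that at the boundary $v_{\text{\tiny L}}=Nv_{\text{\tiny H}}/(N+1)$ both schemes are optimal, so uniqueness holds only off that point, in your write-up and the paper's alike.
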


Proposition \ref{unaware_general} generalizes Lemma \ref{unaware} and Corollary \ref{salerevenue_un}~beyond the two-buyer network. Moreover, as more interconnected buyers participate, setting a high price of~\mbox{$p_1=v_{\text{\tiny H}}$} in the first selling period to ascertain the first-arriving buyer's preference can expose more buyers' preference information. This, in turn, further empowers the seller's ability to implement personalized pricing. As a consequence, the seller benefits more from the undisclosed learning practices, and the revenue gain increases with the number of interconnected buyers $N$.

We next proceed by introducing an extra buyer into the social network, \textit{whose preference is already known to the seller due to the past learning process}. In what follows, we will~describe the equilibrium outcome of this new setting under the~\textit{undisclosed-learning benchmark}, and then compare it with Proposition \ref{unaware_general} to establish the effect of such prior knowledge on the seller's follow-up learning process. Henceforth, we use the term ``\textit{unknown buyers}'' to refer to buyers whose preferences are unknown to the seller until a new-round learning takes place, and ``\textit{known buyers}'' refers to users whose preferences are already known to the seller.

\begin{proposition}\label{unaware_general_known}
In the undisclosed-learning benchmark, as long as one known buyer is connected to the $N$ interconnected unknown buyers in the social network, the seller can perfectly infer all buyers' preferences. This prior knowledge allows the seller to charge a personalized price to each arriving buyer, i.e., $p_t^*=v_t$, for each $t\in\{1,2,\dots, N\}$. After introducing this known buyer to the unknown buyer network, the seller's expected revenue always increases (by at most $\frac{1}{2N}\times 100\%$).
\end{proposition}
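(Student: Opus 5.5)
Our plan is to reduce the statement to an information-propagation argument on the network, followed by a direct revenue comparison with Proposition~\ref{unaware_general}. In the undisclosed-learning benchmark buyers are unaware of the learning, so Stage~I behavior is exactly as in Proposition~\ref{unaware_general}. Each edge $(i,j)$ therefore yields $\hat{x}_{ij}=1$ if and only if $v_i=v_j$, and $\hat{x}_{ij}=0$ if and only if $v_i\neq v_j$. Because preferences are binary, each observed edge signal tells the seller whether the two endpoints' preferences are equal or flipped. Let $k$ denote the known buyer, with known preference $v_k$, and fix any unknown buyer $m$. The $N$ unknown buyers are interconnected and $k$ is adjacent to at least one of them, so there is a path $k=m_0,m_1,\dots,m_s=m$ in the social network. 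We will show by induction along this path that $v_{m_r}$ is determined by $v_k$ together with the parities of the edge signals $\hat{x}_{m_{r-1}m_r}$. In particular, $v_m$ equals $v_k$ or its complement according to whether the number of $0$-signals on the path is even or odd. Consistency across different paths is automatic, since the signals are generated by the true preferences. Hence, before Stage~II begins, the seller's posterior puts probability one on the true preference profile.

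Given this full information, the seller's Stage~II problem decouples period by period. Each buyer purchases iff $v_t\ge p_t$, as in \eqref{purchasedecision}, so the revenue-maximizing price in period $t$ is $p_t^*=v_t$, with the buyer purchasing at indifference. This extracts full surplus and yields expected revenue $N\cdot\frac{v_{\text{\tiny H}}+v_{\text{\tiny L}}}{2}$, which is the first-best upper bound for any pricing scheme. Stage~I remains an SPE because buyers do not anticipate the learning, and uniqueness follows exactly as in Proposition~\ref{unaware_general}.

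For the revenue comparison, we will take the two regimes of Proposition~\ref{unaware_general}. In the small $v_{\text{\tiny L}}$ regime, only the first period is affected, since the seller previously charged $v_{\text{\tiny H}}$ and lost the sale w.p. $1/2$ and now earns $v_1$. The baseline revenue is $\frac{v_{\text{\tiny H}}}{2}+(N-1)\frac{v_{\text{\tiny H}}+v_{\text{\tiny L}}}{2}$, and the gain is $\frac{v_{\text{\tiny L}}}{2}>0$. In the large $v_{\text{\tiny L}}$ regime, the baseline revenue is $Nv_{\text{\tiny L}}$ and the gain is $N\frac{v_{\text{\tiny H}}-v_{\text{\tiny L}}}{2}>0$. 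Both gains are strictly positive, so revenue always increases.

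The relative gain is then maximized over $v_{\text{\tiny L}}\in(0,v_{\text{\tiny H}})$. In the small regime, the ratio $\frac{v_{\text{\tiny L}}}{v_{\text{\tiny H}}+(N-1)(v_{\text{\tiny H}}+v_{\text{\tiny L}})}$ is increasing in $v_{\text{\tiny L}}$, so its supremum is attained at $v_{\text{\tiny L}}=\frac{N}{N+1}v_{\text{\tiny H}}$, where it equals $\frac{N/(N+1)}{N+(N-1)N/(N+1)}=\frac{1}{2N}$. In the large regime, the ratio $\frac{v_{\text{\tiny H}}-v_{\text{\tiny L}}}{2v_{\text{\tiny L}}}$ is decreasing in $v_{\text{\tiny L}}$, and at the same boundary it also equals $\frac{1}{2N}$. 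The maximum relative gain is therefore $\frac{1}{2N}$.

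The main obstacle is the bookkeeping of the regime boundary: the two regimes must meet continuously at the threshold, and that boundary must be handled correctly. The inference step itself is straightforward given connectivity.
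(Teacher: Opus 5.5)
Your proposal is correct and follows essentially the same route as the paper. Both arguments infer every unknown preference from the non-manipulated pairwise signals together with the known buyer's type, obtain the full-surplus revenue $N(v_{\text{\tiny H}}+v_{\text{\tiny L}})/2$, and compare it with the two-regime revenue of Proposition~\ref{unaware_general}, giving a maximal gain of $\frac{1}{2N}$ at $v_{\text{\tiny L}}/v_{\text{\tiny H}}=N/(N+1)$. Your parity-along-a-path argument just makes explicit the inference step that the paper states in one line, which implicitly assumes the network is connected.
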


Proposition \ref{unaware_general_known} remains valid even when more than one known buyer is introduced. Since neither known nor unknown buyers manipulate in Stage I, the seller can perfectly observe the pairwise preference correlation among buyers based on their social interaction data. Given the known buyer's preference information, the seller can perfectly infer all other buyers' product preferences. This enables the seller to provide personalized prices to each arriving buyer, obtaining the maximum revenue.

\subsection{Strategic Learning with Buyer Awareness}\label{sec:multiple_awareness}

Now, we turn to the seller's strategic learning in the presence of buyers' awareness. In this subsection, we first reveal the robustness of some major insights beyond the two-buyer network in Section \ref{robustness:general}. However, dealing with a general multi-buyer network is challenging, and we thus explore the seller's strategic learning with additional knowledge about prior buyers' preferences. Specifically, we introduce a three-buyer network instance with concrete discussions in Section \ref{subsec:three-buyer} and conclude the effectiveness of such prior knowledge for more general networks in Section \ref{subsec:general_buyer}.

\subsubsection{Robustness against General Multi-buyer Network}\label{robustness:general}

In this subsubsection, we consider a general social network (without any additional prior knowledge) and demonstrate the robustness of some major insights beyond the two-buyer network. Specifically, the methodology of our two-buyer equilibrium analysis in Sections~\ref{PBE1} and~\ref{PBE2} remains applicable. In particular, Proposition \ref{multiple:buyer_manipulation} below generalizes Lemma \ref{lemma1_buyer} for the two-buyer case into a more general social network.

\begin{proposition}\label{multiple:buyer_manipulation}
Consider a general social network of $N$ interconnected unknown buyers. In the PBE, social manipulations can only happen between two high-preference buyers.
\end{proposition}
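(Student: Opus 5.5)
We prove Proposition \ref{multiple:buyer_manipulation} by generalizing the argument behind Lemma \ref{lemma1_buyer}. The argument proceeds edge by edge, exploiting two structural facts. First, by the analogue of Lemma \ref{pricebinary}, the seller's price in every period lies in $\{v_{\text{\tiny L}},v_{\text{\tiny H}}\}$. Second, each edge contributes additively to a buyer's social utility, and the seller only observes $\hat{x}_{ij}=\min\{x_{ij},x_{ji}\}$ on each edge. Fix any edge $(i,j)$ with the interaction profile on all other edges held at its equilibrium distribution. We must rule out manipulation in the two cases where the pair is not high--high: (a) $v_i=v_j=v_{\text{\tiny L}}$, and (b) $v_i\neq v_j$.

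\textbf{Step 1: low-preference buyers have no purchase incentive.} Since all prices are at least $v_{\text{\tiny L}}$, a low-preference buyer's purchase surplus is $\max\{v_{\text{\tiny L}}-p_t,0\}=0$ in every period. This holds regardless of the seller's beliefs and pricing, and therefore regardless of any signal he sends. Consequently, a low-preference buyer's choice on each of his edges is driven purely by the additive social utility on that edge, so he plays his dominant action from Table \ref{table}. Against another low buyer, $x=1$ yields $1$ or $1-l$, while $x=0$ yields $l$ or $0$. Since $l<1$ gives $1>l$, and $1-l>0$, the action $x=1$ is strictly dominant. Against a high buyer, $x=0$ yields $-r$ or $0$, while $x=1$ yields $-c$ or $-c+r$. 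Since $-r>-c$ and $0>-c+r$ (using $r<c$), the action $x=0$ is strictly dominant.

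\textbf{Step 2: partners of low-preference buyers cannot manipulate.} In case (b), the low buyer plays $x=0$, so $\hat{x}_{ij}=0$ whatever the high buyer $j$ does. Buyer $j$'s choice on this edge therefore has no effect on any observable signal, and hence none on the seller's beliefs, prices, or $j$'s purchase surplus. He then simply maximizes his social utility on this edge given $x_{ij}=0$. He compares $-c+r$ from $x=1$ with $0$ from $x=0$, and picks $x_{ji}=0$ honestly. In case (a), both buyers play $x=1$ by Step 1, which is the honest profile. Hence on every edge that is not high--high, the equilibrium behavior coincides with Table \ref{table}, and any deviation from Table \ref{table} (manipulation) can only occur on edges joining two high-preference buyers.

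\textbf{Main obstacle.} The delicate point is justifying the multi-buyer analogue of Lemma \ref{pricebinary}, that prices lie in $\{v_{\text{\tiny L}},v_{\text{\tiny H}}\}$ in every period and under every history. The argument is standard: any price in $(v_{\text{\tiny L}},v_{\text{\tiny H}})$ is dominated by $v_{\text{\tiny H}}$, since it sells to the same set of buyers. A price below $v_{\text{\tiny L}}$ is dominated by $v_{\text{\tiny L}}$, and a price above $v_{\text{\tiny H}}$ generates no sales. The only subtlety is that we need this domination to hold period by period. This relies on the fact that buyers are myopic at purchase, per \eqref{purchasedecision}. Under that assumption, the information revealed by a purchase decision depends only on which preference types buy. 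The seller's beliefs about later buyers are therefore unchanged when she replaces a price by its dominating one, so the domination carries through the dynamic game.

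A second subtlety is that a buyer's signals on his other edges could in principle interact with edge $(i,j)$. This is harmless here: Step 1 makes low buyers indifferent on the purchase side across all edges simultaneously, and Step 2 uses only the observational irrelevance of $x_{ji}$ when $x_{ij}=0$.
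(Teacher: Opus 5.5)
Your proof is correct and follows the paper's route. The paper notes that the arguments for Lemma \ref{pricebinary} and Lemma \ref{lemma1_buyer} do not depend on the number of buyers: low-preference buyers earn zero surplus and so stay honest, and a high-preference buyer facing a low-preference partner cannot move $\hat{x}=\min\{x_{ij},x_{ji}\}$ unilaterally. You spell out the edge-wise dominance calculations this relies on. One small simplification: your ``main obstacle'' is lighter than you suggest, because Step 1 only needs $p_t\ge v_{\text{\tiny L}}$ at every history, and raising any price below $v_{\text{\tiny L}}$ to $v_{\text{\tiny L}}$ leaves both the purchase decisions and the revealed information unchanged.
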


Though Proposition \ref{multiple:buyer_manipulation} shows key properties for buyers' equilibrium manipulation strategies, the dynamic learning over multiple interconnected buyers remains challenging~to analyze. The challenges arise from the immensely vast~equilibrium space: (i) many combinations of social interaction possibilities for the seller to strategically learn from Stage I, and (ii) the presence of more than two selling periods (one per buyer) in Stage II.

To circumvent these complications, we will henceforth fix the prior buyers' preference information, obtained from past learning, and proceed to investigate the seller's subsequent learning process \textbf{\textit{with such prior knowledge}}.

\begin{remark}
Our multi-buyer network model, which accounts for known buyers, readily includes influencers or celebrities. In this model, influencers or celebrities can substitute for known buyers, with their preferences being known to the seller. Consequently, the analysis presented in Section 8.3 for multi-buyer learning applies to social networks featuring these influential individuals.
\end{remark}

\subsubsection{Three-buyer Network Instance with Prior Knowledge}\label{subsec:three-buyer}
Before jumping into a general network of multiple interconnected buyers, we first start with a simple yet fundamental extension from the two-buyer social network in Section \ref{systemmodel}. This extended instance allows for a comprehensive analysis and provides clean insights with more concrete illustrations, which guides us to conclude the effectiveness of the seller's dynamic learning in more general networks in the following subsubsection.

As illustrated in Fig. \ref{3buyer_ring}, we introduce a third (incumbent) buyer $k$ to the two connected buyers $i$ and~$j$. We consider that the preference of buyer $k$ is already known to the seller due to her past learning practice. Now, our key question is whether the seller benefits from such prior knowledge when pricing to (new) buyers $i$ and~$j$, whose private preferences are unknown to the seller yet. Without loss of generality, we assume that $v_i\le v_j$ henceforth.

\begin{figure}[h]
\begin{center}
\begin{tikzpicture}[scale=0.8]
\draw[thick,-latex,black!50] (-0.74,0.06) -- (0.74,0.06);
\draw[thick,latex-,black!50] (-0.74,-0.06) -- (0.74,-0.06);
\draw[thick,latex-,black!50] (-0.185,1.515) -- (-0.935,0.216);
\draw[thick,-latex,black!50] (-0.065,1.515) -- (-0.815,0.216);
\draw[thick,latex-,black!50] (0.185,1.515) -- (0.935,0.216);
\draw[thick,-latex,black!50] (0.065,1.515) -- (0.815,0.216);
\filldraw[color=red,fill=red!10,thick] (0,1.732) circle (0.22);
\filldraw[color=blue,fill=blue!15,thick] (-1,0) circle (0.22);
\filldraw[color=blue,fill=blue!15,thick] (1,0) circle (0.22);
\node [black] at (0,1.732) {\color{red}\small $k$};
\node [black] at (-1,0) {\color{blue}\small $i$};
\node [black] at (1,0) {\color{blue}\small $j$};
\node [black,align=center] at (0,-0.6) {\color{blue}\scriptsize New Buyers $i$ and $j$ whose preferences are unknown to the seller.};
\node [black,align=left] at (0,2.2) {\color{red}\scriptsize Incumbent Buyer $k$ whose preference is known to the seller.};
\draw [->,thin,red] (0.23,1.732) to [in = 270, out = 0] (1,2.0);
\draw[blue,decorate,decoration=brace,thin] (1,-0.27) -- (-1,-0.27);
\end{tikzpicture}
\end{center}
\vspace{-15pt}
\caption{Illustration of a network instance with three connected buyers. Here, the upper node represents the known buyer $k$, and the two~lower nodes correspond to unknown buyers $i$ and $j$. The two opposing directional arrows connecting any buyer pair indicate their bilateral interactions. Notice that the arrows in this graph do not depict the specific frequency choices of buyers' social interactions.}
\label{3buyer_ring}
\end{figure}
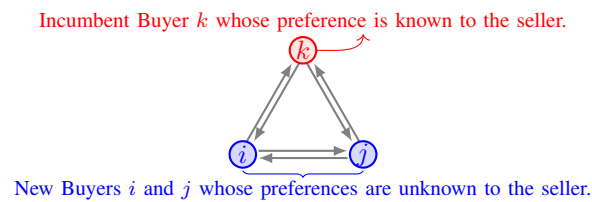

With preference information known to the seller, buyer $k$ has no incentive to manipulate his social interactions and preference correlation with any other buyer (e.g., buyers $i$ and $j$ in Fig. \ref{3buyer_ring}) in Stage I. In what follows, we analyze the effect of such prior knowledge on the seller as well as buyers $i$ and~$j$, based on the high or low preference type of buyer~$k$.

\noindent
\mbox{\textbf{Pricing from Prior Knowledge of $v_k=v_{\text{\tiny L}}$.}} Let us first consider the case in which the known buyer $k$ has a low~preference, i.e., \mbox{$v_k=v_{\text{\tiny L}}$}. 

\begin{figure}[h]
\begin{center}
\begin{tikzpicture}
\filldraw[color=black,fill=black!5,ultra thin] (-4.25,0.1) rectangle (4.25,1.3);
\filldraw[color=black,fill=black,thick] (-3.95,1.025) circle (0.12);
\filldraw[color=black,fill=white,thick] (0.5,1.025) circle (0.12);
\node at (-2.35,1.02) {\scriptsize{High-preference buyer.}};
\node at (2.1,1.02) {\scriptsize{Low-preference buyer.}};
\draw[black,-latex,thin] (-4.07,0.675) -- (-3.27,0.675);
\draw[black,-latex,thin,densely dashed] (-3.1,0.675) -- (-2.3,0.675);
\draw[red,-latex,thin] (-4.07,0.325) -- (-3.27,0.325);
\draw[red,-latex,thin,densely dashed] (-3.1,0.325) -- (-2.3,0.325);
\node at (0.7,0.67) {\scriptsize{Truthful high/low social interaction frequency.}};
\node at (0.9,0.32) {\scriptsize{Manipulated high/low social interaction frequency.}};
\node at (-3.185,0.675){\scriptsize{/}};
\node at (-3.185,0.325){\scriptsize{/}};
\end{tikzpicture}
\end{center}
\vspace{-12pt}
\begin{minipage}[t]{0.22\linewidth}
\vspace{-1pt}
\begin{center}
\begin{tikzpicture}[scale=0.8]
\draw[thick,-latex,black!50] (-0.74,0.06) -- (0.74,0.06);
\draw[thick,latex-,black!50] (-0.74,-0.06) -- (0.74,-0.06);
\draw[thick,latex-,black!50] (-0.185,1.515) -- (-0.935,0.216);
\draw[thick,-latex,black!50] (-0.065,1.515) -- (-0.815,0.216);
\draw[thick,latex-,black!50] (0.185,1.515) -- (0.935,0.216);
\draw[thick,-latex,black!50] (0.065,1.515) -- (0.815,0.216);
\filldraw[color=red,fill=red!10,thick] (0,1.732) circle (0.22);
\filldraw[color=blue,fill=blue!15,thick] (-1,0) circle (0.22);
\filldraw[color=blue,fill=blue!15,thick] (1,0) circle (0.22);
\node [black] at (0,1.732) {\color{red}\small $k$};
\node [black] at (-1,0) {\color{blue}\small $i$};
\node [black] at (1,0) {\color{blue}\small $j$};
\node [black,align=center] at (0,-0.695) {\color{blue}\scriptsize Unknown Buyers};
\node [black,align=right] at (0.95,1.45) {\color{red}\scriptsize Known};
\node [black,align=right] at (0.95,1.15) {\color{red}\scriptsize Buyer};
\draw [<-,thin,red] (0.23,1.732) to [in = 60, out = 0] (0.95,1.475);
\draw[blue,decorate,decoration=brace,thin] (1,-0.28) -- (-1,-0.28);
\end{tikzpicture}
\end{center}
\end{minipage}
\subfigure[$v_i=v_j=v_{\text{\tiny L}}$]{
\begin{minipage}[t]{0.22\linewidth}
\vspace{0pt}
\begin{center}
\begin{tikzpicture}[scale=0.8]
\draw[thick,-latex,black] (-0.74,0.06) -- (0.74,0.06);
\draw[thick,latex-,black] (-0.74,-0.06) -- (0.74,-0.06);
\draw[thick,latex-,black,black] (-0.185,1.515) -- (-0.935,0.216);
\draw[thick,-latex,black,black] (-0.065,1.515) -- (-0.815,0.216);
\draw[thick,latex-,black,black] (0.185,1.515) -- (0.935,0.216);
\draw[thick,-latex,black,black] (0.065,1.515) -- (0.815,0.216);
\filldraw[color=black,fill=white,thick] (0,1.732) circle (0.22);
\filldraw[color=black,fill=white,thick] (-1,0) circle (0.22);
\filldraw[color=black,fill=white,thick] (1,0) circle (0.22);
\end{tikzpicture}
\end{center}
\label{Ring_L_a}
\end{minipage}
}
\subfigure[$v_i\neq v_j$]{
\begin{minipage}[t]{0.22\linewidth}
\vspace{0pt}
\begin{center}
\begin{tikzpicture}[scale=0.8]
\draw[thick,-latex,black,densely dashed] (-0.74,0.06) -- (0.74,0.06);
\draw[thick,latex-,black,densely dashed] (-0.74,-0.06) -- (0.74,-0.06);
\draw[thick,latex-,black] (-0.185,1.515) -- (-0.935,0.216);
\draw[thick,-latex,black] (-0.065,1.515) -- (-0.815,0.216);
\draw[thick,latex-,black,densely dashed] (0.185,1.515) -- (0.935,0.216);
\draw[thick,-latex,black,densely dashed] (0.065,1.515) -- (0.815,0.216);
\filldraw[color=black,fill=white,thick] (0,1.732) circle (0.22);
\filldraw[color=black,fill=white,thick] (-1,0) circle (0.22);
\filldraw[color=black,fill=black,thick] (1,0) circle (0.22);
\end{tikzpicture}
\end{center}
\label{Ring_L_b}
\end{minipage}
}
\subfigure[$v_i=v_j=v_{\text{\tiny H}}$]{
\begin{minipage}[t]{0.22\linewidth}
\vspace{0pt}
\begin{center}
\begin{tikzpicture}[scale=0.8]
\draw[thick,-latex,black] (-0.74,0.06) -- (0.74,0.06);
\draw[thick,latex-,black] (-0.74,-0.06) -- (0.74,-0.06);
\draw[thick,latex-,black,densely dashed] (-0.185,1.515) -- (-0.935,0.216);
\draw[thick,-latex,black,densely dashed] (-0.065,1.515) -- (-0.815,0.216);
\draw[thick,latex-,black,densely dashed] (0.185,1.515) -- (0.935,0.216);
\draw[thick,-latex,black,densely dashed] (0.065,1.515) -- (0.815,0.216);
\filldraw[color=black,fill=white,thick] (0,1.732) circle (0.22);
\filldraw[color=black,fill=black,thick] (-1,0) circle (0.22);
\filldraw[color=black,fill=black,thick] (1,0) circle (0.22);
\end{tikzpicture}
\end{center}
\label{Ring_L_c}
\end{minipage}
}
\vspace{-10pt}
\caption{Illustration of the three buyers' social interactions in the unique PBE given the upper buyer $k$'s preference is low.}
\label{Ring_L}
\vspace{-10pt}
\end{figure}
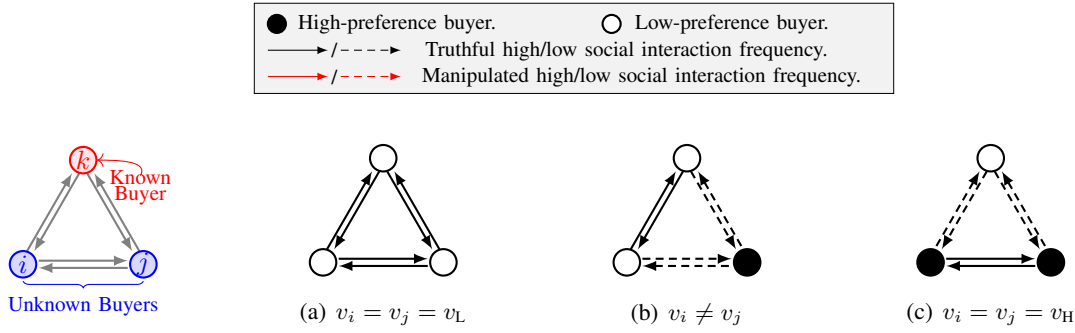

In this case, a unique PBE exists in which buyers never manipulate their social interactions with any other buyer in Stage I, as illustrated in Fig. \ref{Ring_L}. This finding is consistent with Proposition \ref{multiple:buyer_manipulation}. The seller thus perfectly infers the remaining buyers' preferences, $v_i$ and $v_j$, through their low-preference friend $k$ in Fig. \ref{Ring_L_a} and Fig. \ref{Ring_L_b}. Even when both buyers $i$ and $j$ have high preferences, as illustrated in Fig. \ref{Ring_L_c}, they still have no incentive to manipulate their social interaction data. 

\begin{lemma}\label{Lemma:price_prior_knowledge_VL_three}
Given the prior knowledge of \mbox{$v_k=v_{\text{\tiny L}}$}, in each selling~period of Stage II, the seller offers a personalized price equal to the arriving buyer's preference, i.e., \mbox{$p_t^*=v_t$, for each $t\in\{1,2\}$}. Therefore, the seller always extracts maximum surplus from buyers $i$ and $j$, leading to a significant revenue improvement than the two-buyer case in the absence of buyer $k$ (by at most $27.3\%$).
\end{lemma}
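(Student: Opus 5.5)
The plan is to show first that, with $v_k=v_{\text{\tiny L}}$ known to the seller, the two links $(i,k)$ and $(j,k)$ reveal $v_i$ and $v_j$ individually in any PBE. The pricing and revenue claims then follow by direct computation.

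\emph{Step 1 (the known low buyer's behavior).} Buyer $k$'s preference is already known, so nothing he does in Stage I changes the seller's information about him. His Stage I choice is therefore driven only by Table~\ref{table}, and I will establish his best response directly.
\begin{itemize}
\item If $v_i=v_{\text{\tiny L}}$, the payoffs of Table~\ref{table_same} make $x_{ki}=1$ dominant, since $1>l$ and $1-l>0$.
\item If $v_i=v_{\text{\tiny H}}$, Table~\ref{table_differ} gives $k$ a payoff of $-c$ or $-r$ when $x_{ik}=1$, and $0$ when $x_{ik}=0$. Since $r<c$ and $0>-c+r$, $x_{ki}=0$ is dominant.
\end{itemize}
Hence $\hat{x}_{ik}=\min\{x_{ik},x_{ki}\}=0$ whenever $v_i=v_{\text{\tiny H}}$, regardless of what buyer $i$ does.

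\emph{Step 2 (the unknown buyers' behavior).} A high-preference buyer cannot unilaterally fake correlation with $k$, exactly as in the argument behind Lemma~\ref{lemma1_buyer}. So he plays his honest choice $x_{ik}=0$, which avoids the $-c$ or $-c+r$ loss.

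A low-preference buyer $i$ gets zero purchase surplus at any price in $\{v_{\text{\tiny H}},v_{\text{\tiny L}}\}$, using Lemma~\ref{pricebinary}. Deviating from $x_{ik}=1$ only lowers his social utility, so he plays $x_{ik}=1$ and $\hat{x}_{ik}=1$.

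Consequently $\hat{x}_{ik}=\boldsymbol{1}(v_i=v_{\text{\tiny L}})$ in every PBE, and likewise for $j$. Bayes' rule then puts the seller's posterior on the true $(v_i,v_j)$, whatever happens on the $(i,j)$ link. This is consistent with Proposition~\ref{multiple:buyer_manipulation}.

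\emph{Step 3 (pricing and uniqueness).} Given this posterior, the seller's optimal price in each period is $p_t^*=v_t$, including the first period. This is the new feature relative to the two-buyer model, where $p_1$ could not be personalized.

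Every buyer then gets zero purchase surplus independent of his Stage I actions. So high-preference buyers $i,j$ have no purchase-side reason to manipulate their mutual link, and they follow Table~\ref{table}, as in Fig.~\ref{Ring_L}. The off-path beliefs cannot matter, because every signal profile on $(i,k),(j,k)$ is on path. This gives uniqueness.

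\emph{Step 4 (revenue comparison).} Full extraction yields expected revenue $2\cdot\frac{v_{\text{\tiny H}}+v_{\text{\tiny L}}}{2}=v_{\text{\tiny H}}+v_{\text{\tiny L}}$. I will compare this, region by region, with the seller's two-buyer strategic-learning PBE revenue from Propositions~\ref{region1}--\ref{Region4}.
\begin{itemize}
\item In Region I that revenue is $2v_{\text{\tiny L}}$.
\item In Regions II and III the seller uses $p_1=v_{\text{\tiny H}}$ with personalized $p_2$. Revenue is computed from the induced signal distribution: in Region III, a fraction $\rho^{*2}$-type mass of high pairs is misclassified as $\hat{x}=0$.
\item In Region IV I use the mixed pricing in \eqref{region4_mix1} and \eqref{region4_mix2}. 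By indifference, the revenue there equals that of the pure personalized branch evaluated at $\rho^*$.
\end{itemize}
I then maximize the ratio $(v_{\text{\tiny H}}+v_{\text{\tiny L}})/\Pi^{\text{SL}}$ over $(v_{\text{\tiny H}},v_{\text{\tiny L}},l)$.

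As a sanity check, the undisclosed benchmark alone gives a ratio of at most $1.25$, attained near $v_{\text{\tiny L}}=2v_{\text{\tiny H}}/3$. Proposition~\ref{salerevenue2} allows a further loss of up to $8.3\%$, which is consistent with a supremum of roughly $27.3\%$.

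\emph{Main obstacle.} The hardest step is this last optimization, because the $\rho^*$ formulas and the region boundaries (curves BC, CP, BP in Fig.~\ref{regions}) interact. I would first write $\Pi^{\text{SL}}$ as a function of $d=v_{\text{\tiny H}}-v_{\text{\tiny L}}$ and $v_{\text{\tiny L}}$ in each region, and show that the ratio is monotone toward the boundary $v_{\text{\tiny L}}=2v_{\text{\tiny H}}/3$ or curve CP. I would then take the limit $l\to 1$, where manipulation is cheapest, to locate the supremum.
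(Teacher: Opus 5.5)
Your Steps 1--3 follow the paper's own argument, and your dominance argument for buyer $k$ is more explicit than the paper's. Your observation that, by the seller's indifference, the Region IV mixed-pricing revenue equals the revenue of the pure personalized branch at $\rho^*$ is correct.

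\textbf{The gap is Step 4.} The lemma's only quantitative claim, the $27.3\%$ bound, is never derived, and your guidance for locating it points to the wrong place. First a correction: write $s=(1-\rho^*)^2$. Because $\hat{x}$ is a minimum, the mass of high pairs misclassified as $\hat{x}=0$ is $1-s=2\rho^*-\rho^{*2}$, not $\rho^{*2}$. With this, the two-buyer revenue in all of Regions II--IV of Fig.~\ref{regions} is $\tfrac34(v_{\text{\tiny H}}+v_{\text{\tiny L}})+\tfrac{s}{4}(v_{\text{\tiny H}}-v_{\text{\tiny L}})$. The improvement ratio is therefore
\[
\frac{(v_{\text{\tiny H}}+v_{\text{\tiny L}})-s(v_{\text{\tiny H}}-v_{\text{\tiny L}})}{3(v_{\text{\tiny H}}+v_{\text{\tiny L}})+s(v_{\text{\tiny H}}-v_{\text{\tiny L}})}.
\]
Evaluate it case by case:
\begin{itemize}
\item In Case 1 of Proposition~\ref{Region4}, indifference gives $s(v_{\text{\tiny H}}-v_{\text{\tiny L}})=v_{\text{\tiny L}}/2$, so the ratio is $(2v_{\text{\tiny H}}+v_{\text{\tiny L}})/(6v_{\text{\tiny H}}+7v_{\text{\tiny L}})$, which decreases in $v_{\text{\tiny L}}/v_{\text{\tiny H}}$.
\item In Case 2, indifference gives $s(v_{\text{\tiny H}}-v_{\text{\tiny L}})=v_{\text{\tiny H}}-2v_{\text{\tiny L}}$, so the ratio is $3v_{\text{\tiny L}}/(4v_{\text{\tiny H}}+v_{\text{\tiny L}})$, which increases in $v_{\text{\tiny L}}/v_{\text{\tiny H}}$.
\item Both expressions equal $3/11\approx 27.3\%$ at $v_{\text{\tiny L}}/v_{\text{\tiny H}}=2/5$.
\item Regions I and II give at most $1/4$, at $v_{\text{\tiny L}}/v_{\text{\tiny H}}=2/3$.
\item In Region III the ratio increases in $v_{\text{\tiny H}}^2-v_{\text{\tiny L}}^2$. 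That quantity is maximized at the corner P, where $v_{\text{\tiny L}}/v_{\text{\tiny H}}=2/5$ and $v_{\text{\tiny H}}=10(1-l)/\sqrt{3}$, again giving $3/11$.
\end{itemize}
So the maximizer is the ray $v_{\text{\tiny L}}=2v_{\text{\tiny H}}/5$ with $v_{\text{\tiny H}}\ge 10(1-l)/\sqrt{3}$. It is not the boundary $v_{\text{\tiny L}}=2v_{\text{\tiny H}}/3$ you point to, where the gain is only $25\%$. Since the maximum is attained for every $l$, passing to $l\to 1$ is unnecessary.

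Your sanity check also combines maxima attained at different points. The $25\%$ undisclosed gain occurs at ratio $2/3$, while the $8.3\%$ loss occurs at ratio $2/5$. Put together, they only bound the gain by $4/11\approx 36\%$. Evaluated at the common point $v_{\text{\tiny L}}/v_{\text{\tiny H}}=2/5$, the undisclosed ratio is $7/6$, and $\frac{7/6}{11/12}=\frac{14}{11}$ gives the correct value.

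\textbf{A smaller point on Step 3.} The fact that every signal pair on the links $(i,k),(j,k)$ is on path does not pin down the seller's beliefs. Her information set is the full triple $(\hat{x}_{ik},\hat{x}_{jk},\hat{x}_{ij})$, and for example $(0,0,0)$ is off path when two high buyers interact truthfully with each other. If the seller's belief there put weight on low types, a high buyer could gain up to $v_{\text{\tiny H}}-v_{\text{\tiny L}}$ in surplus, at social cost $1-l$, by dropping the $(i,j)$ link. To close this, tie the off-path beliefs to the $k$-links. You can justify this because $x_{ik}=1$ is strictly dominant for a low buyer $i$ and $x_{ki}$ is dominant for $k$, so $\hat{x}_{ik}=0$ should be attributed to $v_i=v_{\text{\tiny H}}$. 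The paper's proof makes the same implicit assumption when it asserts that the seller perfectly infers both preferences.
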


Not surprisingly, the prior knowledge of \mbox{$v_k=v_{\text{\tiny L}}$} always benefits the seller's subsequent learning practice, resulting in a substantial revenue gain. However, this positive effect might not persist when the seller's knowledge is limited to prior buyers with high preferences, as discussed in what~follows.

\noindent
\textbf{Pricing from Prior Knowledge of $v_k=v_{\text{\tiny H}}$.} Next, we~turn~to the other case in which buyer $k$ has~a~high preference,~i.e., \mbox{$v_k=v_{\text{\tiny H}}$}. In this case, the unknown buyers $i$ and $j$ may~manipulate their social interactions to hurdle~the seller's learning process.\footnote{For the sake of completeness, we present a formal characterization of all pure-strategy PBE in Appendix \ref{High-known:Pure-PBE}.} One may wonder whether the seller still benefits from such prior knowledge to facilitate her subsequent learning and gain revenue. The following lemma~provides the answer to this question.

\begin{lemma}\label{High_known:no_benefit}
If the following condition holds:
\begin{equation}\label{High_known:no_benefit_condition}
    \frac{2}{3}v_{\text{\tiny H}}<v_{\text{\tiny L}}< v_{\text{\tiny H}}-(1-l),
\end{equation}
then there always exists a pure-strategy PBE such that the seller does not gain any expected revenue after learning from a known high-preference buyer $k$ with $v_k=v_{\text{\tiny H}}$.
\end{lemma}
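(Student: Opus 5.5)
The plan is to construct an explicit pure-strategy PBE for the three-buyer network of Fig.~\ref{3buyer_ring} with $v_k=v_{\text{\tiny H}}$ in which the seller's equilibrium information is no richer than in the two-buyer model. The left inequality $\tfrac{2}{3}v_{\text{\tiny H}}<v_{\text{\tiny L}}$ places us in Region I of Fig.~\ref{regions}. By Proposition~\ref{region1}, the two-buyer benchmark revenue (without $k$) is then $2v_{\text{\tiny L}}$. It therefore suffices to exhibit a PBE with $v_k=v_{\text{\tiny H}}$ whose on-path revenue is also $2v_{\text{\tiny L}}$.

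\emph{Candidate strategies.} The known buyer $k$ behaves truthfully, as argued before Lemma~\ref{Lemma:price_prior_knowledge_VL_three}: $x_{ki}=\boldsymbol{1}(v_i=v_{\text{\tiny H}})$, and similarly for $j$. A low-preference unknown buyer sets $x_{ik}=0$, which is honest by the logic of Lemma~\ref{lemma1_buyer}. A high-preference unknown buyer \emph{also} sets $x_{ik}=0$; this is the manipulation, by which he pools with low types on the link to $k$. On the $i$--$j$ link, both unknown buyers behave exactly as in the no-learning benchmark (Lemma~\ref{lemma:nolearning}). As a result, every on-path link to $k$ shows $\hat{x}_{ik}=\hat{x}_{jk}=0$, regardless of types. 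This signal carries no information, so the seller's on-path posterior coincides with the two-buyer posterior from \eqref{belief1}--\eqref{belief0} with $\rho=0$. Off path, if $\hat{x}_{ik}=1$ is observed, the seller believes $v_i=v_{\text{\tiny H}}$ with probability one and charges buyer $i$ the price $v_{\text{\tiny H}}$ whenever he arrives.

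\emph{Seller optimality on path.} Given the on-path beliefs, the seller faces the undisclosed-learning two-buyer problem. In the large-$v_{\text{\tiny L}}$ regime, Lemma~\ref{unaware} shows that her optimal pricing is $p_1^*=p_2^*=v_{\text{\tiny L}}$, which yields revenue $2v_{\text{\tiny L}}$. Lemma~\ref{pricebinary} restricts attention to prices in $\{v_{\text{\tiny L}},v_{\text{\tiny H}}\}$, so no further candidate prices need to be checked. Since prices are uniform at $v_{\text{\tiny L}}$, the $i$--$j$ signal is never payoff-relevant. Hence truthful behavior on that link is a best response, exactly as in Proposition~\ref{region1}.

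\emph{Buyer incentives (main step).} The key deviation to rule out is a high-preference buyer $i$ switching to $x_{ik}=1$. Because $k$ responds with $x_{ki}=1$, this raises $i$'s social utility on that link from $l$ to $1$, a gain of $1-l$. However, it triggers the off-path belief and the personalized price $v_{\text{\tiny H}}$, which costs him purchase surplus $v_{\text{\tiny H}}-v_{\text{\tiny L}}$. The deviation is unprofitable precisely when $v_{\text{\tiny H}}-v_{\text{\tiny L}}>1-l$, which is the right inequality in \eqref{High_known:no_benefit_condition}.

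For a low-preference buyer, deviating to $x_{ik}=1$ yields social utility $-c+r<0$ instead of $-r$. This is weakly worse when $c\ge 2r$, and in any case his surplus stays zero. I expect this case to need care: if $c<2r$, I would instead note that $k$ honestly plays $x_{ki}=0$ toward a low type, so $\hat{x}_{ik}=0$ regardless, and the low buyer's utilities on that link are unchanged from Lemma~\ref{lemma1_buyer}'s setting. I would double-check that additivity of utilities across links makes these per-link comparisons decisive. Finally, the off-path belief is consistent with the fact that only a high type could possibly gain from interacting with the high-preference $k$. The on-path revenue is then $2v_{\text{\tiny L}}$, identical to the two-buyer case, which completes the argument.
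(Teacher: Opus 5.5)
Your proposal is correct and is essentially the paper's proof: your profile is exactly Case V of the paper's pure-strategy PBE characterization (high-type unknown buyers play $x_{ik}=0$ toward the known high buyer $k$, stay honest on the $i$--$j$ link, and the seller prices uniformly at $v_{\text{\tiny L}}$), with the same deviation check $v_{\text{\tiny H}}-v_{\text{\tiny L}}>1-l$ and the same comparison to the Region~I revenue $2v_{\text{\tiny L}}$. One minor slip: a low-type buyer's on-path utility on the $k$-link is $0$, not $-r$, because $k$ plays $x_{ki}=0$ toward him, so deviating to $x_{ik}=1$ gives $-c+r<0$ and is strictly worse for every $r\in(0,c)$; the case split on $c\ge 2r$ is therefore unnecessary, and your fallback argument is the right one.
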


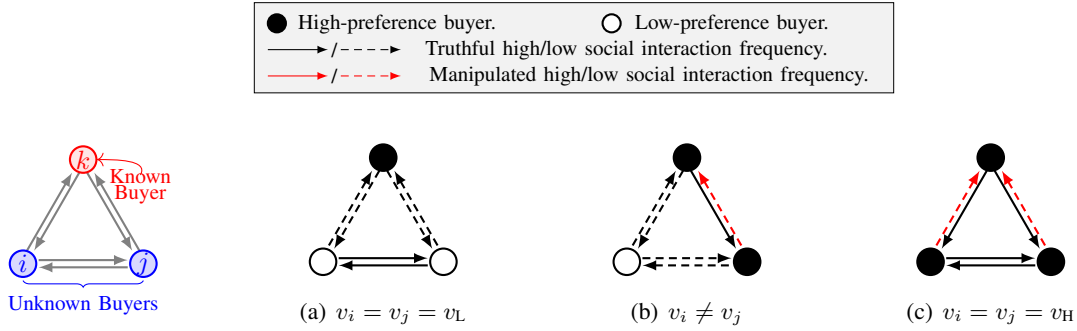
\begin{figure}[h]
\begin{center}
\begin{tikzpicture}
\filldraw[color=black,fill=black!5,ultra thin] (-4.25,0.1) rectangle (4.25,1.3);
\filldraw[color=black,fill=black,thick] (-3.95,1.025) circle (0.12);
\filldraw[color=black,fill=white,thick] (0.5,1.025) circle (0.12);
\node at (-2.35,1.02) {\scriptsize{High-preference buyer.}};
\node at (2.1,1.02) {\scriptsize{Low-preference buyer.}};
\draw[black,-latex,thin] (-4.07,0.675) -- (-3.27,0.675);
\draw[black,-latex,thin,densely dashed] (-3.1,0.675) -- (-2.3,0.675);
\draw[red,-latex,thin] (-4.07,0.325) -- (-3.27,0.325);
\draw[red,-latex,thin,densely dashed] (-3.1,0.325) -- (-2.3,0.325);
\node at (0.7,0.67) {\scriptsize{Truthful high/low social interaction frequency.}};
\node at (1,0.32) {\scriptsize{Manipulated high/low social interaction frequency.}};
\node at (-3.185,0.675){\scriptsize{/}};
\node at (-3.185,0.325){\scriptsize{/}};
\end{tikzpicture}
\end{center}
\vspace{-12pt}
\begin{minipage}[t]{0.22\linewidth}
\vspace{-1pt}
\begin{center}
\begin{tikzpicture}[scale=0.8]
\draw[thick,-latex,black!50] (-0.74,0.06) -- (0.74,0.06);
\draw[thick,latex-,black!50] (-0.74,-0.06) -- (0.74,-0.06);
\draw[thick,latex-,black!50] (-0.185,1.515) -- (-0.935,0.216);
\draw[thick,-latex,black!50] (-0.065,1.515) -- (-0.815,0.216);
\draw[thick,latex-,black!50] (0.185,1.515) -- (0.935,0.216);
\draw[thick,-latex,black!50] (0.065,1.515) -- (0.815,0.216);
\filldraw[color=red,fill=red!10,thick] (0,1.732) circle (0.22);
\filldraw[color=blue,fill=blue!15,thick] (-1,0) circle (0.22);
\filldraw[color=blue,fill=blue!15,thick] (1,0) circle (0.22);
\node [black] at (0,1.732) {\color{red}\small $k$};
\node [black] at (-1,0) {\color{blue}\small $i$};
\node [black] at (1,0) {\color{blue}\small $j$};
\node [black,align=center] at (0,-0.695) {\color{blue}\scriptsize Unknown Buyers};
\node [black,align=right] at (0.95,1.45) {\color{red}\scriptsize Known};
\node [black,align=right] at (0.95,1.15) {\color{red}\scriptsize Buyer};
\draw [<-,thin,red] (0.23,1.732) to [in = 60, out = 0] (0.95,1.475);
\draw[blue,decorate,decoration=brace,thin] (1,-0.28) -- (-1,-0.28);
\end{tikzpicture}
\end{center}
\label{Ring_H_Template}
\end{minipage}
\subfigure[$v_i=v_j=v_{\text{\tiny L}}$]{
\begin{minipage}[t]{0.22\linewidth}
\vspace{0pt}
\begin{center}
\begin{tikzpicture}[scale=0.8]
\draw[thick,-latex,black] (-0.74,0.06) -- (0.74,0.06);
\draw[thick,latex-,black] (-0.74,-0.06) -- (0.74,-0.06);
\draw[thick,latex-,black,densely dashed] (-0.185,1.515) -- (-0.935,0.216);
\draw[thick,-latex,black,densely dashed] (-0.065,1.515) -- (-0.815,0.216);
\draw[thick,latex-,black,densely dashed] (0.185,1.515) -- (0.935,0.216);
\draw[thick,-latex,black,densely dashed] (0.065,1.515) -- (0.815,0.216);
\filldraw[color=black,fill=black,thick] (0,1.732) circle (0.22);
\filldraw[color=black,fill=white,thick] (-1,0) circle (0.22);
\filldraw[color=black,fill=white,thick] (1,0) circle (0.22);
\end{tikzpicture}
\end{center}
\label{Ring_H_NoRevenueGain_a}
\end{minipage}
}
\subfigure[$v_i\neq v_j$]{
\begin{minipage}[t]{0.22\linewidth}
\vspace{0pt}
\begin{center}
\begin{tikzpicture}[scale=0.8]
\draw[thick,-latex,black,densely dashed] (-0.74,0.06) -- (0.74,0.06);
\draw[thick,latex-,black,densely dashed] (-0.74,-0.06) -- (0.74,-0.06);
\draw[thick,latex-,black,densely dashed] (-0.185,1.515) -- (-0.935,0.216);
\draw[thick,-latex,black,densely dashed] (-0.065,1.515) -- (-0.815,0.216);
\draw[thick,latex-,red,densely dashed] (0.185,1.515) -- (0.935,0.216);
\draw[thick,-latex,black] (0.065,1.515) -- (0.815,0.216);
\filldraw[color=black,fill=black,thick] (0,1.732) circle (0.22);
\filldraw[color=black,fill=white,thick] (-1,0) circle (0.22);
\filldraw[color=black,fill=black,thick] (1,0) circle (0.22);
\end{tikzpicture}
\end{center}
\label{Ring_H_NoRevenueGain_b}
\end{minipage}
}
\subfigure[$v_i=v_j=v_{\text{\tiny H}}$]{
\begin{minipage}[t]{0.22\linewidth}
\vspace{0pt}
\begin{center}
\begin{tikzpicture}[scale=0.8]
\draw[thick,-latex,black] (-0.74,0.06) -- (0.74,0.06);
\draw[thick,latex-,black] (-0.74,-0.06) -- (0.74,-0.06);
\draw[thick,latex-,red,densely dashed] (-0.185,1.515) -- (-0.935,0.216);
\draw[thick,-latex,black] (-0.065,1.515) -- (-0.815,0.216);
\draw[thick,latex-,red,densely dashed] (0.185,1.515) -- (0.935,0.216);
\draw[thick,-latex,black] (0.065,1.515) -- (0.815,0.216);
\filldraw[color=black,fill=black,thick] (0,1.732) circle (0.22);
\filldraw[color=black,fill=black,thick] (-1,0) circle (0.22);
\filldraw[color=black,fill=black,thick] (1,0) circle (0.22);
\end{tikzpicture}
\end{center}
\label{Ring_H_NoRevenueGain_c}
\end{minipage}
}
\vspace{-12pt}
\caption{Illustration of the three buyers' social interactions in the PBE in which the seller does not gain in her expected revenue given the upper buyer $k$'s preference is high.}
\label{Ring_H_NoRevenueGain}
\end{figure}

Fig. \ref{Ring_H_NoRevenueGain} visually demonstrates the equilibrium social interactions among the three buyers under the condition~of~\eqref{High_known:no_benefit_condition}. Different from \mbox{$v_k=v_{\text{\tiny L}}$}, here the prior knowledge of \mbox{$v_k=v_{\text{\tiny H}}$} does not directly reveal the true preference type of buyer $k$'s friends $i$ and $j$. Recall that the seller's access to buyers' pairwise social interaction data is limited to the common interaction frequency $\hat{x}$ in \eqref{hat} (with a minimum operation), which leaves room for buyers $i$ and $j$ to hide their private preferences through unilateral manipulation, as detailed in the following.

As illustrated in Fig. \ref{Ring_H_NoRevenueGain}, high-preference buyers $i$ and~$j$~are motivated to manipulate their social interactions and avoid the seller's personalized pricing. More specifically, in Fig. \ref{Ring_H_NoRevenueGain_b}, the high-preference buyer $j$ manipulates his social interaction with the upper high-preference buyer $k$ into a low interaction~frequency, aiming to mimic the low-preference buyer~$i$. In Fig. \ref{Ring_H_NoRevenueGain_c}, the two high-preference buyers $i$ and $j$ also manipulate to choose a low frequency towards the upper high-preference buyer $k$, while remaining honest when interacting with each other. By doing so, they resemble the social behaviors of the two low-preference buyers $i$ and $j$ in Fig. \ref{Ring_H_NoRevenueGain_a}. 

As a result, the common interactions between the (upper) known buyer and any (lower) unknown buyer turn~out to be the same (i.e., \mbox{$\hat{x}^*_{ki}=\hat{x}^*_{kj}=0$}) across different preference distributions from Fig. \ref{Ring_H_NoRevenueGain_a} to Fig. \ref{Ring_H_NoRevenueGain_c}. In other words, the prior knowledge of the high-preference buyer $k$ does not provide any information to the seller. This urges the seller to adopt uniform pricing for the unknown buyers, resulting in the same revenue as the two-buyer case in the absence of buyer $k$. 

\subsubsection{General Multi-buyer Network with Prior Knowledge}\label{subsec:general_buyer}

In the final part of this subsection, we return to the general social network of multiple interconnected unknown buyers. Remarkably, the conclusion that prior knowledge may not provide any value to the seller remains consistent in the~context of a more general network.

The following proposition provides the sufficient condition under which the seller does not benefit from introducing known buyers to an unknown buyer network, which generalizes Lemma \ref{High_known:no_benefit} beyond the three-buyer instance. Technically, we examine the scenario where an arbitrary number of known high-preference buyers are introduced to connect with $N$ unknown buyers in the online social network. 

\begin{proposition}\label{High_known:no_benefit:general}
If the following condition holds:
\begin{equation}\label{High_known:no_benefit:general_condition}
\frac{N}{N+1}v_{\text{\tiny H}}<v_{\text{\tiny L}}< v_{\text{\tiny H}}-K(1-l),
\end{equation}
where $K$ is the maximum number of known buyers that a single unknown buyer can be connected to, then there always exists a pure-strategy PBE such that the seller does not gain any expected revenue after learning from those known high-preference buyers.
\end{proposition}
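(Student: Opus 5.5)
We construct an explicit pure-strategy PBE that generalizes the configuration in Fig.~\ref{Ring_H_NoRevenueGain}, and check that it yields the same revenue as the unknown-buyer network without the known buyers. The candidate profile is as follows. Every known high-preference buyer $k$ chooses $x_{ki}=1$ toward each unknown high-preference neighbor $i$ and $x_{ki}=0$ toward each unknown low-preference neighbor, which is truthful by Table~\ref{table}. Every unknown buyer chooses $x_{ik}=0$ toward every known high-preference neighbor $k$, regardless of his own type. Among the unknown buyers, all pairwise interactions are truthful as in Proposition~\ref{unaware_general}. Consequently $\hat{x}_{ki}=0$ on every known--unknown edge for every type profile, and these edges carry no information.

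For the seller's side, we take beliefs computed by Bayes' rule on this profile. Since the known--unknown signals are uninformative, the posterior over unknown buyers' types coincides with that of the $N$-buyer undisclosed-learning benchmark, where within-unknown signals are truthful. The left inequality in \eqref{High_known:no_benefit:general_condition}, $v_{\text{\tiny L}}>Nv_{\text{\tiny H}}/(N+1)$, puts us in the large-$v_{\text{\tiny L}}$ regime of Proposition~\ref{unaware_general}(i). There the seller's best response is uniform pricing $p_t^*=v_{\text{\tiny L}}$ in every period, with revenue $Nv_{\text{\tiny L}}$, equal to the revenue without the known buyers. We also need to check that the seller cannot gain by probing with $p_1=v_{\text{\tiny H}}$. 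The within-unknown correlations do reveal everything after one probe, but Proposition~\ref{unaware_general} already shows that probing is unprofitable in this regime. Off-path signals $\hat{x}_{ki}=1$ are assigned the belief that buyer $i$ is high, which is consistent with homophily, and the seller then responds with personalized pricing.

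For the buyers' side, we verify there is no profitable unilateral deviation.
\begin{itemize}
\item A low-preference unknown buyer facing a known high buyer who plays $x_{ki}=0$ gets $0$ from playing $0$. Deviating to $1$ gives $-r<0$ by Table~\ref{table_differ}. His purchase surplus is $0$ either way.
\item A known buyer $k$ is already honest. Toward a high neighbor who plays $0$, switching to $0$ changes his utility from $1-l$ to $0$, a loss. Since $k$'s type is known, his deviation cannot affect his own price.
\item A high-preference unknown buyer $i$ with $m\le K$ known neighbors is the critical case. Currently he receives utility $l$ from each such edge (he plays $0$ while $k$ plays $1$, per Table~\ref{table_same}) and pays price $v_{\text{\tiny L}}$.
\end{itemize}

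For this buyer, switching $x_{ik}$ to $1$ on a subset $S$ of his known edges raises his social utility by $|S|(1-2l)$. Once any $\hat{x}_{ki}=1$ appears, the off-path belief labels $i$ high, so he is charged $v_{\text{\tiny H}}$ and loses surplus $v_{\text{\tiny H}}-v_{\text{\tiny L}}$. Deviation is therefore unprofitable if $v_{\text{\tiny H}}-v_{\text{\tiny L}}\ge m(1-2l)$. This follows from the right inequality in \eqref{High_known:no_benefit:general_condition} via $K(1-l)\ge m(1-2l)$. If instead we want utility per edge to be measured as in Lemma~\ref{High_known:no_benefit}, the loss per edge is $1-l$, and the requirement is exactly $v_{\text{\tiny H}}-v_{\text{\tiny L}}>K(1-l)$. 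The binding case is $S$ equal to all $m\le K$ edges, which gives the factor $K$.

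The main obstacle is handling two linked complications. First, the deviating buyer's arrival order affects whether he actually pays $v_{\text{\tiny H}}$, and his deviation can also reveal correlated information about unknown neighbors through the within-unknown edges. Second, high unknown buyers might also want to manipulate within-unknown edges. I would resolve both by specifying the off-path belief so that any off-path signal triggers personalized pricing after a probe, and by reusing the two-buyer analysis in Lemma~\ref{lemma1_buyer} and Propositions~\ref{region1}--\ref{Region2}. On-path pricing is uniform at $v_{\text{\tiny L}}$, so deviations within unknown edges gain nothing in purchase surplus and only lose social utility. Summing over edges is valid because social utilities are additive.
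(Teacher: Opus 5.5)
Your construction is the paper's own. Unknown high-preference buyers play $0$ toward every known high-preference neighbor and stay honest among themselves. The seller's belief keeps only the within-unknown correlations. The left inequality in \eqref{High_known:no_benefit:general_condition} makes uniform $v_{\text{\tiny L}}$ pricing sequentially rational via the large-$v_{\text{\tiny L}}$ regime of Proposition~\ref{unaware_general}. The right inequality stops an unknown high buyer from being honest with his known neighbors. A few things need fixing.

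\textbf{The per-edge social gain.} Table~\ref{table_same} fixes it with no ambiguity: $u_i(0,1)=l$ and $u_i(1,1)=1$. So switching to $x_{ik}=1$ against $x_{ki}=1$ gains exactly $1-l$. Drop the $1-2l$ figure. Your claim that deviation is unprofitable whenever $v_{\text{\tiny H}}-v_{\text{\tiny L}}\ge m(1-2l)$ is false in general; for $l>1/2$ it would make every such deviation unprofitable. Keep only $v_{\text{\tiny H}}-v_{\text{\tiny L}}>m(1-l)$ for all $m\le K$, which is exactly the paper's condition $v_{\text{\tiny H}}-v_{\text{\tiny L}}>K_i(1-l)$. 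Similarly, a low buyer who deviates to $x_{ik}=1$ receives $-c+r$, not $-r$; the $-r$ accrues to $k$. It is still negative, so that step survives.

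\textbf{Your stated obstacle.} Once $\hat{x}_{ki}=1$ is observed, the seller knows at the start of Stage~II that $i$ is high. She then charges him $v_{\text{\tiny H}}$ whenever he arrives, so arrival order is irrelevant and no probe is needed. Deviations on within-unknown edges cannot raise a high unknown buyer's purchase surplus above the $v_{\text{\tiny H}}-v_{\text{\tiny L}}$ he already gets on path, since prices are never below $v_{\text{\tiny L}}$. They also strictly lower his social utility. So you do not need to reuse the two-buyer analysis here.

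\textbf{The comparison baseline.} ``Revenue without known buyers'' should be measured in the strategic-learning model, not the undisclosed benchmark. The paper handles this with a separate lemma: when $v_{\text{\tiny L}}/v_{\text{\tiny H}}>N/(N+1)$, truthful interactions plus uniform $v_{\text{\tiny L}}$ pricing form a pure-strategy PBE with no known buyers, giving revenue $Nv_{\text{\tiny L}}$. That lemma is just your own verification run with no known buyers, so state it explicitly before comparing revenues.
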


The proof of Proposition \ref{High_known:no_benefit:general} is constructive (more details in Appendix \ref{Appendix:High_known:no_benefit:general}). Particularly, we construct a strategy profile wherein any unknown high-preference buyer manipulates his social interaction with any connected known buyer into a low frequency, even if they share the same high preference. This manipulation then renders the common social interactions between the known and unknown buyers appear identical, regardless of the underlying true preference distributions. Consequently, the prior knowledge of high-preference buyers does not provide any information to the seller, who thus adopts uniform pricing for all unknown buyers. The major idea of this construction echoes the illustration of Fig. \ref{Ring_H_NoRevenueGain} for the three-buyer instance in Section \ref{subsec:three-buyer}.

In conclusion, it is important to emphasize that learning previous buyers' preferences does not necessarily contribute to inferring the preferences of the other buyers in the seller's subsequent learning process.

\section{Numerical Study with Real-world Data}\label{Numerical}

After conducting extensive theoretical analyses in the preceding sections, we proceed to apply our proposed pricing mechanism to more practical scenarios, utilizing Facebook social network data sourced from \cite{snapnets}.

Specifically, in Section \ref{numerical1:data}, we introduce the real-world data to construct pairwise social interactions among buyers, building on our PBE analysis in Section \ref{PBE2}. In Section \ref{numerical2}, we evaluate the performance of different pricing mechanisms to confirm our previous theoretical analyses in Sections \ref{analysis} and~\ref{sec:multi}. Moreover, this empirical evaluation showcases the superiority of our strategic-learning-based pricing mechanism and its robustness against buyer manipulation.

\begin{figure*}
    \centering
	\subfigure[Subnetwork visualization]{
        \begin{minipage}{0.3\linewidth}
            \centering
            \includegraphics[width=1.1\linewidth]{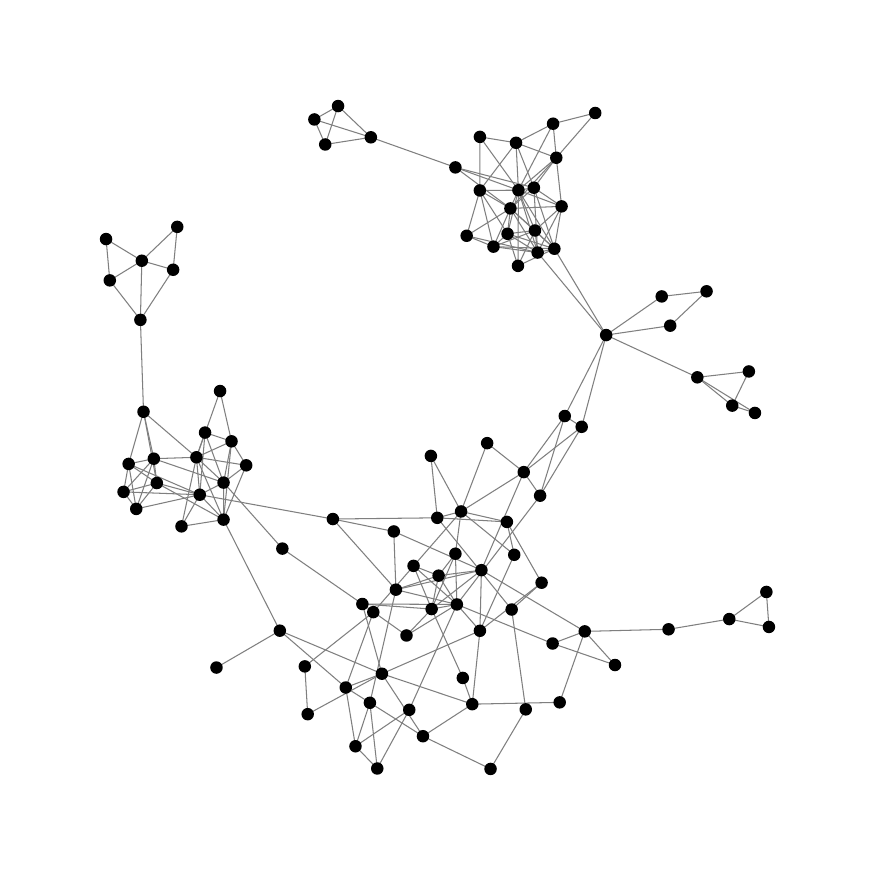}
            \vspace{-20pt}
            \label{fig_visualization}
		\end{minipage}
	}
     \subfigure[Truthful interaction data]{
        \begin{minipage}{0.3\linewidth}
            \centering
            \includegraphics[width=1.1\linewidth]{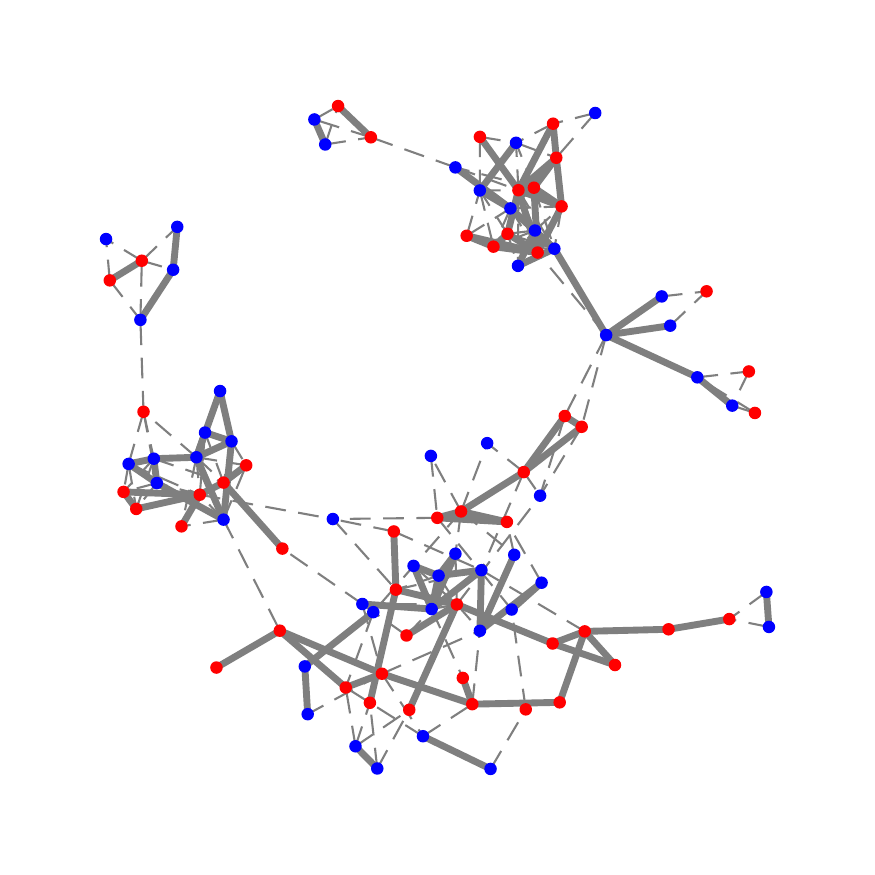}
            \vspace{-20pt}
            \label{fig_truthful_data}
		\end{minipage}
	}
	\subfigure[Manipulated interaction data]{
        \begin{minipage}{0.3\linewidth}
            \centering
            \includegraphics[width=1.1\linewidth]{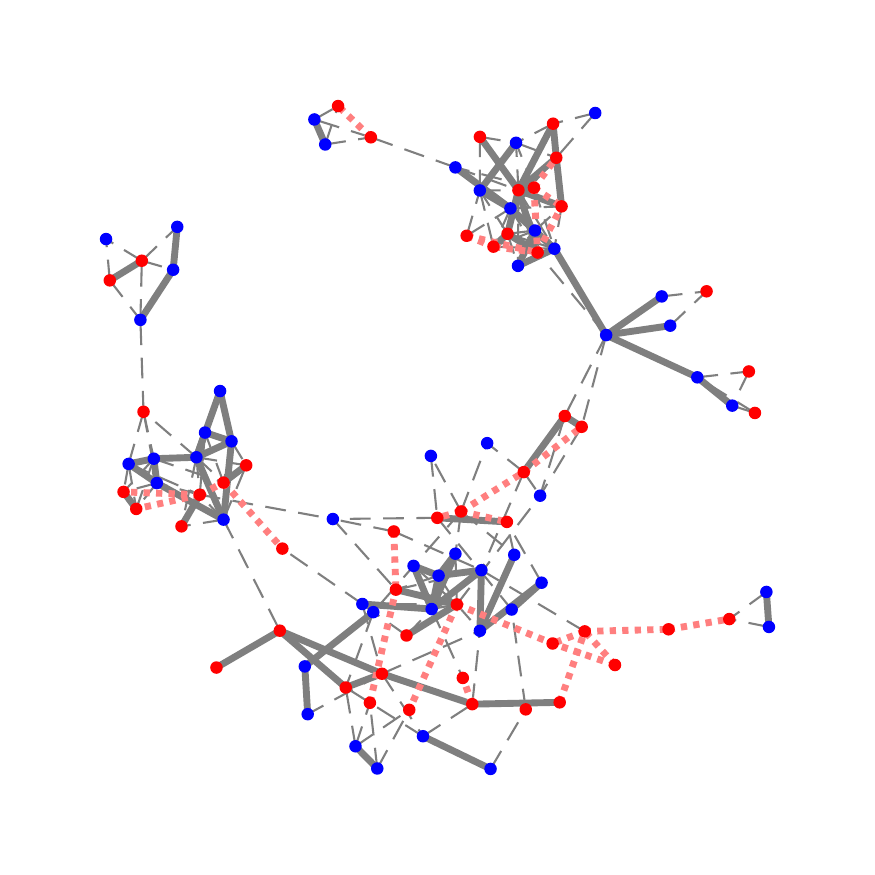}
            \vspace{-20pt}
        \label{fig_manipulated_data}
		\end{minipage}
	}
    \caption{(a) A subnetwork of the Facebook social network sourced from \cite{snapnets}. (b) The truthful pairwise common interaction frequency $\hat{x}$ among the subnetwork buyers, and (c) the manipulated pairwise common interaction frequency $\hat{x}^*$ among the subnetwork buyers in the PBE. In Fig. \ref{fig_truthful_data} and Fig. \ref{fig_manipulated_data}, the high-preference buyers are represented as red nodes and low-preference buyers as blue nodes, and the preference of each buyer is randomly generated according to a uniform prior distribution. In Fig. \ref{fig_manipulated_data}, we visualize the buyers' manipulated data when $v_{\text{\tiny H}}=3.8$ and $v_{\text{\tiny L}}=1.9$, and we set the social loss parameter to $l=0.5$.}
    \vspace{-12pt}
\end{figure*}

\subsection{Dataset Description}\label{numerical1:data}

To start, we generate synthetic data for buyers' social interactions based on the real-world Facebook social network sourced from \cite{snapnets}. Specifically, we focus on a typical induced subnetwork of the original Facebook network in \cite{snapnets}. The subnetwork consists of $100$ nodes and $230$ edges, with an average degree of $4.6$. Fig. \ref{fig_visualization} provides a visualization of this subnetwork, where each node represents an individual buyer and the edge captures the inter-buyer connections. 

Now, we explain the generation of buyers' truthful and manipulated social interaction data in Fig. \ref{fig_truthful_data} and Fig.~\ref{fig_manipulated_data}, respectively. Specifically, we randomly generate the preference of each subnetwork buyer, either high or low, following a uniform prior distribution. We represent the high-preference buyers as red nodes, while the low-preference buyers are depicted as blue nodes. Then, in Fig. \ref{fig_truthful_data}, we construct the truthful pairwise common interaction frequency $\hat{x}$ among the buyers based on inter-buyer preference correlations. In particular, the width of each edge in Fig. \ref{fig_truthful_data} represents the level of the common interaction frequency $\hat{x}_{ij}$ for any connected buyer pair $(i,j)$: thin dashed edges indicate a low level $0$, while thick solid edges represent a high level $1$. 

Moving forward to Fig. \ref{fig_manipulated_data}, we construct the manipulated interaction frequency $\hat{x}^*$ among buyers, following the PBE analysis of our two-buyer model in Propositions \ref{region1}-\ref{Region4}. In Fig. \ref{fig_manipulated_data}, we visualize the buyers' social interaction data when $v_{\text{\tiny H}}=3.8$ and $v_{\text{\tiny L}}=1.9$. Notably, we highlight those manipulated interactions as red thick dashed edges, indicating that they should truthfully exhibit a high level but have been manipulated to a lower frequency level.

\subsection{Performance Evaluation}\label{numerical2}

Next, we evaluate the performance of our proposed pricing mechanism based on the generated interaction data in Fig. \ref{fig_truthful_data} and Fig. \ref{fig_manipulated_data}. To achieve this, we employ the two benchmarks introduced in Section \ref{benchmark} and analyzed for the multi-buyer network setting in Section \ref{subsec:multi_buyer_benchmark}, and we also establish their connection to the existing literature (e.g., \cite{conitzer2012hide,2017Is,zhang2011perils,zhu_nearoptimal}). In addition, since obtaining the PBE for general multi-buyer networks is challenging with buyer manipulation, we will adapt our proposed pricing mechanism in the original two-buyer model to the experimental setting. Below, we outline the two benchmarks along with the pricing mechanism we proposed.

\begin{itemize}
    \item \textit{No-learning-based Pricing (NLP)} (e.g., \cite{conitzer2012hide,2017Is,zhang2011perils}): The NLP benchmark, introduced in Section \ref{bench_nolearning},~involves the seller's uniform pricing to all buyers. Its equilibrium analysis is provided in Lemma \ref{lemma:nolearning}. Variations of this benchmark are commonly adopted in the literature on behavior-based price discrimination.
    
    \item \textit{Undisclosed-learning-based Pricing (ULP)} (e.g., \cite{zhu_nearoptimal}): The ULP benchmark, introduced in Section \ref{bench_unaware}, has been analyzed in Proposition \ref{unaware_general} for the multi-buyer network setting. In this case, the seller exercises ULP based on the truthful data in Fig. \ref{fig_truthful_data} in the absence of buyer awareness. As such, the seller's revenue gain through ULP serves as an upper bound for the performance of our proposed pricing mechanism with buyer awareness. A pricing policy similar to ULP was recently analyzed for the first time in the literature on pricing in networks as in \cite{zhu_nearoptimal}.\footnote{In the manuscript \cite{zhu_nearoptimal}, Zhu et al. also incorporated the homophily effect within the social network and accounted for the dependency on consumer valuation to guide the seller's dynamic pricing. It is worth noting that Zhu et al. in \cite{zhu_nearoptimal} proposed approximation algorithms based on a different modeling framework, wherein the underlying network is formed through a branching process.}
    
    \item \textit{Our Strategic-learning-based Pricing (SLP)}: In our proposed SLP mechanism, the seller strategically learns from buyers' manipulated data in Fig. \ref{fig_manipulated_data} to determine the price offerings. While previous analyses in Section 8.3.3 have established key structural results for the seller's strategic learning in general multi-buyer networks, determining the pricing strategy in the PBE remains highly challenging in the presence of buyer manipulation (see Section \ref{subsec:general_buyer}). To address this complexity, we will adapt our SLP mechanism designed for the original two-buyer model (see Section \ref{PBE1}) to the experimental setting here.
\end{itemize}

\begin{figure}[h]
    \centering
    \vspace{-15pt}
    \includegraphics[width=0.4\linewidth]{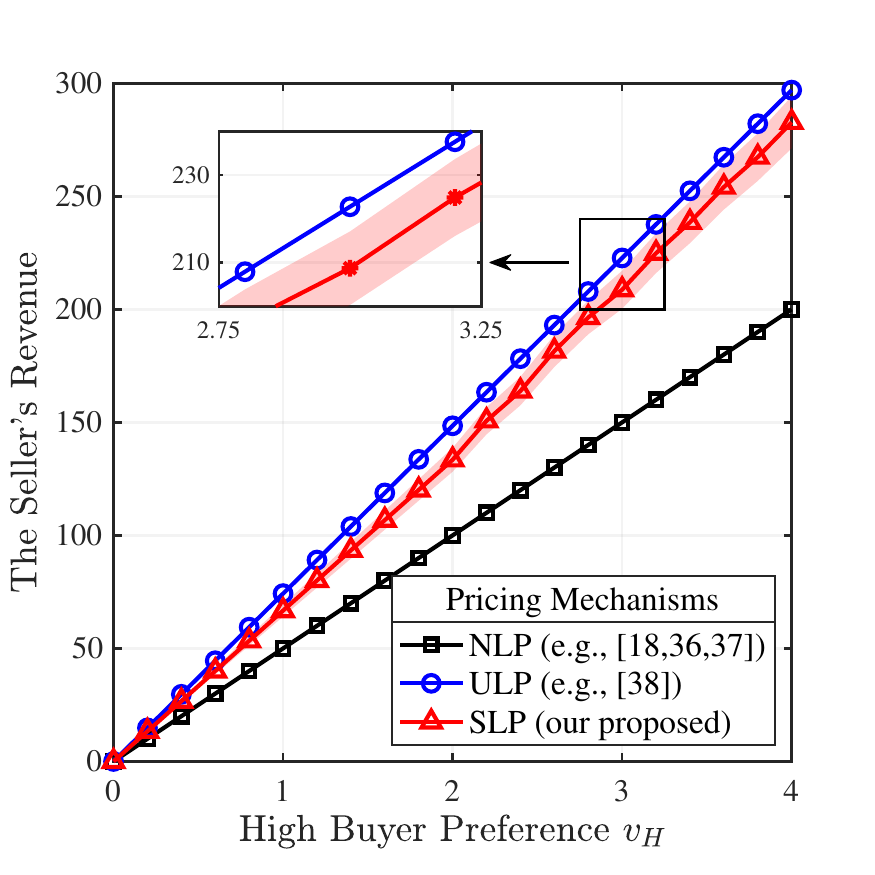}
    \vspace{-10pt}
    \caption{The seller's empirical revenue under the three pricing mechanisms (NLP, ULP, SLP) versus the high buyer preference $v_{\text{\tiny H}}$. In this experiment, we set the social loss parameter to $l=0.5$. We vary the high buyer preference $v_{\text{\tiny H}}$ within the range of $[0,4]$ while fixing the ratio between low and high buyer preferences $v_{\text{\tiny L}}/v_{\text{\tiny H}}$ as $1/2$.}
    \label{fig_seller_revenue_mani}
\end{figure}

Now, we compare the seller's empirical revenue obtained from the $100$ buyers within the Facebook subnetwork in Fig. \ref{fig_visualization} under the three pricing mechanisms. Specifically, we generate a random arrival sequence for these buyers in Stage II and perform the experiment $10,000$ times to obtain the seller's revenue results. In Fig. \ref{fig_seller_revenue_mani}, we depict the seller's average revenue under different pricing mechanisms: NLP (black curve), ULP (blue curve), and our proposed SLP (red curve); the transparent red patch represents the variance in the revenue distribution under our SLP.

Let us start with the scenario in the absence of buyers' awareness and potential manipulation. As shown in Fig. \ref{fig_seller_revenue_mani}, the ULP benchmark outperforms the NLP benchmark with a significant improvement of $48.5\%$ in this experiment, and this improvement closely aligns with the theoretical bound of $49.5\%$ in Proposition \ref{unaware_general}. Overall, the superiority of the ULP benchmark underscores the immense potential of incorporating buyers' social interaction data into the seller's learning and pricing process.

Finally, we shift our attention to the context of buyer awareness and specifically examine our proposed SLP. As shown in Fig. \ref{fig_seller_revenue_mani}, our SLP demonstrates a significant improvement over the NLP benchmark, ranging from $33.46\%$ to $41.37\%$. This superiority aligns with the insight from Proposition \ref{salerevenue1} derived under the two-buyer model. Furthermore, despite buyer manipulation, our SLP experiences only a minor reduction in revenue gain compared to the performance upper bound of ULP, ranging from $4.80\%$ to $10.13\%$ in this experiment. This finding highlights the robustness of our proposed SLP mechanism against buyer manipulation, consistent with the key insight from Proposition \ref{salerevenue2} derived under the two-buyer model.

\section{Extensions and Robustness}\label{extensions}

Thus far, we have focused on a uniform prior distribution of buyer preferences and also assumed a binary social interaction frequency setting. In this section, we extend our original two-buyer model in Section \ref{systemmodel} to showcase the robustness of our previous analysis methodology and major insights.

Specifically, we first introduce a non-uniform prior distribution on buyer preferences and investigate the implications of this non-uniformity in Section \ref{Extension1:non-uniform}. In Section~\ref{Extension2:continuous}, we further consider a continuous social interaction frequency setting. As we will show later, under a general functional form of the social interaction utility, this continuous-interaction-frequency model and its PBE analysis finally converge to our original binary frequency setting.

\subsection{Extension to Non-uniform Prior Distribution}\label{Extension1:non-uniform}

In this subsection, we expand our original two-buyer model in Section \ref{systemmodel} to introduce a non-uniform prior distribution on buyer preferences. Suppose that each buyer's preference $v_i$ follows a distribution with $\text{Pr}(v_{\text{\tiny H}})=\alpha$ and $\text{Pr}(v_{\text{\tiny L}})=1-\alpha$. In this extension, without much loss of generality, we focus on the case where $\alpha<0.5$, indicating a higher proportion of low-preference buyers in the product market. As we will illustrate later, this extension showcases the robustness of our previous analysis methodology and major insights. Moreover, we also explore the impact of the prior distribution on the equilibrium structure.

Similar to the analysis in Section \ref{PBE1} for our original model, we can easily confirm that the key structural property for the PBE under the uniform prior distribution remains valid here: \textit{social manipulations can only occur between two high-preference buyers}. Following this equilibrium space reduction, we proceed to alternate forward analysis with backward analysis to characterize the new PBE, and this analysis will follow a similar rationale in Section \ref{PBE2}.

Next, we delve into the new PBE outcome under the non-uniform prior distribution. To illustrate the key equilibrium structure, we classify the new PBE into five regions in Fig.~\ref{fig:newPBE_nonuniform}. 

\begin{figure}[h]
	\centering
    \vspace{-10pt}
	\centering
    \includegraphics[width=0.45\linewidth]{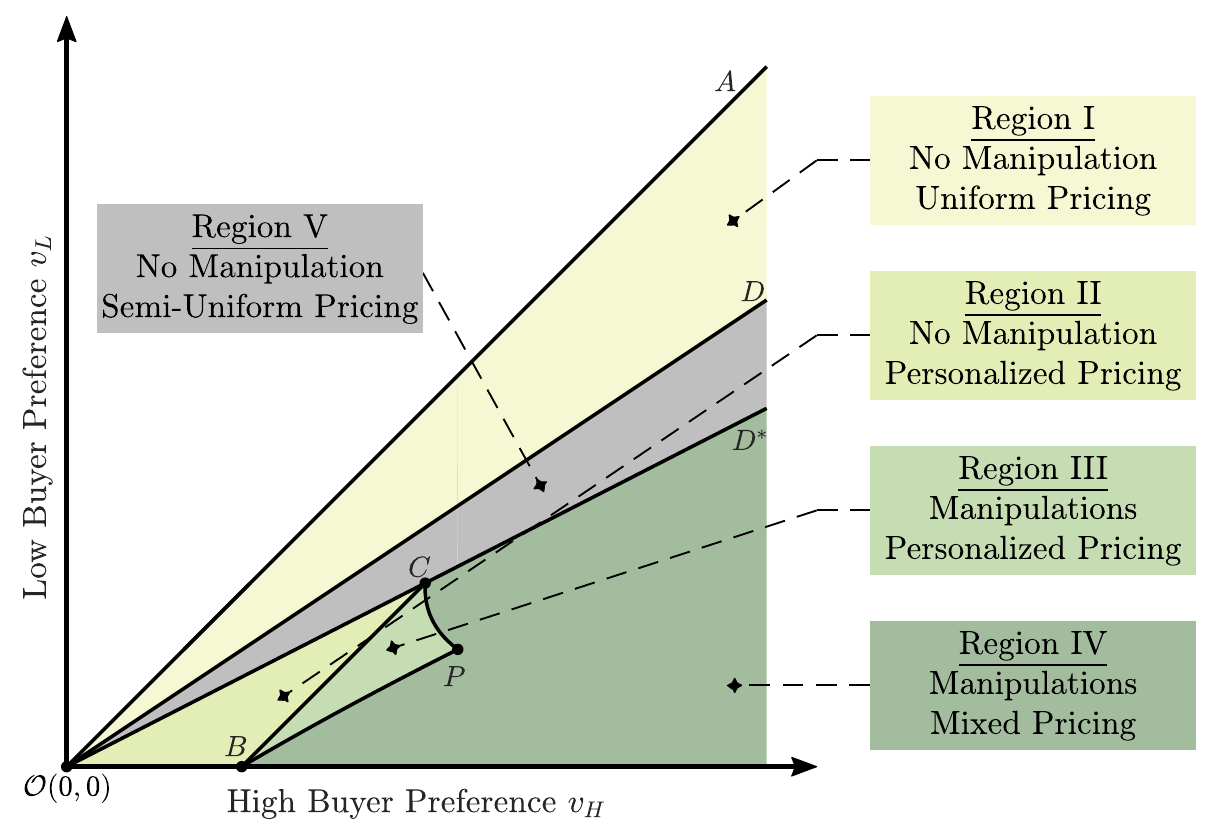}
    \vspace{-10pt}
	\caption{The new PBE under the non-uniform prior distribution with $\alpha<0.5$ comprises five structural regions. The PBE outcome is categorized into different preference regions considering high-preference buyers' manipulation strategy and the seller's pricing strategy in the equilibrium. The corresponding boundary functions can be found in Table A8 for any $\alpha\le 0.5$ in Appendix~\ref{Appendix:Extension1:non-uniform}.}
    \label{fig:newPBE_nonuniform}
    \vspace{-5pt}
\end{figure} 

As illustrated in Fig. \ref{fig:newPBE_nonuniform}, Regions I-IV of the PBE under a non-uniform prior distribution with $\alpha<0.5$ exhibit similar structures and properties with their counterparts in Fig. \ref{regions} under the uniform prior distribution with $\alpha=0.5$. However, it is important to notice that a distinct equilibrium case emerges in Region V of Fig. \ref{fig:newPBE_nonuniform} under $\alpha<0.5$, as outlined in the following proposition.

\begin{proposition}\label{non-uniform:newRegion}
In the strategic-learning model, a new equilibrium case emerges under the non-uniform prior distribution with $\alpha<0.5$, compared to the PBE under a uniform prior distribution with $\alpha=0.5$. Specifically, for $\alpha<0.5$, if the relative difference between the low and high buyer preferences is moderate within the range of $2\alpha^2/(3\alpha^2-2\alpha+1)< v_{\text{\tiny L}}/v_{\text{\tiny H}}<2/3$ (see Region V of Fig.  \ref{fig:newPBE_nonuniform}), there exists a unique PBE as follows.
\begin{itemize}
    \item \emph{In Stage I}, buyers $i$ and $j$ with the same high preference choose the maximum social interaction frequency in \eqref{region1_buyer} without engaging in any manipulation.
    \item \emph{In Stage II}, when observing buyers' low common interaction frequency $\hat{x}^*=0$, the seller charges a high price \mbox{$p_1^*=v_{\text{\tiny H}}$} in the first selling period and enables personalized pricing in the second selling period; whereas observing $\hat{x}^*=1$, the seller charges a low uniform price $p_1^*=p_2^*=v_{\text{\tiny L}}$ in each period, i.e.,
    \begin{equation*}
    \left\{
    \begin{aligned}
    &p_1^*=p_2^*=v_{\text{\tiny L}},&\textrm{if $\ \hat{x}^\ast=1$,}\label{new:non:lowuniform}\\
    &p_1^*=v_{\text{\tiny H}}, \eqref{condition_2},&\textrm{if $\ \hat{x}^\ast=0$.} 
    \end{aligned}
    \right.
    \end{equation*}
\end{itemize}
\end{proposition}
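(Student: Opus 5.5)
The plan is to verify directly that the profile in the statement is a PBE, and then rule out every other candidate PBE. I will use two tools. First, the structural reduction carried over to the non-uniform prior: manipulation can only occur between two high-preference buyers, i.e., the analogue of Lemma~\ref{lemma1_buyer}. Second, the binary-price restriction of Lemma~\ref{pricebinary}. With these, high-preference pairs play $x=0$ with some probability $\rho$, and the seller chooses, at each signal $\hat{x}$, among finitely many pricing schemes:
\begin{itemize}
\item uniform low, $p_1=p_2=v_{\text{\tiny L}}$;
\item uniform high, $p_1=p_2=v_{\text{\tiny H}}$;
\item learning-based pricing, $p_1=v_{\text{\tiny H}}$ followed by \eqref{condition_1} or \eqref{condition_2};
\item $p_1=v_{\text{\tiny L}}$ followed by a fixed $p_2$.
\end{itemize}

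\textbf{Step 1 (seller's best response given $\rho=0$).} With no manipulation, $\hat{x}=1$ reveals equal preferences. The posterior is $\Pr(HH\mid\hat{x}=1)=\alpha^2/(\alpha^2+(1-\alpha)^2)$, with the complementary mass on $LL$. Uniform low yields $2v_{\text{\tiny L}}$. Learning-based pricing with \eqref{condition_1} yields $[2\alpha^2 v_{\text{\tiny H}}+(1-\alpha)^2v_{\text{\tiny L}}]/(\alpha^2+(1-\alpha)^2)$, and it weakly dominates uniform high and the $p_1=v_{\text{\tiny L}}$ schemes. Comparing the first two, uniform low is strictly better exactly when $v_{\text{\tiny L}}(3\alpha^2-2\alpha+1)>2\alpha^2v_{\text{\tiny H}}$. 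This is the lower bound of Region V.

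The signal $\hat{x}=0$ reveals different preferences, each arrival order with probability $1/2$. Learning-based pricing with \eqref{condition_2} then earns $v_{\text{\tiny H}}+v_{\text{\tiny L}}/2$. By comparison, uniform low earns $2v_{\text{\tiny L}}$, uniform high earns $v_{\text{\tiny H}}$, and $p_1=v_{\text{\tiny L}}$ earns at most $v_{\text{\tiny L}}+\max\{v_{\text{\tiny L}},v_{\text{\tiny H}}/2\}$. Learning-based pricing is therefore strictly optimal iff $v_{\text{\tiny L}}<2v_{\text{\tiny H}}/3$, the upper bound of Region V. Since $\alpha\in(0,1)$, both signals are on path, so beliefs are pinned down by Bayes' rule and no off-path beliefs are needed.

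\textbf{Step 2 (buyers' incentives).} Take an $HH$ buyer facing the seller strategy of Step~1. If his partner plays $x=1$, then playing $x=1$ gives social utility $1$ and, under uniform low pricing, purchase surplus $v_{\text{\tiny H}}-v_{\text{\tiny L}}$. Deviating to $0$ gives social utility $l$ and triggers $\hat{x}=0$ learning-based pricing. There he pays $v_{\text{\tiny H}}$ if he arrives first and $v_{\text{\tiny L}}$ if second, so his expected surplus is only $(v_{\text{\tiny H}}-v_{\text{\tiny L}})/2$. If his partner plays $x=0$, then $\hat{x}=0$ regardless, so pricing is unaffected, and $x=1$ gives $1-l>0$. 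Thus $x=1$ is strictly dominant, and $\rho=0$ is a best response. The behaviour of the other buyer types follows from the reduction lemma, which completes existence.

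\textbf{Step 3 (uniqueness).} This is the step I expect to be the main obstacle, because a PBE with $\rho>0$, or with seller mixing, must be excluded.

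First, suppose $\rho>0$. Then $\Pr(HH\mid\hat{x}=1)=\alpha^2(1-\rho)^2/(\alpha^2(1-\rho)^2+(1-\alpha)^2)$, which is smaller than under $\rho=0$. So the strict inequality of Step~1 still favours uniform low at $\hat{x}=1$.

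At $\hat{x}=0$, whatever the seller does, an $HH$ buyer's expected surplus is at most $v_{\text{\tiny H}}-v_{\text{\tiny L}}$, since prices are at least $v_{\text{\tiny L}}$ by Lemma~\ref{pricebinary}. That cap equals his surplus under $\hat{x}=1$. Hence the dominance argument of Step~2 goes through against any seller response at $\hat{x}=0$, weakly in purchase surplus and strictly in social utility, and it forces $\rho=0$, a contradiction.

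Finally, with $\rho=0$ the seller's comparisons in Step~1 are strict in the interior of Region V, so she has no mixed best response. The PBE is therefore unique.
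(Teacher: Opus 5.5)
Your proposal is correct and follows essentially the paper's route: the two-step scheme of Section~\ref{PBE2}. The seller's Stage~II best response under the Bayesian posterior gives the thresholds $2\alpha^2/(3\alpha^2-2\alpha+1)$ at $\hat{x}=1$ and $2/3$ at $\hat{x}=0$ exactly as you compute, and the high-preference buyers' payoff comparison with belief consistency completes it. Your uniqueness step is also the paper's own stated rationale for Region~V: uniform low pricing at $\hat{x}=1$ stays strictly optimal for every $\rho$ because the posterior on $HH$ only falls, so honest high-preference buyers already get the maximal surplus $v_{\text{\tiny H}}-v_{\text{\tiny L}}$, and $x=1$ is strictly dominant whatever the seller does at $\hat{x}=0$. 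This shortcut lets you skip the case-by-case analysis over $s=(1-\rho)^2$ used in Appendix~\ref{Appendix:PBE}.
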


Proposition \ref{non-uniform:newRegion} shows that in the newly-emerged equilibrium case within Region V of Fig. \ref{fig:newPBE_nonuniform}, the seller strategically sets a low uniform price when observing $\hat{x}^*=1$ to cater to the demand of low-preference buyers. Specifically, this pricing choice is driven by a higher proportion of low-preference buyers in the product market when $\alpha<0.5$. As a result, when engaging in honest social interactions with $\hat{x}=1$, high-preference buyers can always obtain the maximum purchase surplus of $v_{\text{\tiny H}}-v_{\text{\tiny L}}$ in Stage II, eliminating their incentives to manipulate. Notice that, as the prior distribution approaches uniformity with $\alpha\to 0.5^-$, this new equilibrium case diminishes, and the structure in Fig. \ref{fig:newPBE_nonuniform} degenerates into Fig. \ref{regions} under the uniform prior distribution.

\begin{corollary}\label{non-uniform:comparison}
Under a non-uniform prior distribution with $\alpha<0.5$, the equilibrium manipulation probability $\rho^*$ for high-preference buyers in Stage I is lower than the original PBE under a uniform prior with $\alpha=0.5$.
\end{corollary}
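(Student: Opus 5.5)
The plan is to redo the backward-induction step of the PBE analysis (Step 1 and Step 2 in Section \ref{PBE2}) for a general prior $\alpha$. This gives $\rho^*$ as a closed-form function of $(v_{\text{\tiny H}},v_{\text{\tiny L}},l,\alpha)$ in each of the five regions of Fig. \ref{fig:newPBE_nonuniform}. I will then compare it pointwise in the $(v_{\text{\tiny H}},v_{\text{\tiny L}})$-plane with the expressions from Propositions \ref{region1}--\ref{Region4} at $\alpha=1/2$.

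The equilibrium-space reduction (Lemmas \ref{lemma1_buyer} and \ref{pricebinary}) carries over unchanged, since only the Bayes posteriors change. With prior $\alpha$ they become
\begin{equation*}
\Pr(v_i=v_j=v_{\text{\tiny H}}\mid\hat{x}=1)=\frac{\alpha^2(1-\rho)^2}{\alpha^2(1-\rho)^2+(1-\alpha)^2},
\end{equation*}
and analogously for $\hat{x}=0$, with the remaining mass going to the $v_i\neq v_j$ and $v_{\text{\tiny L}}$--$v_{\text{\tiny L}}$ pairs. I will then classify equilibria by which constraint pins down $\rho^*$.

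\emph{Regions where $\rho^*$ is set by the buyers' indifference (Region III analogue).} Here the seller purely personalizes, following \eqref{condition_1} and \eqref{condition_2}. A high-preference buyer's social gain from choosing $1$ rather than $0$ equals $1-l$ regardless of the partner's mixing. His purchase loss is $(1-\rho)(v_{\text{\tiny H}}-v_{\text{\tiny L}})/2$: he arrives second with probability $1/2$, and $\hat{x}=1$ occurs with probability $1-\rho$. Indifference then gives $\rho^*=1-2(1-l)/(v_{\text{\tiny H}}-v_{\text{\tiny L}})$, which is independent of $\alpha$. So in this region the comparison reduces to how the region boundaries move.

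\emph{Regions where $\rho^*$ is set by the seller's indifference (Region IV analogue).} In Case 1, I will equate the seller's revenue from uniform pricing at $v_{\text{\tiny L}}$ with her revenue from personalized pricing upon $\hat{x}=1$, under the posterior above. Solving should give $(1-\rho^*)^2=\big(\tfrac{1-\alpha}{\alpha}\big)^2\cdot\tfrac{v_{\text{\tiny L}}}{2(v_{\text{\tiny H}}-v_{\text{\tiny L}})}$. Since $(1-\alpha)/\alpha>1$ for $\alpha<1/2$, this $\rho^*$ is strictly smaller than the expression in \eqref{region4_rho}. Case 2 is handled the same way using the posterior upon $\hat{x}=0$. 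The prior ratio again enters as a factor that raises $1-\rho^*$ when $\alpha<1/2$.

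Region V has $\rho^*=0$ by Proposition \ref{non-uniform:newRegion}. The last step is the boundary bookkeeping. I need to check, using the boundary functions in Table A8, that every point in the plane is mapped to an $\alpha<1/2$ regime whose $\rho^*$ does not exceed the $\alpha=1/2$ value. This amounts to three checks:
\begin{itemize}
\item points in Regions I, II or V for $\alpha<1/2$ have $\rho^*=0$;
\item points in Region III for both priors have equal $\rho^*$;
\item points that switch between the III and IV formulas are handled using continuity of $\rho^*$ across the III/IV boundary and the monotone dependence of the IV formula on $\alpha$.
\end{itemize}
I expect this step to be the main obstacle. The region boundaries shift with $\alpha$ in non-obvious ways, and the comparison across a boundary crossing, for example where the uniform-prior point is in Region III but the $\alpha$-point is in Region IV, needs a careful inequality. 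The conclusion is therefore ``lower'' in the weak sense, and strict wherever the seller-indifference formula or Region V applies.
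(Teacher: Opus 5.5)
There is a genuine gap, and it sits in the one step you wave through: Case~2 of Region~IV. In that case the binding seller indifference is upon $\hat{x}=0$. It is the choice between $p_2=v_{\text{\tiny L}}$ and $p_2=v_{\text{\tiny H}}$ after a purchase at $p_1=v_{\text{\tiny H}}$. With prior $\alpha$ and $s=(1-\rho)^2$,
\begin{equation*}
\Pr\big(v_2=v_{\text{\tiny H}}\mid v_1=v_{\text{\tiny H}},\hat{x}=0\big)=\frac{\alpha(1-s)}{\alpha(1-s)+1-\alpha}=\frac{v_{\text{\tiny L}}}{v_{\text{\tiny H}}}
\quad\Longrightarrow\quad
(1-\rho^*)^2=\frac{v_{\text{\tiny H}}-v_{\text{\tiny L}}/\alpha}{v_{\text{\tiny H}}-v_{\text{\tiny L}}}.
\end{equation*}
This reduces to \eqref{region4_rho} at $\alpha=1/2$. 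It is also exactly what produces the BP curve $(v_{\text{\tiny H}}-v_{\text{\tiny L}})(v_{\text{\tiny H}}-v_{\text{\tiny L}}/\alpha)=4(1-l)^2$ in Table~\ref{Tab:boundary_function}, just as your Case~1 formula produces CP. Since $1/\alpha>2$, this makes $1-\rho^*$ smaller, i.e.\ $\rho^*$ \emph{larger} than under the uniform prior. The intuition is that with fewer high types, the $\hat{x}=0$ pool is dominated by genuine mixed pairs, so more manipulated high--high mass is needed to keep the seller indifferent. The prior therefore enters Case~2 with the opposite sign to Case~1, and your claim that it ``again raises $1-\rho^*$'' is false.

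As a result, your final bookkeeping step cannot be completed, because the pointwise inequality fails on part of Case~2 of Region~IV. Take $v_{\text{\tiny H}}=10$, $v_{\text{\tiny L}}=1$, $l=0.9$, $\alpha=0.4$.
\begin{itemize}
\item The uniform prior gives $\rho^*=1-\sqrt{8/9}\approx 0.057$.
\item Under $\alpha=0.4$ one gets $\rho^*=1-\sqrt{5/6}\approx 0.087$.
\item This is an equilibrium. Upon $\hat{x}=1$ the posterior on a high--high pair is $10/37$, so personalization yields $227/37>2v_{\text{\tiny L}}$ and the seller strictly personalizes. Upon $\hat{x}=0$, $p_1=v_{\text{\tiny H}}$ is optimal and the mixing weight is $\beta=2(1-l)/\sqrt{67.5}\in(0,1)$.
\end{itemize}
Structurally, wherever manipulation occurs, $\rho^*$ is the smallest of three caps:
\begin{itemize}
\item the buyers' indifference value $\rho_{\mathrm{III}}$, which does not depend on $\alpha$;
\item your Case~1 cap, which is increasing in $\alpha$;
\item the Case~2 cap, which is decreasing in $\alpha$.
\end{itemize}
The minimum can therefore move in either direction, so continuity plus monotonicity cannot deliver a uniform inequality. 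The comparison does go through wherever the uniform-prior equilibrium is not in Case~2 of Region~IV. It also shows that the set of parameters with $\rho^*>0$ shrinks from $v_{\text{\tiny L}}/v_{\text{\tiny H}}<2/3$ to $v_{\text{\tiny L}}/v_{\text{\tiny H}}<2\alpha^2/(3\alpha^2-2\alpha+1)$. As a pointwise statement, however, the corollary has to be restricted accordingly. Note that the paper supplies no derivation for it beyond the boundary table, and that table's BP entry encodes the formula above.
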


Corollary \ref{non-uniform:comparison} suggests the intuition that high-preference buyers are less likely to manipulate when there are more low-preference buyers in the product market. This finding aligns with the intuition behind Proposition \ref{non-uniform:newRegion} above.

\subsection{Robustness to Continuous Interaction Frequency}\label{Extension2:continuous}

Next, we extend our original two-buyer model in Section~\ref{systemmodel} to incorporate continuous social interaction frequency. In this extension, we consider a general functional form for the buyer's social interaction utility, and our new analysis will demonstrate the robustness of our previous analysis methodology and major results.

To start, we revisit the original model formulation for buyers' social interactions in Section \ref{model_stageI} and modify it to accommodate a continuous interaction frequency setting. Specifically, we first redefine the strategy space for each buyer's social interaction frequency with the other buyer as the normalized set $[0,1]$, i.e., $x_{ij}\in[0,1]$ and $x_{ji}\in[0,1]$ for the two buyers $i$ and $j$, respectively. Then, we reformulate the social interaction utility $u_i(x_{ij},x_{ji})$ for each buyer $i$ in a general functional form, as detailed in the following definition.

\begin{definition}\label{functional_form_social_utility}
The social interaction utility $u_i(x_{ij},x_{ji})$ for each buyer $i$ when interacting with the other buyer $j$ is defined as follows.

(i) When the two buyers $i$ and $j$ have the same preference, i.e., $v_i=v_j$, the social interaction utility function $u_i(x_{ij},x_{ji})$ satisfies the following three conditions:
    \begin{itemize}
        \item \emph{Non-negative:} $u_i(x_{ij},x_{ji})\ge 0$;
        \item \emph{Monotonic:} $u_i(x_{ij},x_{ji})$ increases in both $x_{ij}$ and $x_{ji}$;
        \item \emph{Normalized:} $u_i(1,1)=1$, $u_i(0,0)=0$, and \begin{equation}\label{loss_utility}
        u_i(x_{ij},1)-u_i(x_{ij},0)=l,\ \forall x_{ij}\in\{0,1\},
        \end{equation}
        i.e., when both buyers have the same preference, the social interaction loss incurred by one buyer due to the other buyer's lack of social engagement is denoted by the exogenous parameter $l\in(0,1)$.
    \end{itemize}

(ii) When the two buyers $i$ and $j$ have different preferences, i.e., $v_i\neq v_j$, the social interaction utility function $u_i(x_{ij},x_{ji})$ satisfies the following two conditions:\footnote{Notice that we can further impose a normalized condition on the social interaction utility function $u_i(x_{ij},x_{ji})$ when the two buyers have different preferences. Specifically, this involves $u_i(1,1)=-c$, $u_i(0,0)=0$, and $u_i(x_{ij},0)-u_i(x_{ij},1)=r$ for any $x_{ij}\in\{0,1\}$ with $r\in(0,c)$. However, while this formulation will correspond to Table \ref{table_differ} under the binary frequency setting, it is important to notice that this normalization condition is not necessary to ensure the robustness of our previous major results derived under the binary setting. Therefore, we omit it in Definition \ref{functional_form_social_utility} to maintain the generality of this new extension.}
\begin{itemize}
    \item \emph{Non-positive:} $u_i(x_{ij},x_{ji})\le 0$;
    \item \emph{Monotonic:} $u_i(x_{ij},x_{ji})$ decreases in both $x_{ij}$ and $x_{ji}$.
\end{itemize}
\end{definition}

Definition \ref{functional_form_social_utility} allows the social interaction utility $u_i(x_{ij},x_{ji})$ for each buyer $i$, when buyers $i$ and $j$ have the same preference $v_i=v_j$, to be an arbitrary non-negative and increasing function satisfying certain normalized conditions. Specifically, more frequent interactions between buyers with a common preference contribute to enhanced social happiness by fostering empathy and strengthening connections \cite{seiter2015secret}. Similar to our original model, we normalize the maximum social interaction value to $1$ and the minimum value to $0$. Moreover, in order to account for the social loss incurred by one buyer due to the other buyer's lack of engagement when their preferences align, we introduce the exogenous parameter $l\in (0,1)$ in \eqref{loss_utility} as in Table \ref{table_same} under binary interaction frequencies.

On the other hand, when the two buyers hold different preferences $v_i \neq v_j$, Definition~\ref{functional_form_social_utility} allows the social interaction utility $u_i(x_{ij},x_{ji})$ for each buyer $i$ to be any non-positive and decreasing function. This disutility formulation helps capture buyer $i$'s experienced embarrassment in maintaining social interactions with individuals who have different opinions \cite{mcpherson2001birds}. Similar to the analysis in Section \ref{PBE1}, we will start with an efficient reduction of the large equilibrium space to guide the forward analysis for the seller's strategic learning in this new extension. In particular, we can easily show that Lemmas \ref{lemma1_buyer} and \ref{pricebinary} continue to hold here: \textit{social manipulations can only occur between two high-preference buyers}.

We now focus on the remaining case wherein both buyers have the same high preference, i.e., $v_i=v_j=v_{\text{\tiny H}}$. In this new extension, with the introduction of a continuous strategy space for each buyer's social interaction decision, manipulation towards a frequency between the maximum and minimum values becomes possible. However, as suggested in the following lemma, we can establish the existence of polarization in the choices of equilibrium interaction frequency for those high-preference buyers.

\begin{lemma}\label{continuous_frequency:binary_manipulation}
In the continuous-interaction-frequency model, the PBE dictates that a high-preference buyer $i$, when interacting with another high-preference buyer $j$ in Stage I, will choose a social interaction frequency of either the maximum value of $1$ or the minimum of $0$, rather than any intermediate frequencies, i.e., $x_{ij}^*\in\{1,0\}$.
\end{lemma}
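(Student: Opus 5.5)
The plan is to show that, for a high-preference buyer $i$ facing a high-preference buyer $j$, any intermediate frequency $x_{ij}\in(0,1)$ is strictly dominated by one of the endpoints once the seller's Stage II response is taken into account.

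\textbf{Step 1: the seller's signal-to-price map.} The seller observes $\hat{x}=\min\{x_{ij},x_{ji}\}\in[0,1]$. By Lemma \ref{lemma1_buyer}, which carries over to the extension:
\begin{itemize}
\item low-type pairs play $(1,1)$ and so generate $\hat{x}=1$;
\item mixed-preference pairs play $(0,0)$, since utility is non-positive and decreasing, and so generate $\hat{x}=0$.
\end{itemize}
Lemma \ref{pricebinary} still restricts prices to $\{v_{\text{\tiny H}},v_{\text{\tiny L}}\}$. Any realized $\hat{x}$ that is neither $0$ nor $1$ is therefore produced only by a pair of high-preference buyers. I will first treat such signals as on-path. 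Bayes' rule then makes the seller certain that both buyers are high types, so she charges $p_1=p_2=v_{\text{\tiny H}}$ and each buyer's purchase surplus is $0$. If instead $\hat{x}\in(0,1)$ is off-path, I need a belief assignment; I will argue that any sequentially rational response gives each high buyer a surplus no larger than what $\hat{x}=0$ yields.

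\textbf{Step 2: comparing deviations.} Fix the (possibly mixed) strategy of buyer $j$, and write buyer $i$'s expected payoff as purchase surplus plus social utility.
\begin{itemize}
\item \emph{Choosing $x_{ij}=1$ versus any interior $x_{ij}$.} The signal becomes $\hat{x}=x_{ji}$ rather than $\min\{x_{ij},x_{ji}\}$. This either leaves the signal unchanged or moves it from an interior value to $x_{ji}$. Monotonicity of $u_i$ gives a weak social gain.
\item \emph{Choosing $x_{ij}=0$ versus any interior $x_{ij}$.} This forces $\hat{x}=0$, which is the signal most favorable for pretending to be a mixed pair, at a weak social loss.
\end{itemize}
Along the line segment, the purchase surplus is a step function of $\hat{x}$. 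It is constant for all $x_{ij}\in(0,1)$ against a given $x_{ji}$, taking the "high-type revealed" value of zero surplus when $\hat{x}$ is interior. The social utility, meanwhile, is increasing in $x_{ij}$.

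The key comparison is then as follows. For interior $x_{ij}$, the surplus equals the surplus at $x_{ij}=1$ whenever $x_{ji}\le x_{ij}$. Whenever $x_{ji}>x_{ij}$, the surplus is weakly below the surplus at $x_{ij}=0$, because it is zero. Hence the payoff at interior $x_{ij}$ is at most the maximum of the endpoint payoffs.

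\textbf{Step 3: making the inequality strict.} I will argue by contradiction. Suppose buyer $i$ placed positive mass on interior values. Symmetry, together with belief consistency, would make those interior signals reveal the high type. Each interior choice then pays zero surplus and sacrifices social utility relative to $x_{ij}=1$, and strict monotonicity with the normalization $u_i(1,1)=1$ makes that loss strict. This contradiction yields $x^*_{ij}\in\{0,1\}$. Along the way I will verify that the normalization in \eqref{loss_utility}, defined only at endpoints, suffices; it only matters once the reduction to binary choices returns us to the original PBE analysis.

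\textbf{Main obstacle.} The main difficulty is Step 1's treatment of off-path interior signals. The PBE must specify beliefs there, and a seller who believed "low pair" after an interior signal could in principle make interior choices attractive. I will handle this in two ways.
\begin{itemize}
\item Any on-path interior signal is generated only by high pairs, so consistency pins the belief to the high type.
\item For off-path signals, I will invoke the standard refinement that beliefs place weight only on types for which the deviation is not dominated. Low-type and mixed pairs strictly lose from interior frequencies, since they gain nothing in surplus (Lemma \ref{lemma1_buyer}) and lose socially. So only high types can be assigned to such signals, and the seller charges $v_{\text{\tiny H}}$ there.
\end{itemize}
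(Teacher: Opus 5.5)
Your core idea is the same as the paper's. An interior common frequency can only come from two high-preference buyers, so it reveals them and earns no surplus, while an interior own frequency only costs social utility. That is correct in substance, but two steps are wrong as written.

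\textbf{Step 2's comparison.} When $x_{ji}>x_{ij}$ you compare the interior choice with $x_{ij}=0$ on surplus alone. But the interior choice has higher social utility than $x_{ij}=0$, so no payoff comparison with $0$ follows. Also, a pointwise bound by the \emph{maximum} of the two endpoint payoffs would not survive averaging over a mixed $x_{ji}$, which is exactly the situation you set up. What you need, and what Step 3 implicitly uses, is that $x_{ij}=1$ beats every interior $x_{ij}$ uniformly in $x_{ji}$:
\begin{itemize}
\item If $x_{ji}=0$, both choices give $\hat{x}=0$ and the same surplus. This is also why ``each interior choice pays zero surplus'' in Step 3 is inaccurate, though harmless.
\item If $x_{ji}>0$, the interior choice gives an interior signal and hence zero surplus. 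The choice $x_{ij}=1$ gives surplus $\ge 0$, since purchase surplus is never negative.
\item In every case, social utility is strictly higher at $x_{ij}=1$, provided $u_i$ is strictly increasing in its own argument.
\end{itemize}
The endpoint normalization $u_i(1,1)=1$ does not give that strictness. Like the paper, you must read ``increases'' in Definition~\ref{functional_form_social_utility} as strict.

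\textbf{The off-path ``main obstacle''.} It is not an obstacle, and your claim there is false in plain PBE. A seller who assigns the belief ``two low types'' to an off-path interior signal optimally charges $v_{\text{\tiny L}}$ in both periods. That hands a high buyer $v_{\text{\tiny H}}-v_{\text{\tiny L}}$, which can exceed what $\hat{x}=0$ yields (e.g., $(v_{\text{\tiny H}}-v_{\text{\tiny L}})/2$ under personalized pricing). So ``any sequentially rational response'' does not bound the surplus. Invoking a domination refinement would prove the lemma only for refined equilibria.

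None of this is needed; argue by contradiction as in Step 3. Suppose a high buyer puts positive mass on interior values. The interior signals this produces are then on path, and consistency pins them to the high--high belief, hence zero surplus. The comparison deviation to $x_{ij}=1$ earns nonnegative surplus whatever the seller believes at the signal it generates. Off-path beliefs at interior signals matter only for the separate claim that the binary-frequency equilibria of Propositions~\ref{region1}--\ref{Region4} remain PBE in the continuous model, not for this lemma.
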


Lemma \ref{continuous_frequency:binary_manipulation} narrows our attention to binary social interaction frequencies $\{1,0\}$ for each high-preference buyer in Stage I. Specifically, to conceal their preference correlation, the two high-preference buyers must manipulate their common interaction frequency to $\hat{x}=0$, to mimic buyers with different preferences. In fact, any $\hat{x}$ within the range of $(0,1)$ would unmistakably signal to the seller that manipulation has taken place between the two buyers, thereby revealing their underlying high preferences. Therefore, manipulating the social interaction frequency to any level in between $(0,1)$ does not help improve a buyer's purchase surplus in Stage II but reduces his social utility (see Definition \ref{functional_form_social_utility}). 

Until now, we have demonstrated that the structural properties of the PBE in our original model remain the same in this more general continuous frequency setting. This robustness guarantees the validity of all subsequent PBE analyses under our original binary frequency setting in Sections \ref{PBE2} and \ref{analysis}. As a final remark, under a general functional form of the social interaction utility, the continuous-interaction-frequency model and its PBE analysis converge to the binary frequency setting.

\section{Conclusion}\label{conclusion}

This paper presents the first analytical study regarding how buyers may strategically manipulate their social interaction signals considering their preference correlations, and how a seller can take buyers' strategic social behaviors into consideration when designing the pricing scheme. One distinct feature of our study is the consideration of a double-layered information asymmetry between the seller and buyers, integrating both individual buyer information and inter-buyer correlation information.

Our analysis reveals that only high-preference buyers tend to manipulate their social interactions to evade personalized pricing, but surprisingly, their payoffs may actually worsen as a result. In addition, we show that the seller can considerably profit from the learning practice regardless of buyers' awareness. Indeed, buyers' learning-aware manipulation only slightly reduces the seller's revenue. In light of the increasingly strict privacy regulations, it is advisable for sellers to keep transparent with buyers about their social data access and follow-up learning practices. This finding is in line with current informed-consent industry practices for data sharing. Finally, we investigate the seller's dynamic learning across multiple interconnected buyers, and reveal that learning previous buyers' preferences does not necessarily contribute to inferring the other buyers in the seller's subsequent learning process.

There are some intriguing directions to study in the future. For example, our analysis remains largely silent about the intertemporal effect of prior buyers' learning-aware data manipulations, which is an important direction for future research. Another interesting issue we have left out in the current study is the impact of network structure, which may require an alternative model with further approximations. Furthermore, delving into the dynamics of buyers' information disclosure through repeated interactions between the seller and buyers presents another promising avenue for exploration.

\section*{Acknowledgments}
This work was supported by the National Natural Science Foundation of China (Project $\text{62271434}$), Shenzhen Science and Technology Innovation Program (Project $\text{JCYJ20210324120011032}$), Guangdong Basic and Applied Basic Research Foundation (Project $\text{2021B1515120008}$), Shenzhen Key Lab of Crowd Intelligence Empowered Low-Carbon Energy Network (No. $\text{ZDSYS20220606100601002}$), Shenzhen Stability Science Program 2023, and the Shenzhen Institute of Artificial Intelligence and Robotics for Society. This work is also supported in part by the Ministry of Education, Singapore, under its Academic Research Fund Tier 2 Grant with Award no. MOE-T2EP20121-0001; in part by SUTD Kickstarter Initiative (SKI) Grant with no. SKI 2021\_04\_07; and in part by the Joint SMU-SUTD Grant with no. 22-LKCSB-SMU-053. 

\bibliographystyle{IEEEtran}
\bibliography{sample-base}

\onecolumn
\appendices

\setcounter{fact}{0}
\setcounter{lemma}{0}
\setcounter{claim}{0}
\setcounter{table}{0}
\setcounter{figure}{0}
\setcounter{theorem}{0}
\setcounter{equation}{0}
\setcounter{corollary}{0}
\setcounter{definition}{0}
\setcounter{proposition}{0}

\renewcommand{\thefact}{A\arabic{fact}}
\renewcommand{\thelemma}{A\arabic{lemma}}
\renewcommand{\theclaim}{A\arabic{claim}}
\renewcommand{\thetable}{A\arabic{table}}
\renewcommand{\thefigure}{A\arabic{figure}}
\renewcommand{\theequation}{A\arabic{equation}}
\renewcommand{\thecorollary}{A\arabic{corollary}}
\renewcommand{\thedefinition}{A\arabic{definition}}
\renewcommand{\theproposition}{A\arabic{proposition}}

\section{Personalized Pricing and Its Practical Integration with Social Network Data}\label{Appendix:Evidence}

To strengthen the practical importance of our work, we delve into Amazon's practice of social data access and data-driven pricing, which exemplifies an important case study.
Specifically, Amazon has heavily relied on dynamic pricing, leveraging the power of big data \cite{wakabayashi_2022_does}. Today, by integrating Amazon profiles with Facebook accounts, buyers grant Amazon access to a wealth of social network data to enhance their personalized shopping experiences \cite{-_by_-_boulton_2021}. 

This integration enables Amazon to acquire and employ social network data to train its personalization features \cite{bustos_2011_facebook,schulze_2018_facebook}. 
This process of learning-enabled personalization corresponds to Stage II of our system model, as outlined in Section 3.2. Furthermore, it is worth noting that price discrimination is commonly practiced in various consumer markets through personalized coupons, discounts, and fees \cite{dube2017scalable}. Another related strategy is product steering, which involves \mbox{changing} the order of search results when different buyers search for the same products online, with the aim of directing high-valuation buyers towards higher-priced products \cite{hannak2014measuring}. Notably, our model of the seller's personalized pricing in Section 3.2 captures these practices (in a potentially implicit way).

In a broader context, data-driven price discrimination is also prevalent in the consumer credit market worldwide, particularly in China. Within the credit domain, price discrimination involves offering varying payment conditions based on the bank's evaluation of the borrower's repayment likelihood \cite{garcia2023algorithmic}. Notably, Ant Financial, a division of Alibaba, generates the Zhima (Sesame) Credit score by considering multiple factors beyond each individual's financial history, including interpersonal relationships on social media \cite{bonatti2020consumer}. Moreover, fintech companies also perform credit evaluation by incorporating
various types of information, such as individuals' behavior data in social media \cite{garcia2023algorithmic}.

\section{Proof of Lemma \ref{lemma:nolearning}}

In the no-learning benchmark, where the seller cannot learn from buyers' social interactions, Stages I and II of the dynamic game are then decoupled. As such, no buyer has the incentive to manipulate, and each buyer $i$ decides his social interaction frequency $x_{ij}$ only to maximize the social interaction utility $u_i(x_{ij},x_{ji})$ in Stage I. Therefore, the buyers' equilibrium social decisions could be attained through the normal-form game-theoretical analysis according to Tables \ref{table_same} and \ref{table_differ}, depending on whether their preferences are the same or not. Since the seller cannot relate one buyer's preference to the other's in this benchmark, the two selling periods in Stage II are decoupled as well. Therefore, the seller charges the same uniform price in both selling periods as
\begin{equation}
\begin{aligned}
p_1^*=p_2^*=&\arg\max_{p_t}\ p_t\cdot\textrm{Pr}(v_t\ge p_t),\ \forall t\in\{1,2\}\\=&\arg\max_{p_t\in\{v_{\text{\tiny L}},v_{\text{\tiny H}}\}} \left[v_{\text{\tiny L}}\boldsymbol{1}(p_t=v_{\text{\tiny L}})+\frac{v_{\text{\tiny H}}}{2}\boldsymbol{1}(p_t=v_{\text{\tiny H}})\right],\ \forall t\in\{1,2\}\\=&\left\{
\begin{aligned}
v_{\text{\tiny H}}, &\quad\textrm{if $\ 2v_{\text{\tiny L}} \le v_{\text{\tiny H}}$,}\\
v_{\text{\tiny L}}, &\quad\textrm{otherwise;}
\end{aligned}
\right.
\end{aligned}
\end{equation}
where the third equality with binary optimization variables follows from Lemma \ref{pricebinary}. Notice that Lemma \ref{pricebinary}, which restricts our attention to the seller's binary pricing choices, remains valid in this no-learning benchmark.

\section{Proof o f Lemma \ref{unaware}}

This proof consists of two main steps. Specifically, we first analyze the users' social interactions in Stage I in Appendix \ref{Appendix:stepI_lemma2} and then the seller's pricing in Stage II in Appendix \ref{Appendix:stepII_lemma2}. Notice that Lemma \ref{pricebinary} of the seller's binary pricing choices remains valid in this benchmark.
\subsection{Step 1: Analysis of the buyers' social decisions in Stage I}\label{Appendix:stepI_lemma2}
In the undisclosed-learning benchmark, the buyers are not aware of the seller's follow-up learning; thus, they do not take into account the implications of their social interactions in Stage I for the seller's pricing in Stage II. Therefore, the buyers' equilibrium social decisions in Stage I remain the same as those in the no-learning benchmark.
\subsection{Step 2: Analysis of the seller's pricing in Stage II}\label{Appendix:stepII_lemma2}
The analysis of the seller's pricing in Stage II is conducted through backward induction. In what follows, we develop our discussions according to the buyers' common interaction frequency $\hat{x}^*$ in \eqref{hat} observed by the seller in the equilibrium.

\subsubsection{Upon observing $\hat{x}^*=1$}
\textbf{(i) Posterior belief update.} When observing the buyers' high common interaction frequency $\hat{x}^*=1$ in Stage I, the seller's posterior belief of the buyers' preferences is given by \mbox{${\rm{Pr}}(v_1=v_2=v_{\text{\tiny H}}|\hat{x}^*=1)={\rm{Pr}}(v_1=v_2=v_{\text{\tiny L}}|\hat{x}^*=1)=1/2$}, according to the Bayes' theorem. Moreover, the two buyers are with the same preference with probability one conditioned on $\hat{x}^*=1$, i.e., 
${\rm{Pr}}(v_1=v_2|\hat{x}^*=1)=1$. \textbf{(ii) Second-period pricing analysis.} Then, we start with the pricing analysis for the second selling period in Stage II. (a) When setting the first-period price as $p_1=v_{\text{\tiny L}}$, the seller cannot learn the first-arriving buyer's preference; thus, the optimal second-period pricing scheme is simply given by $p_2=\arg\max p_2\cdot\textrm{Pr}(v_2\ge p_2)$, i.e., $p_2=v_{\text{\tiny H}}$ if $v_{\text{\tiny L}}/v_{\text{\tiny H}}\le 1/2$ and $p_2=v_{\text{\tiny L}}$ otherwise. In this way, the seller's expected second-period revenue is $\max\left\{v_{\text{\tiny H}}/2,v_{\text{\tiny L}}\right\}$. (b) When setting the first-period price as $p_1=v_{\text{\tiny H}}$, the seller charges $p_2=v_{\text{\tiny L}}$ in the second period if the first-arriving buyer does not purchase $a_1=0$ (i.e., $v_1=v_{\text{\tiny L}}$); then the corresponding second-period revenue is $v_{\text{\tiny L}}$. But if the first-arriving buyer purchases $a_1=1$ at $p_1=v_{\text{\tiny H}}$ (i.e., $v_1=v_{\text{\tiny H}}$), the second-period price is $p_2=v_{\text{\tiny H}}$ and the corresponding second-period revenue is $v_{\text{\tiny H}}$. \textbf{(iii) First-period pricing optimization.} Now, we are ready for the optimization problem of the seller's first-period pricing:
\begin{equation}
p_1^*=\arg\max_{p_1\in\{v_{\text{\tiny H}},v_{\text{\tiny L}}\}} \left[\left(\frac{1}{2}\left(0+v_{\text{\tiny L}}\right)+\frac{1}{2}\left(v_{\text{\tiny H}}+v_{\text{\tiny H}}\right)\right)\boldsymbol{1}(p_1=v_{\text{\tiny H}})+\left(v_{\text{\tiny L}}+\max\left\{\frac{v_{\text{\tiny H}}}{2},v_{\text{\tiny L}}\right\}\right)\boldsymbol{1}(p_1=v_{\text{\tiny L}})\right],
\end{equation}
which yields the seller's optimal first-period price $p_1=v_{\text{\tiny L}}$ if ${v_{\text{\tiny L}}}/v_{\text{\tiny H}}\ge 2/3$ and $p_1=v_{\text{\tiny H}}$ otherwise.

\subsubsection{Upon observing $\hat{x}^*=0$}
\textbf{(i) Posterior belief update.} When observing the low common interaction frequency $\hat{x}^*=0$ in Stage I, the seller's posterior belief of buyers' preferences is given by \mbox{${\rm{Pr}}(v_1=v_{\text{\tiny H}},v_2=v_{\text{\tiny L}}|\hat{x}^*=0)={\rm{Pr}}(v_1=v_{\text{\tiny L}},v_2=v_{\text{\tiny H}}|\hat{x}^*=0)=1/2$}, according to the Bayes' theorem. Moreover, the two buyers have different preferences with probability one conditioned on $\hat{x}^*=0$, i.e., 
${\rm{Pr}}(v_1\neq v_2|\hat{x}^*=0)=1$. \textbf{(ii) Second-period pricing analysis.} Then, we start with the pricing analysis for the second selling period in Stage II. (a) When setting the first-period price as $p_1=v_{\text{\tiny L}}$, the seller cannot learn the first-arriving buyer's preference; thus, the optimal second-period pricing scheme is simply given by $p_2=\arg\max p_2\cdot\textrm{Pr}(v_2\ge p_2)$, i.e., $p_2=v_{\text{\tiny H}}$ if $v_{\text{\tiny L}}/v_{\text{\tiny H}}\le 1/2$ and $p_2=v_{\text{\tiny L}}$ otherwise. In this way, the seller's expected second-period revenue is $\max\left\{v_{\text{\tiny H}}/2,v_{\text{\tiny L}}\right\}$. (b) When setting the first-period price as $p_1=v_{\text{\tiny H}}$, the seller charges $p_2=v_{\text{\tiny H}}$ in the second period if the first-arriving buyer does not purchase $a_1=0$ (i.e., $v_1=v_{\text{\tiny L}}$); then the corresponding second-period revenue is $v_{\text{\tiny H}}$. But if the first-arriving buyer purchases $a_1=1$ at $p_1=v_{\text{\tiny H}}$ (i.e., $v_1=v_{\text{\tiny H}}$), the second-period price is $p_2=v_{\text{\tiny L}}$ and the corresponding second-period revenue is $v_{\text{\tiny L}}$. \textbf{(iii) First-period pricing optimization.} Now, we are ready for the optimization problem of the seller's first-period pricing:
\begin{equation}
p_1^*=\arg\max_{p_1\in\{v_{\text{\tiny H}},v_{\text{\tiny L}}\}} \left[\left(\frac{1}{2}\left(0+v_{\text{\tiny H}}\right)+\frac{1}{2}\left(v_{\text{\tiny H}}+v_{\text{\tiny L}}\right)\right)\boldsymbol{1}(p_1=v_{\text{\tiny H}})+\left(v_{\text{\tiny L}}+\max\left\{\frac{v_{\text{\tiny H}}}{2},v_{\text{\tiny L}}\right\}\right)\boldsymbol{1}(p_1=v_{\text{\tiny L}})\right],
\end{equation}
which yields the seller's optimal first-period price $p_1=v_{\text{\tiny L}}$ if ${v_{\text{\tiny L}}}/v_{\text{\tiny H}}\ge 2/3$ and $p_1=v_{\text{\tiny H}}$ otherwise.

Combining the above analysis together completes the proof for Lemma \ref{unaware}.

\section{Proof of Corollary \ref{salerevenue_un}}
To conduct comparative analyses between the two benchmarks in Section \ref{benchmark}, we first calculate the seller's expected revenue $\tilde{\Pi}^0$ in the no-learning benchmark and $\tilde{\Pi}^1$ in the undisclosed-learning benchmark:
\begin{equation}\label{PI_0_No}
\tilde{\Pi}^0=\left\{
\begin{aligned}
v_{\text{\tiny H}}, &\quad\textrm{if $\ v_{\text{\tiny L}}/v_{\text{\tiny H}}< 1/2$,}\\
2v_{\text{\tiny L}}, &\quad\textrm{otherwise;}
\end{aligned}
\right.
\quad\text{and}
\quad\tilde{\Pi}^1=\left\{
\begin{aligned}
&v_{\text{\tiny H}}+v_{\text{\tiny L}}/2, &\quad\textrm{if $\ {v_{\text{\tiny L}}}/v_{\text{\tiny H}}< 2/3$,}\\
&2{v_{\text{\tiny L}}}, &\quad\textrm{otherwise.}
\end{aligned}
\right.
\end{equation}
Now we compare $\tilde{\Pi}^0$ with $\tilde{\Pi}^1$, and the expected revenue gain from the seller's undisclosed learning practice is given by
\begin{equation}
\frac{\tilde{\Pi}^1-\tilde{\Pi}^0}{\tilde{\Pi}^0}=\left\{
\begin{aligned}
&0, &\quad\textrm{if $\ {v_{\text{\tiny L}}}/v_{\text{\tiny H}}\ge 2/3$,}\\
&v_{\text{\tiny H}}/2v_{\text{\tiny L}}-3/4, &\quad\textrm{if $\ 1/2\le{v_{\text{\tiny L}}}/v_{\text{\tiny H}}<2/3$,}\\
&v_{\text{\tiny L}}/2v_{\text{\tiny H}}, &\quad\textrm{otherwise;}
\end{aligned}
\right.
\end{equation}
which is always non-negative and strictly positive as long as ${v_{\text{\tiny L}}}/v_{\text{\tiny H}}<2/3$; its maximum $25\%$ is attained when ${v_{\text{\tiny L}}}/v_{\text{\tiny H}}=1/2$.

\section{Proof of Lemma \ref{lemma1_buyer}}\label{proof_lemma_buyer}
The proof applies Lemma \ref{pricebinary}, and the analysis rationale is demonstrated in the paragraph after Lemma \ref{lemma1_buyer} in Section \ref{PBE1}.

\section{Proof of Lemma \ref{pricebinary}}\label{proof_binary}

This proof is developed by contradiction. First suppose that the seller charges $p_t<v_{\text{\tiny L}}$ for each selling period $t\in\{1,2\}$ in the equilibrium. However, the seller always obtains a strictly higher revenue by increasing the price $p_t$ to $v_{\text{\tiny L}}$, which does not affect the arriving buyer's purchase decision $a_t$. This then results in a contradiction. Next, suppose that the seller charges $p_t>v_{\text{\tiny H}}$ in the equilibrium; in this case, no buyer purchases. But the seller becomes strictly better off by setting $p_t=v_{\text{\tiny H}}$, again a contradiction. Finally, we assume on the contrary that there exists an equilibrium where the seller charges strictly between low and high preferences, i.e., $v_{\text{\tiny L}}<p_t<v_{\text{\tiny H}}$. Increasing the price $p_t$ to $v_{\text{\tiny H}}$ does not affect the buyers' purchase decisions but strictly increases the seller's revenue. This contradicts the sequential rationality criterion again.

\section{Proof of Propositions \ref{region1}-\ref{Region4}}\label{Appendix:PBE}
This appendix completes the characterization of all PBE for the strategic-learning model, and the proof will follow the two steps introduced in Section \ref{PBE2}. In particular, Proposition \ref{region1} and Proposition \ref{Region2} are reached by Case 1 and Case 2 in Appendix \ref{part1}, respectively; in Appendix \ref{part3}, Proposition \ref{Region3} is reached by Case 1, while Cases 2 and 3 together lead to Proposition \ref{Region4}.

\subsection{Step 1: Analysis of the seller's pricing in Stage II}
We first apply backward induction to analyze the seller's optimal pricing schemes in Stage II. To start with, we now update the seller's posterior belief of the buyers' preferences in (\ref{belief1}) and (\ref{belief0}).

Let $s$ denote the conditional probability of the event that buyer $i$ and $j$ make the common interaction frequency $\hat{x}\triangleq\min\{x_{ij},x_{ji}\}=1$ given that they are with the same high preference $v_i=v_j=v_{\text{\tiny H}}$. i.e.,
\begin{align}\label{rho01_condbelief:app}
    s\triangleq\textrm{Pr}\left(\hat{x}=1|v_i=v_j=v_{\text{\tiny H}}\right)=(1-\rho_i)^2=(1-\rho)^2.
\end{align}
Substitute (\ref{rho01_condbelief:app}) into the seller's posterior beliefs in (\ref{belief1}) and (\ref{belief0}), we have
\begin{equation}\label{rho01_postbelief1}
{\rm{Pr}}(v_i=v_j=v_{\text{\tiny H}}|\hat{x}=1)=\frac{s}{s+1} {\ \rm and\ }{\rm{Pr}}(v_i=v_j=v_{\text{\tiny L}}|\hat{x}=1)=\frac{1}{s+1},
\end{equation}
and
\begin{align}\label{rho01_postbelief2}
{\rm{Pr}}(v_i=v_j=v_{\text{\tiny H}}|\hat{x}=0)=\frac{1-s}{3-s} {\ \rm and\ }{\rm{Pr}}(v_i\neq v_j|\hat{x}=0)=\frac{2}{3-s}.
\end{align}



Based on the posterior beliefs in (\ref{rho01_postbelief1}) and (\ref{rho01_postbelief2}), the seller's optimal pricing in Stage II is provided in the following lemmas. 

\begin{lemma}\label{rho01_lemma1}
Given the posterior beliefs in (\ref{rho01_postbelief1}), when observing buyers' high common interaction frequency $\hat{x}=1$, the seller's optimal pricing in Stage II depends on the preference diversity $v_{\text{\tiny L}}/v_{\text{\tiny H}}$ as follows.
\begin{itemize}
    \item \textbf{Case A}: If ${2s}/{(1+2s)}\le v_{\text{\tiny L}}/v_{\text{\tiny H}}<1$, the seller charges a uniform price $p_1=p_2=v_{\text{\tiny L}}$;
    \item \textbf{Case B}: If $0<v_{\text{\tiny L}}/v_{\text{\tiny H}}\le {2s}/{(1+2s)}$, the seller charges a high price $p_1=v_{\text{\tiny H}}$ in the first selling period. The second-period price personalizes, depending on the purchase record $a_1\in\{0,1\}$ of the first period in Stage II:
        \begin{equation}
            p_2(a_1)=v_{\text{\tiny H}}\boldsymbol{1}(a_1=1)+v_{\text{\tiny L}}\boldsymbol{1}(a_1=0).
        \end{equation}
\end{itemize}
\end{lemma}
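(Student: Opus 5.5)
Conditional on $\hat{x}=1$, Lemma \ref{lemma1_buyer} says that mixed-preference pairs always produce $\hat{x}=0$. So the two buyers share a common preference, and the posterior \eqref{rho01_postbelief1} puts probability $s/(s+1)$ on both being $v_{\text{\tiny H}}$ and $1/(s+1)$ on both being $v_{\text{\tiny L}}$. By Lemma \ref{pricebinary}, each period's price lies in $\{v_{\text{\tiny L}},v_{\text{\tiny H}}\}$. The plan is to run backward induction over these binary choices, mirroring the proof of Lemma \ref{unaware}, and then compare the two candidate first-period prices.

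\textbf{Second period.} I split on $p_1$.

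If $p_1=v_{\text{\tiny H}}$, the purchase decision reveals $v_1$. Since the two preferences coincide, the seller knows $v_2=v_1$ and sets $p_2=v_{\text{\tiny H}}\boldsymbol{1}(a_1=1)+v_{\text{\tiny L}}\boldsymbol{1}(a_1=0)$. The expected total revenue is
\begin{equation*}
\frac{s}{s+1}\,2v_{\text{\tiny H}}+\frac{1}{s+1}\,v_{\text{\tiny L}}.
\end{equation*}

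If $p_1=v_{\text{\tiny L}}$, both types buy and nothing is learned. The seller keeps the posterior and sets $p_2\in\{v_{\text{\tiny L}},v_{\text{\tiny H}}\}$ to maximize $\max\{v_{\text{\tiny L}},\tfrac{s}{s+1}v_{\text{\tiny H}}\}$. The total revenue is therefore $v_{\text{\tiny L}}+\max\{v_{\text{\tiny L}},\tfrac{s}{s+1}v_{\text{\tiny H}}\}$.

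\textbf{First-period comparison.} I compare $p_1=v_{\text{\tiny H}}$ against $p_1=v_{\text{\tiny L}}$ in two subcases.

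When $v_{\text{\tiny L}}\ge \tfrac{s}{s+1}v_{\text{\tiny H}}$, the low option yields $2v_{\text{\tiny L}}$. Multiplying through by $(s+1)$, the high option is weakly better iff $2sv_{\text{\tiny H}}+v_{\text{\tiny L}}\ge 2(s+1)v_{\text{\tiny L}}$, i.e. $v_{\text{\tiny L}}/v_{\text{\tiny H}}\le 2s/(1+2s)$. This produces exactly the threshold in the statement.

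When $v_{\text{\tiny L}}<\tfrac{s}{s+1}v_{\text{\tiny H}}$, the low option yields $v_{\text{\tiny L}}+\tfrac{s}{s+1}v_{\text{\tiny H}}$. The high option minus the low option equals
\begin{equation*}
\frac{s\,v_{\text{\tiny H}}-s\,v_{\text{\tiny L}}}{s+1}\ge 0,
\end{equation*}
so $p_1=v_{\text{\tiny H}}$ always wins. This subcase sits inside Case B, because $s/(s+1)\le 2s/(1+2s)$ for $s\ge 0$.

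Combining the two subcases: Case A holds for $v_{\text{\tiny L}}/v_{\text{\tiny H}}\ge 2s/(1+2s)$, in which case $p_1=v_{\text{\tiny L}}$ followed by $p_2=v_{\text{\tiny L}}$ is optimal. Case B holds otherwise, giving $p_1=v_{\text{\tiny H}}$ followed by the personalized $p_2$.

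\textbf{Main obstacle.} The delicate part is the subcase bookkeeping: I must verify that the $\max$ inside the uninformative branch never produces a region where $p_1=v_{\text{\tiny L}}$ with $p_2=v_{\text{\tiny H}}$ is optimal. The inequality $s/(s+1)\le 2s/(1+2s)$ rules this out. Ties at the boundary $v_{\text{\tiny L}}/v_{\text{\tiny H}}=2s/(1+2s)$ are harmless, since both cases in the lemma include it. The degenerate value $s=0$ needs a separate remark, because then $\hat{x}=1$ reveals $v_{\text{\tiny L}}$ and the seller is indifferent; the formula still holds.
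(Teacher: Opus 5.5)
Your proposal is correct and follows the paper's proof essentially step for step: the same posterior, the same second-period analysis for each $p_1\in\{v_{\text{\tiny L}},v_{\text{\tiny H}}\}$, and the same first-period comparison, with your explicit split on the $\max$ via $s/(s+1)\le 2s/(1+2s)$ filling in algebra the paper leaves implicit. One minor slip in your closing remark: at $s=0$ the seller is not indifferent, since $p_1=v_{\text{\tiny L}}$ yields $2v_{\text{\tiny L}}$ while $p_1=v_{\text{\tiny H}}$ yields only $v_{\text{\tiny L}}$, so Case A holds strictly there (Case B is empty), which is still consistent with the lemma.
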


\begin{proof}
    
When observing buyers' high common interaction frequency $\hat{x}=1$, based on the posterior belief in (\ref{rho01_postbelief1}), the seller updates that 
\begin{equation}
{\rm Pr}(v_1=v_{\text{\tiny H}}|\hat{x}=1)={\rm Pr}(a_1=1|\hat{x}=1;p_1=v_{\text{\tiny H}})={\rm Pr}(v_2=v_{\text{\tiny H}}|\hat{x}=1;p_1=v_{\text{\tiny L}})=\frac{s}{s+1},
\end{equation}
and
\begin{equation}
{\rm Pr}(v_1=v_{\text{\tiny L}}|\hat{x}=1)={\rm Pr}(a_1=0|\hat{x}=1;p_1=v_{\text{\tiny H}})={\rm Pr}(v_2=v_{\text{\tiny L}}|\hat{x}=1;p_1=v_{\text{\tiny L}})=\frac{1}{s+1}.
\end{equation}
Specifically, we first analyze the second selling period. (a) When setting the first-period price $p_1=v_{\text{\tiny L}}$, the seller's optimal second-period price is $p_2=v_{\text{\tiny H}}$ if $v_{\text{\tiny L}}/v_{\text{\tiny H}}\le \frac{s}{s+1}$ and $p_2=v_{\text{\tiny L}}$ otherwise. In this way, the seller's expected second-period revenue is $\max\left\{\frac{s}{s+1}v_{\text{\tiny H}},v_{\text{\tiny L}}\right\}$. (b) When setting the first-period price $p_1=v_{\text{\tiny H}}$, the seller charges $p_2=v_{\text{\tiny L}}$ in the second period if the first-arriving buyer does not purchase $a_1=0$. Here, the corresponding second-period revenue is $v_{\text{\tiny L}}$. If the first-arriving buyer purchases $a_1=1$ at $p_1=v_{\text{\tiny H}}$, the second-period price is $p_2=v_{\text{\tiny H}}$ and the second-period revenue is $v_{\text{\tiny H}}$.
Now we turn to the seller's first-period problem:
\begin{equation}
    \max_{p_1\in\{v_{\text{\tiny H}},v_{\text{\tiny L}}\}}\left\{\left(\frac{s}{s+1}\left(v_{\text{\tiny H}}+v_{\text{\tiny H}}\right)+\frac{1}{s+1}v_{\text{\tiny L}}\right)\boldsymbol{1}(p_1=v_{\text{\tiny H}})+\left(v_{\text{\tiny L}}+\max\left\{\frac{s}{s+1}v_{\text{\tiny H}},v_{\text{\tiny L}}\right\}\right)\boldsymbol{1}(p_1=v_{\text{\tiny L}})\right\}.
\end{equation}
which yields the seller's optimal first-period price $p_1=v_{\text{\tiny L}}$ if $v_{\text{\tiny L}}/v_{\text{\tiny H}}\ge {2s}/{(1+2s)}$ and $p_2=v_{\text{\tiny H}}$ otherwise. 

\end{proof}

\begin{lemma}\label{rho01_lemma2}
Given the posterior beliefs in (\ref{rho01_postbelief2}), when observing buyers' low common interaction frequency $\hat{x}=0$, the seller's optimal pricing in Stage II depends on the preference diversity $v_{\text{\tiny L}}/v_{\text{\tiny H}}$ as follows.
\begin{itemize}
    \item \textbf{Case I}: If ${(3-s)}/{(4-s)}\le v_{\text{\tiny L}}/v_{\text{\tiny H}}<1$, the seller charges a uniform price $p_1=p_2=v_{\text{\tiny L}}$.
    \item \textbf{Case II}: If $(1-s)/(2-s)\le v_{\text{\tiny L}}/v_{\text{\tiny H}}\le {(3-s)}/{(4-s)}$, the seller charges a high price $p_1=v_{\text{\tiny H}}$ in the first selling period. The second-period price personalizes, depending on the purchase record $a_1\in\{0,1\}$ of the first period in Stage II:
        \begin{equation}
            p_2(a_1)=v_{\text{\tiny L}}\boldsymbol{1}(a_1=1)+v_{\text{\tiny H}}\boldsymbol{1}(a_1=0).
        \end{equation}
    \item \textbf{Case III}: If $0<v_{\text{\tiny L}}/v_{\text{\tiny H}}\le (1-s)/(2-s)$, the seller charges a uniform price $p_1=p_2=v_{\text{\tiny H}}$.
\end{itemize}
\end{lemma}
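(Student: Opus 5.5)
The plan is to mirror the proof of Lemma~\ref{rho01_lemma1}: do backward induction over Stage~II using the posterior beliefs in (\ref{rho01_postbelief2}). By Lemma~\ref{pricebinary}, it suffices to compare the two first-period choices $p_1\in\{v_{\text{\tiny L}},v_{\text{\tiny H}}\}$, each followed by the optimal second-period price.

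\textbf{Step 1: beliefs.} Conditioned on $\hat{x}=0$, the possible states are $(v_{\text{\tiny H}},v_{\text{\tiny H}})$ with probability $(1-s)/(3-s)$ and the two mixed pairs with total probability $2/(3-s)$, split equally by the random arrival order. The pair $(v_{\text{\tiny L}},v_{\text{\tiny L}})$ is excluded, since Lemma~\ref{lemma1_buyer} gives $\hat{x}=1$ for it. Hence
\[
{\rm Pr}(v_1=v_{\text{\tiny H}}\mid\hat{x}=0)=\frac{2-s}{3-s},\qquad {\rm Pr}(v_1=v_{\text{\tiny L}}\mid\hat{x}=0)=\frac{1}{3-s}.
\]
By symmetry the same marginals hold for $v_2$.

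\textbf{Step 2: second period.} There are three branches.
\begin{itemize}
\item If $p_1=v_{\text{\tiny L}}$, nothing is learned. The second-period revenue is $\max\{\tfrac{2-s}{3-s}v_{\text{\tiny H}},v_{\text{\tiny L}}\}$.
\item If $p_1=v_{\text{\tiny H}}$ and $a_1=0$, the seller knows $v_1=v_{\text{\tiny L}}$, so the pair must be mixed. Then $v_2=v_{\text{\tiny H}}$ surely, and she sets $p_2=v_{\text{\tiny H}}$.
\item If $p_1=v_{\text{\tiny H}}$ and $a_1=1$, the updated belief is ${\rm Pr}(v_2=v_{\text{\tiny H}}\mid a_1=1)=\frac{1-s}{2-s}$. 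So $p_2=v_{\text{\tiny H}}$ if $v_{\text{\tiny L}}/v_{\text{\tiny H}}\le (1-s)/(2-s)$, and $p_2=v_{\text{\tiny L}}$ otherwise.
\end{itemize}

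\textbf{Step 3: first period.} The expected revenue from $p_1=v_{\text{\tiny H}}$ is
\[
\frac{2-s}{3-s}v_{\text{\tiny H}}+\frac{1}{3-s}v_{\text{\tiny H}}+\frac{2-s}{3-s}\max\Big\{\frac{1-s}{2-s}v_{\text{\tiny H}},v_{\text{\tiny L}}\Big\}.
\]
The expected revenue from $p_1=v_{\text{\tiny L}}$ is $v_{\text{\tiny L}}+\max\{\tfrac{2-s}{3-s}v_{\text{\tiny H}},v_{\text{\tiny L}}\}$.

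I would compare these on the subintervals of $v_{\text{\tiny L}}/v_{\text{\tiny H}}$ cut out by the thresholds $(1-s)/(2-s)$, $(2-s)/(3-s)$ and $(3-s)/(4-s)$. These thresholds are increasing in that order for $s\in[0,1]$.
\begin{itemize}
\item On $v_{\text{\tiny L}}/v_{\text{\tiny H}}\le (1-s)/(2-s)$, the high path gives $2\frac{2-s}{3-s}v_{\text{\tiny H}}$. This is exactly the revenue of uniform pricing $p_1=p_2=v_{\text{\tiny H}}$, which is Case~III. One checks it dominates $v_{\text{\tiny L}}+\tfrac{2-s}{3-s}v_{\text{\tiny H}}$ there.
\item Above $(1-s)/(2-s)$, the high path gives $v_{\text{\tiny H}}+\frac{2-s}{3-s}v_{\text{\tiny L}}$.
  \begin{itemize}
  \item This always beats $v_{\text{\tiny L}}+\frac{2-s}{3-s}v_{\text{\tiny H}}$, since the difference is $\frac{1}{3-s}(v_{\text{\tiny H}}-v_{\text{\tiny L}})>0$.
  \item It beats $2v_{\text{\tiny L}}$ iff $v_{\text{\tiny L}}/v_{\text{\tiny H}}\le (3-s)/(4-s)$.
  \end{itemize}
  This yields Case~II (personalized pricing), and Case~I (uniform price $v_{\text{\tiny L}}$) above $(3-s)/(4-s)$.
\end{itemize}

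The main obstacle is bookkeeping rather than depth. First, the case split from the inner $\max$ terms must be ordered correctly. Second, in Case~III one must verify that $p_1=v_{\text{\tiny H}}$ also beats the $p_1=v_{\text{\tiny L}}$ branch with $p_2=v_{\text{\tiny L}}$. This amounts to checking $2\frac{2-s}{3-s}v_{\text{\tiny H}}\ge 2v_{\text{\tiny L}}$, which holds since $(1-s)/(2-s)<(2-s)/(3-s)$.

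Ties at the boundary values can be assigned to either adjacent case, consistent with the weak inequalities in the statement.
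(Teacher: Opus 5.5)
Your proposal is correct and follows the same backward-induction route as the paper: the same posterior marginals $(2-s)/(3-s)$ and $1/(3-s)$, the same three second-period branches (including the updated belief $(1-s)/(2-s)$ after $a_1=1$), and the same first-period comparison, which produces the thresholds $(1-s)/(2-s)$ and $(3-s)/(4-s)$. Your explicit case-by-case comparison, for example the gap $(v_{\text{\tiny H}}-v_{\text{\tiny L}})/(3-s)>0$ and the equivalence with $v_{\text{\tiny L}}/v_{\text{\tiny H}}\le(3-s)/(4-s)$, supplies the algebra that the paper compresses into ``which yields''.
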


\begin{proof}
    When observing buyers' low common interaction frequency $\hat{x}=0$, based on the posterior belief in (\ref{rho01_postbelief2}), the seller updates that 
\begin{equation}
{\rm Pr}(v_1=v_2=v_{\text{\tiny H}}|\hat{x}=0)=\frac{1-s}{3-s},
\end{equation}
and
\begin{equation}
{\rm Pr}(v_1=v_{\text{\tiny H}},v_2=v_{\text{\tiny L}}|\hat{x}=0)={\rm Pr}(v_1=v_{\text{\tiny L}},v_2=v_{\text{\tiny H}}|\hat{x}=0)=\frac{1}{3-s};
\end{equation}
that is, we have
\begin{equation}
{\rm Pr}(v_1=v_{\text{\tiny H}}|\hat{x}=0)={\rm Pr}(a_1=1|\hat{x}=0;p_1=v_{\text{\tiny H}})={\rm Pr}(v_2=v_{\text{\tiny H}}|\hat{x}=1;p_1=v_{\text{\tiny L}})=\frac{2-s}{3-s},
\end{equation}
and
\begin{equation}
{\rm Pr}(v_1=v_{\text{\tiny L}}|\hat{x}=0)={\rm Pr}(a_1=0|\hat{x}=1;p_1=v_{\text{\tiny H}})={\rm Pr}(v_2=v_{\text{\tiny L}}|\hat{x}=1;p_1=v_{\text{\tiny L}})=\frac{1}{3-s}.
\end{equation}
Specifically, we first analyze the second selling period. (a) When setting the first-period price $p_1=v_{\text{\tiny L}}$, the seller's optimal second-period price is $p_2=v_{\text{\tiny H}}$ if $v_{\text{\tiny L}}/v_{\text{\tiny H}}\le\frac{2-s}{3-s}$ and $p_2=v_{\text{\tiny L}}$ otherwise. In this way, the seller's expected second-period revenue is $\max\left\{\frac{2-s}{3-s}v_{\text{\tiny H}},v_{\text{\tiny L}}\right\}$. (b) When setting the first-period price $p_1=v_{\text{\tiny H}}$, the seller charges $p_2=v_{\text{\tiny H}}$ in the second period if the first-arriving buyer does not purchase $a_1=0$. Here, the corresponding second-period revenue is $v_{\text{\tiny H}}$. If the first-arriving buyer purchases $a_1=1$ at $p_1=v_{\text{\tiny H}}$, the second-period price is $p_2=v_{\text{\tiny H}}$ if $v_{\text{\tiny L}}/v_{\text{\tiny H}}\le\frac{1-s}{2-s}$ and $p_2=v_{\text{\tiny L}}$ otherwise. As a result, the seller's expected second-period revenue is $\max\left\{\frac{1-s}{2-s}v_{\text{\tiny H}},v_{\text{\tiny L}}\right\}$. Now we turn to the seller's first-period problem:
\begin{multline}
\max_{p_1\in\{v_{\text{\tiny H}},v_{\text{\tiny L}}\}}\bigg\{\bigg(\frac{2-s}{3-s}\left(v_{\text{\tiny H}}+\max\left\{\frac{1-s}{2-s}v_{\text{\tiny H}},v_{\text{\tiny L}}\right\}\right)+\frac{1}{3-s}v_{\text{\tiny H}}\bigg)\boldsymbol{1}(p_1=v_{\text{\tiny H}})\\+\left(v_{\text{\tiny L}}+\max\left\{\frac{2-s}{3-s}v_{\text{\tiny H}},v_{\text{\tiny L}}\right\}\right)\boldsymbol{1}(p_1=v_{\text{\tiny L}})\bigg\}.
\end{multline}
which yields the seller's optimal first-period price $p_1=v_{\text{\tiny L}}$ if $v_{\text{\tiny L}}/v_{\text{\tiny H}}\ge{(3-s)}/{(4-s)}$ and $p_2=v_{\text{\tiny H}}$ otherwise.

\end{proof} 

\subsection{Step 2: Analysis of the buyers' social interactions in Stage I}

First, we analyze buyers' coupled social behaviors in Stage I, based on the seller's optimal pricing derived in Step 1. Then, we identify the conditions for which any equilibrium could be sustained.
\begin{figure*}[h]
\centering
\subfigure[Cases in Lemma \ref{rho01_lemma1}]{
\begin{minipage}[t]{0.31\linewidth}
\centering
\footnotesize
\begin{tikzpicture}[scale=2, domain=0:4]
	\draw[-latex] (0,0) -- (1.2,0) node[right] {$s$};
	\draw[-latex] (0,0) -- (0,1.2) node[right] {${v_{\text{\tiny L}}}/{v_{\text{\tiny H}}}$};
	\draw[color=black, very thick] plot[domain=0:1,samples=200] (\x,{2*\x/(1+2*\x)});
 	\draw[color=black] (0,1) -- (1,1);
 	\draw[color=black] (1,0) -- (1,1);
 	\draw[color=black,dotted] (0,2/3) -- (1,2/3);
  	\node (O) at (0,0) [below left] {$O$};
  	\node (a) at (1,0) [below] {$1$};
  	\node (e) at (0,1) [left] {$1$};
  	\node (b) at (0,2/3) [left] {$2/3$};
  	\node (c) at (1/3,3/4) {$A$};
  	\node (d) at (2/3,1/3) {$B$};
  	\draw[color=black,thick] (0,1) -- (0.05,1);
 	\draw[color=black,thick] (1,0) -- (1,0.05);
	\draw[color=black,thick] (0,2/3) -- (0.05,2/3);
	\node at (1.1,4/7) [right] {\color{black}\tiny{$\frac{v_{\text{\tiny L}}}{v_{\text{\tiny H}}}=\frac{2s}{1+2s}$}};
 	\draw[color=black,ultra thin] (2/3,4/7) -- (1.1,4/7);
\end{tikzpicture}
\end{minipage}
}
\subfigure[Cases in Lemma \ref{rho01_lemma2}]{
\begin{minipage}[t]{0.31\linewidth}
\centering
\footnotesize
\begin{tikzpicture}[scale=2, domain=0:4]
	\draw[-latex] (0,0) -- (1.2,0) node[right] {$s$};
	\draw[-latex] (0,0) -- (0,1.2) node[right] {${v_{\text{\tiny L}}}/{v_{\text{\tiny H}}}$};
	\draw[color=black,very thick] plot[domain=0:1,samples=200] (\x,{(3-\x)/(4-\x)});
	\draw[color=black,very thick] plot[domain=0:1,samples=200] (\x,{(1-\x)/(2-\x)});
 	\draw[color=black] (0,1) -- (1,1);
 	\draw[color=black] (1,0) -- (1,1);
 	\draw[color=black,dotted] (0,2/3) -- (1,2/3);
  	\node (O) at (0,0) [below left] {$O$};
  	\node (a) at (1,0) [below] {$1$};
  	\node (e) at (0,1) [left] {$1$};
  	\node (f) at (0,3/4) [left] {\tiny{$3/4$}};
  	\node (g) at (0,1/2) [left] {\tiny{$1/2$}};
  	\node (b) at (0,2/3) [left] {\tiny{$2/3$}};
  	\node (c) at (1/2,6/7) {$I$};
  	\node (d) at (2/3,1/2) {$II$};
  	\node (d) at (1/3,1/5) {$III$};
  	\draw[color=black,thick] (0,1) -- (0.05,1);
 	\draw[color=black,thick] (1,0) -- (1,0.05);
 	\draw[color=black,thick] (0,1/2) -- (0.05,1/2);
	\draw[color=black,thick] (0,2/3) -- (0.05,2/3);
	\draw[color=black,thick] (0,3/4) -- (0.05,3/4);
	\draw[color=black,ultra thin] (0.5,1/3) -- (-0.1,1/3);
 	\node at (-0.1,1/3) [left] {\tiny{$\frac{v_{\text{\tiny L}}}{v_{\text{\tiny H}}}=\frac{1-s}{2-s}$}};
    \draw[color=black,ultra thin] (0.5,5/7) -- (1.05,5/7);
    \node at (1.05,5/7) [right] {\tiny{$\frac{v_{\text{\tiny L}}}{v_{\text{\tiny H}}}=\frac{3-s}{4-s}$}};
\end{tikzpicture}
\end{minipage}
}
\subfigure[Combined Cases for Step 2]{
\begin{minipage}[t]{0.31\linewidth}
\centering
\footnotesize
\begin{tikzpicture}[scale=2, domain=0:4]
	\draw[-latex] (0,0) -- (1.2,0) node[right] {$s$};
	\draw[-latex] (0,0) -- (0,1.2) node[right] {${v_{\text{\tiny L}}}/{v_{\text{\tiny H}}}$};
	\draw[color=blue,very thick] plot[domain=0:1/3,samples=200] (\x,{2*\x/(1+2*\x)});
	\draw[color=gray!90!black,very thick] plot[domain=1/3:1,samples=200] (\x,{2*\x/(1+2*\x)});
	\draw[color=red,very thick] plot[domain=0:1,samples=200] (\x,{(3-\x)/(4-\x)});
	\draw[color=green!90!black,very thick] plot[domain=0:1/3,samples=200] (\x,{(1-\x)/(2-\x)});
	\draw[color=yellow!90!black,very thick] plot[domain=1/3:1,samples=200] (\x,{(1-\x)/(2-\x)});
 	\draw[color=black] (0,1) -- (1,1);
 	\draw[color=black] (1,0) -- (1,1);
 	\draw[color=black,dotted] (0,2/3) -- (1,2/3);
 	\draw[color=black,dotted] (0,2/5) -- (1/3,2/5);
 	\draw[color=black,dotted] (1/3,0) -- (1/3,2/5);
  	\node (O) at (0,0) [below left] {$O$};
  	\node at (1,0) [below] {$1$};
  	\node at (1/3,0) [below] {$1/3$};
  	\node at (0,1) [left] {$1$};
  	\node at (0,3/4) [left] {\tiny{$3/4$}};
  	\node at (0,1/2) [left] {\tiny{$1/2$}};
  	\node at (0,2/3) [left] {\tiny{$2/3$}};
  	\node at (0,2/5) [left] {\tiny{$2/5$}};
  	\draw[color=black,thick] (0,1) -- (0.05,1);
  	\draw[color=black,thick] (1/3,0) -- (1/3,0.05);
 	\draw[color=black,thick] (1,0) -- (1,0.05);
 	\draw[color=black,thick] (0,1/2) -- (0.05,1/2);
	\draw[color=black,thick] (0,2/3) -- (0.05,2/3);
	\draw[color=black,thick] (0,3/4) -- (0.05,3/4);
	\draw[color=black,thick] (0,2/5) -- (0.05,2/5);
  	\node at (2.35/3,0.495) {\tiny{Appendix}};
  	\node at (2.5/3,0.41) {\tiny{\ref{part3}}};
  	\node at (2.5/3,0.325) {\tiny{Case 1}};
  	\draw[color=yellow!90!black,ultra thin] (2/3,1/4) -- (1.1,1/4);
  	\node at (1.1,1/4) [right] {\tiny{Appendix \ref{part3} Case 3}};
 	\draw[color=gray!90!black,ultra thin] (2/3,4/7) -- (1.1,4/7);
 	\node at (1.1,4/7) [right] {\tiny{Appendix \ref{part3} Case 2}};
\end{tikzpicture}
\end{minipage}
\label{combined_cases}
}
\caption{Combined Cases to Discuss in Step 2 based on Lemma \ref{rho01_lemma1} and Lemma \ref{rho01_lemma2}}
\label{fig_rho01_cases}
\end{figure*}
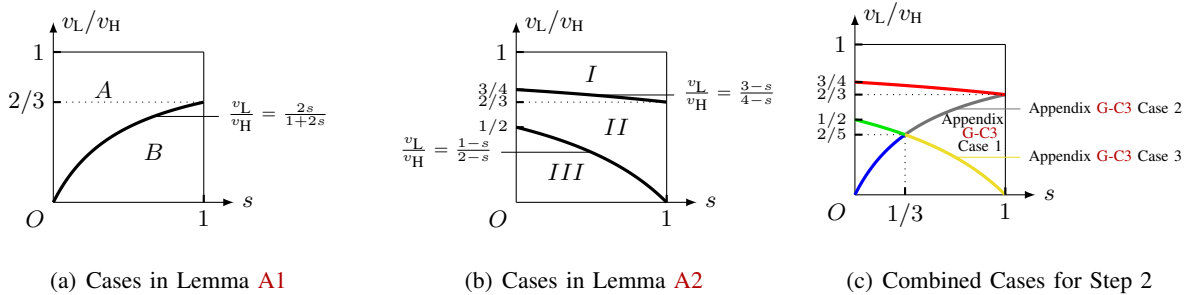

\subsection{Step 2: Analysis of Buyers' Interactions in Stage I and Characterization of PBE}\label{step2_app}

Next, we continue to analyze buyers' social behaviors in Stage I. In what follows, we combine Lemma \ref{rho01_lemma1} and Lemma \ref{rho01_lemma2} to examine all possible cases regarding the seller's pricing decisions in Stage II, as illustrated by boundaries, colored curves, and regions between curves in Fig. \ref{combined_cases}. Specifically, we first calculate high-preference buyers' purchase surplus and accordingly update the expected final payoff with their social interaction utility in Table \ref{table_same}. Next, through normal-form game-theoretic analysis, we derive high-preference buyers' equilibrium social interactions in Stage I based on the derived expected final payoff table. Finally, we examine the conditions for which the derived equilibrium could be sustained as a PBE, ensuring the consistency between buyers' social interaction decisions in Stage I and the seller's belief.

\subsubsection{First consider when $s=1$, high-preference buyers do not manipulate with $\rho=0$}\label{part1}

\textbf{Case 1:} If $2/3\le v_{\text{\tiny L}}/v_{\text{\tiny H}}<1$, then these buyers anticipate the seller's uniform pricing $p_1^*=p_2^*=v_{\text{\tiny L}}$ in Stage II. It follows that a high-preference buyer always obtains an expected purchase surplus of $\tilde{\pi}_i(v_i=v_{\text{\tiny H}})=v_{\text{\tiny H}}-v_{\text{\tiny L}}$, independent of their social interaction decisions. Now, we update high-preference buyer $i$'s expected final payoff as below, i.e., social interaction utility in Stage I plus this expected purchase surplus in Stage II.\footnote{Notice that as both high-preference buyers are symmetric, for convenience, we hereafter draw the final payoff matrix with only one buyer's (i.e., high-preference buyer $i$'s) final payoff.}

\begin{table}[h]
\caption{Expected Final Payoff Matrix for High-Preference Buyer $i$ in Case 1 of Appendix \ref{part1}}
\begin{center}
\footnotesize
\begin{tikzpicture}[thick]
\draw (-2,0) rectangle (4,1);
\draw (-2,0.5) -- (4,0.5);
\draw (1,0) -- (1,1);
\foreach \x/\xtext in {-0.5/{$1+v_{\text{\tiny H}}-v_{\text{\tiny L}}$},2.5/{$1-l+v_{\text{\tiny H}}-v_{\text{\tiny L}}$}}
\draw (\x,0.75) node {\xtext};
\foreach \x/\xtext in {-0.5/{$l+v_{\text{\tiny H}}-v_{\text{\tiny L}}$},2.5/{$v_{\text{\tiny H}}-v_{\text{\tiny L}}$}}
\draw (\x,0.25) node {\xtext};
\node at (-2,0.75) [left] {$x_{ij}=1$};
\node at (-2,0.25) [left] {$x_{ij}=0$};
\node at (-0.5,1) [above] {$x_{ji}=1$};
\node at (2.5,1) [above] {$x_{ji}=0$};
\end{tikzpicture}
\end{center}
\end{table}

Through normal-form game-theoretic analysis on the payoff matrix above, high-preference buyers' equilibrium social decisions are $x_{ij}^*(v_i=v_j=v_{\text{\tiny H}})=x_{ji}^*(v_j=v_i=v_{\text{\tiny H}})=1$. That is, such equilibrium social strategies are \textit{consistent} with the seller's belief with $s=1$. This implies that the strategy profile described above, including the seller's and buyers', can be sustained as a PBE. \textit{Thus far, we have completed the proof for Proposition \ref{region1}.}$\qed$

\textbf{Case 2:} If $0< v_{\text{\tiny L}}/v_{\text{\tiny H}}\le 2/3$, high-preference buyers anticipate the seller's personalized pricing in Stage II. i.e., the seller offers $p_1=v_{\text{\tiny H}}$ in the first selling period, and charges  (\ref{condition_1}) or (\ref{condition_2}) in the second selling period, depending on the observed buyers' common interaction frequency $\hat{x}$. Accordingly, high-preference buyers obtain an expected purchase surplus $\tilde{\pi}_i(v_i=v_{\text{\tiny H}})=0$ if $\hat{x}=1$ (by honestly choosing $x_{ij}(v_i=v_j=v_{\text{\tiny H}})=x_{ji}(v_i=v_j=v_{\text{\tiny H}})=1$), and $\tilde{\pi}_i(v_i=v_{\text{\tiny H}})=\frac{v_{\text{\tiny H}}-v_{\text{\tiny L}}}{2}$ otherwise if $\hat{x}=0$ through social manipulation. Now we update the expected final payoff matrix for high-preference buyer $i$ as below.

\begin{table}[h]
\caption{Expected Final Payoff Matrix for High-Preference Buyer $i$ in Case 2 of Appendix \ref{part1}}
\begin{center}
\footnotesize
\begin{tikzpicture}[thick]
\draw (-2,0) rectangle (4,1);
\draw (-2,0.5) -- (4,0.5);
\draw (1,0) -- (1,1);
\foreach \x/\xtext in {-0.5/{$1$},2.5/{$1-l+\frac{v_{\text{\tiny H}}-v_{\text{\tiny L}}}{2}$}}
\draw (\x,0.75) node {\xtext};
\foreach \x/\xtext in {-0.5/{$l+\frac{v_{\text{\tiny H}}-v_{\text{\tiny L}}}{2}$},2.5/{$\frac{v_{\text{\tiny H}}-v_{\text{\tiny L}}}{2}$}}
\draw (\x,0.25) node {\xtext};
\node at (-2,0.75) [left] {$x_{ij}=1$};
\node at (-2,0.25) [left] {$x_{ij}=0$};
\node at (-0.5,1) [above] {$x_{ji}=1$};
\node at (2.5,1) [above] {$x_{ji}=0$};
\end{tikzpicture}
\end{center}
\end{table}

Then, we discuss two cases as follows. (a) If $v_{\text{\tiny H}}-v_{\text{\tiny L}}>2(1-l)$, the high-preference buyers' equilibrium social decision is to manipulate with probability $\rho^*=1-2(1-l)/(v_{\text{\tiny H}}-v_{\text{\tiny L}})\in(0,1)$, implying a mixed strategy. This is \textit{inconsistent} with the seller's belief with $s=1$. Hence, the strategy profile described above \textit{cannot} be sustained as a PBE. (b) If $v_{\text{\tiny H}}-v_{\text{\tiny L}}\le 2(1-l)$, the high-preference buyers' equilibrium social decisions are $x_{ij}^*(v_i=v_j=v_{\text{\tiny H}})=x_{ji}^*(v_j=v_i=v_{\text{\tiny H}})=1$, implying the strategy profile described above can be sustained as a PBE. \textit{Thus far, we have completed the proof for Proposition \ref{Region2}.}$\qed$

\subsubsection{Then consider another extreme case when $s=0$, high-preference buyers always manipulate with $\rho=1$}\label{part2}

In this case, high-preference buyers then anticipate the seller's optimal pricing in Stage II based on Lemma \ref{rho01_lemma1} and Lemma \ref{rho01_lemma2}. Through similar analysis to Appendix \ref{part1}, it turns out that high-preference buyers' equilibrium social decisions in Stage I would be $x_{ij}^*(v_i=v_j=v_{\text{\tiny H}})=x_{ji}^*(v_j=v_i=v_{\text{\tiny H}})=1$ in this case. This implies that the strategy profile with $\rho=1$ \textit{cannot} be sustained as a PBE.

\subsubsection{Finally we come to the case when $0<s<1$, high-preference buyers manipulate with positive probability $\rho$}\label{part3}

\textbf{Case 1:} When (See Region between Gray and Yellow curves in Fig. \ref{combined_cases})
\begin{equation}\label{caseBII_condition}
    \frac{1-s}{2-s}<\frac{v_{\text{\tiny L}}}{v_{\text{\tiny H}}}<\frac{2s}{1+2s},
\end{equation}
high-preference buyers anticipate the seller's optimal pricing in Stage II as described in Case B of Lemma \ref{rho01_lemma1} and Case II of Lemma \ref{rho01_lemma2}. Now, we first update the expected final payoff matrix for high-preference buyer $i$ as below.

\begin{table}[h]
\caption{Expected Final Payoff Matrix for High-Preference Buyer $i$ in Case 1 of Appendix \ref{part3}}
\begin{center}
\footnotesize
\begin{tikzpicture}[thick]
\draw (-2,0) rectangle (4,1);
\draw (-2,0.5) -- (4,0.5);
\draw (1,0) -- (1,1);
\foreach \x/\xtext in {-0.5/{$1$},2.5/{$1-l+\frac{v_{\text{\tiny H}}-v_{\text{\tiny L}}}{2}$}}
\draw (\x,0.75) node {\xtext};
\foreach \x/\xtext in {-0.5/{$l+\frac{v_{\text{\tiny H}}-v_{\text{\tiny L}}}{2}$},2.5/{$\frac{v_{\text{\tiny H}}-v_{\text{\tiny L}}}{2}$}}
\draw (\x,0.25) node {\xtext};
\node at (-2,0.75) [left] {$x_{ij}=1$};
\node at (-2,0.25) [left] {$x_{ij}=0$};
\node at (-0.5,1) [above] {$x_{ji}=1$};
\node at (2.5,1) [above] {$x_{ji}=0$};
\end{tikzpicture}
\end{center}
\end{table}

Then, we derive that high-preference buyers' equilibrium social decision is to manipulate with probability
\begin{equation}\label{caseBII_manipurho}
    \rho^*=1-\frac{2(1-l)}{v_{\text{\tiny H}}-v_{\text{\tiny L}}}.
\end{equation}
To sustain the mixed PBE, we need to ensure $\rho\in(0,1)$, i.e.,
\begin{equation}\label{caseBII_1}
v_{\text{\tiny H}}-v_{\text{\tiny L}}>2(1-l).
\end{equation}
Plus, plugging (\ref{caseBII_manipurho}) into (\ref{caseBII_condition}) and simplifying yields
\begin{gather}
    v_{\text{\tiny L}}(v_{\text{\tiny H}}-v_{\text{\tiny L}})<8(1-l)^2,\label{caseBII_2}\\
    (v_{\text{\tiny H}}-v_{\text{\tiny L}})(v_{\text{\tiny H}}-2v_{\text{\tiny L}})<4(1-l)^2.\label{caseBII_3}   
\end{gather}
That is, under conditions (\ref{caseBII_1}), (\ref{caseBII_2}) and (\ref{caseBII_3}), the strategy profile described above can be sustained as a PBE. \textit{This completes our proof for Proposition \ref{Region3}.} $\qed$

\textbf{Case 2:} When (see Gray curve in Fig. \ref{combined_cases})
\begin{gather}
    \frac{v_{\text{\tiny L}}}{v_{\text{\tiny H}}}=\frac{2s}{1+2s},\label{caseABII_condition}\\
    \frac{2}{5}<\frac{v_{\text{\tiny L}}}{v_{\text{\tiny H}}}<\frac{2}{3}.\label{caseABII_1}
\end{gather}
In this case, high-preference buyers anticipate the seller's mixed pricing strategy in Stage II upon $\hat{x}=1$. Let $\beta$ denote the probability with which the seller charges a uniform price $p_1=p_2=v_{\text{\tiny L}}$, and $1-\beta$ denote the probability with which the seller charges a high first-period price and then enables personalized pricing in the second selling period as in Case B of Lemma \ref{rho01_lemma1}. Whereas upon $\hat{x}=0$, the seller charges as in Case II of Lemma \ref{rho01_lemma2}. Similarly, we now update the expected final payoff matrix for high-preference buyer $i$ as below.

\begin{table}[h]
\caption{Expected Final Payoff Matrix for High-Preference Buyer $i$ in Case 2 of Appendix \ref{part3}}
\begin{center}
\footnotesize
\begin{tikzpicture}[thick]
\draw (-2,0) rectangle (4,1);
\draw (-2,0.5) -- (4,0.5);
\draw (1,0) -- (1,1);
\foreach \x/\xtext in {-0.5/{$1+\beta(v_{\text{\tiny H}}-v_{\text{\tiny L}})$},2.5/{$1-l+\frac{v_{\text{\tiny H}}-v_{\text{\tiny L}}}{2}$}}
\draw (\x,0.75) node {\xtext};
\foreach \x/\xtext in {-0.5/{$l+\frac{v_{\text{\tiny H}}-v_{\text{\tiny L}}}{2}$},2.5/{$\frac{v_{\text{\tiny H}}-v_{\text{\tiny L}}}{2}$}}
\draw (\x,0.25) node {\xtext};
\node at (-2,0.75) [left] {$x_{ij}=1$};
\node at (-2,0.25) [left] {$x_{ij}=0$};
\node at (-0.5,1) [above] {$x_{ji}=1$};
\node at (2.5,1) [above] {$x_{ji}=0$};
\end{tikzpicture}
\end{center}
\end{table}

Accordingly, one can derive that high-preference buyers' equilibrium social decision is to manipulate with probability
\begin{equation}\label{caseABII_manipurho1}
    \rho^*=1-\frac{2(1-l)}{(1-2\beta)(v_{\text{\tiny H}}-v_{\text{\tiny L}})}; 
\end{equation}
plus, equivalently, we also have from (\ref{caseABII_condition})
\begin{equation}\label{caseABII_manipurho2}
    \rho^*=1-\sqrt{\frac{v_{\text{\tiny L}}}{2(v_{\text{\tiny H}}-v_{\text{\tiny L}})}}.
\end{equation}
Combining (\ref{caseABII_manipurho1}) and (\ref{caseABII_manipurho2}), we have for the seller's mixed strategy
\begin{equation}\label{caseABII_beta}
    \beta = \frac{1}{2}-\frac{\sqrt{2}(1-l)}{\sqrt{(v_{\text{\tiny H}}-v_{\text{\tiny L}})v_{\text{\tiny L}}}}.
\end{equation}
To sustain the PBE, we need to ensure $\rho\in(0,1)$ as well as $\beta\in(0,1)$:
\begin{equation}\label{caseABII_2}
v_{\text{\tiny L}}(v_{\text{\tiny H}}-v_{\text{\tiny L}})>8(1-l)^2.
\end{equation}
That is, under conditions (\ref{caseABII_1}) and (\ref{caseABII_2}), the strategy profile described above can be sustained as a PBE. \textit{This completes our proof for Case 1 in Proposition \ref{Region4}.} $\qed$

\textbf{Case 3:} When (see Yellow curve in Fig. \ref{combined_cases})
\begin{gather}
\frac{v_{\text{\tiny L}}}{v_{\text{\tiny H}}}=\frac{1-s}{2-s},\label{caseB23_condition}\\
0<\frac{v_{\text{\tiny L}}}{v_{\text{\tiny H}}}<\frac{2}{5}.\label{caseB23_1}
\end{gather}
In this case, high-preference buyers anticipate the seller's mixed pricing strategy in Stage II upon $\hat{x}=0$. Let $\beta$ denote the probability with which the seller charges a high first-period price and then enables personalized pricing in the second selling period as in Case II of Lemma \ref{rho01_lemma2}, and $1-\beta$ denote the probability with which the seller charges a uniform price $p_1=p_2=v_{\text{\tiny H}}$,. Whereas upon $\hat{x}=1$, the seller charges as in Case B of Lemma \ref{rho01_lemma1}. Similarly, we now update the expected final payoff matrix for high-preference buyer $i$ as below.

\begin{table}[h]
\caption{Expected Final Payoff Matrix for High-Preference Buyer $i$ in Case 3 of Appendix \ref{part3}}
\begin{center}
\footnotesize
\begin{tikzpicture}[thick]
\draw (-2,0) rectangle (4,1);
\draw (-2,0.5) -- (4,0.5);
\draw (1,0) -- (1,1);
\foreach \x/\xtext in {-0.5/{$1$},2.5/{$1-l+\beta\frac{v_{\text{\tiny H}}-v_{\text{\tiny L}}}{2}$}}
\draw (\x,0.75) node {\xtext};
\foreach \x/\xtext in {-0.5/{$l+\beta\frac{v_{\text{\tiny H}}-v_{\text{\tiny L}}}{2}$},2.5/{$\beta\frac{v_{\text{\tiny H}}-v_{\text{\tiny L}}}{2}$}}
\draw (\x,0.25) node {\xtext};
\node at (-2,0.75) [left] {$x_{ij}=1$};
\node at (-2,0.25) [left] {$x_{ij}=0$};
\node at (-0.5,1) [above] {$x_{ji}=1$};
\node at (2.5,1) [above] {$x_{ji}=0$};
\end{tikzpicture}
\end{center}
\end{table}

We can derive that high-preference buyers' equilibrium social decision is to manipulate with probability
\begin{equation}\label{caseB23_manipurho1}
    \rho^*=1-\frac{2(1-l)}{\beta(v_{\text{\tiny H}}-v_{\text{\tiny L}})};
\end{equation}
plus, equivalently, we also have from (\ref{caseB23_condition})
\begin{equation}\label{caseB23_manipurho2}
\rho^*=1-\sqrt{\frac{v_{\text{\tiny H}}-2v_{\text{\tiny L}}}{v_{\text{\tiny H}}-v_{\text{\tiny L}}}}.
\end{equation}
Combining (\ref{caseB23_manipurho1}) and (\ref{caseB23_manipurho2}), we have for the seller's mixed strategy
\begin{equation}\label{caseB23_beta}
    \beta=\frac{2(1-l)}{\sqrt{(v_{\text{\tiny H}}-v_{\text{\tiny L}})(v_{\text{\tiny H}}-2v_{\text{\tiny L}})}}.
\end{equation}
To sustain the PBE, we need to ensure $\rho\in(0,1)$ as well as $\beta\in(0,1)$:
\begin{equation}\label{caseB23_2}
(v_{\text{\tiny H}}-v_{\text{\tiny L}})(v_{\text{\tiny H}}-2v_{\text{\tiny L}})>4(1-l)^2.
\end{equation}
That is, under conditions (\ref{caseB23_1}) and (\ref{caseB23_2}), the strategy profile described above can be sustained as a PBE. \textit{This completes our proof for Case 2 in Proposition \ref{Region4}.} $\qed$

\textbf{Remaining Cases:} For the remaining cases in Fig. \ref{combined_cases}, one can verify that the PBE with $0<\rho<1$ \textit{cannot} be sustained following similar arguments. Specifically, it turns out that high-preference buyers' equilibrium social decisions in Stage I would be $x_{ij}^*(v_i=v_j=v_{\text{\tiny H}})=x_{ji}^*(v_j=v_i=v_{\text{\tiny H}})=1$ in all these cases. This is \textit{inconsistent} with the seller's belief in (\ref{rho01_postbelief1}) and (\ref{rho01_postbelief2}).

\section{Proof of Proposition \ref{dilemma_buyer}}

In this appendix, we compare the average buyer payoff in (\ref{average_buyer}) under the strategic-learning model with the undisclosed-learning benchmark. For ease of exposition, we hereafter develop our PBE proof in terms of the five $v_{\text{\tiny H}},v_{\text{\tiny L}}$ preference regions as summarized in Fig. \ref{regions_5}.

\begin{figure*}[h]
\begin{minipage}[t]{0.3\linewidth}
\centering
\vspace{0pt}
\centering
\includegraphics[width=0.8\linewidth]{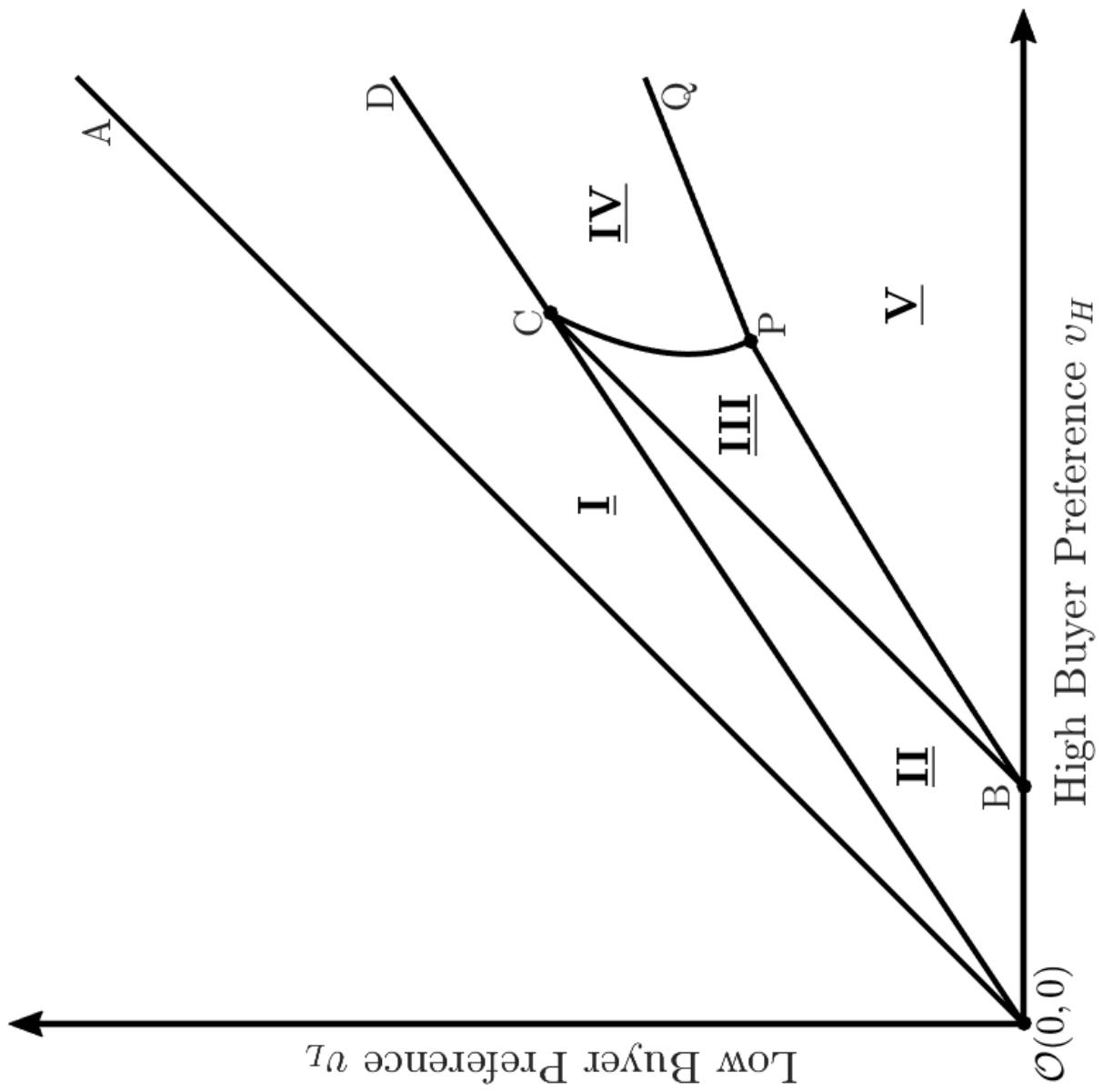}
\end{minipage}
\begin{minipage}[t]{0.3\linewidth}
\centering
\vspace{21pt}
\tiny
\begin{tabular}{cp{4cm}}
\specialrule{\heavyrulewidth}{1.2pt}{0pt}
{\makecell[c]{Curve}} &{\makecell[c]{Corresponding Equation in $(v_\text{\tiny H},v_\text{\tiny L})$-Plane}}\\
\specialrule{\lightrulewidth}{0pt}{1.2pt}
OA &{\makecell[c]{$v_{\text{\tiny L}}=v_{\text{\tiny H}}$}}\\
OCD &{\makecell[c]{$v_{\text{\tiny L}}=2v_{\text{\tiny H}}/3$}}\\
PQ &{\makecell[c]{$v_{\text{\tiny L}}=2v_{\text{\tiny H}}/5$}}\\
BC &{\makecell[c]{$v_{\text{\tiny H}}-v_{\text{\tiny L}}=2(1-l)$}}\\
CP &{\makecell[c]{$(v_{\text{\tiny H}}-v_{\text{\tiny L}})v_{\text{\tiny L}}=8(1-l)^2$}}\\
BP &{\makecell[c]{$(v_{\text{\tiny H}}-v_{\text{\tiny L}})(v_{\text{\tiny H}}-2v_{\text{\tiny L}})=4(1-l)^2$}}\\
\specialrule{\heavyrulewidth}{1.2pt}{0pt}
\end{tabular}
\end{minipage}
\hspace{20pt}
\begin{minipage}[t]{0.3\linewidth}
\centering
\vspace{35pt}
\tiny
\begin{tabular}{p{1.2cm}<{\centering}p{2.75cm}<{\centering}}
\specialrule{\heavyrulewidth}{1.2pt}{0pt}
Regions & Key PBE Results\\
\specialrule{\lightrulewidth}{0.55pt}{0.77pt}
I  &Proposition \ref{region1}\\
II &Proposition \ref{Region2}\\
III  &Proposition \ref{Region3}\\
IV  & Case 1 of Proposition \ref{Region4}\\
V  & Case 2 of Proposition \ref{Region4}\\
\specialrule{\heavyrulewidth}{1.2pt}{0pt}
\end{tabular}
\end{minipage}
\caption{PBE in Five Regions}
\label{regions_5}
\end{figure*}

\subsection{Undisclosed-Learning and Strategic-Learning Average Buyer Payoff} 

First, we provide the expressions for the average buyer payoffs under the undisclosed-learning benchmark $\tilde{\pi}^{\rm UN}$ and that under the strategic-learning model $\tilde{\pi}^{\rm ST}$. Here, we take the expectation over various buyer preference distributions while accounting for the preference correlation between buyers: $(v_i,v_j)\in\{(v_{\text{\tiny H}},v_{\text{\tiny H}}),(v_{\text{\tiny H}},v_{\text{\tiny L}}),(v_{\text{\tiny L}},v_{\text{\tiny H}}),(v_{\text{\tiny L}},v_{\text{\tiny L}})\}$. i.e.,

\begin{multline}\label{cal_payoff}
\mathbb{E}_{v_i,v_j}\{\tilde{\pi}_i(x_{ij},x_{ji})\}
=\frac{1}{4} \tilde{\pi}_i(v_i=v_j=v_{\text{\tiny H}})+\frac{1}{4} \tilde{\pi}_i(v_i=v_{\text{\tiny H}},v_j=v_{\text{\tiny L}})\\+\frac{1}{4}\tilde{\pi}_i(v_i=v_{\text{\tiny L}},v_j=v_{\text{\tiny H}})+\frac{1}{4} \tilde{\pi}_i(v_i=v_j=v_{\text{\tiny L}}). 
\end{multline}

Based on Proposition \ref{unaware}, we first compute the average buyer payoff under the \textit{undisclosed-learning} benchmark. If $v_{\text{\tiny L}}/v_{\text{\tiny H}}\le2/3$, the seller enables personalized pricing and thus any buyer receives zero purchase surplus. That is, we only need to account for the social interaction utility:
\begin{equation}
\tilde{\pi}^{\textrm{UN}}=\frac{1}{4}\cdot 1+ \frac{1}{4}\cdot 0+\frac{1}{4}\cdot 0+\frac{1}{4}\cdot 1=\frac{1}{2}.
\end{equation}
If $v_{\text{\tiny L}}/v_{\text{\tiny H}}\le 2/3$, the seller charges a uniform price $p_1^*=p_2^*=v_{\text{\tiny L}}$ and thus only high-preference buyers receive zero purchase surplus. We then have:
\begin{equation}
\tilde{\pi}^{\textrm{UN}}=\frac{1}{4}(1+v_{\text{\tiny H}}-v_{\text{\tiny L}}) + \frac{1}{4}(0+v_{\text{\tiny H}}-v_{\text{\tiny L}})+\frac{1}{4}\cdot 0+\frac{1}{4}\cdot 1=\frac{1}{2}=\frac{v_{\text{\tiny H}}-v_{\text{\tiny L}}+1}{2}.    
\end{equation}
Therefore, under the undisclosed-learning benchmark, the equilirbium average buyer payoff is summarized as below.
\begin{subnumcases}{\tilde{\pi}^{\textrm{UN}}\triangleq\mathbb{E}_{v_i,v_j}\{\tilde{\pi}_i(x_{ij},x_{ji})\}=}
\frac{1}{2}, &\textrm{if $\frac{v_{\text{\tiny L}}}{v_{\text{\tiny H}}}\le \frac{2}{3}$,}\label{payoff_un_1}\\
\frac{v_{\text{\tiny H}}-v_{\text{\tiny L}}+1}{2}, &\textrm{if $\frac{v_{\text{\tiny L}}}{v_{\text{\tiny H}}}\ge \frac{2}{3}$},\label{payoff_un_2}
\end{subnumcases}

We now compute the average buyer payoff under the \textit{strategic-learning} model, based on Propositions \ref{region1}-\ref{Region4}. Based on (\ref{cal_payoff}), we have the following analysis. When buyer preferences $v_{\text{\tiny H}},v_{\text{\tiny L}}$ belong to Region I of Fig. \ref{regions_5}, based on Proposition \ref{region1}, we have:
\begin{equation}
\tilde{\pi}^{\textrm{ST}}=\frac{1}{4}(1+v_{\text{\tiny H}}-v_{\text{\tiny L}}) + \frac{1}{4}(0+v_{\text{\tiny H}}-v_{\text{\tiny L}})+\frac{1}{4}\cdot 0+\frac{1}{4}\cdot 1=\frac{1}{2}=\frac{v_{\text{\tiny H}}-v_{\text{\tiny L}}+1}{2}.    
\end{equation} 

When buyer preferences $v_{\text{\tiny H}},v_{\text{\tiny L}}$ belong to Region II of Fig. \ref{regions_5}, based on Proposition \ref{Region2}, the seller enables personalized pricing and thus any buyer receives zero purchase surplus. Then, we have:
\begin{equation}
\tilde{\pi}^{\textrm{ST}}=\frac{1}{4}\cdot 1+ \frac{1}{4}\cdot 0+\frac{1}{4}\cdot 0+\frac{1}{4}\cdot 1=\frac{1}{2}.
\end{equation}

When buyer preferences $v_{\text{\tiny H}},v_{\text{\tiny L}}$ belong to Region III of Fig. \ref{regions_5}, based on Proposition \ref{Region3}, we have:
\begin{equation}\label{payoff_region3}
\tilde{\pi}^{\textrm{ST}}=\frac{1}{4}\tilde{\pi}^{\textrm{ST}}_i(v_i=v_j=v_{\text{\tiny H}})+\frac{1}{4}\cdot 0+\frac{1}{4}\cdot 0+\frac{1}{4}\cdot 1.
\end{equation}
Here, the payoff of high-preference buyer $i$ when interacting with buyer $j$ of the same high preference is given by
\begin{multline}\label{payoff_region3_vh}
\tilde{\pi}^{\textrm{ST}}_i(v_i=v_j=v_{\text{\tiny H}})=(1-\rho^*)^2\cdot 1+\rho^*(1-\rho^*)\cdot(l+\frac{v_{\text{\tiny H}}-v_{\text{\tiny L}}}{2})\\+\rho^*(1-\rho^*)\cdot(1-l+\frac{v_{\text{\tiny H}}-v_{\text{\tiny L}}}{2})+(\rho^*)^2\cdot\frac{v_{\text{\tiny H}}-v_{\text{\tiny L}}}{2}.
\end{multline}
with manipulation probability $\rho^*$ in (\ref{caseBII_manipurho}). Specifically, the first term in (\ref{payoff_region3_vh}) (and (\ref{payoff_region4_vh}), (\ref{payoff_region5_vh}) below) is the buyer payoff without manipulation from either buyer, whereas the second and the third term is that with either buyer's manipulation and the fourth term is that with both buyer's manipulation. Plugging (\ref{payoff_region3_vh}) into (\ref{payoff_region3}) and simplifying yields the average buyer payoff in (\ref{payoff_3}). 

When buyer preferences $v_{\text{\tiny H}},v_{\text{\tiny L}}$ belong to Region IV of Fig. \ref{regions_5}, based on Case 1 of Proposition \ref{Region4}, we have (\ref{payoff_region3}), where the payoff of high-preference buyer $i$ when interacting with buyer $j$ of the same high preference is given by
\begin{multline}\label{payoff_region4_vh}
\tilde{\pi}^{\textrm{ST}}_i(v_i=v_j=v_{\text{\tiny H}})=(1-\rho^*)^2\cdot(1+\beta(v_{\text{\tiny H}}-v_{\text{\tiny L}}))+\rho^*(1-\rho^*)\cdot(l+\frac{v_{\text{\tiny H}}-v_{\text{\tiny L}}}{2})\\+\rho^*(1-\rho^*)\cdot(1-l+\frac{v_{\text{\tiny H}}-v_{\text{\tiny L}}}{2})+(\rho^*)^2\cdot\frac{v_{\text{\tiny H}}-v_{\text{\tiny L}}}{2}.
\end{multline}
where $\rho^*$ is the manipulation probability in (\ref{caseABII_manipurho2}) and $\beta$ is the mixed pricing parameter in (\ref{caseABII_beta}). Here, $\beta(v_{\text{\tiny H}}-v_{\text{\tiny L}})$ is the expected purchase surplus when high-preference buyers do not manipulate with $\hat{x}^*=1$. Plugging (\ref{payoff_region4_vh}) into (\ref{payoff_region3}) and simplifying yields the average buyer payoff in (\ref{payoff_4}). 

When buyer preferences $v_{\text{\tiny H}},v_{\text{\tiny L}}$ belong to Region V of Fig. \ref{regions_5}, based on Case 2 of Proposition \ref{Region4}, we have (\ref{payoff_region3}), where the payoff of high-preference buyer $i$ when interacting with buyer $j$ of the same high preference is given by
\begin{multline}\label{payoff_region5_vh}
\tilde{\pi}^{\textrm{ST}}_i(v_i=v_j=v_{\text{\tiny H}})=(1-\rho^*)^2\cdot1+\rho^*(1-\rho^*)\cdot(l+\beta\frac{v_{\text{\tiny H}}-v_{\text{\tiny L}}}{2})\\+\rho^*(1-\rho^*)\cdot(1-l+\beta\frac{v_{\text{\tiny H}}-v_{\text{\tiny L}}}{2})+(\rho^*)^2\cdot\beta\frac{v_{\text{\tiny H}}-v_{\text{\tiny L}}}{2}.
\end{multline}
where $\rho^*$ is the manipulation probability in (\ref{caseB23_manipurho2}) and $\beta$ is the mixed pricing parameter in (\ref{caseB23_beta}). Here, $\beta\frac{v_{\text{\tiny H}}-v_{\text{\tiny L}}}{2}$ is the expected purchase surplus when high-preference buyers manipulate to $\hat{x}^*=0$. Plugging (\ref{payoff_region5_vh}) into (\ref{payoff_region3}) and simplifying yields the average buyer payoff in (\ref{payoff_5}).

To summarize, under the strategic-learning model, the equilirbium average buyer payoff is summarized as below.
\begin{subnumcases}{\tilde{\pi}^{\textrm{ST}}=}
\frac{v_{\text{\tiny H}}-v_{\text{\tiny L}}+1}{2}, &\textrm{In Region I of Fig. \ref{regions_5},}\label{payoff_1}\\
\frac{1}{2}, &\textrm{In Region II of Fig. \ref{regions_5},}\label{payoff_2}\\
\frac{l(1-l)}{2(v_{\text{\tiny H}}-v_{\text{\tiny L}})}+\frac{v_{\text{\tiny H}}-v_{\text{\tiny L}}}{8}+\frac{1}{4},&\textrm{In Region III of Fig. \ref{regions_5},}\label{payoff_3}\\
\frac{l}{4}\sqrt{\frac{v_{\text{\tiny L}}}{2(v_{\text{\tiny H}}-v_{\text{\tiny L}})}}+\frac{v_{\text{\tiny H}}-v_{\text{\tiny L}}}{8}+\frac{1}{4},&\textrm{In Region IV of Fig. \ref{regions_5},}\label{payoff_4}\\
\frac{1-l}{4}\sqrt{\frac{v_{\text{\tiny H}}-v_{\text{\tiny L}}}{v_{\text{\tiny H}}-2v_{\text{\tiny L}}}}+\frac{l}{4}\sqrt{\frac{v_{\text{\tiny H}}-2v_{\text{\tiny L}}}{v_{\text{\tiny H}}-v_{\text{\tiny L}}}}+\frac{1}{4},&\textrm{In Region V of Fig. \ref{regions_5}.}\label{payoff_5} 
\end{subnumcases}

\subsection{Comparison between Average Buyer Payoff $\tilde{\pi}^{\textrm{UN}}$ and $\tilde{\pi}^{\textrm{ST}}$}

Next, we compare $\tilde{\pi}^{\textrm{ST}}$ under the strategic-learning model with $\tilde{\pi}^{\textrm{UN}}$ under the undisclosed-learning benchmark. In Regions I and II of Fig. \ref{regions_5}, we have $\tilde{\pi}^{\textrm{ST}}=\tilde{\pi}^{\textrm{UN}}$, i.e., buyers' awareness does affect their final payoff. 

In Region III of Fig. \ref{regions_5} where $2(1-l)\le v_{\text{\tiny H}}-v_{\text{\tiny L}}\le6(1-l)/\sqrt{3}$, the average buyer payoff in (\ref{payoff_3}) increases as $v_{\text{\tiny H}}-v_{\text{\tiny L}}$ increases if $l\le1/2$; otherwise, $\tilde{\pi}^{\textrm{ST}}$ first decreases and then increases with $v_{\text{\tiny H}}-v_{\text{\tiny L}}$. Also, the boundary conditions indicate that 
\begin{equation}
    \tilde{\pi}^{\textrm{ST}}\left(v_{\text{\tiny H}}-v_{\text{\tiny L}}=2(1-l)\right)=\tilde{\pi}^{\textrm{UN}}=\frac{1}{2};
\end{equation}
plus, at another boundary $v_{\text{\tiny H}}-v_{\text{\tiny L}}=6(1-l)/\sqrt{3}$, we have
\begin{equation}
    \left\{
    \begin{aligned}
    &\tilde{\pi}^{\textrm{ST}}\left(v_{\text{\tiny H}}-v_{\text{\tiny L}}=\frac{6(1-l)}{\sqrt{3}}\right)>\tilde{\pi}^{\textrm{UN}}=\frac{1}{2},&\quad\textrm{if $\frac{1}{2}<l<\frac{3-\sqrt{3}}{2}$},\\
    &\tilde{\pi}^{\textrm{ST}}\left(v_{\text{\tiny H}}-v_{\text{\tiny L}}=\frac{6(1-l)}{\sqrt{3}}\right)<\tilde{\pi}^{\textrm{UN}}=\frac{1}{2},&\quad\textrm{if $\frac{3-\sqrt{3}}{2}<l<1$}.
    \end{aligned}
    \right.
\end{equation}

Further in Region IV, the analysis trick here is to change variable by letting
\begin{equation}
\left\{
\begin{aligned}
&x=v_{\text{\tiny L}}/v_{\text{\tiny H}},\\
&y=v_{\text{\tiny H}}-v_{\text{\tiny L}};
\end{aligned}
\right.
\end{equation}
and the analysis proceeds with the main idea as follows. Now, the comparison between $\tilde{\pi}^{\textrm{ST}}$ in (\ref{payoff_4}) and $\tilde{\pi}^{\textrm{UN}}$ can be interpreted geometrically: to compare the region indicated by $\tilde{\pi}^{\textrm{ST}}-\tilde{\pi}^{\textrm{UN}}<0$ with the Region IV in this new coordinate system. What follows is to analyze these two regions, and it turns out only one boundary of each region is our focus in these new coordinates. i.e.,
\begin{equation}
\left\{
\begin{aligned}
&y=2(1-l)\sqrt{\frac{2(1-x)}{x}},&\quad\textrm{(Boundary of Region IV)}\\
&y=2-2l\sqrt{\frac{x}{2(1-x)}}.&\quad\textrm{(Boundary satisfying $\tilde{\pi}^{\textrm{ST}}=\tilde{\pi}^{\textrm{UN}}$)}
\end{aligned}
\right.    
\end{equation}

Finally in Region V, (\ref{payoff_5}) can be analyzed by changing variable with
\begin{equation}
    z=\sqrt{\frac{v_{\text{\tiny H}}-2v_{\text{\tiny L}}}{v_{\text{\tiny H}}-v_{\text{\tiny L}}}}.
\end{equation}
In this way, the first derivative of (\ref{payoff_5}) in terms of $z$ is $0$ at $z^*=\sqrt{1-l/l}$. Similar to the analysis in Region III of Fig. \ref{regions_5}, proofs can then be developed with the help of boundary conditions yet become more involved. In conclusion, the analyses above lead to Proposition \ref{dilemma_buyer} and Fig.\ref{userpayoff}.

\section{Proof of Proposition \ref{salerevenue1}}

In this appendix, we compare the seller's expected sale revenue under the strategic-learning model with the no-learning benchmark. In what follows, we first provide the expression for the seller's expected revenue $\tilde{\Pi}^{\textrm{ST}}$ under the strategic-learning model. Next, comparisons are made with the expected sale revenue $\tilde{\Pi}^{\textrm{NO}}$ in the no-learning benchmark. 

\subsection{Expected Revenue under Strategic-Learning Model}\label{strategic_revenue}

Based on Propositions \ref{region1}-\ref{Region4}, we now compute the seller's expected revenue $\tilde{\Pi}^{\textrm{ST}}$ under the \textit{strategic-learning} model. Here, we take the expectation over all possible arriving buyers in the market: $(v_1,v_2)\in\{(v_{\text{\tiny H}},v_{\text{\tiny H}}),(v_{\text{\tiny H}},v_{\text{\tiny L}}),(v_{\text{\tiny L}},v_{\text{\tiny H}}),(v_{\text{\tiny L}},v_{\text{\tiny L}})\}$. i.e.,
\begin{equation}
\tilde{\Pi}^{\textrm{ST}}=\frac{1}{4}\left(\Pi^{\textrm{ST}}(v_1=v_2=v_{\text{\tiny H}})+\Pi^{\textrm{ST}}(v_1=v_{\text{\tiny H}},v_2=v_{\text{\tiny L}})+\Pi^{\textrm{ST}}(v_1=v_{\text{\tiny L}},v_2=v_{\text{\tiny H}})+\Pi^{\textrm{ST}}(v_1=v_2=v_{\text{\tiny L}})\right).
\end{equation}

When buyer preferences $v_{\text{\tiny H}},v_{\text{\tiny L}}$ belong to Region I of Fig. \ref{regions_5}, based on Proposition \ref{region1}, the seller charges a uniform price $p_1^*=p_2^*=v_{\text{\tiny L}}$. Then, we have:
\begin{equation}
\tilde{\Pi}^{\textrm{ST}}=\tilde{\Pi}^{\textrm{ST}}(p_1^*=p_2^*=v_{\text{\tiny L}})=v_{\text{\tiny L}}+v_{\text{\tiny L}}=2v_{\text{\tiny L}}.
\end{equation}

When buyer preferences $v_{\text{\tiny H}},v_{\text{\tiny L}}$ belong to Region II of Fig. \ref{regions_5}, based on Proposition \ref{Region2}, we have:
\begin{equation}
\tilde{\Pi}^{\textrm{ST}}=\frac{1}{4}(v_{\text{\tiny H}}+v_{\text{\tiny H}})+\frac{1}{4}(v_{\text{\tiny H}}+v_{\text{\tiny L}})+\frac{1}{4}(0+v_{\text{\tiny H}})+\frac{1}{4}(0+v_{\text{\tiny L}})=v_{\text{\tiny H}}+\frac{1}{2}v_{\text{\tiny L}}.  
\end{equation}

When buyer preferences $v_{\text{\tiny H}},v_{\text{\tiny L}}$ belong to Region III of Fig. \ref{regions_5}, based on Proposition \ref{Region3}, we have:
\begin{equation}\label{revenue_region3}
\tilde{\Pi}^{\textrm{ST}}=\frac{1}{4}\Pi^{\textrm{ST}}(v_1=v_2=v_{\text{\tiny H}})+\frac{1}{4}(v_{\text{\tiny H}}+v_{\text{\tiny L}})+\frac{1}{4}(0+v_{\text{\tiny H}})+\frac{1}{4}(0+v_{\text{\tiny L}}).
\end{equation}
Here, the seller's revenue extracted from two buyers with the same high preference is given by
\begin{equation}\label{revenue_region3_vh}
\Pi^{\textrm{ST}}(v_1=v_2=v_{\text{\tiny H}})=(1-\rho^*)^2(v_{\text{\tiny H}}+v_{\text{\tiny H}})+(2\rho^*-(\rho^*)^2)(v_{\text{\tiny H}}+v_{\text{\tiny L}})
\end{equation}
with manipulation probability $\rho^*$ in (\ref{caseBII_manipurho}). Specifically, the first term in (\ref{revenue_region3_vh}) (and (\ref{revenue_region4_vh}), (\ref{revenue_region5_vh}) below) is the revenue in the absence of manipulation, whereas the second term is in the presence of manipulation. Plugging (\ref{revenue_region3_vh}) into (\ref{revenue_region3}) and simplifying yields the seller's expected revenue. 

When buyer preferences $v_{\text{\tiny H}},v_{\text{\tiny L}}$ belong to Region IV of Fig. \ref{regions_5}, based on Case 1 of Proposition \ref{Region4}, we have:
\begin{equation}\label{revenue_region4}
\tilde{\Pi}^{\textrm{ST}}=\frac{1}{4}\Pi^{\textrm{ST}}(v_1=v_2=v_{\text{\tiny H}})+\frac{1}{4}(v_{\text{\tiny H}}+v_{\text{\tiny L}})+\frac{1}{4}(0+v_{\text{\tiny H}})+\frac{1}{4}\Pi^{\textrm{ST}}(v_1=v_2=v_{\text{\tiny L}}).
\end{equation}
Here, the seller's revenue extracted from two buyers with the same high preference is given by
\begin{equation}\label{revenue_region4_vh}
\Pi^{\textrm{ST}}(v_1=v_2=v_{\text{\tiny H}})=(1-\rho^*)^2\left(\beta(v_{\text{\tiny L}}+v_{\text{\tiny L}})+(1-\beta)(v_{\text{\tiny H}}+v_{\text{\tiny H}})\right)+(2\rho^*-(\rho^*)^2)(v_{\text{\tiny H}}+v_{\text{\tiny L}}),    
\end{equation}
and the revenue extracted from two with the same low preference is given by
\begin{equation}\label{revenue_region4_vl}
\Pi^{\textrm{ST}}(v_1=v_2=v_{\text{\tiny L}})=\beta(v_{\text{\tiny L}}+v_{\text{\tiny L}})+(1-\beta)(0+v_{\text{\tiny L}}),    
\end{equation}
where $\rho^*$ is the manipulation probability in (\ref{caseABII_manipurho2}) and $\beta$ is the mixed pricing parameter in (\ref{caseABII_beta}). Specifically, the first term in (\ref{revenue_region4_vl}) (and (\ref{revenue_region5_vhl1}), (\ref{revenue_region5_vhl2}) below) is the revenue extracted with uniform pricing, whereas the second term is with personalized pricing. Plugging (\ref{revenue_region4_vh}) and (\ref{revenue_region4_vl}) into (\ref{revenue_region4}) and simplifying yields the seller's expected revenue. 

When buyer preferences $v_{\text{\tiny H}},v_{\text{\tiny L}}$ belong to Region V of Fig. \ref{regions_5}, based on Case 2 of Proposition \ref{Region4}, we have:
\begin{equation}\label{revenue_region5}
\tilde{\Pi}^{\textrm{ST}}=\frac{1}{4}\Pi^{\textrm{ST}}(v_1=v_2=v_{\text{\tiny H}})+\frac{1}{4}\Pi^{\textrm{ST}}(v_1=v_{\text{\tiny H}},v_2=v_{\text{\tiny L}})+\frac{1}{4}\Pi^{\textrm{ST}}(v_1=v_{\text{\tiny L}},v_2=v_{\text{\tiny H}})+\frac{1}{4}(0+v_{\text{\tiny L}}).
\end{equation}
Here, the seller's revenue extracted from two buyers with the same high preference is given by
\begin{equation}\label{revenue_region5_vh}
\Pi^{\textrm{ST}}(v_1=v_2=v_{\text{\tiny H}})=(1-\rho^*)^2(v_{\text{\tiny H}}+v_{\text{\tiny H}})+(2\rho^*-(\rho^*)^2)(\beta(v_{\text{\tiny H}}+v_{\text{\tiny L}})+(1-\beta)(v_{\text{\tiny H}}+v_{\text{\tiny H}})),    
\end{equation}
and the revenue extracted from two buyers with different preferences is given by
\begin{equation}\label{revenue_region5_vhl1}
\Pi^{\textrm{ST}}(v_1=v_{\text{\tiny H}},v_2=v_{\text{\tiny L}})=(1-\beta)(v_{\text{\tiny H}}+0)+\beta(v_{\text{\tiny H}}+v_{\text{\tiny L}}),    
\end{equation}
and
\begin{equation}\label{revenue_region5_vhl2}
\Pi^{\textrm{ST}}(v_1=v_{\text{\tiny L}},v_2=v_{\text{\tiny H}})=(1-\beta)(v_{\text{\tiny H}}+0)+\beta(0+v_{\text{\tiny H}}),    
\end{equation}
where $\rho^*$ is the manipulation probability in (\ref{caseB23_manipurho2}) and $\beta$ is the mixed pricing parameter in (\ref{caseB23_beta}). Plugging (\ref{revenue_region5_vh}), (\ref{revenue_region5_vhl1}) and (\ref{revenue_region5_vhl2}) into (\ref{revenue_region5}) and simplifying yields the seller's expected revenue.

In conclusion, in the strategic-learning model, the seller's expected revenue is summarized as below.
\begin{equation}\label{Revenu_Strategic_Two}
\tilde{\Pi}=\left\{
\begin{aligned}
&2v_{\text{\tiny L}}, &\textrm{In Region I of Fig. \ref{regions_5},}\\
&v_{\text{\tiny H}}+\frac{1}{2}v_{\text{\tiny L}}, &\textrm{In Region II of Fig. \ref{regions_5},}\\
&\frac{3}{4}v_{\text{\tiny H}}+\frac{3}{4}v_{\text{\tiny L}}+\frac{(1-l)^2}{v_{\text{\tiny H}}-v_{\text{\tiny L}}},&\textrm{In Region III of Fig. \ref{regions_5},}\\
&\frac{3}{4}v_{\text{\tiny H}}+\frac{7}{8}v_{\text{\tiny L}},&\textrm{In Region IV of Fig. \ref{regions_5},}\\
&v_{\text{\tiny H}}+\frac{1}{4}v_{\text{\tiny L}},&\textrm{In Region V of Fig. \ref{regions_5}.}
\end{aligned}
\right.
\end{equation}

\subsection{Comparison of the seller's expected revenue}
This appendix compares the seller's expected revenue $\tilde{\Pi}$ with $\tilde{\Pi}^{0}$ in \eqref{PI_0_No} . Particularly, the revenue gap $\tilde{\Pi}^{\rm ST}-\tilde{\Pi}^{\rm NO}$ is positive as long as $0<v_{\text{\tiny L}}<\frac{2}{3}v_{\text{\tiny H}}$, i.e., buyers have diverse product preferences. Furthermore, given any fixed $v_{\text{\tiny H}}$, the \textit{revenue gain} $\frac{\tilde{\Pi}^{\textrm{ST}}-\tilde{\Pi}^{\textrm{NO}}}{\tilde{\Pi}^{\textrm{NO}}}$ first increases with $v_{\text{\tiny L}}\in\left(0,\frac{1}{2}v_{\text{\tiny H}}\right)$, then decreases with $v_{\text{\tiny L}}\in\left[\frac{1}{2}v_{\text{\tiny H}},\frac{2}{3}v_{\text{\tiny H}}\right)$ and finally remains as $1$ with $v_{\text{\tiny L}}\in\left[\frac{2}{3}v_{\text{\tiny H}},v_{\text{\tiny H}}\right)$. Particularly, we also prove that the maximum gain is attained when $v_{\text{\tiny H}}=2v_{\text{\tiny L}}\le4(1-l)$, which can reach $25\%$.

\section{Proof of Proposition \ref{salerevenue2}}

This appendix would compare the seller's expected revenue $\tilde{\Pi}$ in \eqref{Revenu_Strategic_Two} in the strategic-learning model with $\tilde{\Pi}^1$ in \eqref{PI_0_No} in the undisclosed-learning benchmark. More specifically, the expected revenue loss from the buyer awareness is summarized in Table \ref{Table_A6_J}, which is always non-negative and its maximum $1/12\approx8.3\%$ is attained when $v_{\text{\tiny L}}/v_{\text{\tiny H}}=2/5$ and $v_{\text{\tiny H}}-v_{\text{\tiny L}}\ge6(1-l)/\sqrt{3}$ (see Lemma \ref{lemmaA3_J}). This then completes the proof for Proposition \ref{salerevenue2}.

\begin{table}[h]
\centering
\caption{Expected Revenue Loss from Buyer Awareness}
\begin{tabular}{cp{2cm}p{2cm}p{2.5cm}p{2cm}p{2cm}}
\specialrule{\heavyrulewidth}{1.2pt}{0pt}
\cellcolor[gray]{0.92}{\makecell[c]{Region \# of Fig. \ref{regions_5}}} &\cellcolor[gray]{0.92}{\makecell[c]{I}} &\cellcolor[gray]{0.92}{\makecell[c]{II}} &\cellcolor[gray]{0.92}{\makecell[c]{III}} &\cellcolor[gray]{0.92}{\makecell[c]{IV}}&\cellcolor[gray]{0.92}{\makecell[c]{V}}\\
\specialrule{\lightrulewidth}{0pt}{1.2pt}
\makecell[c]{$\frac{\tilde{\Pi}^1-\tilde{\Pi}}{\tilde{\Pi}^1}$} &\makecell[c]{$0$} &\makecell[c]{$0$} &{\makecell[c]{$\frac{(v_{\text{\tiny H}}-v_{\text{\tiny L}})^2-4(1-l)^2}{(v_{\text{\tiny H}}-v_{\text{\tiny L}})\left(4v_{\text{\tiny H}}+2v_{\text{\tiny L}}\right)}$}} &\makecell[c]{$\frac{2v_{\text{\tiny H}}-3v_{\text{\tiny L}}}{8v_{\text{\tiny H}}+4v_{\text{\tiny L}}}$}&\makecell[c]{$\frac{v_{\text{\tiny L}}}{4v_{\text{\tiny H}}+2v_{\text{\tiny L}}}$}\\
\specialrule{\heavyrulewidth}{1.2pt}{1.2pt}
\end{tabular}
\label{Table_A6_J}
\end{table}

\begin{lemma}\label{lemmaA3_J}
In Region III of Fig. \ref{regions_5}, $(\tilde{\Pi}^1-\tilde{\Pi})/\tilde{\Pi}^1$ reaches its maximum $8.3\%$ when $v_{\text{\tiny H}}=10(1-l)/\sqrt{3}$ and $v_{\text{\tiny L}}=4(1-l)/\sqrt{3}$.
\end{lemma}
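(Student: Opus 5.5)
The plan is to maximize the Region III entry of Table~\ref{Table_A6_J} directly over the closure of Region III. To do this I will reparametrize the region by the spread $y\triangleq v_{\text{\tiny H}}-v_{\text{\tiny L}}$ and the level $v_{\text{\tiny L}}$, and write $a\triangleq 1-l$ for brevity. Since $v_{\text{\tiny H}}=v_{\text{\tiny L}}+y$, the revenue loss becomes
\begin{equation*}
f(y,v_{\text{\tiny L}})=\frac{y^2-4a^2}{y\,(6v_{\text{\tiny L}}+4y)}.
\end{equation*}
The defining inequalities of Region III come from the proof of Proposition~\ref{Region3}, namely \eqref{caseBII_1}--\eqref{caseBII_3}. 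In the new variables they read
\begin{equation*}
y>2a,\qquad v_{\text{\tiny L}}\,y<8a^2,\qquad y\,(y-v_{\text{\tiny L}})<4a^2 .
\end{equation*}
The last one is just $(v_{\text{\tiny H}}-v_{\text{\tiny L}})(v_{\text{\tiny H}}-2v_{\text{\tiny L}})<4(1-l)^2$ rewritten.

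The first step is to fix $y$ and optimize over $v_{\text{\tiny L}}$. For $y>2a$ the numerator of $f$ is positive and the denominator increases in $v_{\text{\tiny L}}$, so $f$ is strictly decreasing in $v_{\text{\tiny L}}$. The constraints sandwich $v_{\text{\tiny L}}$ as
\begin{equation*}
y-\frac{4a^2}{y}<v_{\text{\tiny L}}<\frac{8a^2}{y}.
\end{equation*}
This interval is nonempty only when $y^2<12a^2$. The supremum over $v_{\text{\tiny L}}$ is therefore approached at the lower endpoint $v_{\text{\tiny L}}=y-4a^2/y$, which is the BP boundary. That endpoint is positive because $y>2a$.

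The second step is to substitute this endpoint and obtain a one-variable function on $y\in(2a,2\sqrt{3}a]$:
\begin{equation*}
g(y)=\frac{y^2-4a^2}{10y^2-24a^2}.
\end{equation*}
Differentiating, the numerator of $g'(y)$ simplifies to $2y(10y^2-24a^2)-20y(y^2-4a^2)=32a^2y>0$. So $g$ is increasing, and its maximum sits at the right end $y=2\sqrt{3}a$. At that value $v_{\text{\tiny L}}=4a/\sqrt{3}$ and $v_{\text{\tiny H}}=10a/\sqrt{3}$, which is the corner point $P$ where CP and BP meet. There $g=8a^2/96a^2=1/12\approx 8.3\%$, matching the claimed maximizer.

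The main obstacle is that Region III is defined by strict inequalities, so the value $1/12$ is a supremum attained only on the boundary point $P$. I would handle this in one of two ways. One option is to argue that the region is taken with its closure, in which case $P$ belongs to it. The other is to invoke continuity of the seller's revenue across the shared boundaries with Regions IV and V, which makes the maximum attained at $P$. As a check, the entries for Regions IV and V in Table~\ref{Table_A6_J} also equal $1/12$ at $v_{\text{\tiny L}}/v_{\text{\tiny H}}=2/5$. Finally, I would check that $P$ satisfies $v_{\text{\tiny L}}/v_{\text{\tiny H}}=2/5<2/3$, so the side constraint from Lemma~\ref{rho01_lemma1} is automatically met and does not bind.
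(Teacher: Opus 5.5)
Your proposal is correct and follows the paper's proof essentially step for step. You use the same change of variables to the spread $v_{\text{\tiny H}}-v_{\text{\tiny L}}$ and the level $v_{\text{\tiny L}}$, the same monotone decrease in $v_{\text{\tiny L}}$ that puts the optimum on the BP boundary $v_{\text{\tiny L}}=y-4(1-l)^2/y$, and the same monotone increase in the spread up to the corner $P$; your explicit derivative and your remark on the strict inequalities (the paper simply works on the closed intervals) are harmless refinements.
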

\begin{proof}
Let $x=v_{\text{\tiny H}}-v_{\text{\tiny L}}$ and $y=v_{\text{\tiny L}}$. Then, we substitute the changed variables to write
\begin{equation}\label{region3_ratioloss}
    \frac{\tilde{\Pi}^1-\tilde{\Pi}}{\tilde{\Pi}^1}=\frac{x^2-4(1-l)^2}{4x^2+6xy}, \text{ where } x\in\left[2(1-l),\frac{6(1-l)}{\sqrt{3}}\right] \text{ and } y\in\left[x-\frac{4(1-l)^2}{x},\frac{8(1-l)^2}{x}\right],
\end{equation}
which is always positive and decreases with $y$ for any fixed x. Then \eqref{region3_ratioloss} reaches its maximum at $y=x-4(1-l)^2/x$ for any fixed $x$. Substituting back leads to the fact that \eqref{region3_ratioloss} increases with $x$. This then completes the proof.
\end{proof}

\section{Proof of Proposition \ref{unaware_general}}

\subsection{Analysis of SPE}
This appendix will prove the SPE characterization and its uniqueness for the undisclosed-learning benchmark. Notice that Lemma \ref{pricebinary}, which restricts our attention to the seller's binary pricing choices in Stage II, remains valid here. Moreover, the buyers will never manipulate in Stage I and behave the same as in the no-learning benchmark (See Appendix \ref{Appendix:stepI_lemma2}).

First, we establish key properties of the seller's equilibrium pricing with the following lemma. Notice that we will focus on the information set where the seller still does not know any remaining buyer's individual preference; otherwise, only one trivial equilibrium exists in the corresponding subgame, where the seller directly enables personalized pricing for each arriving buyer in the subsequent periods of the corresponding subgame.

\begin{lemma}
In the SPE of the undisclosed-learning benchmark, suppose the game has reached a given information set in the selling period $t\neq N$ of Stage II, where the seller still does not know any remaining buyer's individual preference. We have the following properties for the seller's equilibrium pricing in the corresponding subgame: (1) if $p_t=v_{\text{\tiny H}}$, then $p_s=v_s$ for each subsequent period $s> t$; (2) if $p_t=v_{\text{\tiny L}}$, then $p_s=v_{\text{\tiny L}}$ for each subsequent period $s> t$.
\end{lemma}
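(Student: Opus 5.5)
The plan is to exploit the structure that truthful Stage~I behavior imposes on the seller's information. In the undisclosed-learning benchmark, buyers behave as in Lemma~\ref{lemma:nolearning}. Hence every observed common interaction frequency $\hat{x}_{ij}$ reveals exactly whether $v_i=v_j$. Since the $N$ buyers form a connected network, the seller knows the full pattern of equal and unequal preferences, so the whole preference profile is pinned down up to one global ``flip.'' Equivalently, learning any single buyer's preference reveals everyone's.

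I would first characterize the information sets in the statement. If the seller still knows no remaining buyer's preference at period $t$, then no earlier purchase decision can have been informative, since otherwise all preferences would be known. By Lemma~\ref{pricebinary} every earlier price is $v_{\text{\tiny H}}$ or $v_{\text{\tiny L}}$, and a $v_{\text{\tiny H}}$ price is informative, so all earlier prices were $v_{\text{\tiny L}}$. The posterior over the two complementary profiles is therefore still the prior, $1/2$ each. Consequently each remaining buyer is high with probability $1/2$, and the subgame depends only on the number $k=N-t+1\ge 2$ of remaining periods.

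\emph{Part (1).} If $p_t=v_{\text{\tiny H}}$, then $a_t$ reveals $v_t$, and through the known correlation pattern it reveals every $v_s$ with $s>t$. From then on there is nothing left to learn, so the remaining periods decouple. By Lemma~\ref{pricebinary}, the revenue-maximizing price in each period $s>t$ is $p_s=v_s$.

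\emph{Part (2).} I would use a value recursion together with an exchange argument. Let $V(k)$ be the seller's optimal expected revenue with $k$ remaining periods at an uninformed information set. Set $L(k)=\tfrac{v_{\text{\tiny H}}}{2}+(k-1)\tfrac{v_{\text{\tiny H}}+v_{\text{\tiny L}}}{2}$, the value of learning now. Part (1) then gives
\begin{equation*}
V(k)=\max\Bigl\{\,v_{\text{\tiny L}}+V(k-1),\; L(k)\Bigr\},\qquad V(1)=\max\{v_{\text{\tiny L}},v_{\text{\tiny H}}/2\}.
\end{equation*}
The key comparison is between learning now and learning one period later:
\begin{equation*}
L(k)-\bigl(v_{\text{\tiny L}}+L(k-1)\bigr)=\tfrac{v_{\text{\tiny H}}+v_{\text{\tiny L}}}{2}-v_{\text{\tiny L}}=\tfrac{v_{\text{\tiny H}}-v_{\text{\tiny L}}}{2}>0.
\end{equation*}
By induction, any continuation that charges $v_{\text{\tiny L}}$ for several periods and then switches to $v_{\text{\tiny H}}$ is strictly worse than switching immediately. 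Hence $V(k)=\max\{k\,v_{\text{\tiny L}},\,L(k)\}$.

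Now suppose $p_t=v_{\text{\tiny L}}$ is optimal at the information set, and that the continuation from $t+1$ charged $v_{\text{\tiny H}}$ at some later period $s\le N$. By the exchange inequality, the revenue of that path is strictly below $L(k)$, which contradicts the optimality of $p_t=v_{\text{\tiny L}}$. Therefore the continuation must be uniform $v_{\text{\tiny L}}$ throughout, which is claim (2).

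The main obstacle I expect is the final period $N$ and tie cases. At $s=N$, the one-shot choice $\max\{v_{\text{\tiny L}},v_{\text{\tiny H}}/2\}$ could in principle prefer $v_{\text{\tiny H}}$ without any learning motive. I would handle this by checking that the exchange bound still applies: the path $(v_{\text{\tiny L}},\dots,v_{\text{\tiny L}},v_{\text{\tiny H}})$ earns $(k-1)v_{\text{\tiny L}}+v_{\text{\tiny H}}/2$, and this is strictly less than $L(k)$ for $k\ge2$. So a strictly optimal $p_t=v_{\text{\tiny L}}$ rules it out as well. Ties would be broken consistently with the thresholds of Proposition~\ref{unaware_general}.
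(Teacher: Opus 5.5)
Your proposal is correct and follows essentially the same route as the paper. Part (1) uses full revelation through the known correlation pattern. Part (2) uses the exchange inequality $L(k)-\bigl(v_{\text{\tiny L}}+L(k-1)\bigr)=(v_{\text{\tiny H}}-v_{\text{\tiny L}})/2>0$, which is exactly the paper's comparison of switching to $v_{\text{\tiny H}}$ at period $t$ versus $t+1$, chained by induction; your value recursion just packages this. Your closing worry about ties is unnecessary: the strict inequality against $L(k)$ already contradicts sequential rationality of $p_t=v_{\text{\tiny L}}$ without any tie-breaking.
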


\begin{proof}
(i) For Property (1), through $p_t=v_{\text{\tiny H}}$, the seller learns the $t$-th-arriving buyer's preference from his purchase decision $a_t$. Since all buyers never manipulate in Stage I, the seller then perfectly infers the remaining unknown buyers' preferences. As such, the seller will offer a personalized price exactly equal to each subsequent arriving buyer's preference. This then completes the proof for Property (1). 
(ii) For Property (2), since $p_t=v_{\text{\tiny L}}$, the seller still does not know any remaining buyer's individual preference at the beginning of the selling period $t+1$. Then, one candidate pricing scheme in the corresponding subgame is $p_{t+1}=v_{\text{\tiny H}}$ and $p_{s}=v_s$ for each $s>t+1$, which is developed based on Property (1) and generates an expected revenue of $(N-t)v_{\text{\tiny H}}/2+(N-t-1)v_{\text{\tiny L}}/2$ in the selling periods from $t+1$ to $N$. It follows that an expected revenue of $(N-t)v_{\text{\tiny H}}/2+(N-t+1)v_{\text{\tiny L}}/2$ is attained during the selling periods from $t$ to $N$. However, this candidate pricing scheme \mbox{$\{p_t=v_{\text{\tiny L}};p_{t+1}=v_{\text{\tiny H}}; p_{s}=v_s, \forall s>t+1\}$} is strictly dominated by another pricing scheme \mbox{$\{p_t=v_{\text{\tiny H}}; p_{s}=v_s,\forall s>t\}$}, which generates a strictly higher expected revenue of \mbox{$(N-t+1)v_{\text{\tiny H}}/2+(N-t)v_{\text{\tiny L}}/2$} in the selling periods from $t$ to $N$. Therefore, for the selling period $t+1$, the seller's equilibrium pricing strategy must be $p_{t+1}=v_{\text{\tiny L}}$. By induction, we then complete the proof for Property (2).
\end{proof}

Now, we can focus on the very first selling period to characterize the seller's equilibrium pricing. More specifically, the seller is indifferent between two candidate pricing schemes, $\{p_1=v_{\text{\tiny H}};p_t=v_t,\forall t>2\}$ and $\{p_t=v_{\text{\tiny L}},\forall t\}$. This first-period indifference condition then suggests ${v_{\text{\tiny L}}}/v_{\text{\tiny H}}=N/(N+1)$.

\subsection{Comparison of the seller's expected revenue}
This appendix first calculates the seller's expected revenue $\tilde{\Pi}^0$ in the no-learning benchmark (see the SPE characterization in Lemma \ref{lemma:nolearning}) and $\tilde{\Pi}^1$ in the undisclosed-learning benchmark:
\begin{equation}\label{general_unaware_revenue1}
\tilde{\Pi}^0=\left\{
\begin{aligned}
&Nv_{\text{\tiny H}}/2, &\quad\textrm{if $\ v_{\text{\tiny L}}/v_{\text{\tiny H}}< 1/2$,}\\
&Nv_{\text{\tiny L}}, &\quad\textrm{otherwise;}
\end{aligned}
\right.
\end{equation}
and
\begin{equation}\label{general_unaware_revenue2}
\tilde{\Pi}^1=\left\{
\begin{aligned}
&Nv_{\text{\tiny H}}/2+(N-1)v_{\text{\tiny L}}/2, &\quad\textrm{if $\ {v_{\text{\tiny L}}}/v_{\text{\tiny H}}< N/(N+1)$,}\\
&N{v_{\text{\tiny L}}}, &\quad\textrm{otherwise.}
\end{aligned}
\right.
\end{equation}
Now we compare $\tilde{\Pi}^0$ with $\tilde{\Pi}^1$, and the expected revenue gain from the seller's undisclosed learning practice is given by
\begin{equation}
\frac{\tilde{\Pi}^1-\tilde{\Pi}^0}{\tilde{\Pi}^0}=\left\{
\begin{aligned}
&0, &\quad\textrm{if $\ {v_{\text{\tiny L}}}/v_{\text{\tiny H}}\ge N/(N+1)$,}\\
&v_{\text{\tiny H}}/2v_{\text{\tiny L}}-(N+1)/2N, &\quad\textrm{if $\ 1/2\le{v_{\text{\tiny L}}}/v_{\text{\tiny H}}<N/(N+1)$,}\\
&(N-1)v_{\text{\tiny L}}/Nv_{\text{\tiny H}}, &\quad\textrm{otherwise;}
\end{aligned}
\right.
\end{equation}
which is always non-negative; its maximum $(N-1)/2N$ is attained when ${v_{\text{\tiny L}}}/v_{\text{\tiny H}}=1/2$. This then completes the proof.

\section{Proof of Proposition \ref{unaware_general_known}}
The analysis rationale is demonstrated in the paragraph after Proposition \ref{unaware_general_known} in Section \ref{subsec:multi_buyer_benchmark}. It immediately follows that the seller's expected revenue extracted from the unknown buyers is $N(v_{\text{\tiny L}}+v_{\text{\tiny H}})/2$ in the presence of known buyers. Recall that the expected revenue counterpart in the undisclosed-learning benchmark without known buyers is given in \eqref{general_unaware_revenue1} and \ref{general_unaware_revenue2}; we have that the revenue improvement ratio from prior knowledge is always positive and reaches its maximum $1/2N$ when $v_{\text{\tiny L}}/v_{\text{\tiny H}}=N/(N+1)$.

\section{Proof of Proposition \ref{multiple:buyer_manipulation}}
The analyses in Appendix \ref{proof_lemma_buyer} and Appendix \ref{proof_binary} remain independent of the number of interconnected buyers and the amount of correlation information involved. Therefore, Lemmas \ref{lemma1_buyer} and \ref{pricebinary} continue to hold in a general social network of multiple interconnected unknown buyers, and Proposition \ref{multiple:buyer_manipulation} then follows.

\section{Proof of Lemma \ref{Lemma:price_prior_knowledge_VL_three}}
We first analyze the users' equilibrium manipulation decisions in Stage I. Specifically, similar to the analysis for Lemma~\ref{lemma1_buyer} (see Appendix \ref{proof_lemma_buyer}), low-preference buyers $i$ and $j$ in Fig. \ref{Ring_L_a} and Fig. \ref{Ring_L_b} never manipulate. Furthermore, since buyer $k$ has no incentive to manipulate, the common interaction frequency signal in \eqref{hat} is always $\hat{x}_{ik}=\hat{x}_{jk}=0$ for high-preference buyers $i$ and $j$ in Fig. \ref{Ring_L_b} and Fig. \ref{Ring_L_c}. As such, these high-preference buyers $i$ and $j$ cannot manipulate their preference correlation with buyer $k$ unilaterally, and thus, they also do not manipulate their social interactions with buyer $k$. The seller then perfectly infers the preferences of buyers $i$ and $j$ through their low-preference friend $k$. Therefore, buyers $i$ and $j$ also have no incentive to manipulate their social interactions with each other. Following the above discussions, we immediately arrive at the seller's equilibrium pricing in Stage II. More specifically, after perfectly learning all buyers' preferences from Stage I, the seller sets a personalized price equal to the arriving buyer's preference in each selling period of Stage II, i.e., $p_t^*=v_t$ for each $t\in\{1,2\}$. Then, the seller always extracts the maximum surplus from buyers $i$ and $j$ in the equilibrium, with the corresponding expected revenue given by
\begin{equation}\label{Equ:Revenue:prior_knowledge_VL}
    \tilde{\Pi}_{\{i,j\}}=\frac{1}{4}(v_{\text{\tiny L}}+v_{\text{\tiny L}})+\frac{1}{2}(v_{\text{\tiny L}}+v_{\text{\tiny H}})+\frac{1}{4}(v_{\text{\tiny H}}+v_{\text{\tiny H}})=v_{\text{\tiny L}}+v_{\text{\tiny H}}.
\end{equation}
By now, we have completely outlined the PBE and established its uniqueness. Finally, we will compare the seller's expected revenue in \eqref{Equ:Revenue:prior_knowledge_VL} ($\tilde{\Pi}_{\{i,j\}}$, with prior knowledge $v_k=v_{\text{\tiny L}}$) to \eqref{Revenu_Strategic_Two} ($\tilde{\Pi}$, without any prior knowledge). The prior knowledge of $v_k=v_{\text{\tiny L}}$ generates a revenue improvement ratio as summarized in Table \ref{tab:A7}, which is always positive and its maximum $3/11\approx27.3\%$ is attained when $v_{\text{\tiny L}}/v_{\text{\tiny H}}=2/5$ and $v_{\text{\tiny H}}\ge 10(1-l)/\sqrt{3}$. This then completes the proof for Lemma \ref{Lemma:price_prior_knowledge_VL_three}.

\begin{table}[h]
\centering
\caption{Revenue Improvement Ratio Generated from The prior knowledge of $v_k=v_{\text{\tiny L}}$}
\begin{tabular}{cp{2cm}p{2cm}p{2.5cm}p{2cm}p{2cm}}
\specialrule{\heavyrulewidth}{1.2pt}{0pt}
\cellcolor[gray]{0.92}{\makecell[c]{Region \# of Fig. \ref{regions_5}}} &\cellcolor[gray]{0.92}{\makecell[c]{I}} &\cellcolor[gray]{0.92}{\makecell[c]{II}} &\cellcolor[gray]{0.92}{\makecell[c]{III}} &\cellcolor[gray]{0.92}{\makecell[c]{IV}}&\cellcolor[gray]{0.92}{\makecell[c]{V}}\\
\specialrule{\lightrulewidth}{0pt}{1.2pt}
\makecell[c]{$\frac{\tilde{\Pi}_{\{i,j\}}-\tilde{\Pi}}{\tilde{\Pi}}$} &\makecell[c]{$\frac{v_{\text{\tiny H}}-v_{\text{\tiny L}}}{2v_{\text{\tiny L}}}$} &\makecell[c]{$\frac{v_{\text{\tiny L}}}{2v_{\text{\tiny H}}+v_{\text{\tiny L}}}$} &{\makecell[c]{$\frac{v_{\text{\tiny H}}^2-v_{\text{\tiny L}}^2-4(1-l)^2}{3(v_{\text{\tiny H}}^2-v_{\text{\tiny L}}^2)+4(1-l)^2}$}} &\makecell[c]{$\frac{2v_{\text{\tiny H}}+v_{\text{\tiny L}}}{6v_{\text{\tiny H}}+7v_{\text{\tiny L}}}$}&\makecell[c]{$\frac{3v_{\text{\tiny L}}}{4v_{\text{\tiny H}}+v_{\text{\tiny L}}}$}\\
\specialrule{\heavyrulewidth}{1.2pt}{1.2pt}
\end{tabular}
\label{tab:A7}
\end{table}

\section{Proof of Lemma \ref{High_known:no_benefit}}

\subsection{Complete Characterization of All Pure-strategy PBE}\label{High-known:Pure-PBE}

This appendix presents a formal characterization of all pure-strategy PBE for the three-buyer network instance with prior knowledge of $v_k=v_{\text{\tiny H}}$ in Section \ref{subsec:three-buyer}. Similar to the analysis for Lemma~\ref{lemma1_buyer} (see Appendix~\ref{proof_lemma_buyer}), the unknown low-preference buyers never manipulate in the equilibrium.

\begin{itemize}
    \item \textbf{Case I} with $v_{\text{\tiny H}}-v_{\text{\tiny L}}\ge 2(1-l)$ and $v_{\text{\tiny L}}/v_{\text{\tiny H}}\ge 1/3$. In Stage I, if $v_i\neq v_j=v_{\text{\tiny H}}$, $x^*_{jk}=1$; if $v_i=v_j=v_{\text{\tiny H}}$, $x^*_{ij}=x^*_{ik}=1$ and $x^*_{ji}=x^*_{jk}=0$. In Stage II, $p_1^*=p_2^*=v_{\text{\tiny L}}$ if $\{\hat{x}^*_{ij},\hat{x}^*_{ik},\hat{x}^*_{jk}\}=\{1,0,0\}$; $p_i^*=v_{\text{\tiny L}}$ if $\{\hat{x}^*_{ij},\hat{x}^*_{ik},\hat{x}^*_{jk}\}=\{0,0,1\}$.
    \item \textbf{Case II} with $v_{\text{\tiny H}}-v_{\text{\tiny L}}< 2(1-l)$. In Stage I, if $v_i\neq v_j=v_{\text{\tiny H}}$, $x^*_{jk}=1$; if $v_i=v_j=v_{\text{\tiny H}}$, $x^*_{ij}=x^*_{ik}=1$ and $x^*_{ji}=x^*_{jk}=1$. In Stage II, $p_1^*=p_2^*=v_{\text{\tiny L}}$ if $\{\hat{x}^*_{ij},\hat{x}^*_{ik},\hat{x}^*_{jk}\}=\{1,0,0\}$; $p_i^*=v_{\text{\tiny L}}$ if $\{\hat{x}^*_{ij},\hat{x}^*_{ik},\hat{x}^*_{jk}\}=\{0,0,1\}$; $p_1^*=p_2^*=v_{\text{\tiny H}}$ if $\{\hat{x}^*_{ij},\hat{x}^*_{ik},\hat{x}^*_{jk}\}=\{1,1,1\}$.
    \item \textbf{Case III} with $v_{\text{\tiny H}}-v_{\text{\tiny L}} >1-l$ and $v_{\text{\tiny L}}/v_{\text{\tiny H}}> 2/3$. In Stage I, if $v_i\neq v_j=v_{\text{\tiny H}}$, $x^*_{jk}=1$; if $v_i=v_j=v_{\text{\tiny H}}$, $x^*_{ij}=x^*_{ji}=1$ and $x^*_{ik}=x^*_{jk}=0$. In Stage II, $p_1^*=p_2^*=v_{\text{\tiny L}}$ if $\{\hat{x}^*_{ij},\hat{x}^*_{ik},\hat{x}^*_{jk}\}=\{1,0,0\}$; $p_i^*=v_{\text{\tiny L}}$ if $\{\hat{x}^*_{ij},\hat{x}^*_{ik},\hat{x}^*_{jk}\}=\{0,0,1\}$.
    \item \textbf{Case IV} with $v_{\text{\tiny H}}-v_{\text{\tiny L}} >1-l$ and $v_{\text{\tiny L}}/v_{\text{\tiny H}}> 2/3$. In Stage I, if $v_i\neq v_j=v_{\text{\tiny H}}$, $x^*_{jk}=0$; if $v_i=v_j=v_{\text{\tiny H}}$, $x^*_{ij}=x^*_{ik}=1$ and $x^*_{ji}=x^*_{jk}=1$. In Stage II, $p_1^*=p_2^*=v_{\text{\tiny L}}$ if $\{\hat{x}^*_{ij},\hat{x}^*_{ik},\hat{x}^*_{jk}\}=\{1,0,0\}$; $p_1^*=p_2^*=v_{\text{\tiny L}}$ if $\{\hat{x}^*_{ij},\hat{x}^*_{ik},\hat{x}^*_{jk}\}=\{0,0,0\}$.
    \item \textbf{Case V} with $v_{\text{\tiny H}}-v_{\text{\tiny L}} >1-l$ and $v_{\text{\tiny L}}/v_{\text{\tiny H}}> 2/3$. In Stage I, if $v_i\neq v_j=v_{\text{\tiny H}}$, $x^*_{jk}=0$; if $v_i=v_j=v_{\text{\tiny H}}$, $x^*_{ij}=x^*_{ji}=1$ and $x^*_{ik}=x^*_{jk}=0$. In Stage II, $p_1^*=p_2^*=v_{\text{\tiny L}}$ if $\{\hat{x}^*_{ij},\hat{x}^*_{ik},\hat{x}^*_{jk}\}=\{1,0,0\}$; $p_1^*=p_2^*=v_{\text{\tiny L}}$ if $\{\hat{x}^*_{ij},\hat{x}^*_{ik},\hat{x}^*_{jk}\}=\{0,0,0\}$.
\end{itemize}

\subsection{Proof for Lemma \ref{High_known:no_benefit}}

This proof immediately follows Case V in Appendix \ref{High-known:Pure-PBE}, where the seller's expected revenue from unknown buyers $i$ and $j$ is $2v_{\text{\tiny L}}$. This is the same as that obtained in the absence of known buyers (see \eqref{Revenu_Strategic_Two} in Region I of Fig. \ref{regions_5}).

\section{Proof of Proposition \ref{High_known:no_benefit:general}}\label{Appendix:High_known:no_benefit:general}

The proof of Proposition \ref{High_known:no_benefit:general} is constructive, which consists of two main steps. More specifically, Appendix \ref{Appendix:High_known:no_benefit:general_1} starts from constructing the equilibrium strategy profile and identifying the sufficient conditions to ensure its existence; Appendix \ref{Appendix:High_known:no_benefit:general_2} then compares the seller's expected revenue in this equilibrium to that in the absence of known high-preference buyers.

\subsection{Step I: Construction of the equilibrium strategy profile}\label{Appendix:High_known:no_benefit:general_1}

This appendix first constructs a strategy profile and then establishes the sufficient conditions for it to constitute the PBE. Particularly, our construction idea is inspired by the three-buyer network instance analysis in Lemma \ref{High_known:no_benefit} (see Fig. \ref{Ring_H_NoRevenueGain}). 

Recall that we consider an arbitrary number of known high-preference buyers connected with $N$ unknown buyers in the online social network. Now, we construct the strategy profile as follows. \textbf{(i) Buyer Strategies:} In Stage I, any unknown low-preference buyer never manipulates his social interactions with any other buyer; for each unknown high-preference buyer, he manipulates his social interaction with each connected known high-preference buyer into a low interaction frequency, while remaining honest when interacting with any unknown buyer. \textbf{(ii) Belief:} In Stage II, the seller can perfectly infer the preference correlation between any unknown buyer pair from the corresponding social interactions. However, the seller cannot learn any individual unknown buyer's preference at the beginning of Stage II. \textbf{(ii) Seller Strategies:} Moreover, in Stage II, the seller charges a low uniform price in each selling period, i.e., $p_t^*=v_{\text{\tiny L}}$ for each $t\in\{1,2,\dots,N\}$. 

Next, we will prove that this constructed strategy profile constitutes a pure-strategy PBE under the condition of \eqref{High_known:no_benefit:general_condition}. More specifically, given the constructed belief (ii) in Stage II, the seller's pricing optimization problem for the $N$ unknown buyers degenerates into that in the undisclosed-learning benchmark without any known buyer (see Proposition \ref{unaware_general}). With $v_{\text{\tiny L}}/v_{\text{\tiny H}}>N/(N+1)$, as indicated by the left-hand-side inequality in \eqref{High_known:no_benefit:general_condition}, we thus prove that the constructed strategy profile (iii) constitutes the seller's equilibrium pricing strategies (i.e., sequential rationality, see Large $v_{\text{\tiny L}}$ regime of Proposition~\ref{unaware_general}). On the flip side, since $p_t^*=v_{\text{\tiny L}}$ for each $t\in\{1,2,\dots,N\}$, an unknown high-preference buyer always receives the maximal purchase surplus $v_{\text{\tiny H}}-v_{\text{\tiny L}}$. As such, if an unknown high-preference buyer does not connect with any known buyer, unilateral deviation from the strategy profile (i) to manipulate social interactions cannot help improve the purchase surplus; thus, this unknown high-preference buyer has no incentive to manipulate. Otherwise, consider an unknown high-preference buyer $i$ connects with $K_i$ known buyers. Recall that an unknown high-preference buyer cannot manipulate his preference correlation with another unknown buyer unilaterally; hence, we will focus on his manipulation strategies with connected known high-preference buyers in what follows. To guarantee this buyer $i$ has no incentive to deviate from the strategy profile (i) where he chooses to manipulate his social interactions with each connected known buyer into a low frequency, this buyer $i$'s payoff should satisfy the following inequality:
\begin{multline*}
\tilde{\pi}_i(\text{Manipulate with each connected known buyer},\boldsymbol{x}^*_{-i})\\> \tilde{\pi}_i(\text{Honest with at least one connected known buyer},\boldsymbol{x}^*_{-i}),
\end{multline*}
which immediately yields $v_{\text{\tiny H}}-v_{\text{\tiny L}}>K_i(1-l)$, i.e., his potential purchase gain outweighs the social interaction loss from manipulation. Moreover, this holds for any such buyer if the following condition holds:
\begin{equation}
v_{\text{\tiny H}}-v_{\text{\tiny L}}>\max_{0\le K_i\le K}{K_i(1-l)}=K(1-l),
\end{equation}
which is exactly the right-hand-side inequality in \eqref{High_known:no_benefit:general_condition}.
Also, similar to the analysis for Lemma~\ref{lemma1_buyer} (see Appendix~\ref{proof_lemma_buyer}), those unknown low-preference buyers never manipulate in the equilibrium. In conclusion, the strategy profile (i) constitutes the buyers' equilibrium social interaction decisions. Lastly, given the buyers' strategy profile (i) in Stage I, the common social interaction frequency between any connected known and unknown buyer pair appears the same, i.e., $\hat{x}=0$. It then follows that the seller's updated belief in Stage II is consistent with the constructed belief (ii) (i.e., belief consistency). By now, we have shown that the constructed strategy profile (i)-(iii) is a pure-strategy PBE under \eqref{High_known:no_benefit:general_condition}.

\subsection{Step II: Comparison of the seller's expected revenue}\label{Appendix:High_known:no_benefit:general_2}

In this appendix, we first calculate the seller's expected revenue extracted from the $N$ unknown buyers in the constructed PBE in the presence of known high-preference buyers, which is $Nv_{\text{\tiny L}}$ due to the equilibrium uniform pricing. In the absence of known buyers, under the condition of \eqref{High_known:no_benefit:general_condition}, Lemma \ref{general_uniformPBE} suggests that the seller's expected revenue from those unknown buyers is again $Nv_{\text{\tiny L}}$ due to the equilibrium uniform pricing. We thus complete the proof for Proposition \ref{High_known:no_benefit:general}.

\begin{lemma}\label{general_uniformPBE}
Given $N$ interconnected unknown buyers in the online social network, if $v_{\text{\tiny L}}/v_{\text{\tiny H}}>N/(N+1)$, then there always exists a pure-strategy PBE that is the same as that in the Large $v_{\text{\tiny L}}$ regime of Proposition~\ref{unaware_general}.
\end{lemma}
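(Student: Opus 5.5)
We will prove Lemma \ref{general_uniformPBE} constructively. We exhibit a strategy profile and belief system, and check sequential rationality and belief consistency. The candidate profile is the following.
\begin{itemize}
\item[(i)] In Stage I every buyer interacts honestly, i.e., $x_{ij}^*=x_{ji}^*=1$ if $v_i=v_j$ and $x_{ij}^*=x_{ji}^*=0$ otherwise, exactly as in Proposition \ref{unaware_general}.
\item[(ii)] The seller holds beliefs derived from this honest profile by Bayes' rule. Upon any observed interaction pattern she therefore knows all pairwise correlations, but no individual preference.
\item[(iii)] The seller charges $p_t^*=v_{\text{\tiny L}}$ in every period $t\in\{1,\dots,N\}$, after every history, including off-path interaction patterns. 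Off path we assign beliefs that keep the marginal of each remaining buyer at $1/2$ high (for instance, beliefs treating the observed pattern as honest).
\end{itemize}

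The first key step is the seller's sequential rationality. Given beliefs (ii), the Stage II subgame is exactly the one analyzed in the undisclosed-learning benchmark. The argument of Proposition \ref{unaware_general} then applies: by the two properties established there, any deviation to $p_t=v_{\text{\tiny H}}$ at an information set with $m$ remaining unknown buyers yields at most $m v_{\text{\tiny H}}/2+(m-1)v_{\text{\tiny L}}/2$. This is below $m v_{\text{\tiny L}}$ whenever $v_{\text{\tiny L}}/v_{\text{\tiny H}}>m/(m+1)$, and since $m\le N$ the condition $v_{\text{\tiny L}}/v_{\text{\tiny H}}>N/(N+1)$ covers every subgame. Lemma \ref{pricebinary} restricts attention to prices in $\{v_{\text{\tiny L}},v_{\text{\tiny H}}\}$.

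The main obstacle is handling off-path interaction patterns, which could produce beliefs under which the seller strictly prefers $v_{\text{\tiny H}}$ (e.g., a pattern suggesting a high buyer). I would first try to choose off-path beliefs so that each remaining buyer's marginal probability of being high is at most $1/2$. With such beliefs the bound above still gives uniform $v_{\text{\tiny L}}$ pricing. Since PBE leaves off-path beliefs free, this choice is legitimate.

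The second step is the buyers' optimality. Because the seller prices $v_{\text{\tiny L}}$ after every history, Stage II surplus is independent of Stage I actions: it is $v_{\text{\tiny H}}-v_{\text{\tiny L}}$ for high buyers and $0$ for low buyers. Each buyer therefore maximizes only his additive social utility, and by Table \ref{table} the honest choices are dominant pairwise: $1$ with a same-type friend, $0$ with a different-type friend. Hence no profitable deviation exists. Belief consistency holds on path by construction, because the honest profile generates exactly the observed patterns and Bayes' rule gives (ii). This yields the pure-strategy PBE coinciding with the large-$v_{\text{\tiny L}}$ regime of Proposition \ref{unaware_general}.
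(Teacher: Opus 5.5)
This is essentially the paper's own constructive proof. It uses the same profile: honest interactions, Bayes beliefs that reveal every pairwise correlation but no individual preference, and a uniform price $v_{\text{\tiny L}}$. It obtains sequential rationality by reduction to the undisclosed-learning benchmark of Proposition~\ref{unaware_general}, then checks buyer deviations and on-path consistency. Two parts of your argument are more careful than the paper's:
\begin{itemize}
\item You address off-path beliefs explicitly, which the paper does not.
\item For the buyers, you observe that Stage II surplus does not depend on Stage I actions, so honest interaction is strictly dominant edge by edge under Table~\ref{table}. The paper instead asserts that a high buyer cannot manipulate unilaterally. That is not accurate for a high--high pair, where he can turn $\hat{x}=1$ into $\hat{x}=0$.
\end{itemize}

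Two small repairs are needed.

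\textbf{Seller's strategy after her own deviation.} The seller cannot charge $v_{\text{\tiny L}}$ after literally every history. At information sets reached after her own price $v_{\text{\tiny H}}$, she has learned types. There she must personalize as in Property (1) in the proof of Proposition~\ref{unaware_general}, otherwise the profile fails sequential rationality. Your deviation bound already assumes this continuation. So $v_{\text{\tiny L}}$ should be prescribed only at information sets where no individual preference is known.

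\textbf{Beliefs at unexplainable patterns.} Some patterns cannot be generated by any type profile, for example a cycle containing exactly one edge with $\hat{x}=0$. For these, ``treating the pattern as honest'' is undefined, so take the prior there instead. The bound $mv_{\text{\tiny H}}/2+(m-1)v_{\text{\tiny L}}/2$ then still holds, but not via Properties (1)--(2). Nothing is learned from earlier $v_{\text{\tiny L}}$ sales, so the first $v_{\text{\tiny H}}$ price earns at most $v_{\text{\tiny H}}/2$. Every later buyer yields at most $\mathbb{E}[v_s]\le(v_{\text{\tiny H}}+v_{\text{\tiny L}})/2$.
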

\begin{proof}
Recall the Large $v_{\text{\tiny L}}$ regime of Proposition~\ref{unaware_general}; we first outline the equilibrium strategy profile in Lemma \ref{general_uniformPBE} as follows. \textbf{(i) Strategies:} Buyers never manipulate while the seller charges a low uniform price in each selling period. \textbf{(ii) Beliefs:} In Stage II, the seller perfectly infers the preference correlation between any unknown buyer pair from the corresponding social interactions. However, the seller cannot learn any individual buyer's preference at the beginning of Stage II. 

Next, we will prove that this strategy profile constitutes a pure-strategy PBE here. Specifically, given the belief (ii) in Stage II, the seller's pricing problem for the unknown buyers degenerates into that in the undisclosed-learning benchmark without any known buyer (see Proposition \ref{unaware_general}). Given $v_{\text{\tiny L}}/v_{\text{\tiny H}}>N/(N+1)$, we therefore prove that the strategy profile (i) constitutes the seller's equilibrium pricing strategies (i.e., sequential rationality). On the other hand, in Stage I, an unknown high-preference buyer cannot manipulate his preference correlation with another buyer unilaterally; thus, the unknown high-preference buyer has no incentive to manipulate. Also, similar to the analysis for Lemma~\ref{lemma1_buyer} (see Appendix~\ref{proof_lemma_buyer}), the unknown low-preference buyers never manipulate in the equilibrium. In conclusion, the strategy profile (i) constitutes the buyers' equilibrium social interaction decisions. Lastly, given the buyers' strategy profile (i) in Stage I, it follows that the seller's updated belief in Stage II is consistent with the belief (ii) (i.e., belief consistency). This then completes the proof.
\end{proof}

\section{Extension to Buyers' Continuous Preferences}\label{Appendix:continuous}

This appendix extends to explore the buyers' continuous preferences to demonstrate the robustness of our major results.

\subsection{Extension to Buyers' Continuous Preferences}\label{Appendix:continuous:contents}

In the main text, we have assumed buyers' discrete binary preferences $v_{\text{\tiny L}}$ and $v_{\text{\tiny H}}$ (see Section \ref{model_stageI}). In this section, we extend to a more general case of two-class \textit{continuous preferences}, to show the robustness of our key insights. Consider that each buyer $i$'s preference $v_i$ is uniformly distributed in the range $[0,\bar{v}]$. To model any two buyers' social interaction utilities according to their preference correlation in Table \ref{table}, we categorize buyers into two preference classes: \textit{low-end class} with their preferences in the range $[0, \bar{v}/2]$, and \textit{high-end class} in $[\bar{v}/2, \bar{v}]$. According to the \textit{homophily} theory, if the two buyers belong to the same class, their frequent interactions lead to positive social utilities in Table \ref{table_same}; otherwise, the interactions lead to negative utilities in Table \ref{table_differ}. 

Similar to our analysis in Sections \ref{PBE1} and \ref{PBE2}, we can also characterize the PBE here. However, one new difficulty is that the seller's equilibrium pricing decision is no longer binary in Lemma \ref{pricebinary}. Furthermore, the seller cannot directly infer the first-arriving buyer's preference class from his purchase decision. For example, even if the first-arriving buyer purchases under a price $p_1<\bar{v}/2$, he may not belong to the high-end preference class. In fact, he could still be in the low-end class with probability $\frac{\bar{v}-2p_1}{2(\bar{v}-p_1)}$. As such, the belief update of buyers' preferences after the first selling period contains some uncertainty, and we need to optimize the seller's expected revenue with such a \textit{noisy belief update}. It turns out that Lemma \ref{lemma1_buyer}'s result still holds here: \textit{no manipulation happens between two buyers in the same low-end class or in different preference classes.} As discussed in the following proposition, we only observe possible social manipulation between two high-end buyers.

\begin{proposition}\label{continuous_proposition}
Under the continuous-preference model, there exists a PBE as follows, depending on the maximum preference $\bar{v}$.
\begin{itemize}
    \item \emph{\textbf{Case 1} with low $\bar{v}\in\left(0,8(1-l)\right]$:} \emph{In Stage I}, two buyers ($i$ and $j$) in the high-end preference class do not manipulate their social interactions. \emph{In Stage II}, the seller charges $p_1^*=\frac{\bar{v}}{2}$ in the first selling period, and personalizes the second-period price based on the observed buyers' common interaction frequency $\hat{x}^*=\min\{x_{ij}^*,x_{ji}^*\}$ and the first-period purchase record $a_1^*$:
    \begin{equation}
    p_2^*(\hat{x}^*,a_1^*)=\left\{
    \begin{array}{ll}
    \frac{\bar{v}}{2}\boldsymbol{1}(a_1^*=1)+\frac{\bar{v}}{4}\boldsymbol{1}(a_1^*=0),&\quad\textrm{if $\ \hat{x}^\ast=1$.}\\
    \frac{\bar{v}}{4}\boldsymbol{1}(a_1^*=1)+\frac{\bar{v}}{2}\boldsymbol{1}(a_1^*=0),&\quad\textrm{if $\ \hat{x}^\ast=0$.}\label{continuous_condition_1}
    \end{array}
    \right.    
    \end{equation}
    
    \item \emph{\textbf{Case 2} with medium $\bar{v}\in\left(8(1-l),\frac{64}{3\sqrt{3}}(1-l)\right]$:} \emph{In Stage I}, two buyers ($i$ and $j$) in the high-end preference class manipulate with the same probability $\rho^*=1-2\sqrt[3]{(1-l)/\bar{v}}$. \emph{In Stage II}, the seller charges $p_1^*=\frac{\bar{v}}{2}$ in the first selling period, and personalizes the second period price as follows,
    \begin{equation}
    p_2^*(\hat{x}^*,a_1^*)=\left\{
    \begin{array}{ll}
    \frac{\bar{v}}{2}\boldsymbol{1}(a_1^*=1)+\frac{\bar{v}}{4}\boldsymbol{1}(a_1^*=0),&\quad\textrm{if $\ \hat{x}^\ast=1$.}\\
    \frac{\bar{v}}{4}\left(2-4\left((1-l)/\bar{v}\right)^\frac{2}{3}\right)\boldsymbol{1}(a_1^*=1)+\frac{\bar{v}}{2}\boldsymbol{1}(a_1^*=0),&\quad\textrm{if $\ \hat{x}^\ast=0$.}\label{continuous_condition_2}
    \end{array}
    \right.    
    \end{equation}
    
   \item \emph{\textbf{Case 3} with large $\bar{v}\in\left(\frac{64}{3\sqrt{3}}(1-l),\infty\right)$:} \emph{In Stage I}, two buyers ($i$ and $j$) in the high-end preference class manipulate with the same probability $\rho^*$, which is the unique solution to $8\bar{v}{(1-\rho^*)}^3-3\bar{v}(1-\rho^*)+32l-32=0$. \emph{In Stage II}, the seller charges $p_1^*=\frac{9-8\rho^*+4(\rho^*)^2}{16}\bar{v}<\frac{\bar{v}}{2}$ when observing $\hat{x}^*=1$ and $p_1^*=\frac{\bar{v}}{2}$ when observing $\hat{x}^*=0$ in the first selling period, and personalizes the second period price as follows,
    \begin{equation}
    p_2^*(\hat{x}^*,a_1^*)=\left\{
    \begin{array}{ll}
    \frac{\bar{v}}{2}\boldsymbol{1}(a_1^*=1)+\frac{\bar{v}}{4}\boldsymbol{1}(a_1^*=0),&\quad\textrm{if $\ \hat{x}^\ast=1$.}\\
    \frac{\bar{v}}{4}\left(1+2\rho^*-(\rho^*)^2\right)\boldsymbol{1}(a_1^*=1)+\frac{\bar{v}}{2}\boldsymbol{1}(a_1^*=0),&\quad\textrm{if $\ \hat{x}^\ast=0$.}\label{continuous_condition_3}
    \end{array}
    \right.    
    \end{equation}
\end{itemize}
\label{continuous_PBE}
\end{proposition}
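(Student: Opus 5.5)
We follow the two-step scheme of Appendix~\ref{Appendix:PBE}: backward induction in Stage II under the beliefs (\ref{rho01_postbelief1})--(\ref{rho01_postbelief2}) with $s=(1-\rho)^2$, then a consistency check in Stage I. Low-end buyers and mixed-class pairs are handled as in Lemma~\ref{lemma1_buyer}. A low-end buyer facing any price receives at most the surplus he would get anyway, so manipulating never helps him. A mixed-class pair cannot unilaterally raise $\hat{x}$ to $1$. Hence only high-end pairs randomize, with a common probability $\rho$.

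\emph{Stage II for Cases 1--2.} First fix $p_1=\bar v/2$. A purchase then reveals the buyer's class exactly, which removes the noisy-update difficulty.
\begin{itemize}
\item Upon $\hat{x}=1$, both buyers share a class. The residual valuation is uniform on $[\bar v/2,\bar v]$ or on $[0,\bar v/2]$, so the optimal $p_2$ is $\bar v/2$ or $\bar v/4$ respectively.
\item Upon $\hat{x}=0$ with $a_1=0$, the second buyer is surely high-end, so $p_2=\bar v/2$.
\item Upon $\hat{x}=0$ with $a_1=1$, the second buyer is low-end w.p.\ $1/(2-s)$ and high-end w.p.\ $(1-s)/(2-s)$. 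For $p\le \bar v/2$, revenue is $p(2-s-2p/\bar v)/(2-s)$, maximized at $p_2=\bar v(2-s)/4$. One checks this beats $p=\bar v/2$ when $s>0$ small and equals it at $s=0$. This reproduces (\ref{continuous_condition_1})--(\ref{continuous_condition_3}) once we substitute $s=(1-\rho^*)^2$.
\end{itemize}
We then verify that $p_1=\bar v/2$ is optimal against all $p_1\in[0,\bar v]$. This comparison does involve noisy posteriors, since for $p_1<\bar v/2$ a purchaser is low-end w.p.\ $\frac{\bar v-2p_1}{2(\bar v-p_1)}$. It gives the upper thresholds on $\bar v$.

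\emph{Stage I.} Let $G$ be the expected purchase-surplus gain of a high-end buyer from $\hat{x}=0$ relative to $\hat{x}=1$. The payoff matrix has the same form as in Case~1 of Appendix~\ref{part3}, with $(v_{\text{\tiny H}}-v_{\text{\tiny L}})/2$ replaced by $G$. The payoff difference between $x_{ij}=1$ and $x_{ij}=0$ is then $(1-l)-(1-\rho)G$.
\begin{itemize}
\item Under the Case~1--2 pricing, $G$ arises only when the buyer arrives second after a high-end partner bought (probability $1/2$). There the price drops from $\bar v/2$ to $\bar v(2-s)/4$, so $G=\bar v s/8$.
\item If $G\le 1-l$ at $s=1$, i.e.\ $\bar v\le 8(1-l)$, honesty $\rho=0$ is consistent. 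This is Case~1.
\item Otherwise indifference forces $(1-\rho)\bar v(1-\rho)^2/8=1-l$. Hence $1-\rho^*=2\sqrt[3]{(1-l)/\bar v}$, and therefore $s=4((1-l)/\bar v)^{2/3}$. This is Case~2.
\item As in Appendix~\ref{part2}, $\rho=1$ is never consistent.
\end{itemize}

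\emph{Main obstacle: Case 3 and the threshold $\frac{64}{3\sqrt3}(1-l)$.} For large $\bar v$, $s$ becomes small. Upon $\hat{x}=1$ the seller then mostly faces a low-end pair, so she prefers a first-period price below $\bar v/2$ that serves some low-end buyers. I would first optimize $p_1$ upon $\hat{x}=1$ with the noisy belief, keeping the second period personalized as stated. The first-order condition should give $p_1^*=\frac{9-8\rho+4\rho^2}{16}\bar v$, and the switching point should coincide with $\bar v=\frac{64}{3\sqrt3}(1-l)$ where $p_1^*=\bar v/2$. Next I would recompute the honest buyer's surplus, which now includes $\bar v/2-p_1^*$-type terms, to obtain a new $G(\rho)$. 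The indifference $(1-\rho)G=1-l$ should reduce to the cubic $8\bar v(1-\rho)^3-3\bar v(1-\rho)+32(l-1)=0$. Finally I would show a unique root in $(0,1)$ by monotonicity of the left side in $1-\rho$ over the relevant range, and check sequential rationality against all other price deviations.
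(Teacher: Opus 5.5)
Your proposal is correct and follows essentially the same route as the paper's proof. Both restrict to symmetric high-end manipulation, derive Stage II pricing by backward induction with the noisy posterior upon $\hat{x}=1$ (which switches at $s=(1-\rho)^2=3/4$ to $p_1^*=\frac{4s+5}{16}\bar{v}=\frac{9-8\rho+4\rho^2}{16}\bar{v}$), and close Stage I via the indifference condition $(1-\rho)G=1-l$, with $G=\bar{v}s/8$ in Cases 1--2 and $G=\bar{v}(8s-3)/32$ in Case 3, which yields exactly the stated cubic. One small correction: upon $\hat{x}=0$, $a_1=1$, the interior price $\bar{v}(2-s)/4$ beats $\bar{v}/2$ for every $s\in(0,1]$, not just for small $s$, since the comparison reduces to $s^2>0$; that is what Cases 1--2, where $s\ge 3/4$, actually need.
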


The PBE here follows a similar structure to the PBE in Propositions \ref{region1}-\ref{Region4}. As $\bar{v}$ increases from Case 1 to Case 3, the difference between the two preference classes becomes more obvious. Then, once viewed as low-end class buyers for pricing, high-end buyers' potential purchase gain becomes larger to motivate their manipulation in Stage I. 

Besides non-binary pricing, we also highlight some new insights of this continuous model as follows. Specifically, in Cases 1 and 2 of Proposition \ref{continuous_PBE}, the seller first charges $p_1^*=\frac{\bar{v}}{2}$ to perfectly determine the first-arriving buyer's preference class through his purchase decision. Then, the seller learns the latter-arriving buyer's preference class by observing $\hat{x}^*$, and tailors the second-period price $p_2^*$ accordingly. However, in Case 3, the seller finds it costly to just learn the first-arriving buyer's preference class at $p_1^*=\frac{\bar{v}}{2}$, and may thus deviate to $p_1^*<\frac{\bar{v}}{2}$ to meet more demands in low-end class. Note that even if the seller knows a buyer's preference class, she does not know the \textit{precise} preference point to exercise perfect personalized pricing. 


\begin{proposition}\label{continuous_buyer}
Under the continuous-preference model, when aware of the seller's learning, the average buyer payoff may reduce with social manipulation. Particularly, the expected payoff of a buyer in the low-end preference class always reduces with such manipulation.
\end{proposition}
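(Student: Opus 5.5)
To prove Proposition \ref{continuous_buyer}, I would mirror the proof of Proposition \ref{dilemma_buyer}. The plan is to compare, case by case in Proposition \ref{continuous_PBE}, the equilibrium average buyer payoff under strategic learning with that under a continuous-preference undisclosed-learning benchmark. The benchmark is built exactly as Lemma \ref{unaware} is built. Buyers interact honestly: $\hat{x}=1$ if and only if they are in the same class. The seller then solves the Stage II problem with the same noisy belief update. This gives exactly the Case 1 pricing of Proposition \ref{continuous_PBE}: $p_1=\bar{v}/2$, then $p_2=\bar{v}/2$ or $\bar{v}/4$ according to $(\hat{x},a_1)$. I would verify this by the same first-order optimization over $p_2$ on the conditional uniform distributions, namely $[0,\bar v/2]$ or $[\bar v/2,\bar v]$. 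I would then write the average buyer payoff as in \eqref{cal_payoff}, averaging over the four class configurations, each with probability $1/4$, and over the arrival order.

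Next, I would handle low-end buyers separately, since that is the sharper claim. A low-end buyer never manipulates, by the analogue of Lemma \ref{lemma1_buyer}. His social utility is therefore the same as in the benchmark: $1$ with a low-end partner and $0$ with a high-end partner. Only his purchase surplus $\mathbb{E}[\max\{v-p,0\}]$ can change. In Case 1 all prices coincide with the benchmark, so payoffs are equal.

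In Cases 2 and 3, two things happen. The price $p_2$ upon $\hat{x}=0$ and $a_1=1$ rises above $\bar v/4$: it becomes $\frac{\bar v}{4}(2-4((1-l)/\bar v)^{2/3})$, or $\frac{\bar v}{4}(1+2\rho^*-(\rho^*)^2)$. The reason is that manipulating high-end pairs pollute the $\hat{x}=0$ pool with high-end second arrivals. A low-end buyer paired with a high-end buyer who arrives first meets exactly this price. His expected surplus $\frac{2}{\bar v}\int_{p_2}^{\bar v/2}(v-p_2)\,dv$ is decreasing in $p_2$, so it strictly falls.

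In Case 3, the first-period price upon $\hat{x}=1$ drops below $\bar v/2$. This raises the surplus of a low-end first arrival in a low-low pair. I would therefore have to check explicitly that the net change, averaged over the low-low and low-high configurations, is negative. I would do this by substituting $p_1^*$ and $p_2^*$ as functions of $\rho^*$ and using $\rho^*\in(0,1)$ from the cubic. I expect this sign check in Case 3 to be the main obstacle, because two opposing effects appear there and $\rho^*$ is only defined implicitly. I would first try to parametrize everything by $u=1-\rho^*$ and to use the cubic $8\bar v u^3-3\bar v u=32(1-l)$ to eliminate $\bar v$. This should reduce the sign claim to a one-variable polynomial inequality on the admissible range of $u$.

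For the first claim, that the average payoff may decrease, a single parameter region suffices. I would take Case 2 with $l$ close to $1$, the analogue of the red region of Fig. \ref{useparadox1_2}. There I would compute the high-high contribution exactly as in \eqref{payoff_region3_vh}, with $\rho^*=1-2\sqrt[3]{(1-l)/\bar v}$ and the appropriate expected surpluses. This contribution is a smooth expression in $\bar v$ and $l$, and I would add to it the strictly negative low-end change found above. At the lower boundary $\bar v=8(1-l)$ the two models coincide. It then suffices to show that the derivative of the total difference with respect to $\bar v$ is negative just above this boundary for $l$ near $1$. This exhibits a nonempty region where awareness lowers the average buyer payoff.
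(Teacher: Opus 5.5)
Your proposal is correct and follows essentially the route the paper indicates. The paper gives no formal proof: it asserts consistency with Proposition~\ref{dilemma_buyer} and attributes the low-end loss to the seller failing to serve low-end demand, which is your inflated price $\frac{\bar v}{4}(2-s)$ after $\hat{x}=0$, $a_1=1$, with $s=(1-\rho^*)^2$. Two of the difficulties you anticipate are milder than you fear. In Case 3, write $p_1^*=t\bar v$ with $t=(4s+5)/16\in(3/8,1/2)$; the low--low change alone is then $\frac{\bar v}{2}(t-\frac{3}{8})(t-\frac{1}{2})<0$, so you never need the cubic. In Case 2, the $\bar v$-derivative of the total difference at $\bar v=8(1-l)$ is $-l/(96(1-l))$, which is negative for every $l\in(0,1)$, not only for $l$ near $1$.
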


Regarding the impact of manipulation on buyers' payoff, Proposition \ref{continuous_buyer} is consistent with Proposition \ref{dilemma_buyer} under the simplified discrete-binary-preference model. Furthermore, we show that low-end buyers bear a surplus loss from high-end buyers' manipulation, as the seller fails to meet some demands from the low-end buyers when mitigating the potential loss from such manipulation.

Finally, we turn to examine the seller's revenue to confirm another main insight behind Propositions \ref{salerevenue1} and \ref{salerevenue2}. Under this continuous-preference model, we can show that the seller gains $12.5\%$ more expected revenue in the undisclosed-learning benchmark than the no-learning benchmark. Yet the seller's expected revenue only slightly decreases (by $3.9\%$ at most) with buyers' learning-aware manipulation, but still outperforms the no-learning benchmark. Therefore, the suggestion for the seller to inform buyers of the social data access and the follow-up pricing continues to hold here.

\subsection{Proof for Appendix \ref{Appendix:continuous:contents}}
In this appendix, we conduct extended analysis under the continuous-preference model as introduced above. In what follows, we start with \textit{forward analysis} to propose the seller's posterior belief of buyers' manipulation structure in Stage I. Then, we proceed with \textit{backward induction} to characterize the equilibrium of the continuous-preference model.

\subsubsection{Analysis for Seller's Learning and Pricing in Stage II}

We restrict our attention to PBEs where only two buyers in the same high-end preference class manipulate, and all these high-end buyers manipulate with the same probability. Particularly, this aligns well with what we discuss in Sections \ref{PBE1} and \ref{PBE2}. This is also motivated by the fact that these high-end buyers face the same tradeoff when deciding to manipulate: paying as if viewed as low-end buyers, or paying with personalized prices but enjoying the full social interaction benefit. In equilibrium, as we will show, all these high-end buyers will purchase the product in Stage II. Hence, the benefits of social manipulation are independent of high-end buyers' preferences. Instead, such buyers' preference level only determines their purchase surplus and the final payoff. 

Similarly, we generally allow these buyers to choose a mixed strategy, i.e., choosing the low interaction frequency $0$ with a manipulation probability and choosing the high frequency $1$ with the complementary probability. As both high-end buyers are symmetric, they would choose the same \textit{manipulation probability}. As such, we formally define the manipulation probability for high-end buyers as below.

\begin{definition}
When interacting with the other high-end buyer $j$ ($i$, respectively), we define high-end buyer $i$'s ($j$'s, respectively) manipulation probability for the event that he chooses low social interaction frequency $x_{ij}=0$ ($x_{ji}=1$, respectively) as $\rho$. i.e.,
\begin{equation}\label{continuous_rho}
    \rho\triangleq {\rm{Pr}}\left(x_{ij}(v_i,v_j)=0 \right)={\rm{Pr}}\left(x_{ji}(v_j,v_i)=0 \right),\quad\forall v_i,v_j\in[\bar{v}/2,\bar{v}].
\end{equation}
\end{definition}

Similar to (\ref{belief1}) and (\ref{belief0}) in Section \ref{PBE1}, to enable the seller's personalized pricing in Stage II, we now analyze the seller's posterior belief of buyers' preference classes by learning from their common interaction frequency $\hat{x}$. According to Bayes' theorem, we have for the seller:
\begin{equation}\label{continuous_belief1}
\left\{
\begin{aligned}
&{\rm{Pr}}(v_i\in\left[\bar{v}/2,\bar{v}\right],v_j\in\left[\bar{v}/2,\bar{v}\right]|\hat{x}=1)=\frac{{\rm{Pr}}(\hat{x}=1|v_i\in\left[\bar{v}/2,\bar{v}\right],v_j\in\left[\bar{v}/2,\bar{v}\right])}{{\rm{Pr}}(\hat{x}=1|v_i\in\left[\bar{v}/2,\bar{v}\right],v_j\in\left[\bar{v}/2,\bar{v}\right])+1},\\
&{\rm{Pr}}(v_i\in\left[0,\bar{v}/2\right],v_j\in\left[0,\bar{v}/2\right]|\hat{x}=1)=\frac{1}{{\rm{Pr}}(\hat{x}=1|v_i\in\left[\bar{v}/2,\bar{v}\right],v_j\in\left[\bar{v}/2,\bar{v}\right])+1}.
\end{aligned}
\right.
\end{equation}
and
\begin{equation}\label{continuous_belief0}
\left\{
\begin{aligned}
&{\rm{Pr}}(v_i\in\left[\bar{v}/2,\bar{v}\right],v_j\in\left[\bar{v}/2,\bar{v}\right]|\hat{x}=0)=\frac{1-{\rm{Pr}}(\hat{x}=1|v_i\in\left[\bar{v}/2,\bar{v}\right],v_j\in\left[\bar{v}/2,\bar{v}\right])}{3-{\rm{Pr}}(\hat{x}=1|v_i\in\left[\bar{v}/2,\bar{v}\right],v_j\in\left[\bar{v}/2,\bar{v}\right])},\\
&{\rm{Pr}}(v_i\in\left[0,\bar{v}/2\right],v_j\in\left[\bar{v}/2,\bar{v}\right]|\hat{x}=0)=\frac{2}{3-{\rm{Pr}}(\hat{x}=1|v_i\in\left[\bar{v}/2,\bar{v}\right],v_j\in\left[\bar{v}/2,\bar{v}\right])}.
\end{aligned}
\right.
\end{equation}

Based on the posterior beliefs in (\ref{continuous_belief1}) and (\ref{continuous_belief0}), we now proceed with \textit{backward induction} to analyze the seller's optimal pricing in Stage II in the following lemmas. Lemma \ref{continuous_price_lemma1} below first expands when the seller observes buyers' high common interaction frequency $\hat{x}=1$.

\begin{lemma}\label{continuous_price_lemma1}
Under the continuous-preference model, given the posterior belief in (\ref{continuous_belief1}), when observing buyers' high common interaction frequency $\hat{x}=1$, the seller's optimal pricing in Stage II depends on high-end buyers' manipulation probability $\rho$ in (\ref{continuous_rho}) as follows.
\begin{itemize}
    \item \emph{Case 1 with $(1-\rho)^2\ge3/4$:} The seller charges $p_1=\frac{\bar{v}}{2}$ in the first selling period, and personalizes the second-period price based on the first-period purchase record $a_1$:
    \begin{equation}
    p_2(a_1)=
    \frac{\bar{v}}{2}\boldsymbol{1}(a_1=1)+\frac{\bar{v}}{4}\boldsymbol{1}(a_1=0),
    \end{equation}
    \item \emph{Case 2 with $1/4\le(1-\rho)^2<3/4$:} The seller charges $p_1=\frac{9-8\rho+4\rho^2}{16}\bar{v}$ in the first selling period, and personalizes the second-period price based on the first-period purchase record $a_1$:
    \begin{equation}
    p_2(a_1)=
    \frac{\bar{v}}{2}\boldsymbol{1}(a_1=1)+\frac{\bar{v}}{4}\boldsymbol{1}(a_1=0),
    \end{equation}
    \item \emph{Case 3 with $(1-\rho)^2<1/4$:} The seller charges $p_1$ in the first selling period, where $p_1$ is the unique solution to $-4p_1/(2-2\rho+\rho^2)\bar{v}+1+(1-\rho)^4\bar{v}^2/4(2-2\rho+\rho^2)(\bar{v}-2p_1)^2=0$. The second-period price personalizes based on the first-period purchase record $a_1$:
    \begin{equation}
    p_2(a_1)=
    \frac{\bar{v}((1-\rho)^2\bar{v}+\bar{v}-2p_1)}{4(\bar{v}-2p_1)}\boldsymbol{1}(a_1=1)+\frac{\bar{v}}{4}\boldsymbol{1}(a_1=0),
    \end{equation}
\end{itemize}
\end{lemma}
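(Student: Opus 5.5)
The plan is a direct backward induction for the subgame after $\hat{x}=1$, with $s\triangleq(1-\rho)^2$, mirroring the proof of Lemma \ref{rho01_lemma1}. Here Lemma \ref{pricebinary} is unavailable, so prices are continuous choices. By (\ref{continuous_belief1}), conditional on $\hat{x}=1$ the pair is both high-end with probability $s/(1+s)$, each value i.i.d. uniform on $[\bar{v}/2,\bar{v}]$. Otherwise it is both low-end with probability $1/(1+s)$, each value uniform on $[0,\bar{v}/2]$. I will first solve the second period for every $(p_1,a_1)$, then substitute into the first-period revenue $R(p_1)$ and maximize.

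\emph{Second period.} I split on $p_1$ relative to $\bar{v}/2$.

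\textbf{If $p_1\ge\bar{v}/2$.} A purchase $a_1=1$ reveals the high-end class, so $v_2\sim U[\bar{v}/2,\bar{v}]$. Then $p_2(1-2(p_2-\bar{v}/2)/\bar{v})$ is maximized at $p_2=\bar{v}/2$. A non-purchase $a_1=0$ gives a mixture over the two classes, which is only relevant for the comparison and is handled similarly.

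\textbf{If $p_1\le\bar{v}/2$.} Now $a_1=0$ reveals the low-end class, so $p_2=\bar{v}/4$ with revenue $\bar{v}/8$. A purchase $a_1=1$ leaves high-end weight $q=s\bar{v}/(s\bar{v}+\bar{v}-2p_1)$, and $v_2$ is the $q$-mixture of the two uniforms. Maximizing $p_2\bigl(q+(1-q)(1-2p_2/\bar{v})\bigr)$ on $[0,\bar{v}/2]$ gives $p_2=\bar{v}/(4(1-q))$ when $q<1/2$; otherwise the optimum is the corner $p_2=\bar{v}/2$. Substituting $1-q=(\bar{v}-2p_1)/(s\bar{v}+\bar{v}-2p_1)$ reproduces exactly the Case 3 second-period formula.

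\emph{First period.} With these continuation values, $R(p_1)$ is piecewise smooth. On $[\bar{v}/2,\bar{v}]$ I expect it to be decreasing, since raising $p_1$ above $\bar{v}/2$ loses high-end buyers without adding information. So the candidate there is $p_1=\bar{v}/2$, and the problem reduces to $p_1\in[0,\bar{v}/2]$.

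\textbf{Regime $q\ge1/2$, i.e. $2p_1\ge(1-s)\bar{v}$.} Here $p_2=\bar{v}/2$ after a purchase, so $R$ is a concave quadratic in $p_1$. Its stationary point is $p_1=(4s+5)\bar{v}/16=\frac{9-8\rho+4\rho^2}{16}\bar{v}$. This point is at least $\bar{v}/2$ iff $s\ge3/4$, which gives Case 1 (corner $p_1=\bar{v}/2$). It lies inside $[(1-s)\bar{v}/2,\bar{v}/2)$ iff $s\ge1/4$, which gives Case 2.

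\textbf{Regime $q<1/2$.} The continuation revenue becomes $\bar{v}/(8(1-q))$ times the purchase probability. Differentiating $R$ should yield precisely the stated condition $-4p_1/\bigl((2-2\rho+\rho^2)\bar{v}\bigr)+1+(1-\rho)^4\bar{v}^2/\bigl(4(2-2\rho+\rho^2)(\bar{v}-2p_1)^2\bigr)=0$. I will show this has a unique root in $[0,(1-s)\bar{v}/2)$ by checking that the left side is strictly monotone there and changes sign at the endpoints. This regime only beats the other for $s<1/4$, which gives Case 3.

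The main obstacle is the global comparison across regimes. I must verify that the interior optimum of each regime dominates the other regime's best value, including the kink at $2p_1=(1-s)\bar{v}$, exactly on the stated $s$-ranges, and that the Case 3 root is unique. I would handle this by checking continuity of $R$ at the kink and sign of $R'$ on each side, which should make $R$ unimodal on $[0,\bar{v}/2]$. Monotonicity in the Case 3 equation follows because $(\bar{v}-2p_1)^{-2}$ is increasing while $-4p_1$ is decreasing; if that is not enough, I would use a convexity argument for the difference of these two terms.
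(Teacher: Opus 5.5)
Your route is the paper's own: the same posterior, the same split at $p_1=\bar v/2$, and the same second-period best responses. The paper then omits the first-period optimization entirely (``we ignore those technical details here''), so your unimodality plan is the substantive part. The quantities you predict are all correct: the stationary point $(4s+5)\bar v/16$, the thresholds $s=3/4$ and $s=1/4$, and the Case~3 first-order condition.

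\textbf{The Case~3 monotonicity step.} One step fails as written: the claim that the Case~3 first-order condition is strictly monotone on $[0,(1-s)\bar v/2)$. Your justification, a decreasing term plus an increasing term, cannot give monotonicity, and the claim is in fact false in Case~3. Write $F(p_1)=1-\frac{4p_1}{(1+s)\bar v}+\frac{s^2\bar v^2}{4(1+s)(\bar v-2p_1)^2}$; on the low regime this is exactly $R'$. Then $F'(p_1)<0$ iff $4(\bar v-2p_1)^3>s^2\bar v^3$. Since $\bar v-2p_1>s\bar v$ on that interval, $F$ is strictly decreasing there iff $s\ge1/4$. For $0<s<1/4$, $F$ has an interior minimum at $\bar v-2p_1=4^{-1/3}s^{2/3}\bar v$ and increases after it.

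\textbf{The fix.} Use your two tools in complementary regimes. For $s<1/4$, your convexity fallback is the one that works. $F$ is convex, $F(0)=1+\frac{s^2}{4(1+s)}>0$, and $F\bigl(\tfrac{(1-s)\bar v}{2}\bigr)=\frac{12s-3}{4(1+s)}<0$. Hence $F$ has exactly one zero on the low regime, is positive before it and negative after it, and that zero is the unique maximizer of $R$ there.

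The restriction to this interval is essential. Since $F\to+\infty$ as $p_1\to\bar v/2$, the equation in the statement has a second, spurious root in $\bigl(\tfrac{(1-s)\bar v}{2},\tfrac{\bar v}{2}\bigr)$ when $0<s<1/4$.

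For $s\ge1/4$, monotonicity does hold. Together with the same endpoint value $\frac{12s-3}{4(1+s)}\ge0$, it shows $R$ is increasing on the low regime, which is what excludes that regime in Cases~1--2.

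\textbf{Global comparison.} The rest is easy. $R$ is $C^1$ across $2p_1=(1-s)\bar v$, with $R'=\frac{3(4s-1)}{4(1+s)}$ there from both sides. On $[\bar v/2,\bar v]$ one has $(1+s)\bar v\,R'(p_1)=s\bigl(\tfrac{3-s}{2}\bar v-(4-s)p_1\bigr)\le-\tfrac{s}{2}\bar v\le0$. This confirms your guess that no price above $\bar v/2$ helps.
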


\begin{proof}
When observing buyers' high common interaction frequency $\hat{x}=1$, based on the posterior belief in (\ref{continuous_belief1}), the seller updates about buyers' product preferences. Here, for convenience, let $s$ denote the conditional probability of the event that buyer $i$ and $j$ make the common interaction frequency $\hat{x}\triangleq\min\{x_{ij},x_{ji}\}=1$ given that they are in the same high-end class $v_j\in[\bar{v}/2,\bar{v}]$ and $v_i\in[\bar{v}/2,\bar{v}]$. i.e.,
\begin{align}
    s\triangleq\textrm{Pr}\left(\hat{x}=1|v_i\in[\bar{v}/2,\bar{v}],v_j\in[\bar{v}/2,\bar{v}]\right)=(1-\rho)^2.
\end{align}
Then, we have for the seller:
\begin{equation}
f(v_1|\hat{x}=1)=\left\{
\begin{aligned}
&\frac{2s}{(s+1)\bar{v}},&\quad\textrm{if $v_1\in[\bar{v}/2,\bar{v}]$},\\
&\frac{2}{(s+1)\bar{v}},&\quad\textrm{if $v_1\in[0,\bar{v}/2]$}.
\end{aligned}
\right.
\end{equation}
It follows that:
\begin{equation}
{\rm Pr}(a_1=1|\hat{x}=1;p_1\ge\frac{\bar{v}}{2})=\frac{2s(\bar{v}-p_1)}{(s+1)\bar{v}},
\end{equation}
and
\begin{equation}
{\rm Pr}(a_1=1|\hat{x}=1;p_1\le\frac{\bar{v}}{2})=\frac{(s+1)\bar{v}-2p_1}{(s+1)\bar{v}},    
\end{equation}
Specifically, we first analyze the second selling period. (a) When setting the first-period price $p_1\ge\frac{\bar{v}}{2}$, the first-arriving buyer is in the high-end class if he purchases the product $a_1=1$. Then, the seller learns that the latter-arriving buyer is in the same high-end class, and optimally charges $p_2=\frac{\bar{v}}{2}$ to extract the maximal expected second-period revenue $\frac{\bar{v}}{2}$ from the high-end buyers. Yet if the first-arriving buyer does not purchase $a_1=0$, the seller updates for the latter-arriving buyer's product preference as below:
\begin{equation}
f(v_2|\hat{x}=1;p_1\ge\frac{\bar{v}}{2},a_1=0)=\left\{
\begin{aligned}
&\frac{2(2p_1s-s\bar{v})}{(\bar{v}+2p_1s-s\bar{v})\bar{v}},&\quad\textrm{if $v_2\in[\bar{v}/2,\bar{v}]$},\\
&\frac{2}{(\bar{v}+2p_1s-s\bar{v})\bar{v}},&\quad\textrm{if $v_2\in[0,\bar{v}/2]$}.
\end{aligned}
\right.
\end{equation}
Hence, we can show that the seller's optimal second-period price is $p_2=\frac{\bar{v}+2p_1s-s\bar{v}}{4}$. In this way, the seller's expected second-period revenue is $\frac{\bar{v}+2p_1s-s\bar{v}}{8}$. 

(b) When setting the first-period price $p_1\le\frac{\bar{v}}{2}$, the first-arriving buyer is in the low-end class if he does not purchase the product $a_1=0$. Then, the seller learns that the latter-arriving buyer is in the same low-end class, and optimally charges $p_2=\frac{\bar{v}}{4}$ to extract the maximal expected second-period revenue $\frac{\bar{v}}{8}$ from the high-end buyers. Yet if the first-arriving buyer purchases $a_1=1$, the seller updates for the latter-arriving buyer's product preference as below:
\begin{equation}
f(v_2|\hat{x}=1;p_1\le\frac{\bar{v}}{2},a_1=1)=\left\{
\begin{aligned}
&\frac{2s}{s\bar{v}+\bar{v}-2p_1},&\quad\textrm{if $v_2\in[\bar{v}/2,\bar{v}]$},\\
&\frac{2(\bar{v}-2p_1)}{(s\bar{v}+\bar{v}-2p_1)\bar{v}},&\quad\textrm{if $v_2\in[0,\bar{v}/2]$}.
\end{aligned}
\right.
\end{equation}
We then derive that the seller's optimal second-period price is $p_2=\frac{\bar{v}}{2}$ with corresponding expected second-period revenue $\frac{s\bar{v}^2}{2(s\bar{v}+\bar{v}-2p_1)}$ when $\frac{(1-s)\bar{v}}{2}<p_1<\frac{\bar{v}}{2}$. Otherwise when $p_1<\frac{(1-s)\bar{v}}{2}$, the seller charges $p_2=\frac{\bar{v}(s\bar{v}+\bar{v}-2p_1)}{4(\bar{v}-2p_1)}$ in the second period, to extract the maximal expected second-period revenue $\frac{\bar{v}(s\bar{v}+\bar{v}-2p_1)}{8(\bar{v}-2p_1)}$. 

Now we turn to the seller's first-period problem, which is to maximize the expected sale revenue over the two selling periods in Stage II. As shown above, the seller's learning and thus second-period pricing is significantly affected by the first-period pricing. Technically, the objective first-period function is piece-wise and not necessarily convex/concave. Involved discussions and calculations (where multi-order derivatives are majorly used) are then inevitable, and we ignore those technical details here.
\end{proof}

Then, Lemma \ref{continuous_price_lemma2} below turns to the case when the seller observes buyers' low common interaction frequency $\hat{x}=0$.

\begin{lemma}\label{continuous_price_lemma2}
Under the continuous-preference model, given the posterior belief in (\ref{continuous_belief0}), when observing buyers' low common interaction frequency $\hat{x}=0$, the seller's optimal pricing in Stage II is as follows. The seller charges $p_1=\frac{\bar{v}}{2}$ in the first selling period, and personalizes the second-period price based on the first-period purchase record $a_1$:
    \begin{equation}
    p_2(a_1)=
    \frac{(1+2\rho-\rho^2)\bar{v}}{4}\boldsymbol{1}(a_1=1)+\frac{\bar{v}}{2}\boldsymbol{1}(a_1=0),
    \end{equation}
\end{lemma}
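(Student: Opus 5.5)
We would follow the same backward-induction template as in the proof of Lemma \ref{continuous_price_lemma1}, now using the posterior in (\ref{continuous_belief0}). Write $s=(1-\rho)^2$, so that $1-s=2\rho-\rho^2$ and the claimed personalized price $(1+2\rho-\rho^2)\bar{v}/4$ is exactly $(2-s)\bar{v}/4$. Conditional on $\hat{x}=0$, the two buyers are both high-end with probability $(1-s)/(3-s)$, and in different classes with probability $2/(3-s)$. Hence the first-arriving buyer is high-end with probability $(2-s)/(3-s)$, and his density $f(v_1\mid\hat{x}=0)$ equals $2(2-s)/((3-s)\bar{v})$ on $[\bar{v}/2,\bar{v}]$ and $2/((3-s)\bar{v})$ on $[0,\bar{v}/2]$.

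\textbf{Step 1: second period at the candidate $p_1=\bar{v}/2$.} At this price the purchase record $a_1$ reveals the first buyer's class exactly.

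If $a_1=0$, the first buyer is low-end. The correlation learned from $\hat{x}=0$ then forces the second buyer to be high-end, uniform on $[\bar{v}/2,\bar{v}]$. Maximizing $p\cdot\Pr(v_2\ge p)$ on this support gives $p_2=\bar{v}/2$ with revenue $\bar{v}/2$.

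If $a_1=1$, the second buyer is a mixture: low-end with probability $1/(2-s)$ and high-end with probability $(1-s)/(2-s)$.
\begin{itemize}
\item For $p\le\bar{v}/2$, $\Pr(v_2\ge p)=(2-s-2p/\bar{v})/(2-s)$, which is maximized at $p=(2-s)\bar{v}/4$ with revenue $(2-s)\bar{v}/8$.
\item For $p\ge\bar{v}/2$, the best revenue is $(1-s)\bar{v}/(2(2-s))$, attained at $p=\bar{v}/2$.
\end{itemize}
The first branch wins because $4(1-s)\le(2-s)^2\iff s^2\ge0$. This yields exactly the second-period rule stated in the lemma.

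\textbf{Step 2: second period for general $p_1$.} We repeat the analysis for arbitrary $p_1$, splitting into $p_1\ge\bar{v}/2$ and $p_1\le\bar{v}/2$.
\begin{itemize}
\item When $p_1>\bar{v}/2$, a non-purchase is noisy. The first buyer is then low-end or a high-end buyer in $[\bar{v}/2,p_1)$, so we form the updated mixture density of $v_2$ (as in Lemma \ref{continuous_price_lemma1}) and optimize $p_2$ piecewise.
\item When $p_1<\bar{v}/2$, a purchase is noisy in the same way, and we again update and optimize $p_2$ piecewise.
\end{itemize}
In each branch this gives closed-form continuation revenues. They are piecewise rational in $p_1$, with a kink where the optimal $p_2$ switches between the inner solution and the boundary $\bar{v}/2$.

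\textbf{Step 3: first-period optimization.} We then write the total expected revenue $R(p_1)$ on each branch and show that, for every $s\in[0,1]$, $R$ is nondecreasing on $[0,\bar{v}/2]$ and nonincreasing on $[\bar{v}/2,\bar{v}]$. The intuition is that $p_1=\bar{v}/2$ buys perfect class revelation. Under $\hat{x}=0$ this revelation is especially valuable: a low-end first buyer certifies a high-end second buyer, and there are no low-end pairs to serve.

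This monotonicity check is the main obstacle. The objective is piecewise and not concave, and it depends on the additional parameter $s$. We would first try to compute the one-sided derivatives at $p_1=\bar{v}/2$ and show they have the correct signs uniformly in $s$. Next, we would bound each piece against its value at $\bar{v}/2$, treating $p_1/\bar{v}$ and $s$ as two variables on a compact domain, checking the sign of the difference along the switching curves and at the endpoints $s=0,1$. If a clean sign argument fails, we would fall back on multi-order derivatives, as in the proof of Lemma \ref{continuous_price_lemma1}.
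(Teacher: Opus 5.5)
Your proposal is correct and takes essentially the same backward-induction route as the paper's proof, which is only sketched: posterior update under $\hat{x}=0$, second-period pricing split into $p_1\ge\bar{v}/2$ and $p_1\le\bar{v}/2$, then first-period optimization. Your explicit computation at $p_1=\bar{v}/2$ correctly recovers the stated prices, since $2-s=1+2\rho-\rho^2$. The step you flag as the main obstacle is milder than you expect: normalizing $\bar{v}=1$, each piece of $R$ is a concave quadratic, and a non-purchase at $p_1>1/2$ always leaves the optimal $p_2$ at $1/2$. The only switch on $[0,1/2]$ is at $p_1=(1-s)/2$, where the two pieces meet with common slope $(1+s)/(3-s)>0$. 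The one-sided slopes at $p_1=1/2$ are $(2-s)/(2(3-s))>0$ and $-s^2/(4(3-s))\le 0$, so the monotonicity you need holds for every $s\in[0,1]$.
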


The proof here is similar to the proof of Lemma \ref{continuous_price_lemma1}. In brief, we repeat those key steps here, proceeded by \textit{backward induction}: In Step 1, we first analyze the posterior distributions of buyers' product preferences when observing buyers' low common interaction frequency $\hat{x}=0$. Then in Step 2, we start from analyzing the second selling period, and we develop two major cases regarding the first-period price $p_1\ge\frac{\bar{v}}{2}$ or $p_1\le\frac{\bar{v}}{2}$. Specifically, in the second period, the seller again updates her belief on the latter-arriving buyer's product preference and charges accordingly. Finally, in Step 3, we move to the seller's first-period problem, to maximize the expected revenue over the two selling periods in Stage II. Particularly, this step tends to be involved in calculations as the objective function is piece-wise and may be non-convex within certain pieces.

\subsubsection{Analysis for Buyers' Social Interactions in Stage I}

Now, we continue to analyze the buyers' social behaviors in Stage I based on the seller's optimal pricing in Lemma \ref{continuous_price_lemma1} and Lemma  \ref{continuous_price_lemma2}. Specifically, we first calculate buyers' purchase surplus and accordingly update the expected final payoff with their social interaction utility in Table \ref{table_same} and Table \ref{table_differ}. Next, through normal-form game-theoretic analysis, we derive buyers' equilibrium social interactions in Stage I based on the derived expected final payoff table. Finally, we examine the conditions for which the derived equilibrium could be sustained as a PBE, ensuring the consistency between buyers' social interaction decisions in Stage I and the seller's belief.

First consider when $(1-\rho)^2\ge3/4$, high-end buyers anticipate the seller's pricing as in Case 1 of Lemma \ref{continuous_price_lemma1} upon $\hat{x}=1$ and pricing as Lemma \ref{continuous_price_lemma2} upon $\hat{x}=0$. We now update the expected final payoff matrix for high-end buyer $i$ as below.

\begin{table}[h]
\caption{Expected Final Payoff Matrix for High-End Buyer $i$ when $(1-\rho)^2\ge3/4$}
\begin{center}
\footnotesize
\begin{tikzpicture}[thick]
\draw (-3,0) rectangle (5,1);
\draw (-3,0.5) -- (5,0.5);
\draw (1,0) -- (1,1);
\foreach \x/\xtext in {-1/{$1+v_i-\frac{\bar{v}}{2}$},3/{$1-l+v_i-\frac{\bar{v}}{2}+\frac{s\bar{v}}{8}$}}
\draw (\x,0.75) node {\xtext};
\foreach \x/\xtext in {-1/{$l+v_i-\frac{\bar{v}}{2}+\frac{s\bar{v}}{8}$},3/{$v_i-\frac{\bar{v}}{2}+\frac{s\bar{v}}{8}$}}
\draw (\x,0.25) node {\xtext};
\node at (-3,0.75) [left] {$x_{ij}=1$};
\node at (-3,0.25) [left] {$x_{ij}=0$};
\node at (-1,1) [above] {$x_{ji}=1$};
\node at (3,1) [above] {$x_{ji}=0$};
\end{tikzpicture}
\end{center}
\end{table}

Then, if $\bar{v}\le8(1-l)$, one can show that high-end buyers' equilibrium decisions are $x_{ij}^*=x_{ji}^*=1$ without manipulation, i.e. $(1-\rho)^2=1>3/4$. Yet if $\bar{v}>8(1-l)$ one can derive that high-end buyers' equilibrium social decision is to manipulate with probability 
\begin{equation}
    \rho=1-2\sqrt[3]{\frac{1-l}{\bar{v}}}.
\end{equation}
To sustain $3/4\le (1-\rho)^2<1$ in this case, we further need $\bar{v}\le 64(1-l)/3$ to hold. At the end, examining the expected final payoff of buyers both in the low-end class or in different classes shows the strategy profile above can be sustained.

Then consider when $1/4\le(1-\rho)^2<3/4$, high-end buyers anticipate the seller's pricing as in Case 2 of Lemma \ref{continuous_price_lemma1} upon $\hat{x}=1$ and pricing as Lemma \ref{continuous_price_lemma2} upon $\hat{x}=0$. We now update the expected final payoff matrix for high-end buyer $i$ as below.
\begin{table}[h]
\caption{Expected Final Payoff Matrix for High-End Buyer $i$ when $1/4\le(1-\rho)^2<3/4$}
\begin{center}
\footnotesize
\begin{tikzpicture}[thick]
\draw (-3,0) rectangle (5,1);
\draw (-3,0.5) -- (5,0.5);
\draw (1,0) -- (1,1);
\foreach \x/\xtext in {-1/{$1+v_i-\frac{1}{2}\left(\frac{4s+5}{16}\bar{v}+\frac{1}{2}\bar{v}\right)$},3/{$1-l+v_i-\frac{\bar{v}}{2}+\frac{s\bar{v}}{8}$}}
\draw (\x,0.75) node {\xtext};
\foreach \x/\xtext in {-1/{$l+v_i-\frac{\bar{v}}{2}+\frac{s\bar{v}}{8}$},3/{$v_i-\frac{\bar{v}}{2}+\frac{s\bar{v}}{8}$}}
\draw (\x,0.25) node {\xtext};
\node at (-3,0.75) [left] {$x_{ij}=1$};
\node at (-3,0.25) [left] {$x_{ij}=0$};
\node at (-1,1) [above] {$x_{ji}=1$};
\node at (3,1) [above] {$x_{ji}=0$};
\end{tikzpicture}
\end{center}
\end{table}

It turns out that high-end buyers' equilibrium social decision is to manipulate with probability $\rho$ satisfying
\begin{equation}\label{rho_equation}
8\bar{v}{(1-\rho)}^3-3\bar{v}(1-\rho)+32l-32=0.
\end{equation}
Specifically, there exists a solution to (\ref{rho_equation}) if and only if $1-l<3\sqrt{3}\bar{v}/64$. Particularly, such a solution is unique and satisfies $1-\sqrt{3}/2<\rho<1-\sqrt{3/8}$. At the end, examining the expected final payoff of buyers both in the low-end class or in different classes shows the strategy profile above can be sustained.

Finally when $(1-\rho)^2<1/4$, high-preference buyers anticipate the seller's pricing as in Case 3 of Lemma \ref{continuous_price_lemma1} at $\hat{x}=1$ with $p_1<(2\rho-\rho^2)/2\bar{v}$, and the seller's pricing as Lemma \ref{continuous_price_lemma2} at $\hat{x}=0$. Similarly, one can show that no equilibrium can be sustained with $(1-\rho)^2<1/4$.

\section{Extension to Non-uniform Prior Distributions: Supplementary Analysis}\label{Appendix:Extension1:non-uniform}

In this appendix, we present Table~\ref{Tab:boundary_function}, which details the boundary functions that delineate the different equilibrium regions depicted in Fig. \ref{fig:newPBE_nonuniform}. This figure provides a graphical illustration of the new PBE under a non-uniform prior distribution for an arbitrary $\alpha\le 0.5$, as analyzed in Section
\ref{Extension1:non-uniform}.

\begin{table}[h]
\caption{Boundary functions in Fig. \ref{fig:newPBE_nonuniform} with $\alpha\le 0.5$}
\scriptsize
\center
\begin{tabular}{cp{8cm}<{\centering}}
\specialrule{\heavyrulewidth}{1.2pt}{0pt}
\cellcolor[gray]{0.92}{\makecell[c]{Curve}} &{\makecell[c]{Corresponding boundary functions in $(v_H,v_L)$-plane}}\\
\specialrule{\lightrulewidth}{0pt}{1.2pt}
OA &\cellcolor[gray]{0.92}{$v_L=v_H$}\\
\cellcolor[gray]{0.92}{OD} & $v_L=2v_H/3$\\
OCD* & \cellcolor[gray]{0.92}{$v_L=2\alpha^2v_H/(3\alpha^2-2\alpha+1)$}\\
\cellcolor[gray]{0.92}{BC} &$v_H-v_L=2(1-l)$\\
CP & \cellcolor[gray]{0.92}{$(v_H-v_L)v_L=8\alpha^2(1-l)^2/(1-\alpha)^2$}\\
\cellcolor[gray]{0.92}{BP} & $(v_H-v_L)(v_H-v_L/\alpha)=4(1-l)^2$\\
\specialrule{\heavyrulewidth}{1.2pt}{0pt}
\end{tabular}
\label{Tab:boundary_function}
\end{table}

\end{document}